\newif\ifblind
\newif\iflipics
\newcommand{\tcbline}{%
  \begin{tikzpicture}
  \path[use as bounding box] (0,0) -- (\linewidth,0);
  \draw[color=black!20!white,dashed,dash phase=2pt]
        (0-\kvtcb@leftlower-\kvtcb@boxsep,0)--
        (\linewidth+\kvtcb@rightlower+\kvtcb@boxsep,0);
  \end{tikzpicture}%
  }
\def\@footnotecolor{red}
\def\@footnotemark{%
    \leavevmode
    \ifhmode\edef\@x@sf{\the\spacefactor}\nobreak\fi
    \stepcounter{Hfootnote}%
    \global\let\Hy@saved@currentHref\@currentHref
    \hyper@makecurrent{Hfootnote}%
    \global\let\Hy@footnote@currentHref\@currentHref
    \global\let\@currentHref\Hy@saved@currentHref
    \hyper@linkstart{footnote}{\Hy@footnote@currentHref}%
    \@makefnmark
    \hyper@linkend
    \ifhmode\spacefactor\@x@sf\fi
    \relax
  }%
\newcommand{\subalign}[1]{%
  \vcenter{%
    \Let@ \restore@math@cr \default@tag
    \baselineskip\fontdimen10 \scriptfont\tw@
    \advance\baselineskip\fontdimen12 \scriptfont\tw@
    \lineskip\thr@@\fontdimen8 \scriptfont\thr@@
    \lineskiplimit\lineskip
    \ialign{\hfil$\m@th\scriptstyle##$&$\m@th\scriptstyle{}##$\hfil\crcr
      #1\crcr
    }%
  }%
}
\newlist{steps}{enumerate}{10}
\setlist[steps]{label=Step \arabic*:, ref=\arabic*}
\crefname{stepsi}{Step}{Steps}
\newcommand{\envfootnote}{%
\addtocounter{footnote}{+1}%
\addtocounter{Hfootnote}{+1}%
\global\let\Hy@saved@currentHref\@currentHref%
\hyper@makecurrent{Hfootnote}%
\global\let\Hy@footnote@currentHref\@currentHref%
\global\let\@currentHref\Hy@saved@currentHref%
}
\newtcolorbox[
  auto counter,
  number within=section,
  number freestyle={\noexpand\thesection.\noexpand\arabic{\tcbcounter}},
  crefname={Protocol}{Protocols}]%
{protocol}[2][]{
  breakable,
  colback=black!3!white,
  colframe=black!20!white,
  coltitle=black,
  code={\def\mytitle{\normalfont #2}},
  fonttitle=\bfseries,
  title=Protocol \thetcbcounter: \mytitle, #1,
  beforeafter skip=\baselineskip}
\newtcolorbox[
  auto counter,
  number within=section,
  number freestyle={\noexpand\thesection.\noexpand\arabic{\tcbcounter}},
  crefname={Algorithm}{Algorithms}]%
{algorithm}[2][]{
  breakable,
  colback=black!3!white,
  colframe=black!20!white,
  coltitle=black,
  code={\def\mytitle{\normalfont #2}},
  fonttitle=\bfseries,
  title=Algorithm \thetcbcounter: \mytitle, #1,
  beforeafter skip=\baselineskip}
\newtheorem{theorem}{Theorem}[section]
\newtheorem{lemma}[theorem]{Lemma}
\newtheorem{corollary}[theorem]{Corollary}
\newtheorem{claim}[theorem]{Claim}
\newtheorem{proposition}[theorem]{Proposition}
\theoremstyle{definition}
\newtheorem{remark}[theorem]{Remark}
\newtheorem{definition}[theorem]{Definition}
\newtheorem{openproblem}{Open problem}
\title{Streaming Zero-Knowledge Proofs}
\titlerunning{Streaming Zero-Knowledge Proofs}
\author{Graham Cormode}{University of Warwick, UK \and \url{http://dimacs.rutgers.edu/~graham/}}{g.cormode@warwick.ac.uk}{https://orcid.org/0000-0002-0698-0922}{The work of Graham Cormode and Chris Hickey was supported by European Research Council grant ERC-2014-CoG 647557.}
\author{Marcel Dall'Agnol}{Princeton University, USA \and \url{https://marceldallagnol.github.io/}}{dallagnol@princeton.edu}{https://orcid.org/0000-0002-6060-1663}{}
\author{Tom Gur}{University of Cambridge, UK \and \url{https://www.cst.cam.ac.uk/people/tg508}}{tom.gur@cl.cam.ac.uk}{https://orcid.org/0000-0001-7864-7013}{Tom Gur is supported by UKRI Future Leaders Fellowship MR/S031545/1 and EPSRC New Horizons Grant EP/X018180/1.}
\author{Chris Hickey}{University of Manchester, UK}{cjahickey1994@gmail.com}{}{}
\authorrunning{G. Cormode, M. Dall'Agnol, T. Gur and C. Hickey}
\keywords{Zero-knowledge proofs, streaming algorithms, computational complexity}
\author{Graham Cormode\thanks{The work of Graham Cormode and Chris Hickey was supported by European Research Council grant ERC-2014-CoG 647557.}\\University of Warwick \and Marcel Dall'Agnol\\Princeton University \and Tom Gur\thanks{Tom Gur is supported by UKRI Future Leaders Fellowship MR/S031545/1 and EPSRC New Horizons Grant EP/X018180/1.}\\University of Cambridge \and Chris Hickey$^*$\\University of Manchester}
\newcommand{\zksip}{\textrm{zkSIP}}
\newcommand{\eat}[1]{}
\DeclareMathOperator{\view}{View}
\newcommand{\E}{\mathbb E}
\newcommand{\N}{\mathbb N}
\newcommand{\Z}{\mathbb Z}
\newcommand{\field}{\mathbb{F}}
\newcommand{\R}{\mathbb R}
\renewcommand{\P}{\mathbb P}
\renewcommand{\set}[1]{\left\{#1\right\}}
\newcommand{\poly}{\operatorname{poly}}
\newcommand{\polylog}{\operatorname{polylog}}
\newcommand{\eps}{\varepsilon}
\newcommand{\bitset}{\{0,1\}}
\newcommand{\indexproblem}{\normalfont\textsc{index}\xspace}
\newcommand{\searchindex}{\normalfont\textsc{search-index}\xspace}
\newcommand{\decisionindex}{\normalfont\textsc{decision-index}\xspace}
\newcommand{\pair}{\normalfont\textsc{pair}\xspace}
\newcommand{\reconstruct}{\normalfont\textsc{reconstruct}\xspace}
\newcommand{\pep}{\normalfont\textsc{pep}\xspace}
\newcommand{\pepprotocol}{\textsf{pep}\xspace}
\newcommand{\sumcheck}{\normalfont\textsc{sumcheck}\xspace}
\newcommand{\frequencymoment}{\normalfont\textsc{frequency-moment}\xspace}
\newcommand{\pointquery}{\normalfont\textsc{point-query}\xspace}
\newcommand{\rangecount}{\normalfont\textsc{range-count}\xspace}
\newcommand{\selection}{\normalfont\textsc{selection}\xspace}
\newcommand{\innerproductproblem}{\normalfont\textsc{inner-product}\xspace}
\newcommand{\median}{\normalfont\textsc{median}\xspace}
\newcommand{\vcommitlength}{v}
\newcommand{\vcommitstring}{z}
\newcommand{\pcommitlength}{p}
\newcommand{\pcommitstring}{y}
\newcommand{\dimension}{m}
\newcommand{\degree}{d}
\newcommand{\fieldsize}{q}
\newcommand{\evaluationpoint}{\bm{\fe2}}
\newcommand{\fingerprinttuple}{\bm{\rho}}
\newcommand{\Fingerprinttuple}{\bm{\sigma}}
\renewcommand{\line}{L}
\newcommand{\snapshot}{b}
\newcommand{\whiteboxoracle}{\mathcal{W}}
\newcommand{\cubeside}{H}
\newcommand{\ball}{\mathcal B}
\newcommand{\alphabet}{\Gamma}
\newcommand{\alphabetsize}{\gamma}
\newcommand{\chf}{\mathbbm 1}
\def\symbol#1{
    \ifnum#1=1
        \alpha
    \else
    \ifnum#1=2
        \beta
    \else
    \ifnum#1=3
        \gamma
    \fi\fi\fi
}
\def\fe#1{
    \ifnum#1=1
        \alpha
    \else
    \ifnum#1=2
        \beta
    \else
    \ifnum#1=3
        \gamma
    \fi\fi\fi
}
\def\rfe#1{
    \ifnum#1=1
        \rho
    \else
    \ifnum#1=2
        \sigma
    \else
    \fi\fi
}
\begin{document}
\pagenumbering{roman}

\date{}

\maketitle

\begin{abstract}
    Streaming interactive proofs (SIPs) enable a space-bounded algorithm with one-pass access to a massive stream of data to verify a computation that requires large space, by communicating with a powerful but untrusted prover.
    
    This work initiates the study of \emph{zero-knowledge} proofs for data streams. We define the notion of zero-knowledge in the streaming setting and construct zero-knowledge SIPs for the two main algorithmic building blocks in the streaming interactive proofs literature: the \emph{sumcheck} and \emph{polynomial evaluation} protocols. To the best of our knowledge \emph{all} known streaming interactive proofs are based on either of these tools, and indeed, this allows us to obtain zero-knowledge SIPs for central streaming problems such as index, point and range queries, median, frequency moments, and inner product.
    
    Our protocols are efficient in terms of time and space, as well as communication: the verifier algorithm's space complexity is $\polylog(n)$ and, after a non-interactive setup that uses a random string of near-linear length, the remaining parameters are $n^{o(1)}$.
    
    En route, we develop an algorithmic toolkit for designing zero-knowledge data stream protocols, consisting of an \emph{algebraic streaming commitment protocol} and a \emph{temporal commitment protocol}. Our analyses rely on delicate algebraic and information-theoretic arguments and reductions from average-case communication complexity.
\end{abstract}


\newpage
\tableofcontents
\newpage

\pagenumbering{arabic}
\section{Introduction}
\label{sec:intro}

The design and analysis of algorithms in the \emph{streaming model} is an exceptionally active area of research, particularly so in recent years (see, e.g., the surveys \cite{M05, M14, C23} and references therein). A streaming algorithm $A$ observes a long data stream $x = (x_1, \ldots, x_n)$, whose size far exceeds $A$'s limited memory, one symbol at a time, and computes some pre-specified information about $x$ (e.g., statistics such as the number of distinct elements). To successfully do so, $A$ maintains a summary of the stream that is both small and easily updateable. Algorithmic techniques and data structures that work under such constraints underpin both theoretical progress and real-world deployment of algorithms for massive datasets.   

However, it has long been known that many natural and important problems are hard in the streaming model \cite{AMS99}. This motivates the study of protocols for \emph{delegation of computation}, whereby a streaming algorithm offloads expensive operations to an untrusted party with large memory, but can still verify (in low space) that the purported result is correct.  
Accordingly, \emph{interactive proofs} in the data stream model received a great deal of attention in the last decade \cite{CTY11, CMT12, CMT13, T13, CCGT14, T14, CCMTV15, DTV15, CH18, CG19, CGT20}.

Streaming interactive proofs (SIPs) are delegation-of-computation protocols where the computationally bounded party is bounded \emph{not} in its time complexity, but rather in space and input access. More precisely, an SIP is an interactive protocol between a powerful but untrusted prover $P$ and a space-bounded streaming verifier $V$ which has sequential, one-pass access to a massive input as well as the prover's messages. We note that prover and verifier observe the same stream of bits, which only the former can store in its entirety.

Remarkably, SIPs allow low-space streaming algorithms to efficiently verify key problems in the data stream model that are completely intractable without the assistance of a prover. Indeed, the aforementioned sequence of works constructed SIPs with polylogarithmic-space verifiers for a large collection of problems, many of which require linear space for a streaming algorithm alone (such as the \indexproblem and frequency moment problems). The underlying power that enables exponential separations between streaming algorithms and SIPs essentially boils down to two powerful protocols: \emph{sumcheck} and \emph{polynomial evaluation}, which can in turn be applied to a plethora of problems.

Determining the extent to which SIPs can be augmented with extra features is the natural next step to a refined understanding of the complexity landscape around them. Our work focuses on \emph{zero-knowledge}: ensuring that the protocol reveals no information besides what it is designed to compute. We remark that this feature widens the array of computational tasks solvable by mutually distrusting parties, thus supporting numerous cryptographic protocols currently in use \cite{BCGGMTV14,BBHR18,BCRSVW19}.

Despite the fundamental role of zero-knowledge in theoretical computer science (see, e.g., \cite{V99, G02, V07, G08} and references therein) and the extensive study of SIPs over the last decade, no zero-knowledge SIPs were known prior to this work. Indeed, it is not obvious a priori whether they are at all possible: for instance, while traditional zero-knowledge prevents leakage of information to a polynomial-time adversary about some hard computation on an input $x$ (e.g., a witness that certifies $x$ is in a language), in the streaming setting a space-bounded verifier must learn no additional information \emph{about $x$ itself} -- even if its runtime is unbounded.

\subsection{Zero-knowledge in the streaming model}
Recall that in the traditional setting, which deals with \emph{polynomial-time} algorithms, a protocol is zero-knowledge if, for every (possibly malicious) verifier $\widetilde V$, there exists a simulator $S_{\widetilde V}$ whose output cannot be told apart (either computationally or statistically) from a real interaction between $P$ and $\widetilde V$ by any distinguisher $D$; and if this holds up to negligibly small error, the protocol can be safely repeated or composed.

In the streaming model, algorithms are restricted to \emph{one-pass sequential access} to their input and the primary resource is \emph{space}, rather than time. Accordingly, we say that an SIP is zero-knowledge if $\widetilde V$, $S$ and $D$ are streaming algorithms; when $\widetilde V$ has $s$ bits of memory, the simulator has roughly $s$ space and we allow the distinguisher $D$ to have an arbitrary $\poly(s)$ amount of memory. (See \cref{sec:streaming-zk} for formal definitions.) Albeit similar, this notion is distinct to its poly-time analogue in two fundamental ways.

Negligible distinguishing bias is a robust notion of security in the setting of polynomial-time computation because it prevents polynomial-time adversaries from boosting their advantage by repeating (polynomially) many executions. However, in the data stream model, the \emph{one-pass} restriction on input access precludes this strategy altogether; indeed, streaming problems often become trivial with a single additional pass. 
We therefore define secure protocols as those achieving $o(1)$ distinguishing bias, which ensures that the probability of information leakage tends to zero. (See \cref{rem:subconstant-bias} for a more detailed discussion of alternative ``hybrid'' models and security bounds.)

The second crucial distinction is that the notion of zero-knowledge for SIPs is \emph{unconditional}, i.e., does not rely on computational assumptions, faithfully to the nature of the data stream model. This differs markedly from past work on zero-knowledge protocols where the verifier is able to process incoming messages in a streaming fashion (e.g., \cite{GKR08,CMT12}), 
whose zero-knowledge property is still with respect to the standard setting: while the honest verifier is a streaming algorithm, the protocols are only secure against polynomial-time adversaries. In this work, \emph{adversaries are also streaming algorithms}.

\iflipics\else\paragraph*{}\fi This paper explores the extent to which zero knowledge streaming interactive proofs (\zksip{}s) can outperform streaming algorithms: does there exist a problem they solve more efficiently? If so, can they do so for a natural problem such as \indexproblem, or even more ambitiously, achieve an exponential reduction in the space complexity for key problems in the data stream model?

\subsection{Main results}

Our main contribution is a strong positive answer to the questions above, providing the tools to construct zero-knowledge streaming interactive proofs for essentially any problem within the reach of current (non-zero-knowledge) SIPs.

In more detail, our main results are zero-knowledge versions of the two building blocks underlying all known SIPs: the \emph{sumcheck} and \emph{polynomial evaluation} protocols, from which we derive \zksip{}s for central streaming problems in \cref{sec:applications}. In doing so, we obtain \emph{exponentially} smaller space complexity for the fundamental \indexproblem and frequency moment problems (among others) when compared to streaming algorithms alone. 

We remark that all our \zksip{}s are two-stage protocols with a \emph{setup} and an \emph{interactive} stage. The setup is non-interactive and consists merely of a random string (see \cref{sec-to:vp-commitment}), which can be reused in multiple interactive executions (of possibly different protocols).\footnote{We also remark that omitting the setup yields an honest-verifier (but not malicious-verifier) \zksip{} with $n^{o(1)}$ communication complexity.} With this simple preprocessing step, we achieve essentially optimal time and communication complexities (i.e., subpolynomial or even polylogarithmic -- as do the best non-zero-knowledge SIPs -- and dramatically smaller than the complexity of streaming the input) in the interactive stage.

\paragraph*{Sumcheck Zero-Knowledge SIP\iflipics\else.\fi}
In the $\sumcheck$ problem, the goal is to compute the sum of evaluations of a low-degree polynomial over a large structured set (a subcube). Protocols for \sumcheck are some of the most important building blocks for interactive proofs, and are extremely useful for SIPs in particular.

We state the following theorem in generality, but note that standard parameter settings imply space complexity $s = \polylog(n)$ as well as $O(n^{1 + \delta})$ (for any constant $\delta > 0$) and $n^{o(1)}$ communication in the setup and interactive stages, respectively. (The time complexity is of the same order as the communication in both stages.) This is the case in all of our applications.

\begin{theorem}[\cref{thm:sumcheck-correctness,thm:sumcheck-zk}, informally stated]
    \label{infthm:zk-sumcheck}
    There exists a \zksip{} for \sumcheck where, for $\dimension$-variate low-degree polynomials over $\field$, the verifier uses $s = O(\dimension^2 \log \abs{\field})$ bits of space. The SIP communicates $\tilde O(\abs{\field}^\dimension)$ bits in its setup and $\abs{\field}^{\log\log \abs{\field} + O(1)}$ bits in the interactive stage.
\end{theorem}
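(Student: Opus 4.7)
My plan is to adapt the classical ``mask-then-sumcheck'' strategy for zero-knowledge sumcheck to the one-pass streaming setting, using the algebraic streaming commitment developed earlier in the paper as the underlying primitive. Concretely, during the non-interactive setup the prover will commit to a uniformly random low-degree polynomial $g$ having the same individual-degree and number of variables as the sumcheck polynomial $f$, and will also announce the claim $S_g = \sum_{x \in \Field^{\dimension}} g(x)$; this commitment is what dominates the $\tilde O(\abs{\field}^{\dimension})$ setup communication. In the interactive stage, the verifier samples a masking coefficient $\rho \in \field$ uniformly at random and sends it to the prover, after which both parties run a standard $\sumcheck$ on the randomized polynomial $h := f + \rho\,g$ with claimed sum $C + \rho\,S_g$, where $C$ is the value asserted for $f$.

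\textbf{Verifier execution.} Throughout the sumcheck rounds, the verifier needs to evaluate $f$ at the random point $(r_1,\dots,r_{\dimension})$ produced over the course of the protocol. I will have it do so in a streaming fashion by maintaining, in its $O(\dimension^2 \log\abs{\field})$ workspace, the standard algebraic fingerprint of $f$ used by non-ZK SIP sumcheck verifiers, updated in one pass over the stream. At the final round the verifier receives an alleged value $v$ for $h(r)$, queries the commitment at $r$ to recover a value $\tilde g$, and accepts iff $v = f(r) + \rho\,\tilde g$. The $\abs{\field}^{\log\log\abs{\field}+O(1)}$ interactive cost is then accounted for by the $O(\dimension\,\degree\log\abs{\field})$ sumcheck messages plus the opening proof supplied by the commitment scheme.

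\textbf{Soundness and zero-knowledge.} Soundness follows by composing classical sumcheck soundness, which is $O(\dimension\,\degree / \abs{\field})$, with the binding guarantee of the algebraic streaming commitment. For zero-knowledge I will exhibit a streaming simulator that, given black-box access to a malicious streaming verifier $\widetilde V$ and knowledge only of $C$, (i) plays an equivocable simulated commitment in place of the real one during setup, (ii) after receiving $\rho$, generates each round's univariate polynomial by sampling uniformly at random subject to the running partial-sum constraint inherited from the sumcheck transcript, and (iii) equivocates the commitment opening at the final point $(r_1,\dots,r_{\dimension})$ to whatever value is forced by consistency. The key statistical claim is that, because $g$ is uniform and $\rho \neq 0$ with probability $1 - 1/\abs{\field}$, the product $\rho\,g$ is a fresh uniformly random polynomial independent of $f$; hence the honest prover's partial-sum polynomials in the real execution are distributed exactly as uniformly random univariate polynomials constrained only by the current target, matching the simulator's transcript everywhere except at the opened evaluation, which is handled by equivocation.

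\textbf{Main obstacle.} The principal difficulty is arranging the algebraic streaming commitment to be simultaneously binding against an unbounded prover \emph{and} equivocable by a $\polylog$-space streaming simulator facing a malicious streaming $\widetilde V$ with sub-constant distinguishing bias. The usual rewinding-based equivocation strategies are unavailable in the one-pass streaming model, and computational hiding assumptions are ruled out by the unconditional security we target. I expect that the resolution, exploiting the $s$-space bound on $\widetilde V$ to obtain information-theoretic hiding through randomness that only an $\omega(s)$-space adversary could extract from the stream, is where the most delicate algebraic and information-theoretic arguments (and the reductions from average-case communication complexity that the abstract alludes to) will be needed, and it is what ultimately dictates the interactive communication bound.
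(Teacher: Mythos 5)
Your proposal follows the classical mask-and-sumcheck route (commit to a random low-degree $g$, publish $S_g$, verifier sends $\rho$, run sumcheck on $f + \rho g$), which is the strategy of Chiesa–Forbes–Spooner et al.\ in the poly-time setting; the paper is explicit that this is \emph{not} its approach, and indeed three essential components are missing or would fail in the streaming model.

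First, committing to all of $g$ is prohibitively expensive with the tools available. A full $\dimension$-variate degree-$\degree$ polynomial has $(\degree+1)^\dimension = \Theta(n)$ interpolation values, and the paper's algebraic commitment (\cref{thm:pv-algebraic-commitment}), which is the only unconditional streaming commitment on the table, has communication $\Theta(\ell\,\fieldsize^{3\ell}\log\fieldsize)$ for an $\ell$-tuple; with $\ell \approx n$ this is doubly exponential, not $\tilde O(\fieldsize^\dimension)$. Sending $g$'s evaluation table in the clear is cheap but obviously destroys hiding, so the $\tilde O(\fieldsize^\dimension)$ bound you ascribe to the setup is not substantiated. The paper sidesteps this entirely: the prover commits only to the $\dimension$ \emph{univariate} partial sums $f_1,\dots,f_\dimension$, each an $(\degree+1)$-tuple, so $\ell = \degree+1$ is small and the interactive cost stays $\fieldsize^{\log\log\fieldsize + O(1)}$.

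Second, masking does not by itself prevent leakage against a malicious streaming verifier. In the streaming sumcheck the verifier's fingerprint point $\fingerprinttuple$ must be fixed \emph{before} streaming $x$, yet the sumcheck challenges $r_1,\dots,r_\dimension$ are sent afterwards. An honest verifier sets $r_i = \fingerprinttuple_i$, but a malicious $\widetilde V$ can send any $r$ it likes; at the last round it receives the claimed $v = h(r)$ and the opening $\tilde g = g(r)$, and can compute $v - \rho\,\tilde g = f(r)$ for an $r$ of its own choosing --- an evaluation of $f$ at an input-dependent point, which it could not obtain alone. This is exactly the attack the paper's \emph{temporal commitment} (\cref{sec:vp-commitment}, \cref{thm:correct-set}) is designed to block: $V$ must produce a certificate that $\fingerprinttuple$ was chosen \emph{before} streaming $x$, and the prover aborts otherwise. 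Your proposal has no such mechanism.

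Third, the ``main obstacle'' you identify --- an equivocable, unconditionally binding streaming commitment --- is not what the paper overcomes; it is what the paper avoids. Statistical binding against an unbounded prover is incompatible with equivocation, and the paper's simulator never equivocates. Instead it uses white-box access to $\widetilde V$ and \cref{thm:correct-set} to extract, from the verifier's post-setup snapshot, the small set $C$ of points $\widetilde V$ could plausibly decommit to; it then computes $f^x$ at those points during the one pass over $x$, samples a random consistent polynomial $g$ that matches $f^x$ on $\line \cap C$, and commits to $g$ honestly (never opening to a value other than $g(\cdot)$). Indistinguishability then reduces to the hardness of \indexproblem via \cref{lem:string-to-bit-index}, not to any equivocation property. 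So the gap you flag is real for your approach, but the paper's resolution is a different route, not a fix for yours.
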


The round complexity (the number of messages sent or received by each party throughout the SIP) is $\dimension + O(1)$, a small constant larger than that of the standard sumcheck protocol.

We stress that while sumcheck is traditionally used (in the polynomial-time setting) to verify exponentially large sums in polynomial time, this is \emph{not} the goal of the streaming variant, as sums of evaluations over a large set can be obtained incrementally for functions computable in low space (a class that includes polynomials).

Nevertheless, the sumcheck protocol achieves exponential savings in space complexity for problems that require large space without interaction: it enables efficient verification of sums of polynomials implicitly defined  input defines implicitly, which require \emph{linear space} to compute otherwise.

\paragraph*{Polynomial Evaluation Zero-Knowledge SIP\iflipics\else.\fi}
We proceed to our second main result: a zero-knowledge SIP for the \emph{polynomial evaluation problem} \pep, which consists of computing a low-degree polynomial at a single point (revealed after the description of the polynomial). It allows a streaming algorithm to recover data that was seen but not stored, by saving a small \emph{fingerprint} of the stream.
Similarly to sumcheck, general-purpose \pep protocols are widely applicable to the design of SIPs.

\begin{theorem}[\cref{thm:pep-correctness,thm:pep-zk}, informally stated]
    \label{infthm:zk-pep}
    There exists a \zksip{} for \pep where, for $\dimension$-variate low-degree polynomials over $\field$, the verifier uses $O(\dimension \log \abs{\field})$ bits of space. The communication complexity is $\tilde O(\abs{\field}^\dimension)$ in the setup and $\poly(\abs{\field})$ bits in the interactive stage.
\end{theorem}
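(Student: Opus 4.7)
The plan is to construct the zkSIP for \pep by combining the zero-knowledge sumcheck protocol of \cref{infthm:zk-sumcheck} with the algebraic streaming commitment and the temporal commitment primitives developed earlier. The starting observation is the classical reduction from polynomial evaluation to sumcheck: since a low-degree $\dimension$-variate polynomial $f$ is its own low-degree extension, we have $f(\evaluationpoint) = \sum_{\bm{x} \in H^{\dimension}} f(\bm{x}) \cdot \chi_{\bm{x}}(\evaluationpoint)$, where $\chi_{\bm{x}}$ is the Lagrange basis for an interpolation grid $H$. The honest prover sends a purported value $v = f(\evaluationpoint)$ and runs the zero-knowledge sumcheck on the polynomial $\bm{x} \mapsto f(\bm{x}) \cdot \chi_{\bm{x}}(\evaluationpoint)$; the Lagrange factor is streaming-computable by the verifier in small space, so the sumcheck zkSIP applies directly.

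The sumcheck reduces the initial claim to a single claim about $f$ at a random point $\bm{r}$. I would verify this residual claim using the algebraic streaming commitment: during the setup stage the parties share randomness of near-linear length that defines the commitment key; during the stream the verifier maintains, in space $O(\dimension \log \abs{\field})$, a succinct fingerprint of $f$; after the interactive phase the prover opens the commitment at $\bm{r}$ via the temporal commitment mechanism, and the verifier cross-checks the opening against both its fingerprint and the residual sumcheck claim.

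The zero-knowledge analysis proceeds by building a simulator in two composable layers. Given only the output $f(\evaluationpoint)$, it first invokes the sumcheck zkSIP simulator to produce a transcript ending in an arbitrary residual claim $u$, and then invokes the hiding property of the algebraic streaming commitment to produce a commitment/opening transcript consistent with $u$. The main obstacle is proving hiding against a \emph{streaming} distinguisher that correlates its own fingerprints of the input with the verifier's state. The standard cryptographic fix -- having the prover add a random low-degree masking polynomial $\mu$ and execute the protocol on $f + \mu$, with appropriate bookkeeping to preserve the value at $\evaluationpoint$ -- must here be tailored so that $\mu$ itself admits a succinct streamable commitment (otherwise the verifier's space bound is exceeded). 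I expect this to reduce to an average-case communication-complexity bound of the kind already alluded to in the introduction.

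Correctness and soundness then follow by a standard composition argument: soundness by a union bound over the sumcheck failure and the binding failure of the commitment, correctness directly from the reduction. The complexity bounds fall out by threading the sumcheck parameters through the reduction and accounting for the commitment key, giving $O(\dimension \log \abs{\field})$ verifier space, $\tilde O(\abs{\field}^{\dimension})$ setup communication and $\poly(\abs{\field})$ interactive communication, as claimed.
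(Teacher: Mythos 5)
Your plan to reduce \pep to \sumcheck and invoke \cref{infthm:zk-sumcheck} takes a genuinely different route from the paper, but it cannot establish the stated parameters. Plugging in the zero-knowledge sumcheck protocol inherits its costs: \cref{infthm:zk-sumcheck} gives verifier space $O(\dimension^2 \log \abs{\field})$ and interactive communication $\abs{\field}^{\log\log \abs{\field} + O(1)}$, which are respectively quadratic in $\dimension$ (where the target is linear) and \emph{super}-polynomial in $\abs{\field}$ (where the target is $\poly(\abs{\field})$). The paper instead constructs \textsf{zk-pep} directly, bypassing sumcheck entirely: the verifier sends a line $\line$ through the query point and its temporally committed fingerprint location; the prover algebraically commits to the $\degree\dimension$ evaluations that determine $f^x_{|\line}$ and decommits to the single linear combination $\bm{\chi}(\rfe1)\cdot\pcommitstring$ the verifier needs. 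Since only one commit/decommit round is required (versus $\dimension$ for sumcheck), the protocol has $O(1)$ rounds, $O(\dimension \log \abs{\field})$ space, and $\poly(\abs{\field})$ interactive communication. The logical dependency also runs opposite to your proposal: the paper proves \textsf{zk-pep} first and then generalizes the same commitment machinery to the multi-round setting to obtain \textsf{zk-sumcheck}, not vice versa.

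The second gap is in your sketch of the zero-knowledge argument. Masking by a random low-degree polynomial $\mu$ and running the protocol on $f + \mu$ is a technique from the polynomial-time ZK literature (the paper explicitly flags \cite{CFGS22} as using a completely different model), and you correctly note it may not be implementable under a streaming space bound; but you leave it as a hope rather than a construction. In the paper, no masking is used. Hiding is achieved entirely by the algebraic commitment (which bottoms out in the average-case hardness of \indexproblem for streaming algorithms, \cref{prop:index-hardness}), and the malicious-verifier case is handled by the temporal commitment (\cref{thm:correct-set}), which ensures the verifier's ``claimed'' fingerprint point is pre-committed and thus uncorrelated with $x$. The simulator uses whitebox access to extract the small set $C$ of points the verifier can temporally decommit to, queries $f^x$ only on $C$, and samples a fake interpolating polynomial $g$ that agrees with $f^x$ exactly on $\line \cap C$; indistinguishability is then a reduction to one-way \indexproblem over bits via \cref{lem:string-to-bit-index}. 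Any correct proof of \cref{infthm:zk-pep} needs this machinery (or an equivalent) made explicit, not deferred.
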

As in \cref{infthm:zk-sumcheck}, standard parameter settings imply \zksip{}s with polylogarithmic space, $n^{o(1)}$ time and communication complexity (in the interactive stage)\footnote{A nontrivial security guarantee still holds with $\polylog(n)$ communication, but with $n^{o(1)}$ the protocol becomes secure against arbitrary $\polylog(n)$-space adversaries; see \cref{rem:indistinguishability-pep}.} as well as near-linear communication in the setup. The round complexity is $O(1)$.

\subsubsection{Streaming commitment protocols}
En route to proving \cref{infthm:zk-sumcheck,infthm:zk-pep}, we construct tools for the design of \zksip{}s which we find of independent interest. Namely, we provide two types of \emph{commitment protocols} for streaming algorithms.

We remark that in the polynomial-time setting, the existence of secure commitment schemes is equivalent to the existence of one-way functions \cite{IL89, N91, HILL99}, so it may seem surprising that our results hold \emph{unconditionally}. However, in the incomparable model of streaming algorithms, which are not time-bounded, but are instead severely constrained with respect to space and input access, we show that no cryptographic assumption is needed.\footnote{We refer to commitment \emph{protocols} rather than schemes in the streaming model to avoid ambiguity with the polynomial-time analogue; see \cref{def:commitment}.}

\paragraph*{Streaming algebraic commitment protocol\iflipics\else.\fi}
The following result shows that not only does a streaming commitment protocol exist, but that it can be made \emph{linear}; that is, the sender may commit to a sequence of messages and decommit to a linear combination thereof, with linear coefficients of the receiver's choosing.

\begin{theorem}[\cref{thm:pv-algebraic-commitment}, informally stated]
    \label{infthm:pv-commitment}
    There exists a commitment protocol whereby an unbounded-space sender commits a tuple $\bm{\fe1} \in \field^\ell$ to a streaming receiver and decommits to a linear combination $\bm{\fe1} \cdot \evaluationpoint$, with linear coefficients $\bm{\fe2}$ chosen by the receiver. The receiver's space complexity is $O (\ell \log \abs{\field})$ and the protocol communicates $\tilde O\left(\abs{\field}^{3\ell}\right)$ bits.
\end{theorem}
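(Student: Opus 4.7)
The plan is to build the commitment by having the sender transmit a high-entropy algebraic mask in a single streaming pass, forcing the receiver to fingerprint it within its $O(\ell \log \abs{\field})$-bit budget. Hiding will then follow from the receiver's inability to store enough of the mask, and binding will reduce to a Schwartz--Zippel identity check at a random point secretly chosen by the receiver.

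To commit, the sender samples a uniformly random polynomial $R\colon \field^\ell \to \field$ of sufficiently large individual degree and streams the evaluation table of $F(\bm{x}) := \bm{\alpha} \cdot \bm{x} + R(\bm{x})$ over a suitable extension domain; the degree and extension sizes are chosen so that the total communication is $\tilde O(\abs{\field}^{3\ell})$ bits. In a single pass, the receiver secretly samples a random test point $\bm{u} \in \field^\ell$ and maintains $F(\bm{u})$ using $O(\ell \log \abs{\field})$ bits via standard streaming polynomial evaluation. For decommitment, once the receiver has released $\bm{\beta}$ (and $\bm{u}$ for verification), the sender responds with the claimed value $s := \bm{\alpha} \cdot \bm{\beta}$ along with the univariate restrictions $F|_L$ and $R|_L$ to the line $L$ through $\bm{u}$ and $\bm{\beta}$; the receiver verifies that the restrictions agree with its stored fingerprint values at $\bm{u}$, that $F|_L - R|_L$ is linear in the line parameter, and that its value at the endpoint corresponding to $\bm{\beta}$ equals $s$. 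Correctness is immediate, and binding follows because any successful cheat on $s$ requires a polynomial identity to fail at the random $\bm{u}$, which happens with probability $o(1)$ by Schwartz--Zippel in the relevant parameter regime.

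Hiding is the more delicate ingredient and I expect it to be the main technical obstacle. The intuition is that because $R$ has individual degree large enough to make its evaluation table $d$-wise independent, any oblivious linear sketch of $F$ of size $O(\ell \log \abs{\field})$ is statistically independent of the linear offset $\bm{\alpha} \cdot \bm{x}$; a simulator with comparable space can then reproduce the receiver's view by sampling a fresh mask $F'$ with the same marginal distribution and, at decommitment time, fitting a line restriction consistent with both its stored fingerprint and the target value $s$. The real challenge is ruling out \emph{adaptive} streaming receivers that tailor their fingerprint as the stream unfolds; I would handle this via an average-case communication-complexity reduction, in the spirit of the techniques foreshadowed in the abstract, arguing that any such successful adaptive distinguisher would yield an improved one-way protocol for a canonical fingerprinting problem whose one-way complexity is known to be $\omega(\ell \log \abs{\field})$, a contradiction.
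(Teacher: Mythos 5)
Your proposal takes a genuinely different route from the paper's (which hides the tuple at a random column $k$ of a long random matrix $\pcommitstring$, sends a short correction $\bm{\fe3}$ with $\bm{\fe3}_i = \bm{\fe1}_i - \pcommitstring_{ik}$, and reduces hiding to the average-case hardness of \searchindex). But the binding property of your scheme fails, and fails unconditionally.

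The table your sender transmits is $F = \bm{\alpha}\cdot\bm{x} + R$ for a uniformly random degree-$\degree$ polynomial $R$, and this distribution is \emph{identical for every} $\bm{\alpha}$ -- adding a fixed affine polynomial to a uniformly random degree-$\degree$ polynomial leaves the distribution unchanged. So your commitment is perfectly hiding even against an unbounded receiver, and a perfectly hiding commitment cannot be binding: for any alternative tuple $\bm{\alpha}'$, the polynomial $R' := F - \bm{\alpha}'\cdot\bm{x}$ is an equally valid mask, and the transcript the receiver saw is consistent with having committed to $\bm{\alpha}'$ under key $R'$. Concretely, at decommitment the receiver holds a fingerprint of $F$ but has no commitment to, and no fingerprint of, $R$ or the purported affine part, so the decomposition of $F$ into an affine part plus $R$ is simply not pinned down. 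A cheating sender who wants to open to an arbitrary value $s'$ picks an affine $\ell'$ on the line $L$ with $\ell'(t_\beta)=s'$, where $t_\beta$ is the parameter of $\bm{\beta}$ on $L$, and responds with $R'|_L := F|_L - \ell'$, a valid degree-$\degree\ell$ univariate polynomial; the fingerprint test constrains only $F|_L$, the linearity of $F|_L - R'|_L$ holds by construction, and the endpoint value is $s'$, so every check passes. The asymmetry that makes the paper's construction work is precisely that its commitment is \emph{not} information-theoretically hiding: an unbounded receiver who stored $\pcommitstring$ could decode $\bm{\fe1}_i = \bm{\fe3}_i + \pcommitstring_{ik}$, so the message is genuinely determined by the commitment, while a space-bounded receiver cannot store $\pcommitstring$ (hiding), yet it \emph{can} fingerprint $\pcommitstring$ and store the short correction $\bm{\fe3}$, which together with the \pepprotocol decommitment soundly certifying $\pcommitstring_{ik}$ against that fingerprint yields binding.

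A secondary gap is that your hiding sketch names the right obstacle (adaptive malicious streaming receivers) and the right tool (an average-case communication-complexity reduction) but leaves the target problem unspecified; the paper reduces to \searchindex over a large alphabet and then, via \cref{lem:string-to-bit-index}, brings the alphabet down to binary to avoid a superpolynomial commitment length. This is moot, however, until the binding failure above is repaired.
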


\paragraph*{Temporal commitment protocol\iflipics\else.\fi}
The second component is a new notion of a streaming commitment, which we call \emph{temporal}.
This protocol allows a streaming \emph{verifier} to ``timestamp'' its message, providing evidence that it was chosen before streaming a particular input.

\begin{theorem}[\cref{thm:correct-set}, informally stated]
    \label{infthm:correct-set}
    Let $\alphabet$ be an alphabet and $A$ a space-$s$ streaming algorithm with $s = \polylog \abs{\alphabet}$. If $A$ streams $z \sim \alphabet^\vcommitlength$ and $\vcommitlength$ is large enough, the following holds: \emph{independently of its computation} after $\vcommitstring$, with high probability $A$ can output at most $s$ symbol-certificate pairs $(\symbol1, i) \in \alphabet \times [\vcommitlength]$ such that $\symbol1 = \vcommitstring_i$.
\end{theorem}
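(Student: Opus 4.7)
The plan is an incompressibility (source-coding) argument: since $\vcommitstring \sim \alphabet^\vcommitlength$ is uniform while $A$ holds only $s$ bits of memory when the stream ends, any symbol of $\vcommitstring$ that $A$ later certifies must be ``paid for'' in those $s$ bits. The key enabling observation is that the phrase ``independently of its computation after $\vcommitstring$'' precisely says that the correct pairs $A$ ever emits depend on $\vcommitstring$ only through its post-stream state $w \in \bitset^s$: any further input, randomness $\rho$, or interaction is independent of $\vcommitstring$ given $w$. So the collection of correct pairs is a function of $(w,\rho)$, and it suffices to bound how many of them can match $\vcommitstring$.

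For the encoding, suppose that with probability bounded away from $0$ over $(\vcommitstring, \rho)$, $A$ outputs pairs $(a_1, i_1), \ldots, (a_k, i_k)$ on pairwise distinct positions with $a_j = \vcommitstring_{i_j}$ for each $j$ (duplicate positions are harmless, since at most one pair per position can be correct). I would encode $\vcommitstring$ by the triple $(w,\, J,\, \vcommitstring|_{[\vcommitlength]\setminus J})$, where $J = \{i_1, \ldots, i_k\}$, using $s + \log\binom{\vcommitlength}{k} + (\vcommitlength - k)\log\abs{\alphabet}$ bits in total. A decoder sharing $\rho$ as common randomness re-runs $A$'s post-stream output routine on $w$ to obtain the candidate pairs, reads off the symbols at positions in $J$ from those pairs, and reads the remaining symbols directly from the encoding.

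Shannon source coding forces the expected length of any lossless encoding of $\vcommitstring$ to be at least $\vcommitlength \log \abs{\alphabet}$, giving
\[
    k\log\abs{\alphabet} \;\le\; s + \log\binom{\vcommitlength}{k}.
\]
In the intended regime --- where $\vcommitlength$ is polynomially bounded in $\abs{\alphabet}$, which I expect is the quantitative form of ``$\vcommitlength$ large enough'' --- one has $\log\binom{\vcommitlength}{k} \le (1-\Omega(1))\,k\log\abs{\alphabet}$, hence $k = O(s/\log\abs{\alphabet}) \le s$. A standard Markov argument then converts this expectation bound into the claimed high-probability statement.

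The main subtlety I anticipate is the interplay between the ``any computation after $\vcommitstring$'' quantifier and $A$'s randomness: since the encoder has access to $\vcommitstring$, it may bias $\rho$ and the post-stream transcript to maximise $k$. The cleanest resolution is to treat $\rho$ as common randomness, absorbed into $A$'s output function without charge, so that the worst-case adversarial post-computation is already encompassed by the $s$-bit budget for $w$. A second delicate point is ensuring that the $\log\binom{\vcommitlength}{k}$ overhead does not swamp the $k\log\abs{\alphabet}$ savings --- this is exactly where the hypothesis on $\vcommitlength$ is used, and the step I would check most carefully.
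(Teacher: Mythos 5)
Your incompressibility approach is a genuinely different route from the paper's, which instead carries out a detailed Hamming-ball volume analysis to bound the expected number of coordinates of $\vcommitstring$ that a space-$s$ algorithm can reconstruct (\cref{lem:reconstruct-bound}), and then converts that expectation bound into the theorem statement via a separate combinatorial step. If carefully executed, your encoding argument could plausibly replace \cref{lem:reconstruct-bound}, perhaps more simply. But there is a genuine gap before you reach the theorem as actually stated.

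The formal version (\cref{thm:correct-set}) asserts the existence of a \emph{fixed} set $C \subset \alphabet \times [\vcommitlength]$ of size $s$, determined solely by $A$'s snapshot after $\vcommitstring$, such that
\[
  \P\!\left[A \text{ outputs } (\vcommitstring_i, i) \notin C\right] = o(1),
\]
where the probability is over $\vcommitstring$, $A$'s post-stream randomness, and whatever further input it reads. Your argument bounds the number of correct pairs that $A$ emits \emph{in a single post-stream execution}: the encoder fixes one $\rho$, the decoder replays that same $\rho$, and the savings in the code length scale with the number of correct positions under that one execution. This does not control the \emph{aggregate} probability weight of correct outputs across executions. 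Different choices of $\rho$ or further input may unlock different correct pairs, and the union of those pairs can in principle exceed $s$ even though each single run emits few; a fixed $C$ of size $s$ would then fail to capture a constant fraction of the mass. Your encoder cannot cheaply merge executions either: if it tried to use all correct pairs across several $\rho$, the decoder would see conflicting candidate symbols at a position and have no way to know which candidate equals $\vcommitstring_i$, reintroducing the cost you saved.

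What bridges the gap in the paper is \cref{clm:probability-ip}: for probability vectors $p, q \in [0,1]^\vcommitlength$ and any $t$, there is a set $C$ of size $t$ with $\sum_{i \notin C} p_i q_i \leq 1/t$. Applied with $p_i = \P[\text{output coordinate } i]$ and $q$ proportional to $r_i = \P[\text{output symbol is } \vcommitstring_i \mid \text{output coordinate } i]$, together with the bound $\|r\|_1 = o(s)$ obtained from the reconstruction analysis (conditioned on the snapshot's preimage being large), it yields the fixed $C$ with $o(1)$ tail. Your proof needs an analogous step; without it, you have proved a per-execution count bound, not the distributional fixed-set statement the theorem requires. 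A secondary but real concern is the quantitative regime: the paper takes $\abs{\alphabet} = \Theta(\vcommitlength/\log\log\vcommitlength)$, so $\abs{\alphabet}$ and $\vcommitlength$ are nearly equal; as you anticipate, this makes the comparison between the $k\log\abs{\alphabet}$ savings and the $\log\binom{\vcommitlength}{k}$ overhead tight, and you would need to carry the $k\log(\vcommitlength/k)$ asymptotics precisely rather than crudely.
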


In other words, $A(\vcommitstring)$ cannot remember more than $s$ symbol-certificate pairs for the string $\vcommitstring$; and the bound is unchanged if $A$ obtains information uncorrelated with $\vcommitstring$ after reading the stream.

\subsection{Applications}
\label{sec:applications}

Recall that \cref{infthm:zk-pep,infthm:zk-sumcheck} provide zero-knowledge versions of the general tools that essentially underlie all known SIPs, namely, the \emph{sumcheck} and \emph{polynomial evaluation} protocols. We demonstrate the power and flexibility of our tools by deriving from them explicit \zksip{}s for streaming problems of fundamental importance: \indexproblem and \frequencymoment, as well as \pointquery, \rangecount, \selection and \innerproductproblem.

As mentioned in the previous section, while the following statements highlight space complexities, the communication complexities are $n^{o(1)}$ in the interactive stage and $O(n^{1 + \delta})$ for arbitrarily small $\delta$ in the setup stage.

In the \indexproblem problem, a streaming algorithm reads a length-$n$ string $x$ followed by an index $j \in [n]$, and its goal is to output $x_j$. \indexproblem is a hard problem for streaming algorithms, requiring \emph{linear} space to solve \cite{RY20}. By instantiating our \zksip{} for polynomial evaluation with respect to the low-degree extension of the input evaluated at the index $j$, we obtain the following.

\begin{corollary}[\cref{cor:index}, informally stated]
    \label{infcor:index}
    There exists a \zksip{} for $\indexproblem$ with logarithmic verifier space complexity.
\end{corollary}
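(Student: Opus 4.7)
The plan is to reduce \indexproblem to the polynomial evaluation problem \pep and invoke \cref{infthm:zk-pep}. Fix parameters $\dimension = \Theta(\log n / \log \log n)$ and a subset $\Field \subseteq \field$ with $\abs{\Field}^\dimension = n$ (padding $x$ with zeros if necessary) and $\log \abs{\field} = O(\log \log n)$. Fix an efficiently computable bijection $\iota \colon [n] \to \Field^\dimension$ and interpret the stream $x \in \bitset^n$ as a function $f_x \colon \Field^\dimension \to \field$ by $f_x(\iota(i)) = x_i$. Its low-degree extension $\hat f_x$ is the unique $\dimension$-variate polynomial over $\field$ of individual degree at most $\abs{\Field} - 1$ agreeing with $f_x$ on $\Field^\dimension$; by construction, $\hat f_x(\iota(j)) = x_j$ for every $j \in [n]$.

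The \zksip{} for \indexproblem then proceeds as follows. During the stream of $x$, both parties implicitly observe the specification of $\hat f_x$; once the query index $j$ arrives, the verifier computes the embedding $\iota(j) \in \field^\dimension$ and invokes the \pep \zksip{} of \cref{infthm:zk-pep} with target point $\iota(j)$, receiving a value $v \in \field$ which it outputs as the answer. Completeness and soundness are inherited from the \pep \zksip{}, using that $\hat f_x(\iota(j)) = x_j$ and that a cheating prover is caught with the \pep soundness error. For zero-knowledge, a malicious streaming verifier for the \indexproblem \zksip{} \emph{is} a malicious streaming verifier for the embedded \pep interaction on the implicit polynomial $\hat f_x$; the \pep simulator of \cref{infthm:zk-pep} therefore supplies a valid \indexproblem simulator achieving $o(1)$ distinguishing bias.

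For the space accounting, the verifier stores only (i) the \pep verifier's internal state, which is $O(\dimension \log \abs{\field})$ bits by \cref{infthm:zk-pep}, and (ii) the index $j$ together with its embedding $\iota(j)$, which fit in $O(\log n)$ bits. Substituting $\dimension = \Theta(\log n / \log \log n)$ and $\log \abs{\field} = O(\log \log n)$ yields $O(\log n)$ verifier space, as claimed. The $n^{o(1)}$ interactive communication and near-linear setup communication follow directly from \cref{infthm:zk-pep} under these parameters.

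The main obstacle I anticipate is checking that the \pep \zksip{} can genuinely be \emph{driven} by an input stream that only implicitly describes $\hat f_x$ through its values on $\Field^\dimension$, rather than by an explicit coefficient list. Because the standard treatment of polynomial evaluation in the SIP literature is precisely to have the verifier maintain a fingerprint-based summary updated one value at a time --- and because the streaming algebraic commitment of \cref{infthm:pv-commitment} appears designed to be compatible with this mode --- the reduction should be transparent. Should the formal statement of \cref{infthm:zk-pep} instead assume an explicit polynomial description, a short bridging argument relating the two canonical stream orderings (the bits of $x$ versus the values of $\hat f_x$ on $\Field^\dimension$) would close the gap without affecting the stated parameters.
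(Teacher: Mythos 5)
Your reduction is essentially identical to the paper's: instantiate $\textsf{zk-pep}$ with $f^x = \hat x$ (the low-degree extension of the input), degree and dimension chosen so that the interpolation domain has size $n$, and field size $\polylog(n)$, giving $O(\dimension\log\fieldsize) = O(\log n)$ verifier space. The obstacle you flag at the end is a non-issue: \cref{thm:pep-correctness} is explicitly stated for mappings $x \mapsto f^x$ where an evaluation $f^x(\fingerprinttuple)$ is computable in $O(\dimension\log\fieldsize)$ space while streaming $x$, which the LDE satisfies via incremental Lagrange interpolation.
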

\noindent Note that this matches the space complexity of the non-zero-knowledge SIP of \cite{CCMTV15, CCMTV19}.

In the $\frequencymoment_k$ (or $F_k$) problem, an algorithm streams $x \in [\ell]^n$ and its task is to compute $F_k(x) = \sum_{i \in [\ell]} \varphi_i^k$, the $k^\text{th}$ moment of the frequency vector $(\varphi_1,\ldots,\varphi_\ell)$, where $\varphi_i$ is the number of occurrences of $i$ in $x$. This is a central problem in the streaming literature, which is well known to require linear space to compute \cite{AMS99}; by instantiating our sumcheck protocol with respect to the low-degree extension \emph{of the frequency vector}, we obtain a zero knowledge protocol for the exact computation of $F_k$.

\begin{corollary}[\cref{cor:frequency-moment}, informally stated]
    \label{infcor:frequency-moment}
    For every $\ell \in [n]$ and $k$, there exists a \zksip{} that computes $F_k$ with $\polylog(n)$ verifier space complexity.
\end{corollary}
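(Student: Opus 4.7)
The plan is to reduce the verification of $F_k$ to a sumcheck instance and invoke \cref{infthm:zk-sumcheck}. Choose $\dimension = \log n / \log\log n$ and $\cubesidelength = \log n$ so that $\cubesidelength^\dimension \geq \ell$, and let $\field$ be a field of size $\log^2 n$ containing a subset $H$ of size $\cubesidelength$. Fix a bijection between $[\ell]$ and a subset of $H^\dimension$, and let $\tilde\varphi: \field^\dimension \to \field$ denote the low-degree extension of the frequency vector $\varphi$: the unique polynomial of individual degree at most $\cubesidelength - 1$ that agrees with $\varphi$ on $H^\dimension$. Then $F_k(x) = \sum_{\mathbf{w} \in H^\dimension} \tilde\varphi(\mathbf{w})^k$, so computing $F_k$ reduces to sumcheck on $\tilde\varphi^k$, whose individual degree is $k(\cubesidelength - 1)$.

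To finish the sumcheck, the verifier must evaluate $\tilde\varphi^k$ at a random point $\mathbf{r} \in \field^\dimension$, which it can do by maintaining a streaming fingerprint for $\tilde\varphi(\mathbf{r})$ in $O(\dimension \log |\field|) = \polylog(n)$ space: each arriving stream symbol $x_t$ contributes the Lagrange basis value $\chi_{x_t}(\mathbf{r})$ to a running sum in $\field$, and the verifier raises the final value to the $k$-th power to obtain $\tilde\varphi(\mathbf{r})^k$. Combining this fingerprint with the $O(\dimension^2 \log |\field|)$-space sumcheck verifier of \cref{infthm:zk-sumcheck} yields the claimed $\polylog(n)$ overall space complexity, with interactive communication $n^{o(1)}$ and setup communication near-linear in $n$, both inherited from that theorem.

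Zero-knowledge is inherited from the underlying sumcheck: a simulator, which streams $x$ itself, independently reconstructs the fingerprint $\tilde\varphi(\mathbf{r})$ (a function of $x$ and the verifier's random tape) and invokes the sumcheck simulator to produce the remainder of the view. The hard part will be arguing that the composition of the streaming fingerprint with the temporal commitment machinery of \cref{infthm:correct-set} --- used within the ZK sumcheck to pin down verifier randomness --- preserves indistinguishability: one must ensure that $\mathbf{r}$ is effectively fixed before the stream so that the fingerprint is well defined, while simultaneously guaranteeing that a malicious prover cannot extract $\mathbf{r}$ from whatever commitment the verifier makes during the stream, and that the final value $\tilde\varphi(\mathbf{r})^k$ supplied to the sumcheck's consistency check does not encode any information about $x$ beyond what an honest verifier could compute from its own one-pass view.
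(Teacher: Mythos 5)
Your reduction is essentially the paper's: both cast $F_k$ as a sumcheck over (a power of) the low-degree extension of the frequency vector, have the verifier maintain a streaming fingerprint that it raises to the $k$-th power, and invoke \cref{infthm:zk-sumcheck} for zero-knowledge; the paper's extra step of summing out one coordinate of $\hat\varphi$ to obtain a $(\dimension-1)$-variate $f^x$ is a cosmetic variant of your direct use of $\tilde\varphi^k$. One claim, however, does not survive your parameter choice: with $\dimension = \log n/\log\log n$ and $\fieldsize = \log^2 n$ we get $\fieldsize^\dimension = n^2$, so the temporal-commitment setup of \cref{prot:zk-sumcheck} communicates $\fieldsize^\dimension \cdot \polylog(n) = n^2 \polylog(n)$ bits --- quadratic, not near-linear as you assert (and as the formal \cref{cor:frequency-moment} requires). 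The paper sidesteps this by taking $\fieldsize = \Theta\left(\log^{1 + 2/\delta} n\right)$ and $\dimension = \delta \log n / 2\log\log n$, so that $\fieldsize^\dimension = n^{1+\delta/2}$, at the price of a larger individual LDE degree $\degree = \Theta\left(k \log^{2/\delta} n\right)$; this leaves the $\polylog(n)$ space bound of the informal statement unaffected. Lastly, a small correction to your closing paragraph: the temporal commitment exists to pin down a malicious \emph{verifier}'s alleged fingerprint point before the stream (a zero-knowledge concern), whereas the prover not learning $\mathbf r$ early is the ordinary soundness mechanism of sumcheck and needs no new argument here.
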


Lastly, we illustrate the flexibility of our protocols by constructing additional \zksip{}s for several other problems: \pointquery (where the input is a stream of integer updates to an $\ell$-dimensional vector $y$ followed by an index $j$ and the task is to output $y_j$); \rangecount (where the input is a sequence of points in $[\ell]$ followed by a range $R \subseteq [\ell]$ and the task is to output the number of occurrences in $R$); \selection (which generalises the computation of the median); and \innerproductproblem (where the task is to output the inner product between the frequency vectors of a pair of streams).

\begin{corollary}[\cref{cor:point-query,cor:range-count,cor:selection,cor:inner-product}, informally stated]
    \label{infcor:inner-product}
    There exist $\polylog(n)$-space \zksip{}s for \pointquery, \rangecount, \selection and \innerproductproblem.
\end{corollary}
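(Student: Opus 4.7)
The plan is to obtain each of the four \zksip{}s as a corollary of \cref{infthm:zk-sumcheck,infthm:zk-pep} by composing them with the standard (non-zero-knowledge) reductions from each streaming problem to \sumcheck or \pep, analogous to the way \cref{infcor:index,infcor:frequency-moment} reduce \indexproblem to \pep and \frequencymoment to \sumcheck. Throughout, the verifier maintains its implicit state (the low-degree extension fingerprint) using the algebraic streaming commitment of \cref{infthm:pv-commitment}, so all per-stream work is linear in the updates and can be shared across composed invocations.

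Concretely, I would proceed as follows. For \pointquery, associate the dynamic vector $y \in \Z^\ell$ with its low-degree extension $\hat y\colon \field^\dimension \to \field$ (with $\dimension$ and $|\field|$ set as in \cref{infthm:zk-pep} so that $|\field|^\dimension \ge \ell$) and invoke zk\pep on $\hat y$ at the encoding of the query index $j$; because each update to $y$ affects $\hat y$ linearly, the verifier can maintain the commitment required by \cref{infthm:pv-commitment} in $\polylog(n)$ space. For \rangecount, write the answer as $\sum_{i \in \cubeside^\dimension} \hat\varphi(i)\cdot \hat{\chf}_R(i)$, where $\hat\varphi$ is the LDE of the frequency vector and $\hat{\chf}_R$ is a low-degree extension of the range indicator (which the verifier can compute locally from the description of $R$, revealed at query time); apply zk\sumcheck to this product, with the final oracle step answered by a zk\pep call on $\hat\varphi$ that reuses the commitment computed from the stream. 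For \innerproductproblem, write $\langle \varphi_1,\varphi_2\rangle = \sum_{i\in\cubeside^\dimension}\hat\varphi_1(i)\hat\varphi_2(i)$ and apply zk\sumcheck to the product polynomial, collapsing the final round via two independent zk\pep invocations, one per stream. For \selection, reduce to $O(\log \ell)$ instances of \rangecount via binary search on the rank, using a single shared setup.

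The main obstacle I anticipate is \emph{composition}: several of these protocols invoke zk\sumcheck and then, at its final round, a zk\pep on the same committed LDE, while \selection invokes $O(\log \ell)$ \rangecount protocols in sequence. Since the zero-knowledge definition of the paper allows $o(1)$ distinguishing bias rather than negligible bias, I must (i) take a hybrid argument across the composed invocations with total bias $o(1)$, which forces a logarithmic tightening of each sub-protocol's bias; (ii) ensure that the simulator for the composed protocol can consistently produce a transcript in which the commitment opened by the inner zk\pep agrees with the polynomial fingerprinted by zk\sumcheck, which is enabled precisely because both rely on the algebraic commitment of \cref{infthm:pv-commitment} and can therefore share a common committed object; and (iii) verify that reusing the same non-interactive setup across the composed invocations does not compromise either soundness or simulatability, exploiting the fact that the setup is a public random string independent of the input.
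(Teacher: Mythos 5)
Your proposal lands on the same high-level strategy as the paper — instantiate \textsf{zk-pep} or \textsf{zk-sumcheck} with the right low-degree extensions — and your treatments of \pointquery and \innerproductproblem match the paper's (the paper also uses the Chinese Remainder Theorem to handle large updates in \pointquery, which you gloss over, but that is a minor space–blowup issue). However, your routes for \rangecount and \selection differ meaningfully from the paper's, and your composition concerns are largely an artifact of those choices.

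For \rangecount, the paper does \emph{not} invoke \sumcheck on a product polynomial. It reduces \rangecount directly to \pointquery: the verifier simulates the derived stream of updates $(R' \in \mathcal{R} : x_i \in R')$ while streaming $x$, so the counts for every range in $\mathcal R$ become an implicit vector, and the final answer is a single coordinate lookup in that vector, done via \textsf{zk-pep}. This sidesteps the difficulties in your plan: there is no need for the verifier to evaluate an LDE $\hat{\chf}_R$ (which in the paper's formulation is an arbitrary element of a $\poly(n)$-sized family $\mathcal R$, not a structured interval), no need to fingerprint a polynomial depending on $R$ before $R$ is known, and no final \textsf{zk-pep} call grafted onto \textsf{zk-sumcheck}. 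Indeed, the paper's \textsf{zk-sumcheck} already has the verifier compute its own fingerprint $f^x(\fingerprinttuple)$ while streaming $x$; no auxiliary \pep oracle is ever needed, so the delicate ``compose sumcheck with pep on a shared commitment'' step in your item (ii) never arises.

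For \selection, the paper defines the problem as a \emph{decision} problem: the stream ends with the candidate element $k$, the rank $r$, and the offsets $\phi, \phi'$, so the verifier only needs to certify two cumulative-frequency sums. It therefore runs \rangecount exactly twice (saving two independent fingerprints during a single pass of $x$), with ranges $[n]\setminus[k-1]$ and $[n]\setminus[k]$. Your plan of binary searching over ranges solves the \emph{search} version, and in a one-pass streaming model adaptivity is problematic: the ranges chosen by binary search depend on earlier answers, yet all fingerprints must be fixed before the stream. Saving $O(\log\ell)$ fingerprints and hoping to use them adaptively is not sound unless you argue which fingerprint you can open at which round — and this is precisely the kind of interaction that the temporal commitment is designed to constrain, so your simulator argument would have to be rebuilt rather than inherited. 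The paper's decision-version formulation avoids this entirely. Your worry about $o(1)$ bias degrading across $O(\log\ell)$ composed invocations is therefore also moot for the paper's two-invocation protocol.
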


\subsection{Related work}

    This work builds on the line of research on streaming interactive proofs, initiated by \cite{CCM09,CCMT14} and actively investigated over the last decade \cite{CMT13, CTY11, CMT12, CCGT14, T14, GR15, CCMTV15, DTV15, ADRV16, CH18, CCMTV19, G20, CGT20}. These sublinear interactive proofs are also closely related to proofs of proximity \cite{RVW13, G17, GR17, GR18, GGR18, RR20,CG18a, GG21, GLR21, DGRT22}.
    
    Indeed, our two main results can be seen as zero-knowledge versions of the main techniques in \cite{CCMTV19} and \cite{CMT12}: respectively, a polynomial evaluation and a sumcheck protocol. (We note that while \cite{CFGS22} construct a zero-knowledge sumcheck protocol via an algebraic commitment scheme, their model and techniques are completely different.)
    
    Past work has studied zero-knowledge protocols where the verifier is able to process incoming messages in a streaming fashion (e.g., \cite{GKR08,CMT12}), but their zero-knowledge property is with respect to the standard, \emph{polynomial-time}, setting; that is, while the honest verifier is a streaming algorithm, the security of the protocol holds against polynomial-time adversaries, whereas we consider adversaries that are also streaming algorithms.

    We note that while unconditional cryptographic primitives such as bit commitments and key agreement are achievable in the \emph{bounded-storage model} (see, e.g., \cite{GZ19} and references therein), the security guarantees are weaker, allowing at most a quadratic, rather than arbitrary polynomial, gap between honest and malicious parties. Recent work on the streaming variant of the model \cite{DQW22,DQW23} is more closely related to ours. However, they do not construct commitment schemes and, more importantly, these results assume bounds on the space of both parties; therefore, they do not immediately apply to \emph{statistically sound} proof systems such as those considered in this work.
    
    We also note that while zero-knowledge proofs within sublinear models of computation have been actively explored in the last decade (e.g., \cite{BRV18, IW14}), our work is the first to do so in the streaming model.

\subsection{Open problems}

    This work opens several avenues for future research; in this short section, we highlight four particularly compelling directions.
    
    Achieving zero-knowledge versions of the main building blocks in the SIP literature suggests a natural question: can \emph{all} SIPs be endowed with zero-knowledge? That is, denoting by \textsf{SIP} (respectively, \textsf{zkSIP}) the class of languages that admit SIPs (respectively, \zksip{}s) with $\polylog(n)$ space complexity, we raise the following problem.
    \begin{openproblem}
        Is $\mathsf{SIP}$ equal to $\mathsf{zkSIP}$?
    \end{openproblem}

    In our two-stage protocols, the communication complexity is dominated by the setup (a reusable random string of near-linear length); the remainder of the protocol is extremely efficient, with $n^{o(1)}$ (or even $\polylog n$) communication and time complexity. Making this parameter sublinear would be a major step towards practical applicability.

    \begin{openproblem}
        Can zero-knowledge SIPs achieve sublinear communication complexity?
    \end{openproblem}
    
    Lastly, recall that the notion of security in this work is (unconditional and) \emph{computational}, where streaming adversaries detect a simulation with at most $o(1)$ bias. It is natural to ask whether stronger notions are achievable -- both with respect to an adversary's capabilities and feasible security bounds.
    
    \begin{openproblem}
        Are there SIPs with statistical (or even perfect) zero-knowledge?
    \end{openproblem}

    \begin{openproblem}
        Can security bounds of $\frac1{\poly(n)}$ or $\frac1{n^{\omega(1)}}$ be obtained for computational \zksip{}s?
    \end{openproblem}
    
\subsection*{Organisation}
    
    The rest of the paper is organised as follows. In \cref{sec:technical-overview} we give a high-level overview of the challenges and the techniques we use to endow SIPs with zero-knowledge. We briefly discuss the preliminaries for the technical sections in \cref{sec:preliminaries}, and, in \cref{sec:streaming-zk}, formally define the notion of streaming zero-knowledge and discuss key conceptual points. In \cref{sec:commitment-protocols} we construct the two commitment protocols that comprise the main components for our polynomial evaluation and sumcheck protocols. We construct the protocols, prove their zero-knowledge property and show applications for them in \cref{sec:pep,sec:sumcheck}, respectively.

\section{Technical overview}
    \label{sec:technical-overview}
    
    We provide a high-level overview of the techniques we use and build upon in this paper. For concreteness, we illustrate our methodology by focusing on the construction of zero knowledge SIPs for one of the most fundamental problems in the data stream model: \indexproblem.
    
    We begin with a bird's eye view of our ideas and the challenges that arise in their implementation. The starting point of our efforts is \cref{sec-to:pep}, where we describe the \emph{polynomial evaluation protocol} (\pepprotocol), from which a (non zero-knowledge) SIP for the \indexproblem problem follows.
    An attempt to make this protocol zero-knowledge faces two fundamental challenges, which we address in \cref{sec-to:pv-commitment,sec-to:vp-commitment} via the construction of two types of \emph{streaming commitment protocols}.
    
    In \cref{sec-to:index-protocol}, we apply the foregoing protocols to obtain a streaming interactive proof for \indexproblem and provide an overview of the proof of its zero-knowledge property, which requires an involved simulator argument. Finally, \cref{sec-to:sumcheck} sketches another application of this framework that obtains an additional powerful and flexible tool: a \emph{zero-knowledge streaming sumcheck} protocol.

\subsection{A starting point: the polynomial evaluation protocol}
\label{sec-to:pep}

    Recall that in the \indexproblem problem, a streaming algorithm with $s$ bits of memory receives a length-$n$ string $x$ over an alphabet $\alphabet$, followed by a coordinate $j \in [n]$, and its goal is to output $x_j \in \alphabet$. It is well-known that \indexproblem is maximally hard for streaming algorithms, requiring $s = \Omega(n)$ space for the output to be correct with nontrivial probability.
    
    First, note that obtaining an efficient SIP for \indexproblem is non-trivial even without zero-knowledge. Indeed, the naive approach of having the prover $P$ reveal the index $j$ before $V$ streams $x$ (allowing the verifier to save $x_j$) fails: both parties observe \emph{the same} stream of information, so $P$ only learns $j$ long after $V$ has seen $x_j$. Any communication in an SIP before the input stream must therefore be \emph{independent} of it.
    
    Remarkably, an exponential reduction in space complexity is possible despite both prover and verifier not knowing the index $j$ before it appears in the stream. We recall the SIP in \cite{CCMTV19}, upon which we build, and argue why it is \emph{not} zero-knowledge to begin with. Their SIP is an application of \pepprotocol, the \emph{polynomial evaluation protocol}, which enables a small-space algorithm to recover any element that was streamed but not stored, using only a small fingerprint of the stream.

    We embed the input stream into an object with algebraic structure in a space of size much larger than $n$, namely, by viewing $x_i \in \field$, for a large enough finite field $\field$, and considering an $\dimension$-variate \emph{low-degree polynomial} $\hat x$ that interpolates across all $x_i$; we call the polynomial $\hat x: \field^\dimension \to \field$ of individual degree $\degree = \degree(\dimension, n)$ the low-degree extension (LDE) of $x$. (Usual parameter settings satsify $\degree, \dimension \leq \log n$ and $\abs{\field} = \polylog(n)$.)
    
    The protocol proceeds as follows. The verifier samples a random evaluation point $\fingerprinttuple \sim \field^\dimension$ and computes the \emph{fingerprint} $\hat x(\fingerprinttuple)$, which can be evaluated in low space via standard online Lagrange interpolation. After $V$ learns $j$, it enlists $P$ in the recovery of $x_j$: it sends $P$ a line $\line: \field \to \field^\dimension$ incident to $j$ (viewing this index as an element of $\field^\dimension$) and $\fingerprinttuple$, where $\line(0) = j$ and $\line(\rfe1) = \fingerprinttuple$ for a random $\rfe1 \sim \field$, whereupon $P$ replies with the (low-degree) univariate polynomial $\hat x_{|\line} = \hat x \circ \line$.
    
    If $P$ is honest, then $V$ can easily recover $x_j = \hat x(j) = \hat x_{|\line}(0)$. However, $P$ could easily cheat if $V$ made no further checks: the prover could just as well pick $\fe1 \in \field$ arbitrarily and send any low-degree polynomial $g$ such that $g(0) = \fe1$ to (falsely) convince $V$ that $x_j = \fe1$. By having $V$ only accept the prover's claim that $x_j = g(0)$ \emph{if $g$ also agrees with the fingerprint}, i.e., if $g(\rfe1) = \hat x_{|\line}(\rfe1) = \hat x(\fingerprinttuple)$, the verifier thwarts this (and any other) attack: since both $\fingerprinttuple$ and $\rfe1$ are unknown to the prover, to convince the verifier of an incorrect answer $g(0) \neq \hat x_{|\line}(0)$, the prover must send a polynomial $g \neq \hat x_{|\line}$ that agrees with $\hat x_{|\line}$ at a random point; and if $\field$ is sufficiently large, the probability of this event ($\rfe1$ being a root of the nonzero polynomial $g - \hat x_{|\line}$) is arbitrarily small.
    
\begin{figure}[ht]
	\centering
     \begin{subfigure}[t]{0.48\textwidth}
       \center
       \includegraphics[width=\textwidth]{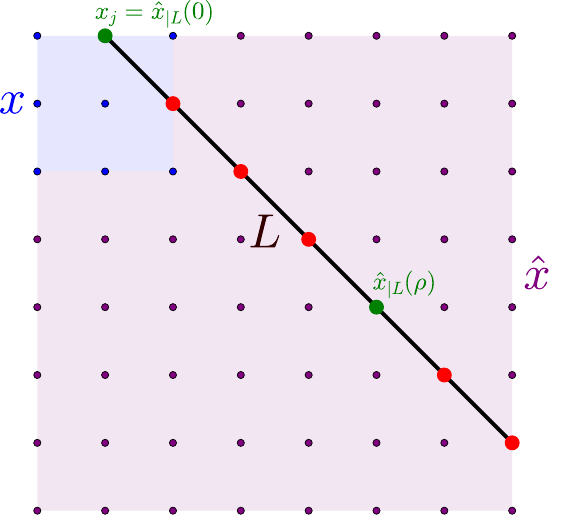}
	\caption{$V$ streams $x$ (in blue), learns $\hat x(\fingerprinttuple) = \hat x_{|\line}(\rfe1)$ and sends $\line$. The prover replies with $\hat x_{|\line}$, revealing $x_j$ and $\hat x(\fingerprinttuple)$ (in green) along with evaluations of $\hat x$ that $V$ cannot learn on its own (in red).}
	    \label{fig:pep-leakage}
     \end{subfigure}%
     \hfill
     \begin{subfigure}[t]{0.48\textwidth}
        \center
        \includegraphics[width=\textwidth]{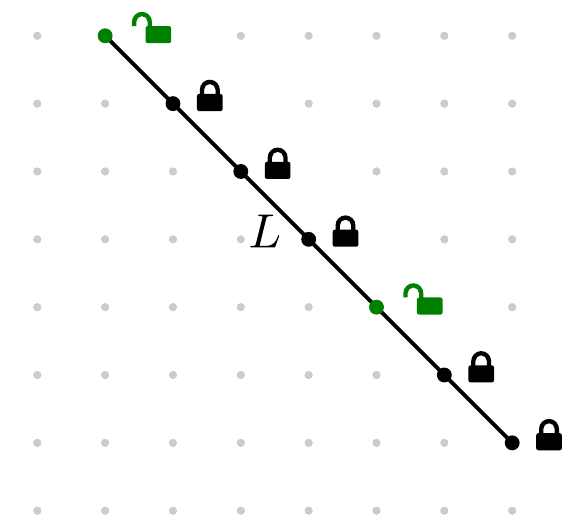}
        \caption{A first attempt at preventing leakage: sending the evaluation table of $\hat x_{|\line}$ in ``locked boxes'' and only unlocking the points checked by the verifier.}
        \label{fig:line-commitment}
     \end{subfigure}
     \caption{Leakage in the SIP for \indexproblem via evaluation of the bivariate polynomial $\hat x: \field^2 \to \field$, and an (unsuccessful) attempt to prevent it.}
\end{figure}
    
    The protocol outlined above is, however, \emph{not} zero-knowledge: after all, $V$ learns not only $x_j$, but the restriction of $\hat x$ to an entire line $\line$ \emph{through $j$} (see \cref{fig:pep-leakage}). Note that learning the restriction of $\hat x$ to (say) a random line $R$ does not necessarily constitute leakage: $V$ could simply compute a few evaluations (rather than only one) of $\hat x_{|R}$, which fully determine the polynomial. The issue is that $\line$ is a function of the coordinate $j$, which $V$ does not know prior to streaming $x$.
     
     In the next section we will take our first steps towards making the protocol zero-knowledge, i.e., ensuring that the verifier learns nothing beyond the value $x_j$.
     Note that the honest $V$ only evaluates $\hat x_{|\line}$ at two points, $\rfe1$ and $0$; what if $P$ could send the evaluations of $\hat x_{|\line}$ in ``locked boxes'' and only open the pair that the verifier needs?

\subsection{Curtailing leakage with commitments}
\label{sec-to:pv-commitment}

    To make the foregoing approach more precise, let us first assume the existence of a \emph{commitment protocol} that allows $P$ to transmit any field element $\fe1$ to $V$ in two steps: sending a string $\textsf{commit}(\fe1)$, from which $V$ is unable to extract any information about $\fe1$; and later, upon the verifier's request, revealing a field element $\fe2$ such that, if $\fe2 \neq \fe1$, then $V$ can detect that the $P$ is being dishonest.
    
    With such a commitment protocol in hand, a natural attempt to prevent the \pepprotocol protocol from leaking information is to have the prover $P$ send a commitment to $\hat x_{|\line}$, the restriction of the input's LDE to the line chosen by $V$ (rather than sending the polynomial in the clear). That is, the prover would commit to the evaluation table of $\hat x_{|\line}$, sending $\big(\textsf{commit}(\hat x_{|\line}(\rfe1')) : \rfe1' \in \field\big)$, after which $V$ can reveal its random evaluation point $\rfe1$ and $P$ decommits \emph{only} to the evaluations of $0$ and $\rfe1$  (see \cref{fig:line-commitment}). This does indeed reveal less information ($2$ rather than $\abs{\field}$ evaluations of $\hat x$), but is still far from what we set out for.
    
    There are two severe shortcomings with this idea; we shall tackle one now and defer the other to \cref{sec-to:vp-commitment}. First we need to ask: what is to prevent a cheating prover from committing to a function $g$ that is inconsistent with $\hat x_{|\line}$? 
    Indeed, since $V$ is (by design) unable to learn the field elements that were committed to, it cannot detect whether the function is a low-degree polynomial; then a cheating prover may commit to any $\fe1 \neq x_j = \hat x_{|\line}(0)$ as the claimed evaluation at $0$, while committing to the correct evaluations elsewhere. The resulting function is not a low-degree polynomial anymore, but $V$ is oblivious to this fact.

    Therefore, we require a scheme that allows not only to commit to a function, but to also ensure it is a low-degree polynomial. We solve this problem by constructing an \emph{algebraic} commitment protocol, whereby $P$ commits to a set of field elements and can decommit to \emph{any linear combination} of them. Then $P$ may commit to $\degree + 1$ points -- which uniquely determine a degree-$\degree$ polynomial $g$ -- and $V$ requests a decommitment to the linear combination that coincides with $g(\rfe1)$ (see \cref{fig:algebraic-commitment}). 
   We next present the basic commitment protocol, and then extend it to be algebraic.
    
    \begin{figure}[ht]
     \center
     \begin{subfigure}[t]{0.48\textwidth}
        \center
       \includegraphics[width=\textwidth]{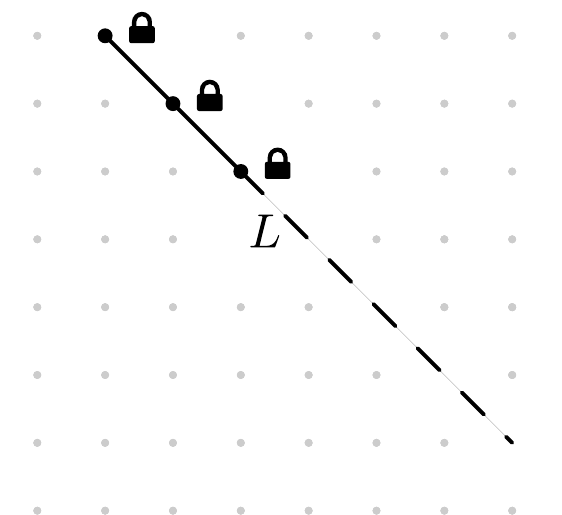}
      \caption{\small Commitments to an interpolating set of $\hat x_{|\line}$.}
     \end{subfigure}%
     \hfill
     \begin{subfigure}[t]{0.48\textwidth}
        \center
       \includegraphics[width=\textwidth]{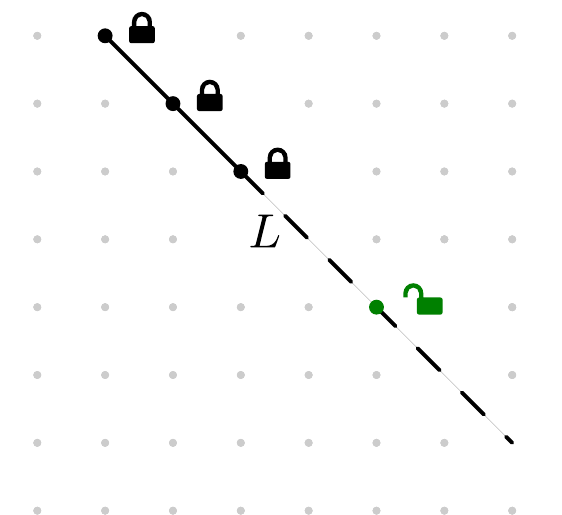}
      \caption{\small Decommiting to a point outside the interpolating set.}
     \end{subfigure}
     \caption{Preventing leakage by committing to $\hat x_{|\line}$ as an interpolating set for the polynomial. To decommit to an evaluation outside the set, the scheme must be algebraic.}
     \label{fig:algebraic-commitment}
    \end{figure}
    
    \paragraph*{The basic protocol\iflipics\else.\fi} Recall that our goal is to construct a commitment protocol between asymmetric parties, allowing a computationally unbounded $P$ to send and later reveal a message $\fe1 \in \field$ to a low-space verifier $V$. We focus on the first step, where $P$ sends a hidden message, and deal with how to reveal it later. A natural attempt is to play the prover's strength against the verifier's weakness: we know, from the hardness of \indexproblem, that the space limitation of $V$ prevents it from recalling an item from a long stream whose position is only revealed later; we can thus have $P$ send a long stream $\pcommitstring$ with the message hidden at a coordinate $k$ that is revealed at the end.
    
    While the idea seems intuitively sound, there are nontrivial issues to address. For example, the string-coordinate pair $(\pcommitstring,k)$ should not have any structure from which $V$ could extract information, which we can ensure by sampling both uniformly at random; but to prove security for this strategy, \indexproblem must be hard to solve \emph{on average}. Luckily, reductions from one-way communication complexity enable us to prove this fact: one-way protocols where Alice receives $x \sim \bitset^n$ and sends an $s$-bit message to Bob, who receives $j \sim [n]$ and attempts to output $x_j$, succeed with probability at most $\frac1{2} + O(\sqrt{s/n})$ \cite{RY20}. We show that the bound extends to larger alphabets, carrying over to space-$s$ streaming algorithms (see \cref{prop:index-hardness,lem:string-to-bit-index}).
    
    In short, we have $P$ encode its message $\fe1 \in \field$ \emph{as the solution to a random \indexproblem instance}, exploiting the problem's average-case hardness to ensure that $V$ is unable to extract $\fe1$; more precisely, $P$ sends a uniformly random string-coordinate pair $(\pcommitstring, k)$ and then the ``correction'' $\fe3 = \fe1 - \pcommitstring_k$.\footnote{We remark that while replacing $\pcommitstring_{ik}$ with $\fe1$ (rather than sending a random element and a correction later) looks simpler, then $(\pcommitstring,k)$ ceases to be a random \indexproblem instance, and it is not clear how to show a reduction from \indexproblem.} Of course, the discussion thus far only shows how $P$ can commit; but we also need a decommitment protocol whereby $V$ can check that $P$ is being honest when it reveals $\fe2$ (which may or may not coincide with the message $\fe1$). Fortunately, we already have a tool $V$ can use to solve \indexproblem with an untrusted prover's assistance! The decommitment thus consists of an execution of \pepprotocol by $P$ and $V$ with respect to the instance $(\pcommitstring,k)$: this allows $V$ to learn $\pcommitstring_k$ and check that $\fe3 + \pcommitstring_k = \fe2$, i.e., that the correction $\fe3$ sent earlier matches the (alleged) message.
    
    Recall that we are building technical tools towards a \zksip{} for \indexproblem, so we ultimately \emph{exploit the hardness of a problem to solve an instance of the same problem}. Should we not expect, then, that the same leakage issues should arise with respect to the ``virtual'' instance $(\pcommitstring, k)$ as they did with the ``real'' instance $(x,j)$? While this may appear to be circular reasoning, we stress that revealing evaluations of $\hat y$ leaks no information whatsoever about the input; indeed, $(\pcommitstring,k)$ is a uniform random variable that is independent of $(x,j)$. Put differently, $V$ only obtains information about uniformly random strings that are completely uncorrelated with the input. See \cref{sec:pv-commitment} for details.
    
    \paragraph*{Making the scheme algebraic\iflipics\else.\fi} We now extend the foregoing idea into an \emph{algebraic} protocol, which allows $P$ to commit to a \emph{tuple} of field elements $\bm{\fe1} = (\bm{\fe1}_1, \ldots, \bm{\fe1}_\ell)$ and decommit to a linear combination $\bm{\fe1} \cdot \bm{\fe2}$. (Committing to a polynomial and decommitting to an evaluation follows as a special case; see \cref{sec:lde-pep}.) Note that such an extension seems to follow if linear combinations ``commute'' with commitments; that is, by showing that linear combinations of a fingerprint (as defined in \cref{sec-to:pep}) match a fingerprint of the linear combinations, we should be able to use essentially the same strategy of the basic scheme: committing with a random \indexproblem instance and decommitting with \pepprotocol. Details follow.
    
    Consider a trivial extension of the scheme that allows $P$ to transmit a pair of messages $\fe1, \fe1' \in \field$: sending two independent commitments $(\pcommitstring, k, \fe1 - \pcommitstring_k)$ and $(\pcommitstring', k', \fe1' - \pcommitstring'_{k'})$. The key observation is that, if $V$ saves two fingerprints \emph{at the same evaluation point} $\fingerprinttuple$, then linear combinations and low-degree extensions do commute: for any $\fe2, \fe2' \in \field$, defining $z \coloneqq \fe2 \pcommitstring + \fe2' \pcommitstring'$, we have $\hat z(\fingerprinttuple) = \fe2 \hat \pcommitstring(\fingerprinttuple) + \fe2' \hat \pcommitstring'(\fingerprinttuple)$; in short, evaluating a low-degree extension is a linear operation.
    
    A problem still remains, however: since $k \neq k'$ with overwhelming probability, an execution of the \pepprotocol protocol enables $V$ to learn $z_k = \fe2 \pcommitstring_k + \fe2' \pcommitstring'_k$; but the correction for $\pcommitstring'$ refers to another coordinate $k' \neq k$ (with overwhelming probability). We address this issue by \emph{hiding both messages at the same index}, i.e., setting $k' = k$ and only revealing the coordinate after both $\pcommitstring$ and $\pcommitstring'$ are sent; see \cref{sec:pv-algebraic-commitment} for details.

\subsection{From honest to malicious verifiers: temporal commitments}
\label{sec-to:vp-commitment}

   Recall that a source of leakage in the \indexproblem protocol of \cref{sec-to:pep} is the prover $P$ sending the restriction of $\hat x$ (the LDE of the input) to a line $\line$ in the clear. In the previous section, we constructed a prover-to-verifier scheme that enables $P$ to commit to a low-degree polynomial and decommit to a single evaluation of it. We may then use it to modify the original protocol, having $P$ instead \emph{commit} to $\hat x_{|\line}$ and decommit to the points inspected by $V$.
   
   While this modification amounts to significant progress -- indeed, it achieves an \emph{honest-verifier} SIP for \indexproblem  -- there is a second major challenge to address. The issue is that \emph{if a verifier $\widetilde V$ cheats}, it can use the protocol to extract information that it could not have learned on its own, as we will see next. The goal of this section is to describe a strategy that prevents leakage of information \emph{without} requiring that $\widetilde V$ behave honestly; in other words, we would like to make the protocol \emph{malicious-verifier} zero-knowledge.
    
    Concretely, consider the (cheating) verifier $\widetilde V$ that ignores the input string $x$, reads $j$ and requests the line through $j$ and $j + 1$ from the prover. $P$ then commits to the restriction of $\hat x$ to this line and decommits to the evaluation of the LDE at both $j$ and $j + 1$. This reveals $x_j$ and $x_{j+1}$ to $\widetilde V$, which shows clearly that the modified protocol still leaks: $x_j$ is \emph{the only information the verifier should learn} that it could not have computed on its own, but the protocol also reveals $x_{j+1}$ (which is just as hard to compute as the $j^\text{th}$ coordinate).
    
    \paragraph*{An idealised scenario: $V$-to-$P$ commitments\iflipics\else.\fi} Let us assume, for the moment, that there also exists a commitment protocol in the reverse direction, allowing $V$ to commit and later reveal a message to $P$. We will show how, in this idealised setting, we can prevent information leakage altogether. Note that the difficulty posed by a malicious verifier $\widetilde V$ is the usage of an allegedly random evaluation point $\fingerprinttuple$ that is, in reality, a function of the input.
    
    If $\widetilde V$ \emph{proves that $\fingerprinttuple$ is indeed random}, however, we may conclude that $\widetilde V$ could have computed $\hat x(\fingerprinttuple)$ alone -- and thus that no leakage occurs. The idealised scheme allows $\widetilde V$ to do (almost) that, by having it commit to $\fingerprinttuple$ \emph{before reading the input stream} and decommit to it at a later step (after the prover's commitment). While this does not ensure $\fingerprinttuple$ is random, the fact that $\widetilde V$ cannot decommit to anything other than $\fingerprinttuple$ constrains its evaluation point to be \emph{chosen before the input stream}, so that it cannot be a function of the input.
    
    Of course, it is not at all clear that such a commitment protocol, allowing a weak computational party to commit to a computationally unbounded one, even exists; after all, the commitment step generally exploits their very difference to hide the message, as we did in the previous section. Is this just wishful thinking?
    
    \paragraph*{The solution: a temporal commitment\iflipics\else.\fi} We will now see that, perhaps surprisingly, we can once again exploit the space limitation of $\widetilde V$ to accomplish this goal. What we obtain in fact falls short of a full-fledged commitment protocol: roughly speaking, the \emph{temporal  commitment} will enable a space-$s$ verifier $\widetilde V$ to reveal not one, but $s$ messages. But this collection is still determined before the input, so that it remains fit for purpose (incurring a small overhead in the simulator algorithm that we discuss in the following section).
    
    As discussed above, we cannot expect $\widetilde V$ to be able to send a hidden message to $P$: however $\widetilde V$ may try to hide it, $P$ can simply store the entirety of the communication and extract the message itself. Since sending is out of the picture, could $\widetilde V$ instead commit by \emph{receiving} a message? Note that, while somewhat counter-intuitive, this would allow $\widetilde V$ to play what is essentially its only strength, its private randomness, against $P$. Recall, moreover, that there is a temporal aspect to the positions of a long stream $\vcommitstring$ that $\widetilde V$ can remember: if it remembers $\vcommitstring_i$, this can be seen as evidence that \emph{$i$ was determined no later than when $\vcommitstring$ was seen}.
    
    Let us now make the idea more precise, and construct our verifier-to-prover temporal commitment protocol. The main idea is to impose some cost onto the ability of $\widetilde V$ to ``unlock'' the decommitment from $P$, without overly constraining the honest verifier $V$. Note that after $P$ sends the commitment to a low-degree polynomial, having $V$ reveal the point $\fingerprinttuple = \line(\rfe1)$ at which it computed $\hat x$ is not a problem (as opposed to revealing $\fingerprinttuple$ \emph{before} $P$ sends the polynomial, which allows the prover to cheat easily). Therefore, we will have $\widetilde V$ reveal its alleged evaluation point $\fingerprinttuple$ \emph{along with a certificate} $c(\fingerprinttuple)$ that shows $\widetilde V$ selected the point before seeing the input stream. $P$ will only proceed with the protocol if the certificate is valid; if not, it aborts to prevent $\widetilde V$ from learning information beyond its reach.
    
    Given that the verifier's scarce resource is space, we design this certificate to require a number of bits that is not too large and yet not negligible; then the honest $V$ should have no trouble, as it only needs to remember one piece of information, whereas the malicious $\widetilde V$ described before would need to store a certificate for the evaluation point $j + 1$, which it does not know before reading $x$. 
    
    We thus prepend our \indexproblem protocol with a step where $P$ sends $\widetilde V$ a long string $\vcommitstring$ containing all possible evaluation points (i.e., the entire domain) of the low-degree extension $\hat x$.\footnote{In fact, any given point has a small probability of being absent from the string. We ignore this issue in the technical overview.} Now, if $\widetilde V$ wants the prover, in the future, to decommit to a polynomial evaluation at the point $\fingerprinttuple$, it must offer evidence that $\fingerprinttuple$ is uncorrelated with the input stream: $\widetilde V$ does so by revealing $\fingerprinttuple$ \emph{along with the coordinate $i$ that contains $\fingerprinttuple$ in $\vcommitstring$}; i.e., the certificate for $\fingerprinttuple$ is $c(\fingerprinttuple) = i$, the coordinate satisfying $\vcommitstring_i = \fingerprinttuple$.
    
    \iflipics\else\paragraph*{}\fi The temporal commitment indeed achieves what we set out for: regardless of what $\widetilde V$ does, as long as its space is bounded we are able to extract the points it may ask $P$ for \emph{in advance of its streaming of $x$} (see \cref{sec:vp-commitment}). Note that the commitment is non-interactive (consisting of a single message from $P$ to $V$) and need not be rerun if the verifier streams multiple inputs; we shall use it as the setup stage of our protocol. Its analysis is subtle and involved: it begins with a study of a variant of \indexproblem in the one-way communication model that we call \reconstruct, where, upon receipt of a message from Alice, Bob outputs a guess for every coordinate of the input string rather than for only one. Using tools from information theory, we obtain an upper bound on the expected number of correct coordinates, which we call the protocol's \emph{score}.
    
    Next, we use the expected score bound of \reconstruct to prove a related upper bound for a problem we call \pair: a variant of \indexproblem where Bob, rather than receiving the coordinate to be recovered as part of the input, is free to choose it. The implication is that any protocol for \pair has a small number $C$ of indices such that the output of the protocol is outside $C$ and yet correct (i.e., a pair $(i, \vcommitstring_i)$ with $i \notin C$) with arbitrarily small probability. This will underpin the \emph{simulator argument} that ultimately shows our protocol is zero-knowledge, which we sketch in the next section.

\subsection{A sketch of the zero-knowledge {\indexproblem} protocol}
\label{sec-to:index-protocol}

    We now have all of the components necessary to sketch a zero-knowledge streaming interactive proof for $\indexproblem$. Recall that we constructed a prover-to-verifier \emph{algebraic} commitment protocol in \cref{sec-to:pv-commitment} and a verifier-to-prover \emph{temporal} commitment in \cref{sec-to:vp-commitment}. We will now compose them in the appropriate order, using the temporal commitment to constrain $V$ to choose its inner randomness before reading the input stream; and the algebraic commitment to ensure $P$ only reveals what the verifier needs.
    The protocol follows.
    
    \paragraph*{Parameters\iflipics\else.\fi}
    Without loss of generality, we consider the alphabet over which the input string is defined to be a field of size $\abs{\field} = \fieldsize$; that is, $x \in \field^n$. We also fix two additional parameters, $\degree$ and $\dimension$, which characterise the low-degree extension $\hat x: \field^\dimension \to \field$ as an $\dimension$-variate polynomial of individual degree $\degree$. We assume all parameters are known to $P$ and $V$ in advance.
    
    \paragraph*{Setup: verifier-to-prover temporal commitment\iflipics\else.\fi} $P$ sends $V$ a permutation of $\field^\dimension$ as a string $\vcommitstring$ (of length $\vcommitlength = \fieldsize^\dimension$). Before receiving the string, $V$ samples $\fingerprinttuple \sim \field^\dimension$ and then streams $\vcommitstring$. When it sees $\fingerprinttuple$ at the $\ell^\text{th}$ coordinate of $z$, the verifier stores $\ell$.

    \paragraph*{Step 1: input streaming\iflipics\else.\fi} $V$ streams the input string $x$ and records the fingerprint $\hat x(\fingerprinttuple)$ as well as the target index $j$.

    \paragraph*{Step 2: prover-to-verifier algebraic commitment\iflipics\else.\fi} $V$ samples $\rfe1 \sim \field$ and sends $P$ the line $\line: \field \to \field^\dimension$ through $j$ and $\fingerprinttuple$ (satisfying $\line(0) = j$ and $\line(\rfe1) = \fingerprinttuple$).
    
    $P$ sends $x_j = \hat x_{|\line}(0)$ (in the clear) and an algebraic commitment $(y, \bm{\fe3}, k)$ to the remainder of an interpolating set of the degree-$\degree\dimension$ polynomial $\hat x_{|\line}: \field \to \field$, i.e., to the field elements $\hat x_{|\line}(i)$ for all $i \in [\degree\dimension]$. The commitment consists of a random matrix $\pcommitstring \sim \field^{\degree\dimension \times \pcommitlength}$ with $\degree\dimension$ rows and a large enough number $\pcommitlength$ of columns; a random (column) coordinate $k \sim [\pcommitlength]$; and the correction tuple $\bm{\fe3}$ satisfying $\bm{\fe3}_i = \hat x_{|\line}(i) - \pcommitstring_{ik}$.
    
    $V$ samples (another) evaluation point $\Fingerprinttuple$ and computes the fingerprint $\pcommitstring(\Fingerprinttuple, \evaluationpoint) = \sum_i \evaluationpoint_i \hat \pcommitstring_i(\Fingerprinttuple)$, where the tuple $\evaluationpoint$ satisfies $\sum_i \evaluationpoint_i \hat x_{|\line}(i) = \hat x(\fingerprinttuple)$;\footnote{Note that $\evaluationpoint_i$ is determined solely by $i$ and $\rfe1$: it is the evaluation $\chi_i(\rfe1)$ of the $i^\text{th}$ Lagrange polynomial.} it also computes $\fe3 = \sum_i \evaluationpoint_i \bm{\fe3}_i$ and stores $k$.

    \paragraph*{Step 3: temporal decommitment\iflipics\else.\fi} $V$ reveals its fingerprint's evaluation point $\fingerprinttuple$ along with the index $\ell$ where it appeared in $\vcommitstring$. The prover checks that $\vcommitstring_\ell = \fingerprinttuple$, and only continues to the final step if the check passes.

    \paragraph*{Step 4: algebraic decommitment\iflipics\else.\fi} $P$ and $V$ engage in the decommitment of the $k^\text{th}$ coordinate of the string $\pcommitstring' = \bm{\fe2} \cdot \pcommitstring$ (the linear combination of the rows $\pcommitstring_i$ with coefficients $\evaluationpoint_i$).\footnote{This requires $P$ to know the linear coefficients $\bm{\fe2}$, and, while we could have the verifier send them, this is not necessary: $P$ learns $\fingerprinttuple$ in step 3, which allows it to determine $\rfe1 = \line^{-1}(\fingerprinttuple)$ and thus $\bm{\fe2} = \bm{\fe2}(\rfe1)$ as well.} $V$ outputs the (alleged) $x_j$ if the decommitment is consistent with $\hat x(\fingerprinttuple)$, and rejects otherwise.

    \iflipics\else\paragraph*{}\fi In an honest execution of the above protocol, the final decommitment reveals
    \begin{align*}
        \pcommitstring'_k &= \sum_i \evaluationpoint_i y_{ik}\\
        &= \sum_i \evaluationpoint_i \big(\hat x_{|\line}(i) - \bm{\fe3}_i\big)\\
        &= \hat x(\fingerprinttuple) - \fe3,
    \end{align*}
    so that $V$, having stored $\hat x(\fingerprinttuple)$ and $\fe3$, can indeed perform this consistency check (which shows the protocol is complete). The protocol's soundness follows from that of \pepprotocol, noting that none of the mechanisms we add harm soundness (indeed, the last check relies, as does \pepprotocol, on a random evaluation of the low-degree extension), while zero-knowledge, which we discuss next, follows from the correctness of our commitment protocols.    
    
    \paragraph*{Proving the zero knowledge property\iflipics\else.\fi}
    We conclude with a discussion of the simulator argument for the protocol laid out in this section. Recall that proving zero-knowledge for the foregoing protocol entails the construction of a \emph{simulator} $S$, a streaming algorithm with knowledge of $x_j$ and roughly the same memory as $\widetilde V$, which is able to interact with $\widetilde V$ without it being able to tell whether it is communicating with $S$ or $P$.
    
    Roughly speaking, $S$ does the following: after the temporal commitment step, it inspects the memory state of $\widetilde V$ and records (almost) all the points to which $\widetilde V$ can decommit; as shown in the last section, this is a relatively small set $C$. It then streams the input and records $\hat x(\fingerprinttuple)$ \emph{for all $\fingerprinttuple \in C$}.\footnote{We note that storing $C$ is the most space-intensive task of $S$, which implies a small overhead to its space complexity as compared to $\widetilde V$; see \cref{thm:pep-zk}.} Upon receipt of a line $\line$ from $\widetilde V$, the simulator computes and commits to an arbitrary low-degree polynomial $g$ that interpolates across the points in $\line \cap C$. When $\widetilde V$ requests the algebraic decommitment to obtain an evaluation of $g$, the simulator checks that the evaluation point $\rfe1$ is contained in $C$ (in which case $g(\rfe1)$ matches a fingerprint $\hat x(\fingerprinttuple)$ known to $S$), proceeds with the decommitment if that is the case, and otherwise aborts.
    
    We note that implementing the strategy above raises yet another challenge, namely, extracting the set $C$ of evaluation points from the description and memory state of $\widetilde V$. This is accomplished via a form of \emph{white-box access} to $\widetilde V$, see \cref{sec:streaming-zk}.
    
     The simulator $S$ is thus able to generate the transcript of an interaction where the message $\hat x_{|\line}$ of the algebraic commitment is replaced with another low-degree polynomial $g$ whose evaluations match $\hat x_{|\line}$ \emph{at all points where $\widetilde V$ is able to temporally decommit}. Then, distinguishing between a real and a simulated transcript amounts to distinguishing an \indexproblem instance whose solution is $\hat x_{|\line}$ from one whose solution is $g$.
     
     We prove that any streaming algorithm that does so with nontrivial bias implies a one-way communication protocol for \indexproblem with a small message, contradicting the known hardness of the problem. We remark that the reduction is rather nontrivial, as we must insert an \indexproblem instance into the algebraic commitment $(\pcommitstring, \bm{\fe3}, k)$ while ensuring the decommitment can be simulated without any knowledge about the instance. See \cref{thm:pep-zk} for details.
     
     \begin{remark}[Superpolynomial to near-linear communication]
         We stress that, while we may prove zero-knowledge with the strategy above, the natural reduction from \indexproblem is over a large alphabet $\alphabet = \field^{\degree\dimension}$. But then, for indistinguishability to follow, the length $\pcommitlength$ of the temporal commitment must be $\fieldsize^{\degree\dimension}$, which implies \emph{superpolynomial} communication complexity.
         
         We avoid this blowup via \cref{lem:string-to-bit-index}, which shows that an \indexproblem (one-way) protocol for large alphabets implies another protocol for the binary alphabet with only a mild loss to its success probability; this restricts our ambient field to be an extension of $\field_2$, but reduces the superpolynomial complexity to \emph{barely superlinear}.
     \end{remark}

\subsection{A general-purpose zero-knowledge SIP: sumcheck}
\label{sec-to:sumcheck}

    Lastly, we briefly mention how the commitment protocols developed in \cref{sec-to:pv-commitment,sec-to:vp-commitment} can be used not only to solve \indexproblem (and, more generally, the polynomial evaluation problem), but also to construct another widely applicable tool: a streaming zero-knowledge \emph{sumcheck} protocol.
    
    As before, we start with an SIP that is clearly not zero-knowledge: the standard sumcheck protocol leaks hard-to-compute sums over subcubes. By carefully using the algebraic and temporal commitment protocols, we can also endow the sumcheck protocol with zero-knowledge in the data stream model. However, we note that doing so is considerably more involved than in the case of \indexproblem, owing to, among other reasons, several rounds of interaction with nontrivial dependencies of messages on past communication.
    
    More precisely, we consider a slight variation of the standard sumcheck protocol: while in the latter every round is followed by a (random) consistency check, we instead defer all such checks to the end. It is clear that this variant is equivalent to the standard protocol; however, without the modification, the zero-knowledge property seems to require a strengthening of the chained commit-decommit strategy we follow. Moreover, rather than a single algebraic commitment followed by a (single) decommitment, the sumcheck protocol requires many decommitments; indeed, for an $\dimension$-variate polynomial $f$, the prover commits to $\dimension$ partial sums of $f$, and each partial sum is involved in two decommitments (for a total of $\dimension + 1$ decommitments).
    
    Therefore, by extending the techniques that underpin our approach for the \indexproblem problem to a \emph{multi-round} setting, we are able to construct a zero-knowledge sumcheck SIP. Such a protocol can then be used to  compute frequency moments and inner products, problems known to require linear space without a prover's assistance \cite{AMS99}. See \cref{sec:sumcheck} for details.

\section{Preliminaries}
\label{sec:preliminaries}

    \paragraph*{General notation\iflipics\else.\fi} For an integer $k \geq1$, we denote by $[k]$ the set $\set{1,2, \ldots, k}$. Vectors are denoted with notation analogous to that of sets, i.e., $(\fe1_i : i \in [k])$ denotes the vector $(\fe1_1, \ldots, \fe1_k)$. We use $n$ to denote the length of a string that is the input to an algorithm, and $\poly(n)$ (respectively, $\polylog(n)$) to denote an arbitrary polynomial (respectively, polylogarithmic) function in $n$.
    
    We use lowercase Latin letters to denote positive integers (e.g., $\degree, i, j, k, \ell, \dimension, \pcommitlength, \vcommitlength$) or strings (e.g., $x, \pcommitstring, \vcommitstring$); $r$ and $t$ often (but not always) denote random strings. Lowercase Greek letters denote elements of a finite alphabet or field (e.g., $\fe1, \fe2, \fe3$), and we reserve $\rfe1, \rfe2$ for random elements. 
    Uppercase letters denote either algorithms (e.g., $A, B, P, S, V$) or sets (e.g., $C, K$), with $T$ used as the indeterminate of a polynomial.
    
    When $f$ and $g$ are functions, we sometimes use $\fe1 \in f$ as a shorthand for $\fe1 \in \Im f$ and $f_{|g}$ for $f \circ g$; if $f$ is a low-degree polynomial that is communicated in an interactive protocol, we assume it is sent in a canonical form (e.g., a line is communicated by a pair of points $f(0), f(1)$). We use $\chf[\cdot = x_0]$ to denote the delta function at $x_0$ (i.e., $\chf[x_0 = x_0] = 1$ and $\chf[x = x_0] = 0$ for $x \neq x_0$) and $\log$ to denote $\log_2$.
    
    As integrality issues do not substantially change any of our results, equality between an integer and an expression (that may not necessarily evaluate to one) is assumed to be rounded to the nearest integer.
    
    \paragraph*{Vectors and matrices\iflipics\else.\fi} The notation we use for matrices is the same as for strings (lowercase Latin letters), and it will be clear from context which is the case. When $x$ is a matrix, we use $x_i$ to refer to the $i^\text{th}$ row of $x$.
    
    We use vectors or tuples, interchangeably, to refer to elements of a vector space over a finite field $\field$. Such tuples are denoted with boldface (e.g., $\bm{\fe1}, \evaluationpoint, \bm{\fe3}$) and random tuples are (similarly to strings) denoted $\fingerprinttuple, \Fingerprinttuple$. We use $\bm{\fe1} \cdot \bm{\fe2}$ to denote the inner product between the two vectors, and, when the dimension of $\bm{\fe1}$ matches the number of rows of a matrix $x$, we use $\bm{\fe1} \cdot x$ to denote the vector corresponding to the linear combination of the rows of $x$ with coefficients $\bm{\fe1}$, i.e., $\sum_i \bm{\fe1}_i x_i$. (Equivalently, we assume vectors to be in row form.)

    \paragraph*{Probability\iflipics\else.\fi} We use $X \sim \mu$ to denote a random variable with distribution $\mu$, and, for the uniform distribution over a set $S$, we write $X \sim S$. We sometimes make the sources of randomness in a probabilistic expression explicit, and when we do they are assumed to be independent; e.g., only when $X$ and $Y$ are independent do we write $\P_{X \sim \mu, Y \sim \lambda}[E]$. The internal randomness of an algorithm is generally omitted; e.g., $\P[A(X) = 0]$ (if the distribution of $X$ is known from context) or $\P_{X \sim \mu}[A(X) = 0]$ are shorthand for $\P_{\subalign{X &\sim \mu\\r &\sim \bitset^m}}[A(X; r) = 0]$, where $r$ is $A$'s internal randomness.
    
    We will also make use of the following versions of the Chernoff and Hoeffding bounds.
    \begin{lemma}[Additive Chernoff-Hoeffding bound]
    \label{lem:additive-chernoff}
      Let $X_1, \ldots, X_k$ be independent Bernoulli random variables distributed as $X$. Then, for every $\delta \in [0,1]$,
      \begin{align*}
        \Pr\left[\frac{1}{k}\sum_{i = 1}^k X_i \leq \E[X] -\delta \right] &\leq e^{- 2\delta^2 k} \text{ and}\\
        \Pr\left[\frac{1}{k}\sum_{i = 1}^k X_i \geq \E[X] +\delta \right] &\leq e^{- 2\delta^2 k}.
      \end{align*}
    \end{lemma}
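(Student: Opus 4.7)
The plan is to follow the standard Chernoff-Hoeffding argument via the exponential moment method, then obtain both tails by a symmetry argument. Set $Y_i = X_i - \E[X]$, which is a centred random variable taking values in an interval of length $1$ (specifically, $Y_i \in [-\E[X], 1 - \E[X]]$). I would first apply Markov's inequality to the exponentiated sum: for every $t > 0$,
\begin{equation*}
    \Pr\left[\frac{1}{k}\sum_i X_i \geq \E[X] + \delta\right] = \Pr\left[\sum_i Y_i \geq \delta k\right] \leq e^{-t\delta k}\,\E\left[\exp\left(t\sum_i Y_i\right)\right],
\end{equation*}
and then use independence of the $Y_i$ to factor the expectation into $\prod_i \E[e^{tY_i}]$.

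The core step is to bound each factor $\E[e^{tY_i}]$ via Hoeffding's lemma: for a mean-zero random variable $Y$ supported on an interval of length $b - a$, one has $\E[e^{tY}] \leq e^{t^2(b-a)^2/8}$. I would briefly recall the standard proof of this lemma, which proceeds by convexity (writing $e^{tY} \leq \frac{b-Y}{b-a}e^{ta} + \frac{Y-a}{b-a}e^{tb}$), taking expectations, and bounding the resulting function of $t$ by its second-order Taylor estimate. Since our $Y_i$ lies in an interval of length exactly $1$, this yields $\E[e^{tY_i}] \leq e^{t^2/8}$, and therefore
\begin{equation*}
    \Pr\left[\sum_i Y_i \geq \delta k\right] \leq \exp\left(-t\delta k + \frac{t^2 k}{8}\right).
\end{equation*}
Optimising the right-hand side by choosing $t = 4\delta$ gives the claimed upper tail bound $e^{-2\delta^2 k}$.

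For the lower tail, I would apply the identical argument to the variables $X_i' = 1 - X_i$, which are also independent Bernoulli random variables with $\E[X_i'] = 1 - \E[X]$, and observe that $\frac{1}{k}\sum_i X_i \leq \E[X] - \delta$ if and only if $\frac{1}{k}\sum_i X_i' \geq \E[X_i'] + \delta$; the upper tail bound just proved then yields the same exponential estimate.

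There is no genuine obstacle here: the argument is entirely standard, and the only point requiring a little care is the constant $2$ in the exponent, which is tight precisely because the support length is $1$ and one uses Hoeffding's lemma with its sharp $(b-a)^2/8$ constant rather than a cruder bound on the Bernoulli moment generating function.
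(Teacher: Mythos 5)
Your proof is correct and entirely standard: the exponential moment method with Hoeffding's lemma giving $\E[e^{tY_i}] \leq e^{t^2/8}$, followed by optimising $t = 4\delta$, indeed yields the $e^{-2\delta^2 k}$ bound, and the lower tail follows by applying the upper-tail bound to $1-X_i$. The paper itself states this lemma without proof as a standard preliminary, so there is no paper argument to compare against; your derivation is the canonical one.
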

    \begin{lemma}[Hoeffding's inequality]
    \label{lem:hoeffding}
      Let $X_1, \ldots, X_k$ be independent random variables distributed as $X \in [a,b]$. Then, for every $\delta \in [0,1]$,
      \begin{align*}
        \Pr\left[\frac{1}{k}\sum_{i = 1}^k X_i \leq (1 - \delta) \E[X]\right] &\leq e^{-\left(\frac{\delta \E[X]}{b - a}\right)^2 k} \text{ and}\\
        \Pr\left[\frac{1}{k}\sum_{i = 1}^k X_i \geq (1 + \delta) \E[X]\right] &\leq e^{-\left(\frac{\delta \E[X]}{b - a}\right)^2 k}.
      \end{align*}
    \end{lemma}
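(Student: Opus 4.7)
The plan is to apply the classical Chernoff–Cramér (exponential Markov) method. I would first center the variables by setting $Y_i = X_i - \E[X]$, so that each $Y_i$ has mean zero and is supported in an interval of length $b - a$. Writing $S_k = \sum_{i=1}^k Y_i$ and $\mu = \E[X]$, the upper-tail event $\frac{1}{k}\sum_i X_i \geq (1+\delta)\mu$ is equivalent to $S_k \geq \delta \mu k$, and the lower-tail event becomes $S_k \leq -\delta \mu k$.

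Next, for an arbitrary parameter $\lambda > 0$, I would apply Markov's inequality to $e^{\lambda S_k}$ to obtain $\Pr[S_k \geq \delta \mu k] \leq e^{-\lambda \delta \mu k}\,\E[e^{\lambda S_k}]$. By independence, the moment generating function factorises as $\E[e^{\lambda S_k}] = \prod_{i=1}^k \E[e^{\lambda Y_i}]$. Invoking Hoeffding's lemma (for a mean-zero random variable $Y$ supported in an interval of length $L$, $\E[e^{\lambda Y}] \leq e^{\lambda^2 L^2/8}$) with $L = b - a$ yields $\E[e^{\lambda S_k}] \leq \exp\bigl(k \lambda^2 (b-a)^2/8\bigr)$, and hence the tail probability is at most $\exp\bigl(-\lambda \delta \mu k + k \lambda^2 (b-a)^2/8\bigr)$. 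Optimising in $\lambda$, namely taking $\lambda = 4\delta \mu/(b-a)^2$, produces the bound $\exp\bigl(-2k(\delta\mu/(b-a))^2\bigr)$, which is in fact stronger than the stated inequality by a factor of $2$ in the exponent. The lower-tail bound follows by the same argument applied to $-Y_i$ (or, equivalently, to $\lambda < 0$).

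The only substantive ingredient is Hoeffding's lemma, so the main obstacle is its proof. The standard route exploits convexity of $y \mapsto e^{\lambda y}$ on $[a', b']$: bounding it above by the secant through $(a', e^{\lambda a'})$ and $(b', e^{\lambda b'})$, taking expectations under the constraint $\E[Y] = 0$, and Taylor-expanding the logarithm of the resulting function of $\lambda (b' - a')$ shows it is at most $\lambda^2 (b'-a')^2/8$. Since the statement's constant is looser by a factor of $2$, even cruder convexity estimates would be enough to close the argument.
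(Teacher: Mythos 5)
Your proof is correct: the paper states \cref{lem:hoeffding} as a classical preliminary fact without providing a proof, and your derivation is the standard Chernoff--Hoeffding argument (exponential Markov inequality applied to the centered sum, factorising the moment generating function by independence, bounding each factor via Hoeffding's lemma, then optimising the free parameter $\lambda = 4\delta\mu/(b-a)^2$). You are also right that this yields $\exp\bigl(-2k(\delta\mu/(b-a))^2\bigr)$, a factor of $2$ stronger in the exponent than the bound the paper actually states, so the lemma as written follows a fortiori; the implicit assumption $\E[X]\ge 0$ needed for the multiplicative form (and for $\lambda>0$ in the optimisation) holds in all of the paper's applications and is standard for this phrasing.
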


    \paragraph*{Algorithms and protocols\iflipics\else.\fi} We use the same term to refer to computational problems and to protocols that solve them, but distinguish the two cases with different fonts (so that the \pepprotocol and \textsf{sumcheck} protocols solve the \pep and \sumcheck problems, respectively).
    
    We generally use $A$, $D$, $S$ and $V$ to denote streaming algorithms, while $P$ denotes an algorithm with unbounded computational resources (including space). $A(x)$ is the output of an algorithm that receives $x$ as input; when $A$ is a streaming algorithm, $x$ is read sequentially in one pass, from the first symbol ($x_1$) to the last. When $A(x, y, z)$ reads multiple inputs, $A(y)$ denotes the partial execution of $A$ after it has read $x$. When the entries of a length-$n$ string $x$ are taken over a finite alphabet $\alphabet$, we may also use $x$ for the equivalent bit string of length $n \log\abs{\alphabet}$.
    
    We shall often make use of the \emph{minimax principle}, and assume, without loss of generality, that a computationally unbounded algorithm $A$ whose goal is to maximise some value $\E_{x \sim \mu}[f(A(x))]$ (e.g., the probability that $A(x)$ equals $x$) can be assumed to be deterministic, and thus given by a function $x \mapsto a(x)$; equivalently, $A$ can be taken as the deterministic algorithm that maximises $\E[f \circ a(x)]$ for the distribution of inputs $\mu$.
    
    In a protocol, two algorithms $P$ and $V$ interact by exchanging messages in a predefined order; after all messages have been exchanged, $V$ chooses an output that we denote $\langle P, V\rangle$ and call the output of the protocol. When $V$ rejects or $P$ aborts midway through the interaction, we assume the algorithm proceeds until the end of the protocol with dummy messages (e.g., strings of zeroes).
    
    The \emph{snapshot} of an algorithm is synonymous to its memory state; when $A$ reads a sequence of more than one input, e.g., $A(x,y)$, the ``snapshot of $A$ after $x$'' is the snapshot immediately before the first symbol of $y$ is streamed (i.e., after $A$ has read and processed the last symbol of $x$). When $A$ is interacting in a protocol and sends a message between reading $x$ and $y$, the snapshot after $x$ is that immediately before sending the message.

    \paragraph*{Low-degree extensions\iflipics\else.\fi} For any field $\field$ and integer $k$ such that $\abs{\field} \geq k$, we consider $[k] \subseteq \field$ via a canonical injection (e.g., taking the image of $\ell \in [k]$ as the field element whose binary representation is the same as that of $\ell$). Accordingly, we write $\ell \in \field$ as shorthand for the field element corresponding to the image of $\ell \in [k]$ via this canonical injection.
    
    For a string $y \in \field^k$, the \emph{low-degree extension} (LDE) with \emph{degree} $\degree$ and \emph{dimension} $\dimension$ where $\abs{\field} \geq \degree + 1$ and $k \leq (\degree + 1)^\dimension$, denoted $\hat y$, is the unique $\dimension$-variate polynomial of individual degree $\degree$ that coincides with $y$ in $[k]$; more precisely, viewing $[k] \subseteq [\degree + 1]^\dimension \subseteq \field^\dimension$, the LDE $\hat y: \field^\dimension \to \field$ is the unique polynomial satisfying $\hat y(i) = y_i$ for all $i \in [k]$. Our notation for the polynomial $\hat y$ omits the degree and dimension, as they will be clear from context.
    
    When $y$ is a matrix, we use $\hat y(\bm{\fe1}, \bm{\fe2})$ to denote the linear combination of the LDEs of the rows with linear coefficients $\bm{\fe2}$, i.e., $\hat y(\bm{\fe1}, \bm{\fe2}) = \sum_i \bm{\fe2}_i \hat y_i(\bm{\fe1})$.

    \subsection{Information theory} We will make use of several notions of information theory and approximations of information-theoretic quantities. The \emph{$q$-ary entropy function} $H_q \colon [0,1] \to [0,1]$ is
    \begin{align}
    \label{eq:q-ary-entropy}
        H_q(t) &= t \log_q(q-1) -t \log_q t - (1 - t) \log_q(1 - t)\\
        &= \frac{1}{\log q} \big(t \log(q - 1) - t \log t - (1 - t) \log(1 - t)\big)\nonumber\\
        &= \frac{1}{\log q} \big(t \log(q - 1) + H_2(t)\big),\nonumber
    \end{align}
    where $H_q(0) = 0$; we also use the shorthand $H$ for $H_2$, which simplifies to
    \begin{equation}
    \label{eq:binary-entropy}
        H(t) = H(1 - t) = -t \log t - (1 - t) \log(1 - t).
    \end{equation}

    We will make use of the following approximation for the (natural) logarithm function: for $0 \leq t \leq 1/2$,
    \begin{equation}
    \label{eq:log-approximation}
        -t (1 + t) \leq \ln(1-t) \leq -t.
    \end{equation}
    
    The (relative) \emph{Hamming distance} between two strings $a, b \in \alphabet^k$ over a finite alphabet is the fraction of coordinates where they differ, i.e., $d(a,b) = \frac1k \abs{\set{i \in [k] : a_i \neq b_i}} \in [0,1]$. With $\alphabetsize = \abs{\alphabet}$, the volume of a \emph{Hamming ball} $\ball(b,\delta) \coloneqq \set{a \in \alphabet^k : d(a, b) \leq \delta}$ of radius $\delta = 1 - \eps$, when $k$ is large enough and $\eps = k^{-1} \polylog(k)$, satisfies\iflipics\footnote{The lower bound is a simplification of
    \begin{equation*}
        \abs{\ball(b,\delta)} \geq \alphabetsize^{H_\alphabetsize(\delta) k} \cdot \exp\left(\frac1{12k + 1} - \frac1{12\delta k} - \frac1{12 \eps k}\right)\big/\sqrt{2\pi \delta(1 - \delta) k};
    \end{equation*}
    since $1/\eps k = o(1)$, the numerator is $1 - o(1)$, and the denominator is of order $\Theta(\sqrt{\eps k}) =\polylog(k)$. (See, e.g., \cite{GRS12}.)}\fi
    \begin{equation}
    \label{eq:hamming-volume-approximation}
        \alphabetsize^{H_\alphabetsize(\delta) k} \geq \abs{\ball(b, \delta)} = \Omega \left(\frac{\alphabetsize^{H_\alphabetsize(\delta) k}}{\sqrt{\eps k}}\right) = \frac{\alphabetsize^{H_\alphabetsize(\delta) k}}{\polylog(k)}\enspace.\iflipics\else\footnote{The lower bound is a simplification of
    \begin{equation*}
        \abs{\ball(b,\delta)} \geq \alphabetsize^{H_\alphabetsize(\delta) k} \cdot \frac{\exp\left(\frac1{12k + 1} - \frac1{12\delta k} - \frac1{12 \eps k}\right)}{\sqrt{2\pi \delta(1 - \delta) k}};
    \end{equation*}
    since $\frac1{\eps k} = o(1)$, the numerator is $1 - o(1)$, and the denominator is of order $\Theta(\sqrt{\eps k}) =\polylog(k)$. (See, e.g., \cite{GRS12}.)}\fi
    \end{equation}

    The entropy of a discrete random variable $X$ taking values in $\alphabet$ is
    \begin{equation*}
      H(X) = - \sum_{\symbol1 \in \alphabet} \P[X = \symbol1] \log\big(\P[X = \symbol1]\big).
    \end{equation*}
    Every such random variable satisfies
\begin{equation}
\label{eq:entropy-bounds}
    H(X) \in [0,\log\abs{\alphabet}].
\end{equation}

    The conditional entropy $H(X|Y)$ is the entropy of the conditional random variable, which satisfies
    \begin{equation}
    \label{eq:entropy-conditioning}
        H(X|Y) \leq H(X).
    \end{equation}
    If $X,Y$ are independent, then
    \begin{equation}
    \label{eq:entropy-independence}
        H(X,Y) = H(X) + H(Y).
    \end{equation}
    The last property of entropy we will make use is the \emph{chain rule}: for random variables $X_1, \ldots, X_n$,
    \begin{equation}
        \label{eq:chain-rule-entropy}
        H(X_1, \ldots, X_n) = \sum_{i = 1}^n H(X_i|X_1, \ldots, X_{i-1}).
    \end{equation}

    For ease of notation, when $(X, Y)$ are jointly distributed over $\alphabet^2$ with marginals $\mu$ and $\lambda$, respectively, we denote the distribution of $Y$ conditioned on $X = x$ as $\lambda_x$. The \emph{KL divergence} between the distributions is
    \begin{equation}
        \label{eq:kl-divergence}
        \mathrm{KL}(\mu ~||~ \lambda) = \sum_{\symbol1 \in \alphabet} \mu(x) \log \frac{\mu(\symbol1)}{\lambda(\symbol1)},
    \end{equation}
    which upper bounds the Euclidean distance between probability vectors via \emph{Pinsker's inequality} (see, e.g., \cite{BLM13}):
    \begin{equation}
    \label{eq:pinsker}
    \norm{\mu - \lambda}^2 \leq \frac{\mathrm{KL}\big(\mu ~||~ \lambda\big)}{2 \ln 2}.
\end{equation}
    
    Finally, the \emph{mutual information} is defined as (and equivalent to)
    \begin{align}
        \label{eq:mutual-information}
        I(\mu:\lambda) \coloneqq&~ I(X:Y)\\
        =&~ I(Y:X) \nonumber\\
        =&~ H(Y) - H(Y|X) \nonumber\\
        =&~ \E_{X \sim \mu}[\mathrm{KL}(\lambda_X ~||~ \lambda)] \nonumber.
    \end{align}

\section{Zero-knowledge streaming interactive proofs}
\label{sec:streaming-zk}

    This section motivates and provides a definition of zero-knowledge proofs in the data stream model. We start by discussing the differences between the streaming and the traditional settings as well as establish necessary notation. We then we provide a formal definition in \cref{sec:def}.
    
    The notion of zero-knowledge proofs in a computational model should capture the intuition that, when engaged in an interactive protocol, a verifier algorithm $V$ should learn nothing but the truth of some hard-to-compute statement about its input $x$ (e.g., that $x$ is in a language $L$). For consistency with the general notion we define zero-knowledge for \emph{decision problems} in the streaming model, but remark that the definition extends to search problems in the standard way (i.e., the verifier $V$ learns nothing but a valid solution to the search problem).
    
    In the traditional setting, $V$ can easily store the entirety of $x$ and make polynomial-time computations without the assistance of a prover. This implies that the sensitive information a zero-knowledge proof in this setting must not leak is the result of a computation on $x$ beyond the verifier's reach, i.e., one that requires superpolynomial time to obtain from the information available to $V$. In the streaming setting, however, the notion of ``hard-to-compute'' changes dramatically: the model puts \emph{space} as the primary resource, so that computations within the reach of $V$ are those possible with a small amount of space and sequential one-pass access to the input (but arbitrarily large time complexity). Knowledge then essentially corresponds to all information that $V$ cannot compute in low space complexity using its streaming access. As a result, zero-knowledge streaming interactive proofs (\zksip{}s) must satisfy a much more stringent requirement: that they not leak any information \emph{about the input $x$ itself} (which in the traditional setting is fully known to the verifier).
    
    In order to capture such a stringent notion of sensitive information, we define \zksip{}s as protocols such that no \emph{streaming} algorithm can distinguish a real transcript of the protocol from one that is generated by a (streaming) simulator. To this end, we first recall the formalisation of \emph{streaming interactive proofs} (SIPs) \cite{CTY11} without any zero-knowledge requirement.

\begin{definition}
    \label{def:sip}
    A \emph{streaming interactive proof} (SIP) for a language $L$ is an interactive proof defined by a pair $(P,V)$ of algorithms: a computationally unbounded prover $P$ and streaming verifier $V$ with space $s = o(n)$. The verifier engages in an iteractive protocol with $P$ and streams, at a predetermined step, the bit string $x \in \bitset^n$, which $P$ also observes.\footnote{The definition could allow for alternating between streaming parts of $x$ and communicating with the prover, as well as adaptively choosing the round(s) on which to read the input. Our protocols do not require this flexibility, however, so we assume the entirety of $x$ is read at a fixed step along the communication protocol.} At the end of the protocol, $V$ outputs a binary decision $\langle P,V \rangle(x)$ satisfying
    \begin{itemize}
        \item (completeness) if $x \in L$, then $\P[\langle P,V \rangle(x) = 1] \geq 2/3$; and
        \item (soundness) if $x \notin L$, then then $\P[\langle P,V \rangle(x) = 1] \leq 1/3$.
    \end{itemize}
\end{definition}

We call $s$ the \emph{space complexity} (of the verifier).
Note that, while the constant $1/3$ is arbitrary, soundness amplification does not hold for streaming algorithms due to the need to reread the input; nevertheless, many SIPs (including all those considered in this paper) allow for improving soundness by a desired factor with a logarithmic increase to their space complexity (see \cref{sec:lde-pep}). We stress that \cref{def:sip} constrains the verifier \emph{only} in terms of space, which allows arbitrarily large time complexities for both prover and verifier. (This is similar to other settings such as communication complexity and property testing, where the primary resources are communication and queries, respectively.)

Loosely speaking, we capture the notion of zero-knowledge in the data stream model by saying that an SIP is zero-knowledge if there exists a streaming \emph{simulator algorithm} $S$, with roughly the same space as the verifier $V$, able to simulate a prover-verifier interaction that is indistinguishable from a real one; that is, $S$ generates a \emph{view} of the verifier (defined next) that no \emph{distinguisher} algorithm with power comparable to $V$ (i.e., a streaming algorithm with roughly the same space) can tell apart from a real interaction. We stress that while the distinguisher $D$ is reminiscent of computational zero-knowledge, the security of our protocols is information-theoretic and \emph{does not rely on computational assumptions}.

\begin{definition}\label{def:streaming-view}
    Let $(P,V)$ be an SIP with a space-$s$ verifier, where $P$ sends $k_1$ messages to $V$ before the verifier streams its input, and an additional $k_2$ messages afterwards. 
    Denote the prover's messages by $y_1 \in \bitset^{p_1},\ldots,y_{k_1 + k_2} \in \bitset^{p_{k_1 + k_2}}$; the input by $x$; and the verifier's and prover's internal randomness by $r$ and $t$, respectively.
    
    The \emph{view} of the verifier $\widetilde V$, denoted $\view_{P,\widetilde V}(x,r)$, is the random variable defined as
	\begin{equation*}
	    \view_{P,\widetilde V}(x,r ; t) = (r, y_1, \ldots, y_{k_1}, x, y_{k_1 + 1}, \ldots, y_{k_1 + k_2}).\footnote{We note that a more general definition allows the random bits $r$ to be partially streamed throughout the protocol, rather than only in the beginning. This simpler definition suffices to capture the honest $V$ in all of our protocols, but we assume the more general version when (a malicious) $\widetilde V$ consumes more randomness than it can store.}
	\end{equation*}
\end{definition}

While \cref{def:streaming-view} is similar to its polynomial-time analogue, we highlight an important distinction: to faithfully correspond to what $\widetilde V$ sees, the order in which the view is streamed must be preserved. Indeed, a step-by-step execution of $\widetilde V$ in an interaction with $P$ corresponds exactly to its streaming $\view_{P,\widetilde V}(x,r)$ one symbol at a time. Order preservation is also consistent with the input stream $x$ being observed by all parties simultaneously (which are, in a simulation, $\widetilde V$, the simulator $S$ and a distinguisher $D$).

\subsection{Definition}
\label{sec:def}
We now ready to give a formal definition of zero-knowledge streaming interactive proofs.

\begin{definition}[zkSIP]
\label{def:streaming-zk}
    Let $L$ be a language and $(P,V,S)$ be a triplet where $(P,V)$ is an SIP with a space-$s$ verifier $V$ and $S$ is a streaming $\poly(s)$-space simulator with \emph{white-box access} to the verifier, streaming access to the input $x$ and additional query access to a random bit string $t$.
    
    $(P,V,S)$ forms a \emph{zero-knowledge streaming interactive proof} (\zksip{}) for $L$ that is secure against space-$s'$ adversaries if, for any space-$s$ algorithm $\widetilde V$ and $x\in L$, the random variables $\view_{P,\widetilde V}(x, r)$ and $S(\widetilde V, x, r)$ are indistinguishable by any streaming space-$s'$ algorithm. That is, for every space-$s'$ streaming algorithm $D$,
    \begin{equation*}
        \abs{\P\left[D\big(\view_{P,\widetilde V}(x, r)\big) \text{ accepts}\right] - \P\left[D\big(S(\widetilde V, x, r)\big) \text{ accepts}\right]} = o(1).
    \end{equation*}
\end{definition}

We note that all our applications have $s = \polylog(n)$, and the protocols are secure against adversaries with any space $s' = \poly(s)$ (see \cref{rem:indistinguishability-pep}).

\begin{remark}
\label{rem:subconstant-bias}
    Recall that the analogue of \cref{def:streaming-zk} in the polynomial-time setting requires a much stronger notion of indistinguishability: \emph{negligible} (i.e., sub-inverse-polynomial), rather than $o(1)$, bias. This is necessary for the notion to be robust with respect to poly-time algorithms, as otherwise repeating polynomially many executions of $D$ would boost its success probability arbitrarily close to $1$.
    
    This raises a number of interesting questions on the achievable notions of security for \zksip{}s: can we obtain tighter bounds, such as $1/\poly(n)$ or negligible? (Perhaps even in the statistical case?) An answer to each such question ensures security against one type of adversary (i.e., distinguisher): we will study the natural threat model where all parties are streaming algorithms and argue why $o(1)$ is a sufficient bound in this case. Before doing so, however, we briefly discuss an important alternative.
    
    As explained above, streaming verifiers secure against polynomial-time adversaries require negligible distinguishability. This has been previously studied, most notably for zero-knowledge interactive proofs that reduce to evaluating low-degree polynomials defined by the input and allow for it to be processed in a streaming fashion, such as \cite{GKR08}. (We stress, however, that such protocols rely on computational assumptions.) An interesting question that we leave to future work is whether \zksip{}s can \emph{simultaneously} achieve security against different adversaries -- e.g., with negligible bias for poly-time distinguishers (under cryptographic assumptions) in addition to subconstant bias for streaming distinguishers.
    
    Recall that a key distinction between the poly-time and streaming settings is the \emph{one-pass} restriction of the latter, which prevents even a single repetition of (a streaming) $D$ -- indeed, \indexproblem trivialises with $2$ passes (as do many fundamental streaming problems). In other words, as the common technique of \emph{amplification} is unavailable in the streaming model, $o(1)$ bias is a sufficiently robust requirement that guarantees the probability of information leakage tends to $0$. (We note that the weaker requirement of arbitrarily small constant bias would also suffice, i.e., the existence of $(P_\eps, V_\eps, S_\eps)$ achieving $\eps$ bias for every $\eps > 0$. We adopt the simpler and stronger subconstant version, which our protocols satisfy.)
    
\end{remark}

\paragraph*{The streaming simulator\iflipics\else.\fi}
For technical reasons, the simulator is given white-box access to the verifier and explicit access to a random string. We stress that this auxiliary information is completely independent of the input. This can viewed as allowing the verifier to obtain some computation about auxiliary information (about its own strategy, or a uniformly chosen random string), but learn absolutely \emph{zero information about the input stream $x$}. 

While white-box access gives the simulator $S$ knowledge of any function of the verifier's strategy, we do not require such generality; indeed, we will only be interested in questions about the most likely messages that $\widetilde V$ may send at a single point of the protocol. As such, the weaker definition that follows is sufficient.

\begin{definition}
    \label{def:whitebox-max}
    Let $A$ be a space-$s$ streaming algorithm that reads an $n$-bit string $y$ and outputs an $m$-bit string $z$. We define \emph{white-box access} to $A$ as oracle access to a function $\whiteboxoracle$ with two inputs, a snapshot $\snapshot \in \bitset^s$ and a candidate output $z \in \bitset^m$; the oracle returns the maximum probability \emph{over all inputs $y$} with which $A$, starting with memory state $\snapshot$, outputs $z$; that is,
    \begin{equation*}
        \whiteboxoracle(\snapshot,z) = \max_{y \in \bitset^n}\set{\P[A(y) \text{ outputs } z \text{ when its initial snapshot is } \snapshot]}.
    \end{equation*}
\end{definition}

\begin{remark}
    \label{rem:streaming-vs-memory-bottleneck}
    While the honest verifier $V$ does not use a large random string, malicious verifiers $\widetilde V$ with this additional resource can readily be simulated by $S$ as above. We assume hereafter that $\widetilde V$ has the same resources as the honest verifier, but note that the simulations extend straightforwardly to verifiers with both white-box access (to their strategies) and query access to a random string.    
\end{remark}

\section{Algebraic and temporal commitments}
\label{sec:commitment-protocols}

A commitment protocol is a two-party protocol (or, more accurately, a pair of protocols) that allows the transmission of a message from one party to another to be split into two parts: a \emph{commitment}, where the message is transmitted in a form that cannot be interpreted by the recipient; followed, at some point in the future, by a \emph{decommitment}, where the sender transmits additional information with which the recipient can read the message. (A useful analogy is that the commitment amounts to sending a locked box containing the message, and the decommitment to sending the key.)

In the standard setting \cite{B83} we have two parties: a sender and a receiver, which we will refer to as prover and verifier, respectively. The prover wishes to communicate a symbol $\symbol1$, and does so by first choosing a random \emph{key} $k$ and sending another string $c=\textsf{commit}(\symbol1,k)$. Then, at some point in the future, prover and verifier engage in a protocol at the end of which the receiver obtains $\fe1 = \textsf{decommit}(c)$. (We will refer to the streaming analogue as a commitment \emph{protocol}, rather than scheme, to avoid ambiguity with the polynomial-time analogue.)

Commitment protocols are extremely useful components for the construction of interactive protocols, and should satisfy two properties: \emph{hiding}, i.e., the commitment alone should prevent the verifier from obtaining a non-negligible amount of information about the message $\symbol1$; and \emph{binding}, i.e., the prover should not be able to decommit to a message that differs from the one it committed to. 
We will construct a commitment protocol whose hiding property follows from the average-case hardness of \searchindex for streaming algorithms, while binding follows from the soundness of the \pepprotocol protocol (which we introduce formally in \cref{sec:lde-pep}).

We first formally define streaming commitment protocols. We note that while the definition that follows can be generalised,\footnote{A natural generalisation is to parameterise the bias in the hiding property as well as the completeness and soundness in binding by $\eps_b, \eps_c, \eps_s \in (0,1)$; our definition has $\eps_b, \eps_s = o(1)$ and $\eps_c = 0$.} it suffices to capture our constructions.

\begin{definition}
\label{def:commitment}
A \emph{streaming commitment protocol} for alphabet $\alphabet$ (with security parameter $\pcommitlength$) and space bound $s$ consists of a function $\textsf{commit}: \alphabet \times K \to C$, where $K \subseteq \bitset^\pcommitlength$ is the set of keys and $C$ is the set of commitments, and a space-$s$ SIP $(P,V)$ which satisfy the following conditions.

\begin{itemize}
    \item \emph{Hiding:} Fix any pair of distinct messages $\symbol1, \symbol2 \in\alphabet$ and sample $k \sim K$. Set $c = \textsf{commit}(\symbol1) = \textsf{commit}(\symbol1, k)$ and $c' = \textsf{commit}(\symbol2) = \textsf{commit}(\symbol2, k)$. Every (streaming) space-$s$ distinguisher $D$ tells the two commitments apart with at most subconstant bias (with respect to the parameter $\pcommitlength$); that is,

    \begin{equation*}
        \abs{\P[D(c) \text{ accepts}] - \P[D(c') \text{ accepts}]} = o(1).
    \end{equation*}

    \item \emph{Binding:} Fix $k \in K$ and $\alpha \in \alphabet$. Then
    \begin{equation*}
        \P\big[\langle P, V\rangle\big(\textsf{commit}(\alpha, k), \alpha\big) = 1\big] = 1,
    \end{equation*}
    and for any $\beta \neq \alpha$,
    \begin{equation*}
        \P\big[\langle P, V\rangle\big(\textsf{commit}(\alpha, k), \beta\big) = 1\big] = o(1).
    \end{equation*}
\end{itemize}
\end{definition}

Note that, with some abuse of notation, the binding condition corresponds to $(P,V)$ being an SIP for the language $L = \set{(\textsf{commit}(\alpha, k), \alpha) : \alpha \in \alphabet, k \in K}$.

The next sections introduce the commitment protocols we will use to build our protocols. \cref{sec:lde-pep} begins by defining the concepts and tools we build upon: low-degree extensions and the polynomial evaluation protocol (\pepprotocol). In \cref{sec:pv-commitment}, we use them to construct a basic scheme that allows for the communication of a single symbol (which we use as a stepping stone), based on the hardness of \indexproblem (or, more accurately, \searchindex); in it, the keys are simply long strings paired with a coordinate, i.e., $K = \alphabet^\pcommitlength \times [\pcommitlength]$, and commitments are keys appended with a single extra symbol (i.e., $C \subset \alphabet^{\pcommitlength + 1} \times [\pcommitlength]$).

\cref{sec:pv-algebraic-commitment} then extends the construction of \cref{sec:pv-commitment} into an \emph{algebraic} commitment protocol, which allows for the commitment of low-degree polynomials. In both the basic and algebraic schemes, hiding is achieved by overwhelming $V$ with ``too much information'', and can only be broken if a malicious verifier is lucky enough to retain a critical fragment of the information stream; indeed, as we will see, breaking it amounts to solving \indexproblem. Binding, on the other hand, relies on the \pepprotocol protocol, which we introduce in the next section.

While commitment protocols are not a prerequisite for a zero-knowledge protocol, they also serve as inspiration for our second main component: \cref{sec:vp-commitment} shows how the verifier can perform a \emph{temporal commitment} to show its alleged internal randomness is uncorrelated with its input, and thus that it is not behaving maliciously.

\subsection{Low-degree extensions and polynomial evaluation}
\label{sec:lde-pep}

Fingerprinting is a technique that enables streaming algorithms to approximately verify an arbitrary coordinate of a long string in small space. It exploits \emph{low-degree extensions} (LDEs), extremely useful objects in the design of interactive proofs more broadly.

Given a data set $x$, viewed as a string of $n$ elements in a finite field $\field = \field_q$, an LDE is a low-degree polynomial that interpolates every data point. More precisely, we may view $x$ as a function $x: [n] \to \field$; given a \emph{dimension} $\dimension$ and defining the \emph{degree}  $\degree$ as the smallest (positive) integer such that $n \leq (\degree + 1)^\dimension$, we can also view $x: [\degree + 1]^\dimension \to \field$ by some canonical injection $[n] \hookrightarrow [\degree + 1]^\dimension$ (padding with zeroes if $n < (\degree + 1)^\dimension$). Then, as long as $\fieldsize > \degree$, we can also view (via another canonical injection $[\degree + 1] \hookrightarrow \field$) the data set as the restriction of a function from $\field^\dimension$ to $\field$.

Standard properties of polynomials imply that if this function is an $\dimension$-variate polynomial of individual degree $\degree$, then the extension is unique; we thus denote by $\hat x: \field^\dimension \to \field$ the unique degree-$\degree$ polynomial whose restriction to $[n]$ is equal to $x$. Explicitly, with $(i_1, \ldots, i_\dimension)$ as the image of $i$ by $[n] \hookrightarrow \field^\dimension$,
\begin{equation*}
    \hat x = \sum_{i = 1}^n x_i\chi_i = \sum_{i_1, \ldots, i_\dimension \in [\degree + 1]} x_{i_1, \ldots, i_\dimension} \chi_{i_1, \ldots, i_\dimension}
\end{equation*}
where the $\chi_i$ are the Lagrange basis polynomials, given by
\begin{equation*}
    \chi_i(\fe1_1, \ldots, \fe1_\dimension) \coloneqq \prod_{j = 1}^\dimension \prod_{\subalign{k & =1\\ k &\neq i_j}}^{\degree + 1} \frac{\fe1_j-k}{i_j-k}
\end{equation*}
(viewing $k \in [\degree + 1]$ as an element of $\field$); equivalently, the Lagrange polynomials are the unique $\dimension$-variate degree-$\degree$ polynomials satisfying $\chi_i(j) = \chf[i = j]$ when $i, j \in [\degree + 1]$. We note that LDEs and Lagrange polynomials can equivalently be defined with an injection from $\{0\} \cup [\degree]$, rather than $[\degree + 1]$, to $\field$; then they satisfy the previous condition for all $0 \leq i, j \leq \degree$. We will use the characterisation that is most convenient, which will be clear from context (e.g., an LDE that involves the evaluation of a polynomial at $0$ is of the latter type).

We will also use $\bm{\chi}(\bm{\fe1})$ to denote the vector $\big(\chi_1(\bm{\fe1}), \ldots, \chi_n(\bm{\fe1})\big)$ of evaluations of Lagrange polynomials; note that this allows us to write $\hat x(\bm{\fe1})$ as the dot product $\bm{\chi}(\bm{\fe1}) \cdot x$ of $n$-dimensional vectors.

Now, given a string $x\in\field^n$, a \emph{fingerprint} is simply an evaluation of the LDE of $x$ at a random point, that is, $\hat x(\fingerprinttuple)$ with $\fingerprinttuple \sim \field^\dimension$. The key property of fingerprints is that they are extremely unlikely to match for two different strings when the underlying field is large enough, as a consequence of the Schwartz-Zippel lemma \cite{S80,R81}.
\begin{lemma}[Schwartz-Zippel]
	\label{lem:sz}
	If $x,y\in\field_\fieldsize^n$ are distinct, then $\P_{\fingerprinttuple \sim \field^\dimension}\big[\hat x(\fingerprinttuple) = \hat y(\fingerprinttuple)\big] \leq \degree \dimension / \fieldsize$.
\end{lemma}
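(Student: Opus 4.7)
The plan is to apply the standard Schwartz--Zippel argument to the difference polynomial $\hat x - \hat y$. First, I would observe that both $\hat x$ and $\hat y$ are $\dimension$-variate polynomials of individual degree at most $\degree$, and since the restrictions to $[n]$ coincide with $x$ and $y$ respectively, distinctness of $x$ and $y$ implies $\hat x \neq \hat y$ as polynomials. Hence $p \coloneqq \hat x - \hat y$ is a nonzero $\dimension$-variate polynomial of individual degree at most $\degree$, and in particular of total degree at most $\degree \dimension$. The event $\hat x(\fingerprinttuple) = \hat y(\fingerprinttuple)$ is exactly the event that $\fingerprinttuple$ is a root of $p$, so it suffices to show that a nonzero $\dimension$-variate polynomial of total degree $D$ has at most $D \fieldsize^{\dimension - 1}$ roots in $\field^\dimension$, giving the claimed bound $\degree \dimension / \fieldsize$ after taking $D = \degree \dimension$.

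I would prove this root bound by induction on $\dimension$. The base case $\dimension = 1$ is the fundamental fact that a nonzero univariate polynomial of degree $D$ over a field has at most $D$ roots. For the inductive step, write
\begin{equation*}
    p(T_1, \ldots, T_\dimension) = \sum_{i = 0}^{k} T_1^i \, p_i(T_2, \ldots, T_\dimension),
\end{equation*}
where $k$ is the largest exponent of $T_1$ with $p_k \not\equiv 0$, so $p_k$ is a nonzero $(\dimension - 1)$-variate polynomial of total degree at most $D - k$. I would then condition on the last $\dimension - 1$ coordinates of $\fingerprinttuple$: if $p_k(\fingerprinttuple_2, \ldots, \fingerprinttuple_\dimension) \neq 0$, then $p(T_1, \fingerprinttuple_2, \ldots, \fingerprinttuple_\dimension)$ is a nonzero univariate polynomial of degree $k$, so $\fingerprinttuple_1$ is a root with probability at most $k/\fieldsize$. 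Otherwise, by the inductive hypothesis, $(\fingerprinttuple_2, \ldots, \fingerprinttuple_\dimension)$ lies in the root set of $p_k$ with probability at most $(D - k)/\fieldsize$. A union bound yields overall root probability at most $D/\fieldsize$, completing the induction.

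Combining the two steps, $\P_{\fingerprinttuple \sim \field^\dimension}[p(\fingerprinttuple) = 0] \leq \degree \dimension / \fieldsize$, which is exactly the statement of the lemma. I do not expect any serious obstacle here: the only subtle point is ensuring that $\hat x \neq \hat y$ as formal polynomials (not merely as functions), which follows from the uniqueness of the low-degree extension recalled in the preliminaries, together with the hypothesis $x \neq y$. Everything else is a textbook induction, and no new machinery beyond the basic factorisation property of univariate polynomials over a field is required.
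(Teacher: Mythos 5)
Your proof is correct and is the standard textbook induction for the Schwartz--Zippel lemma. The paper itself does not prove this statement; it simply cites the original sources \cite{S80,R81}, so there is no internal argument to compare against. Your derivation is clean: the key observations (that $\hat x \neq \hat y$ as formal polynomials by uniqueness of the low-degree extension, that $p = \hat x - \hat y$ therefore has total degree at most $\degree\dimension$, and the induction on the number of variables with the union bound over the leading-coefficient case and the residual case) are exactly what is needed, and the $\degree\dimension/\fieldsize$ bound falls out directly. No gaps.
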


Importantly for streaming algorithms, fingerprints can be computed with $O(\degree \dimension)$ time per entry of the input and $O(\dimension)$ field elements (thus $O(\dimension \log \fieldsize)$ bits) of space \cite{CTY11}.

The polynomial evaluation protocol is an interactive proof that enables a streaming verifier with a single random evaluation $f(\fingerprinttuple)$ of a degree-$\degree$ polynomial $f:\field^\dimension\to\field$ to evaluate $f$ at any other point, assisted by a prover with knowledge of $f$ in its entirety. Note that the prover could help the verifier compute $f$ at a point (non-interactively) by simply sending an interpolating set of the polynomial; but any such set has size $(\degree + 1)^\dimension$. The \pepprotocol (polynomial evaluation) protocol, detailed in \cref{prot:pep}, allows us to reduce the communication from $O(\degree^\dimension\log \fieldsize)$ to $O(\degree \dimension \log \fieldsize)$ by adding interaction.

In order to better compare the original \pepprotocol protocol with the zero-knowledge version that we will construct, we consider a general problem that the protocol is able to solve (as in \cite{CCMTV19}). We use $f$ as shorthand for a mapping $x \mapsto f^x$ (or, equivalently, a set $f \subseteq \set{f^x : x \in \field^n}$) where one evaluation $f^x(\fingerprinttuple)$ can be computed by a space-bounded algorithm that streams $x$. The problem $\pep(f, \fe1)$ is to decide whether $f^x(\evaluationpoint) = \fe1$ when the input stream is $x$ followed by an evaluation point $\evaluationpoint \in \field^\dimension$.

\iflipics
\begin{figure}
\else
\begin{protocol}[label={prot:pep}]{$\pepprotocol(f, \fe1)$}
\fi
    \textbf{Input:} Explicit access to $\fe1 \in \field$ and a set $f \subseteq \set{f^x : x \in \field^n}$ of $\dimension$-variate degree-$\degree$ polynomials over $\field$. Streaming access to $(x, \evaluationpoint) \in \field^n \times \field^\dimension$.

    \begin{enumerate}[label={}]
        \item[$\bm V$:] Sample $\fingerprinttuple \sim \field^\dimension$. Stream $x$ and compute $f^x(\fingerprinttuple)$. Store $\evaluationpoint$.
        
        Compute the line $\line:\field\to\field^\dimension$ such that $\line(0)=\evaluationpoint$ and $\line(\rfe1)=\fingerprinttuple$ with $\rfe1\sim\field$, then send $\line$ to the prover.

        \item[$\bm P$:] Compute and send $f^x_{|\line}$.\footnotemark

        \item[$\bm V$:] Compute $g(\rfe1)$, where $g: \field \to \field$ is the degree-$\degree\dimension$ low-degree extension of the sequence of evaluations sent by $P$ such that $g(0) = \fe1$.\footnotemark Accept if $g(\rfe1) = f^x(\fingerprinttuple)$ and reject otherwise.
    \end{enumerate}
\iflipics
\caption{Protocol $\pepprotocol(f, \fe1)$}
\label{prot:pep}
\end{figure}
\else
\end{protocol}
\fi
\addtocounter{footnote}{-2}
\addtocounter{Hfootnote}{-2}

\envfootnote\footnotetext{Recall that the line $\line$ and $f^x_{|\line}$ are sent in a canonical form: $\line$ as the evaluation $\line(1)$ and $f^x_{|\line}$ as the vector $\big(f^x \circ \line(i) : i \in [\degree \dimension]\big)$. (There is no need to send $\line(0) = \evaluationpoint$ or $f^x_{|\line}(0) = f^x(\evaluationpoint) = \fe1$, as they are known to $V$.)}
\envfootnote\footnotetext{Note that the Lagrange polynomials in this case satisfy $\chi_i(j) = \chf[i = j]$ for all $0 \leq i, j \leq \degree\dimension$.}

Assuming an evaluation of $f^x$ can be computed by streaming $x$ with $O(\dimension \log \fieldsize)$ space, \cref{prot:pep} is a streaming interactive proof for $\pep(f, \fe1)$ with communication complexity $O(\degree \dimension \log \fieldsize)$ and verifier space complexity $O(\dimension \log \fieldsize)$. We note that $\pepprotocol(f,\fe1)$ can easily be modified into an algorithm for a search problem without a candidate value $\fe1$ for $f^x(\evaluationpoint)$, by having $V$ output $g(0)$ instead of accepting. 

It is clear that $V$ accepts in \cref{prot:pep} when $P$ is honest; the protocol's soundness relies on the fact that if the prover were to send an incorrect $g \neq f^x_{|\line}$, it is highly unlikely that it will agree with the verifier's evaluation at the (unknown) location $\fingerprinttuple$.

In conjuction with the streaming nature of LDEs, (the search version of) \cref{prot:pep} yields a simple and efficient streaming interactive proof for \searchindex. This SIP, introduced by \cite{CCMTV19}, has $O(\log n \log\log n)$ space and communication complexities for a stream $(x, j) \in \field^n \times [n]$ where $\fieldsize = \abs{\field} = \polylog(n)$ (and $\evaluationpoint \in \field^\dimension$ is the identification of $j$); it is simply an instantiation of \pepprotocol where $\degree = 2$, $\dimension = \log n$ and the function $f^x = \hat x$ is the $\dimension$-variate (multilinear) LDE of $x$,\footnote{The space complexity can be reduced to $O(\log n)$ with the choice of parameters for $\fieldsize$, $\degree$ and $\dimension$ in \cref{cor:index}.} an evaluation $\hat x(\fingerprinttuple)$ of which can be computed incrementally as values of $x$ are revealed in the stream. Then $\hat x(\fingerprinttuple) = \hat x_{|\line}(\rfe1)$ allows the verifier to check that the prover is being honest (i.e., that the polynomial it sent is $\hat x_{|\line}$), as well as to learn $x_j = \hat x(j) = \hat x_{|\line}(0)$.

Observe that $\pepprotocol$ is \emph{not} zero knowledge: the verifier learns all of $f^x_{|\line}$, which it is not be able to construct by virtue of only learning $\evaluationpoint$ (and thus $\line$) \emph{after} streaming $x$. Note, however, that the \emph{honest} verifier only inspects two evaluations of $f^x_{|\line}$, namely, at $0$ and $\rfe1$. In the following sections we construct a commitment protocol that lets the prover only reveal information about these two points, without sacrificing soundness.

\subsection{A prover-to-verifier commitment protocol}
\label{sec:pv-commitment}

Our commitment protocol, designed to allow an unbounded-space sender to commit to a streaming receiver, directly uses the (average-case) hardness of the \indexproblem problem. By sending a message hidden at a random coordinate, we exploit the fact that any streaming algorithm requires a linear amount of space to be able to recall a random item from a string after it has been seen. We begin by formally defining (the search and decision versions of) \indexproblem \emph{in the one-way communication complexity model}.

\begin{definition}
    \label{def:search-index}
    \searchindex, over alphabet $\alphabet$ and with message length $s$, is the one-way communication problem defined as follows: Alice receives a string $x \in \alphabet^n$ and sends Bob an $s$-bit message $a = A(x)$. Bob receives, besides $a \in \bitset^s$, an index $j \in [n]$, and outputs a symbol $b = B(a, j) \in \alphabet$. The execution succeeds if $b = x_j$.
\end{definition}

\begin{definition}
    \label{def:decision-index}
    $\decisionindex(\fe1)$ (with alphabet $\alphabet$ and message length $s$) is the one-way communication problem defined as follows: Alice receives a string $x \in \alphabet^n$ and sends Bob an $s$-bit message $a = A(x)$. Bob receives, besides Alice's message, an index $j \in [n]$, and outputs a bit $b = B(a, j) \in \bitset$. The execution succeeds if $b = 1$ when $x_j = \fe1$, and $b = 0$ otherwise.
\end{definition}

It is well known that \indexproblem is extremely hard, even \emph{on average} and in the one-way communication model \emph{with shared randomness}. 

\begin{proposition}
\label{prop:index-hardness}
    Any one-way communication protocol $(A, B)$ for \searchindex that sends a message of length $s$ satisfies 
    \begin{equation*}
        \P_{\substack{x \sim \alphabet^\pcommitlength\\ j \sim [\pcommitlength]}}\left[B\big(A(x), j\big) = x_j\right] = \frac1{\abs{\alphabet}}+O\left(\sqrt{\frac{s}{\pcommitlength}}\right).
    \end{equation*}
\end{proposition}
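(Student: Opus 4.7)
The plan is an information-theoretic argument extending the classical binary \indexproblem lower bound to arbitrary alphabet $\alphabet$. Write $\alphabetsize \coloneqq |\alphabet|$. By averaging over the protocol's internal randomness we may assume $(A,B)$ is deterministic; then sample $X \sim \alphabet^\pcommitlength$ and $J \sim [\pcommitlength]$ uniformly, and set $M = A(X)$. Let $p_j = \P[B(M,j) = X_j]$ and $\bar p = \frac{1}{\pcommitlength} \sum_{j=1}^{\pcommitlength} p_j$; the claim amounts to $\bar p \leq 1/\alphabetsize + O(\sqrt{s/\pcommitlength})$.

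First I would combine two bounds on the mutual information $I(X:M)$. From $|M| \leq s$ we get $I(X:M) \leq H(M) \leq s$. Using independence of the $X_j$'s and the chain rule \eqref{eq:chain-rule-entropy},
\begin{equation*}
    I(X:M) = \sum_{j=1}^{\pcommitlength} I(X_j : M \mid X_{<j}) \geq \sum_{j=1}^{\pcommitlength} I(X_j : M),
\end{equation*}
since further conditioning only reduces entropy via \eqref{eq:entropy-conditioning}. Hence $\sum_j I(X_j : M) \leq s$. Next, Fano's inequality applied to the predictor $B(M,j)$ of $X_j$, together with the identity \eqref{eq:q-ary-entropy}, yields
\begin{equation*}
    H(X_j \mid M) \leq H_2(1-p_j) + (1-p_j)\log(\alphabetsize - 1) = \log \alphabetsize \cdot H_\alphabetsize(1-p_j).
\end{equation*}
Substituting into $I(X_j:M) = \log \alphabetsize - H(X_j \mid M)$ and averaging over $j$, concavity of $H_\alphabetsize$ gives $1 - H_\alphabetsize(1 - \bar p) \leq s/(\pcommitlength \log \alphabetsize)$.

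The final step converts this entropy deficit into a bound on $\bar p - 1/\alphabetsize$. Let $\mu$ place mass $\bar p$ on a fixed symbol of $\alphabet$ and mass $(1-\bar p)/(\alphabetsize - 1)$ on each of the remaining $\alphabetsize - 1$ symbols, and let $u$ be the uniform distribution on $\alphabet$. A direct computation verifies $H(\mu) = \log\alphabetsize \cdot H_\alphabetsize(1 - \bar p)$, so
\begin{equation*}
    \mathrm{KL}(\mu \,\|\, u) = \log \alphabetsize - H(\mu) = \log\alphabetsize \cdot (1 - H_\alphabetsize(1 - \bar p)) \leq s/\pcommitlength.
\end{equation*}
Pinsker's inequality \eqref{eq:pinsker} yields $\|\mu - u\|^2 \leq s/(2\pcommitlength \ln 2)$, and an explicit computation of the coordinates of $\mu - u$ shows $\|\mu - u\| = \sqrt{\alphabetsize/(\alphabetsize - 1)} \cdot |\bar p - 1/\alphabetsize| \geq |\bar p - 1/\alphabetsize|$. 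Combining, $\bar p \leq 1/\alphabetsize + O(\sqrt{s/\pcommitlength})$, as claimed.

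The main conceptual challenge is the last step: translating a $\alphabetsize$-ary entropy deficit into an additive gap above $1/\alphabetsize$. A Taylor expansion of $H_\alphabetsize$ around its maximum at $1 - 1/\alphabetsize$ suggests the quadratic relationship informally, but packaging the deficit as a KL divergence between $\mu$ and the uniform distribution and then invoking Pinsker provides a clean route with explicit constants, and moreover specialises correctly to the binary case of \cite{RY20} at $\alphabetsize = 2$.
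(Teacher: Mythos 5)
Your proof is correct, but it travels a genuinely different route through the middle. Both arguments open the same way (reduce to deterministic strategies, establish $\sum_{j} I(X_j : M) \leq s$) and both close with Pinsker's inequality, but the intermediate steps diverge. You bound the conditional entropy $H(X_j \mid M)$ via \emph{Fano's inequality}, translate to the $\alphabetsize$-ary entropy $H_\alphabetsize$ via \cref{eq:q-ary-entropy}, pass to the average success probability $\bar p$ using concavity of $H_\alphabetsize$, and then apply Pinsker \emph{once}, to an auxiliary distribution $\mu$ constructed to have entropy exactly $\log\alphabetsize \cdot H_\alphabetsize(1-\bar p)$. The paper instead works directly with the conditional distributions $\mu_{j,a}$ of $x_j$ given Alice's message $a$: it observes that Bob's optimal rule is to output the most frequent symbol, bounds $\max_\alpha \big(\mu_{j,a}(\alpha) - \tfrac1\alphabetsize\big) \leq \norm{\mu_{j,a} - \mu}$, applies Pinsker to each $(\mu_{j,a}, \text{uniform})$ pair, and uses Jensen (convexity of $z \mapsto z^2$) to pull the expectation inside the square root; $H_\alphabetsize$ and Fano never appear. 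Your version is more modular — Fano encapsulates the ``good estimator $\Rightarrow$ low conditional entropy'' step, and packaging the averaged entropy deficit as a single KL divergence is a clean device — while the paper's version is slightly more self-contained, needing only Pinsker, Jensen, and the explicit optimal-Bob argument. Neither has an advantage in the resulting bound; the constants differ only by small factors absorbed in the $O(\cdot)$.
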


In other words, the chance of correctly recalling a random symbol is at best slightly better than uniform guessing if the string $\pcommitlength$ is much longer than the message length $s$ of the protocol. We note that this bound was known for $\alphabet = \bitset$ \cite{RY20}, but it extends to larger alphabets (we provide a proof of this fact in \cref{sec:deferred-index-hardness} for completeness).

The commitment phase of our scheme exploits this hardness result directly: we take $\alphabet \hookrightarrow \field$ where $\field$ is a large enough finite field (which will allow us to use \pepprotocol to decommit) and have $P$ send the triple $(\pcommitstring, \fe1 - \pcommitstring_k, k)$ for random $\pcommitstring$ and $k$ as a commitment to $\fe1$. (In particular, the commitment key is a random string-coordinate pair $(\pcommitstring, k)$). Loosely speaking, the protocol has the sender communicate a random stream $\pcommitstring$ with the message hidden at a random coordinate $k$, which is revealed after $\pcommitstring$.

The honest verifier keeps a (random) fingerprint of $\pcommitstring$, which it can use to validate the message at $\pcommitstring_k$ (see \cref{prot:commit}), while the \textsf{decommit} stage simply instantiates \pepprotocol appropriately (see \cref{prot:decommit}). We note that the inputs listed in the description of the protocols are those available to the verifier.

\iflipics
\begin{figure}
\else
\begin{protocol}[label={prot:commit}]{$\textsf{commit}(\fe1)$}
\fi
	\textbf{Input:} explicit access to $\pcommitlength, \degree, \dimension, \fieldsize \in \N$ with $\pcommitlength \leq \degree^\dimension$, $\fieldsize > \degree$ and $\field = \field_\fieldsize$. Streaming access to $\pcommitstring\sim\field^\pcommitlength$ followed by a correction $\fe3 \in \field$ and a coordinate $k \sim [\pcommitlength]$.
	\begin{enumerate}
	    \item[$\bm V$:] Sample $\fingerprinttuple \sim \field^\dimension$ and compute $\hat \pcommitstring(\fingerprinttuple) = \sum_{i = 1}^\pcommitlength \chi_i(\fingerprinttuple) \pcommitstring_i$ while streaming $\pcommitstring$.
	    
	    Store $\fingerprinttuple, k, \fe3$ and $\hat \pcommitstring(\fingerprinttuple)$.
	\end{enumerate}
\iflipics
\caption{Protocol $\textsf{commit}(\fe1)$}
\label{prot:commit}
\end{figure}
\else
\end{protocol}
\fi

\iflipics
\begin{figure}
\else
\begin{protocol}[label={prot:decommit}]{$\textsf{decommit}(\fe1, \pcommitstring, k)$}
\fi
    \textbf{Input:} $\fe1 \in \field$, as well as the (parameters and) values stored in the \textsf{commit} stage: $k, \fe3, \fingerprinttuple, \hat \pcommitstring(\fingerprinttuple)$.
    \begin{enumerate}[label={}]      
        \item[$\bm V$:] Compute and send the line $\line:\field\to\field^\dimension$ such that $\line(0) = k$ and $\line(\rfe1)=\fingerprinttuple$ with $\rfe1\sim\field$.

        \item[$\bm P$:] Send $\hat y_{|\line}$.

        \item[$\bm V$:] Compute $g(\rfe1)$ and $g(0)$, where $g: \field \to \field$ is the degree-$\degree\dimension$  extension of the sequence of evaluations sent by $P$.
        
        Accept if $g(\rfe1) = \hat \pcommitstring(\fingerprinttuple)$ and $g(0) + \fe3 = \fe1$, rejecting otherwise.
    \end{enumerate}
\iflipics
\caption{Protocol $\textsf{decommit}(\fe1, \pcommitstring, k)$}
\label{prot:decommit}
\end{figure}
\else
\end{protocol}
\fi

Now, we show that \cref{prot:commit,prot:decommit} form a streaming commitment protocol, i.e., they satisfy the hiding and binding properties of \cref{def:commitment} if $\pcommitlength$ is large enough; these follow from the hardness of \searchindex and the soundness of \pepprotocol, respectively.
\begin{theorem}
    \label{thm:pv-commitment}
    \cref{prot:commit,prot:decommit} form a streaming commitment protocol with space complexity $s = O(\dimension \log \fieldsize)$ when $\pcommitlength = \fieldsize^3$ and $\degree \dimension = \polylog(\fieldsize)$. The protocol is secure against $\poly(s)$-space adversaries and communicates $O(\fieldsize^3 \log \fieldsize)$ bits.
\end{theorem}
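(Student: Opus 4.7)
The plan is to decompose the theorem into three subgoals in increasing difficulty: resource bounds, binding (completeness and soundness of the decommit), and hiding (indistinguishability of commitments).

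Resource bounds and binding should be straightforward. For complexity, I would tabulate the memory footprint: the verifier in \cref{prot:commit} only stores $(\fingerprinttuple, k, \fe3, \hat\pcommitstring(\fingerprinttuple)) \in \field^\dimension \times [\pcommitlength] \times \field \times \field$, giving $O(\dimension\log\fieldsize)$ bits when $\pcommitlength = \fieldsize^3$, and \cref{prot:decommit} is a direct invocation of $\pepprotocol$ with the same asymptotic budget. On the wire, the commit stage sends $O(\fieldsize^3 \log\fieldsize)$ bits, which dominates the $O(\degree\dimension\log\fieldsize) = \polylog(\fieldsize)$ bits of the \pepprotocol decommit. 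For binding, completeness is immediate: an honest prover sends $g = \hat\pcommitstring_{|\line}$, whence $g(\rfe1) = \hat\pcommitstring(\fingerprinttuple)$ and $g(0) + \fe3 = \pcommitstring_k + (\fe1 - \pcommitstring_k) = \fe1$. For soundness I would argue that any attempt to decommit to $\beta \neq \fe1$ forces $g(0) = \beta - \fe3 \neq \pcommitstring_k = \hat\pcommitstring_{|\line}(0)$ via the second check; then $g - \hat\pcommitstring_{|\line}$ is a nonzero univariate polynomial of degree at most $\degree\dimension$, and by \cref{lem:sz} the first check passes with probability at most $\degree\dimension/\fieldsize = o(1)$.

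For hiding, the plan is to reduce to the average-case hardness of \searchindex from \cref{prop:index-hardness}. I would fix distinct $\symbol1, \symbol2 \in \field$, assume some streaming distinguisher $D$ of space $s' = \poly(s) = \polylog(\fieldsize)$ achieves bias $\eps$ between $\mathcal{D}_{\symbol1} = (\pcommitstring, \symbol1 - \pcommitstring_k, k)$ and $\mathcal{D}_{\symbol2} = (\pcommitstring, \symbol2 - \pcommitstring_k, k)$ for uniform $(\pcommitstring, k) \sim \field^\pcommitlength \times [\pcommitlength]$, and convert $D$ into a one-way \searchindex protocol of message length $s'$. Specifically: Alice streams $\pcommitstring$ through $D$ and forwards its $s'$-bit snapshot to Bob, who samples $\gamma \sim \field$, runs $D$ on the continuation $(\gamma, k)$, and (depending on $D$'s output) guesses $\pcommitstring_k \in \{\symbol1 - \gamma, \symbol2 - \gamma\}$, falling back to a uniform guess otherwise. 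When $\gamma$ coincides with $\symbol1 - \pcommitstring_k$ or $\symbol2 - \pcommitstring_k$ (each of probability $1/\fieldsize$), $D$ receives an honest sample of $\mathcal{D}_{\symbol1}$ or $\mathcal{D}_{\symbol2}$, so its bias $\eps$ boosts Bob's success probability above $1/\fieldsize$ by $\Omega(\eps/\fieldsize)$. Contrasting this with the bound $1/\fieldsize + O(\sqrt{s'/\pcommitlength})$ from \cref{prop:index-hardness}, the choice $\pcommitlength = \fieldsize^3$ and $s' = \polylog(\fieldsize)$ forces $\eps = O(\fieldsize \sqrt{s'/\pcommitlength}) = o(1)$, as required.

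The main obstacle will be making the hiding reduction rigorous: carefully averaging $D$'s bias over the choice of $\gamma$ and over $D$'s output branches, while handling the large residual probability mass in which $\gamma$ is inconsistent with both $\mathcal{D}_{\symbol1}$ and $\mathcal{D}_{\symbol2}$. A cleaner formulation may be to introduce the fully independent hybrid $\mathcal{U} = (\pcommitstring, \gamma, k)$ with $\gamma \sim \field$ independent of $(\pcommitstring, k)$, and separately bound $|\P[D(\mathcal{D}_\alpha) = 1] - \P[D(\mathcal{U}) = 1]|$ for each $\alpha \in \{\symbol1, \symbol2\}$ via essentially the same reduction --- since distinguishing $\mathcal{D}_\alpha$ from $\mathcal{U}$ is precisely the task of detecting whether $\gamma + \pcommitstring_k = \alpha$, which is equivalent to extracting $\pcommitstring_k$ given $\alpha$ and $\gamma$. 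A triangle inequality over the two hybrids then converts the original indistinguishability claim into two clean one-way \searchindex reductions.
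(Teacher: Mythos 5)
Your proposal follows the paper's proof essentially verbatim: same tabulation for the resource bounds, same Schwartz–Zippel argument for binding, and the same reduction for hiding (use the distinguisher $D$ on a stream whose correction symbol is sampled uniformly, map $D$'s accept/reject bit to a guess for $\pcommitstring_k$, and compare the resulting $\tfrac{1+\eps}{\fieldsize}$ success probability to the $\tfrac1{\fieldsize}+O\big(\sqrt{s'/\pcommitlength}\big)$ bound of \cref{prop:index-hardness}). The only deviations are cosmetic — the "fall back to a uniform guess" branch is vacuous since $D$ always returns one of two bits, and the hybrid $\mathcal U$ you suggest is an unnecessary (though harmless) refinement of a calculation the paper does directly.
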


\begin{proof}
    First, note that the communication complexity is dominated by the prover sending $\pcommitlength = \fieldsize^3$ field elements in the \textsf{commit} step, for a total of $O(\fieldsize^3 \log \fieldsize)$ bits.

    The binding property is an immediate consequence of the completeness and soundness of \pepprotocol: if $P$ is honest, i.e., sends the correction $\fe3 = \fe1 - \pcommitstring_k$ in the \textsf{commit} stage and the polynomial $\hat y_{|\line}$ in the \textsf{decommit} stage, then $V$ accepts, as $\hat \pcommitstring_{|\line}(\rfe1) = \hat \pcommitstring(\fingerprinttuple)$ and $\hat \pcommitstring_{|\line}(0) + \fe3 = \fe1$. (Recall that the line $\line$ satisfies $\line(0) = k$ and $\line(\rfe1) = \fingerprinttuple$.)
    
    Now, suppose the prover replies with a polynomial $g$ such that $g(0) \neq \pcommitstring_k = \hat \pcommitstring(k) = \hat \pcommitstring_{|\line}(0)$; then the Schwartz-Zippel lemma (\cref{lem:sz}) implies $\hat \pcommitstring(\fingerprinttuple) = \hat \pcommitstring_{|\line}(\rfe1) \neq g(\rfe1)$ except with probability $\degree \dimension / \fieldsize = o(1)$, in which case $V$ rejects.\footnote{We remark that $\rfe1$ need not be sampled from the entire field; the same result holds if $\rfe1 \sim R \subset \field$ when $R$ is large enough. This will be useful in proving that our protocols for \pep and \sumcheck are zero-knowledge.} Note that the verifier only needs to store $\fingerprinttuple \in \field^\dimension$, $k \in [\pcommitlength]$ and a constant number of additional field elements, for a space complexity of $O(\dimension \log \fieldsize + \log \pcommitlength) = O(\dimension \log \fieldsize)$.

    To show the hiding property, assume towards contradiction that there exists a streaming algorithm $D$ with space $\poly(s) = \polylog(\fieldsize)$ that distinguishes commitments between some $\fe1 \in \field$ and $\fe1' \in \field \setminus \set{\fe1}$ with constant bias:\footnote{Note that allowing $\poly(s)$ space for $D$ will imply a space-robust indistinguishability property; bounding it by, say, $\tilde O(s)$ or $O(s^2)$ would prove a weaker but still nontrivial statement.} that is,
    \begin{align*}
        \P_{\subalign{\pcommitstring &\sim \field^\pcommitlength\\k &\sim [\pcommitlength]}}\big[D(\pcommitstring, k, \fe1 - \pcommitstring_k) \text{ accepts}\big] -  \P_{\subalign{\pcommitstring &\sim \field^\pcommitlength\\k &\sim [\pcommitlength]}}\big[D(\pcommitstring, k, \fe1' - \pcommitstring_k) \text{ accepts}\big] \geq \eps
    \end{align*}
    for some $\eps = \Omega(1)$. Now consider the following algorithm $A$ for \searchindex over the alphabet $\field$ with input $(x,j)$: simulate $D$ on the stream $(x,\fe3,j)$ where $\fe3 \sim \field$; output $\fe1 - \fe3$ if $D$ accepts, and otherwise output $\fe1' - \fe3$. Note that $A$ outputs correctly exactly when $\fe3 = \fe1 - \pcommitstring_k$ and $D$ accepts, or $\fe3 = \fe1' - \pcommitstring_k$ and $D$ rejects; moreover, $A$ can simulate $D$ with constant space overhead, so that its space complexity is also $\polylog(\fieldsize)$. We will now show that $A$ solves \searchindex with a bias that is too large, contradicting \cref{prop:index-hardness}.

    \begin{align*}
        \P_{\subalign{x &\sim \field^\pcommitlength\\j &\sim [\pcommitlength]}}\big[A(x,j) = x_j\big] &= \frac1{\fieldsize} \cdot \P_{\subalign{x &\sim \field^\pcommitlength\\j &\sim [\pcommitlength]}}\big[D(x,j,\fe1 - x_j) \text{ accepts}\big] + \frac1{\fieldsize} \cdot \P_{\subalign{x &\sim \field^\pcommitlength\\j &\sim [\pcommitlength]}}\big[D(x,j,\fe1' - x_j) \text{ rejects}\big]\\
        &= \frac1{\fieldsize} \left(1 + \P_{\subalign{x &\sim \field^\pcommitlength\\j &\sim [\pcommitlength]}}\big[D(x,j,\fe1 - x_j) \text{ accepts}\big] - \P_{\subalign{x &\sim \field^\pcommitlength\\j &\sim [\pcommitlength]}}\big[D(x,j,\fe1' - x_j) \text{ accepts}\big] \right)\\
        &\geq \frac{1 + \eps}{\fieldsize}\\
        &= \frac1{\fieldsize} + \Omega\left(\frac1{\fieldsize}\right).
    \end{align*}

    Since $1/\fieldsize = \sqrt{\fieldsize/\pcommitlength} = \omega\left(\sqrt{\poly(s)/\pcommitlength}\right)$, owing to $s = \polylog(\fieldsize)$, the result follows.
\end{proof}

\begin{remark}
    Just as in \pepprotocol, the verifier learns much more than than the message $\hat \pcommitstring_{|\line}(0) = \fe1 \in \field$: it learns all of $\hat \pcommitstring_{|\line}$. Crucially, however, the additional information consists of \emph{random field elements uncorrelated with $\fe1$}. This enables the commitment protocol laid out in this section to be proven zero-knowledge when the simulator has read-only access to a large random string $t$, as in \cref{def:streaming-zk}. (More accurately, such a simulator can perfectly generate the random variable that corresponds to the view resulting from the \textsf{commit} followed by the \textsf{decommit} steps.)
    
    Indeed, a simulator with space $O(\dimension \log \fieldsize)$ and query access to $\pcommitstring \sim \field^\pcommitlength$ may sample $k \sim [\pcommitlength]$ and send $(\pcommitstring, \fe1 - \pcommitstring_k, k)$ in the \textsf{commit} step; then, in \textsf{decommit}, after receiving the line $\line$, it computes and sends $\hat y_{|\line} = \big(\hat \pcommitstring_{|\line}(i) : i \in \set{0} \cup [\degree\dimension]\big)$ by reading the string $\pcommitstring$ an additional $\degree \dimension + 1$ times, computing and sending one LDE evaluation at a time.
\end{remark}

However, this basic commitment protocol is not yet sufficient. As discussed in \cref{sec-to:pv-commitment}, it allows $P$ to commit (and decommit) to a single field element; but the prover should be able to commit to a polynomial and decommit to a single evaluation thereof. In the next section we show how to accomplish this, by modifying our scheme to make it \emph{algebraic}.

\subsection{Making the commitment algebraic}
\label{sec:pv-algebraic-commitment}

In this section, we will show how to modify the commitment protocol laid out in \cref{sec:pv-commitment} so that the prover can commit to $\ell$ messages and decommit to \emph{a single linear combination} of the verifier's choosing. As we shall see, this can in fact be accomplished by adapting only the commitment step.

The idea behind this new protocol is simple, but has an important caveat. If the prover $P$ wishes to commit to the messages $\bm{\fe1} = (\bm{\fe1}_1, \bm{\fe1}_2, \ldots, \bm{\fe1}_\ell)$, the obvious solution is to send $(\pcommitstring_i, \bm{\fe1}_i - \pcommitstring_{ik_i}, k_i)$ for all $i$, a sequence of commitments to each $\bm{\fe1}_i$. However, the indices $k_i$ where each message is hidden are sampled independently, so that even though taking low-degree extensions is a linear operation (i.e., the LDE of $\sum_i \evaluationpoint_i \pcommitstring_i$ is $\sum \evaluationpoint_i \hat \pcommitstring_i$), a linear combination of the $\pcommitstring_i$ does not yield a commitment to a linear combination of the $\bm{\fe1}_i$: evaluating it at $k_i$ yields a sum where only the $i^\text{th}$ summand is guaranteed to be correct.

We can fix this problem by hiding all the messages at the same coordinate $k$. Then, setting $\bm{\fe3} = \big(\bm{\fe1}_i - \pcommitstring_{ik} : i \in [\ell]\big)$ and $\fe3 = \bm{\fe2} \cdot \bm{\fe3} = \sum \bm{\fe2}_i \bm{\fe3}_i$, we have
\begin{equation*}
    \fe3 + \big(\sum \evaluationpoint_i \pcommitstring_i\big)_k = \sum \evaluationpoint_i (\pcommitstring_{ik} + \bm{\fe3}_i) = \bm{\fe1} \cdot \evaluationpoint;
\end{equation*}
so a linear combination of commitments yields a commitment to a linear combination of the messages. Therefore, the prover may send $(\pcommitstring_1, \ldots, \pcommitstring_\ell, \bm{\fe3}, k)$ and the new protocol will satisfy the binding property (a slightly stronger version of which, with respect to a random $\evaluationpoint$, will be necessary; we elaborate upon this later in the section).

More precisely, viewing $\pcommitstring \in \field^{\ell \times \pcommitlength}$ as a matrix whose $i^\text{th}$ row is $\pcommitstring_i$, the prover may send $\pcommitstring$, say, column by column.\footnote{We remark that while sending $\pcommitstring$ column by column naturally corresponds to an \indexproblem instance with a larger alphabet (where symbols are $\ell$-tuples of field elements), since the hardness of \indexproblem holds for the stronger model of one-way communication protocols, the hiding property of the scheme is preserved regardless of the order in which $\pcommitstring$ is sent. This is important in our sumcheck protocol, where a column cannot be sent all at once.} The resulting string, appended with $\bm{\fe3}$ and $k$, is a random \indexproblem instance whose alphabet is $\field^\ell$; and this enables us to show the hiding property for \textsf{algebraic-commit} as we did for \textsf{commit}.

The result is \cref{prot:algebraic-commit}, which enables a prover to commit to multiple messages and decommit (via \cref{prot:decommit}, using $\hat \pcommitstring(\fingerprinttuple, \bm{\fe2})$ as the fingerprint and $\bm{\fe2} \cdot \bm{\fe3}$ as the correction) to an arbitrary linear combination of them.

\begin{theorem}
    \label{thm:pv-algebraic-commitment}
    \cref{prot:algebraic-commit} (\textsf{algebraic-commit}) and \cref{prot:decommit} (\textsf{decommit}) form a streaming commitment protocol with space complexity $s = O\big((\ell + \dimension) \log \fieldsize\big)$ if $\pcommitlength = \fieldsize^{3\ell}$ and $\degree \dimension = \polylog(\fieldsize)$. The scheme is secure against $\poly(s)$-space adversaries and communicates $O(\ell \fieldsize^{3 \ell} \log \fieldsize)$ bits.
    
    Furthermore, if each linear coefficient can be computed in $O(\dimension \log \fieldsize)$ space, then $s = O(\dimension \log \fieldsize)$.
\end{theorem}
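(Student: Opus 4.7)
The plan is to follow the template of \cref{thm:pv-commitment}, treating the algebraic scheme as a commitment to a single symbol over the larger alphabet $\field^\ell$. The communication bound is the easiest piece: the dominant cost is the prover sending the matrix $\pcommitstring \in \field^{\ell \times \pcommitlength}$ in \cref{prot:algebraic-commit}, which is $\ell \pcommitlength = \ell \fieldsize^{3\ell}$ field elements, for a total of $O(\ell \fieldsize^{3\ell} \log \fieldsize)$ bits.

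For binding, I would invoke soundness of \pepprotocol applied to the single polynomial $\hat{\pcommitstring'} = \sum_i \bm{\fe2}_i \hat \pcommitstring_i$ together with the linearity observation $\hat\pcommitstring(\fingerprinttuple, \evaluationpoint) = \sum_i \evaluationpoint_i \hat \pcommitstring_i(\fingerprinttuple)$ that was used to motivate the scheme. Concretely, the verifier's stored fingerprint $\hat\pcommitstring(\fingerprinttuple, \evaluationpoint)$ and correction $\evaluationpoint \cdot \bm{\fe3}$ are exactly what the basic scheme would store for a commitment to $\bm{\fe1} \cdot \evaluationpoint$ via the string $\pcommitstring' = \evaluationpoint \cdot \pcommitstring$, so a dishonest decommitment forces the prover to send a polynomial $g \neq \hat \pcommitstring'_{|\line}$ that collides at $\rfe1$, which happens with probability at most $\degree\dimension/\fieldsize = o(1)$ by \cref{lem:sz}.

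The main obstacle is hiding, since we must rule out that a $\poly(s)$-space streaming distinguisher $D$ can separate commitments to distinct tuples $\bm{\fe1} \neq \bm{\fe1}'$ with constant bias. I would mimic the reduction in \cref{thm:pv-commitment}, but cast as an instance of \searchindex with alphabet $\alphabet = \field^\ell$ and stream length $\pcommitlength = \fieldsize^{3\ell}$. Given a supposed distinguisher, the reduction samples $\bm{\fe3} \sim \field^\ell$, streams the \indexproblem input $(x, j)$ padded with $\bm{\fe3}$ and $j$ into $D$, and outputs $\bm{\fe1} - \bm{\fe3}$ or $\bm{\fe1}' - \bm{\fe3}$ according to $D$'s verdict. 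The same calculation as before shows this achieves success probability at least $\frac{1}{|\alphabet|} + \frac{\eps}{|\alphabet|} = \frac{1 + \eps}{\fieldsize^\ell}$. The hardness bound from \cref{prop:index-hardness} gives success at most $\frac{1}{\fieldsize^\ell} + O(\sqrt{\poly(s)/\pcommitlength})$, and with $\pcommitlength = \fieldsize^{3\ell}$ and $s = \polylog(\fieldsize)$ we have $\sqrt{\poly(s)/\pcommitlength} = o(1/\fieldsize^\ell)$, yielding the contradiction.

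Finally, for the space claims I would account for what the honest verifier must retain between \textsf{algebraic-commit} and \textsf{decommit}: the point $\fingerprinttuple \in \field^\dimension$ contributes $O(\dimension \log\fieldsize)$ bits, the index $k \in [\pcommitlength]$ contributes $O(\log \pcommitlength) = O(\ell \log\fieldsize)$ bits, and storing $\bm{\fe3} \in \field^\ell$ along with the fingerprint scalar $\hat\pcommitstring(\fingerprinttuple, \evaluationpoint)$ accounts for another $O(\ell \log\fieldsize)$ bits, giving $s = O((\ell + \dimension)\log\fieldsize)$. For the improved bound, the key observation is that with streaming access to each $\evaluationpoint_i$ in $O(\dimension \log\fieldsize)$ space, the verifier can compute the scalars $\evaluationpoint \cdot \bm{\fe3}$ and $\sum_i \evaluationpoint_i \hat\pcommitstring_i(\fingerprinttuple)$ on the fly as $\pcommitstring$ streams in, without ever storing $\bm{\fe3}$ or $\evaluationpoint$ in full; only a single field element accumulator (plus $\fingerprinttuple$ and $k$) is retained, yielding $s = O(\dimension\log\fieldsize)$ as claimed.
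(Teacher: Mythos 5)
Your proof follows the same route as the paper's: treat the matrix as an \indexproblem instance over the larger alphabet $\field^\ell$, run the same reduction to \searchindex as in \cref{thm:pv-commitment}, and use Schwartz--Zippel via \pepprotocol for binding. That is exactly what the deferred proof does, and the communication and hiding calculations match.

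One piece of your space accounting is, however, mis-attributed. You write that the verifier stores $\bm{\fe3} \in \field^\ell$ and that this contributes $O(\ell \log \fieldsize)$ bits. Looking at \cref{prot:algebraic-commit}, the verifier never retains the vector $\bm{\fe3}$: as the stream passes, it accumulates only the single field element $\fe3 = \evaluationpoint \cdot \bm{\fe3}$. The genuine source of the $\ell \log \fieldsize$ term is the need to have the coefficient vector $\evaluationpoint$ available both while streaming $\pcommitstring$ (to form $\hat\pcommitstring(\fingerprinttuple,\evaluationpoint)$) and while streaming $\bm{\fe3}$ (to form $\fe3$); if $\evaluationpoint$ cannot be regenerated it must be held in memory, and the index $k$ costs another $O(\log\pcommitlength) = O(\ell\log\fieldsize)$ bits. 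Your ``furthermore'' paragraph then correctly targets $\evaluationpoint$ as the thing to drop, which is inconsistent with having first claimed $\bm{\fe3}$ as the bottleneck. This doesn't change the numerical bounds you obtain — both attributions give $O(\ell\log\fieldsize)$ — but the causal story should credit $\evaluationpoint$ (and $k$), not $\bm{\fe3}$, so that the improvement to $O(\dimension\log\fieldsize)$ under the ``coefficients computable on the fly'' hypothesis makes sense. (You may also want to note that $\pcommitlength \leq \degree^\dimension$ forces $\ell = O(\dimension)$, which is what makes $k$ fit in the $O(\dimension\log\fieldsize)$ budget for the refined bound.)
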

Since the proof is a straightforward extension of \cref{thm:pv-commitment}, we defer it to \cref{sec:deferred-pv-algebraic-commit}.

\iflipics
\begin{figure}
\else
\begin{protocol}[label={prot:algebraic-commit}]{$\textsf{algebraic-commit}(\bm{\fe1})$}
\fi
    \textbf{Input:} explicit access to $\pcommitlength, \dimension, \degree, \fieldsize \in \N$ with $\pcommitlength \leq \degree^\dimension$, $\fieldsize > \degree$ and $\field = \field_\fieldsize$; as well as linear coefficients $\evaluationpoint \in \field^\ell$. Streaming access to $\pcommitstring \in \field^{\ell \times \pcommitlength}$ followed by $\bm{\fe3} \in \field^\ell$ and $k \in [\pcommitlength]$.
    
	\begin{enumerate}[label={}]
	    \item[$\bm V$:] Sample $\fingerprinttuple \sim \field^\dimension$ and compute $\hat \pcommitstring(\fingerprinttuple, \evaluationpoint) = \sum_{i = 1}^\ell \evaluationpoint_i \hat \pcommitstring_i(\fingerprinttuple)$, a random linear fingerprint of $\pcommitstring$ with coefficients $\evaluationpoint$, while streaming $\pcommitstring$.
	    
	    Store $\fingerprinttuple, k, \hat \pcommitstring(\fingerprinttuple, \evaluationpoint)$ and the correction $\fe3 = \sum_{i = 1}^\ell \evaluationpoint_i \bm{\fe3}_i$.
	\end{enumerate}
\iflipics
\caption{Protocol $\textsf{algebraic-commit}(\bm{\fe1})$}
\label{prot:algebraic-commit}
\end{figure}
\else
\end{protocol}
\fi

We stress that the binding property of the linear commitment protocol has an important caveat: it is with respect to \emph{the linear combination} $\bm{\fe1} \cdot \bm{\fe2}$, rather than the entire tuple $\bm{\fe1}$. Therefore, if the prover has knowledge of the linear coefficients, it can easily commit to a set of messages $\bm{\fe1}' \neq \bm{\fe1}$ that nonetheless decommits to the same linear combination $\bm{\fe1} \cdot \evaluationpoint$, and $P$ has many choices indeed: the equation $\sum \evaluationpoint_i \bm{\fe1}_i' =  \sum \evaluationpoint_i \bm{\fe1}_i$ is satisfied by all $\evaluationpoint$ in the hyperplane (of size $\fieldsize^{\ell - 1}$) orthogonal to $\bm{\fe1}' - \bm{\fe1}$.

Since our applications require a stronger guarantee -- that $V$ should be able to detect when $P$ commits to $\bm{\fe1}$ and a decommits according to $\bm{\fe1}' \neq \bm{\fe1}$ -- this binding property is insufficient unless $V$ chooses the coefficients $\bm{\fe2}$ \emph{at random}; then the linear combination of $\bm{\fe1}'$ matches that of $\bm{\fe1}$ only with probability $1/\fieldsize$. While in our zero-knowledge protocol for \pep the coefficients are not \emph{uniform}, they are a random evaluation of low-degree polynomials, and the same reasoning holds with a small loss (see \cref{thm:pep-correctness}).

However, an important issue still remains: the exponential dependency of \cref{thm:pv-algebraic-commitment} in the number $\ell$ of field elements that comprise the tuple $P$ commits and decommits to. Concretely, in our applications we have $\ell = \omega(1)$ but can only afford to communicate $\poly(\fieldsize)$ bits. To circumvent this issue, we shall use the following efficient reduction from \indexproblem over bits to the problem of distinguishing a commitment to a fixed element of $\field^\ell$ from a commitment to a random one.

\begin{lemma}
    \label{lem:string-to-bit-index}
    Let $(A, B)$ be a one-way protocol with $s$-bit messages that distinguishes between a length-$\pcommitlength$ algebraic commitment to a fixed $\bm{\fe1} \in \field^\ell$ and a random commitment with advantage $\eps$; that is, such that
    \begin{equation*}
        \abs{\P_{\substack{\pcommitstring \sim \field^{\ell \times \pcommitlength}\\k \sim [\pcommitlength]}}\big[B\big(A(\pcommitstring), (\bm{\fe1}_i \oplus \pcommitstring_{ik} : i \in [\ell]), k\big) \text{ accepts}\big] - \P_{\substack{\pcommitstring \sim \field^{\ell \times \pcommitlength}\\k \sim [\pcommitlength]\\ \bm{\tau} \sim \field^\ell}}\big[B(A(\pcommitstring), \bm{\tau}, k) \text{ accepts}\big]} = \eps.
    \end{equation*}
    Then there exists an average-case one-way communication protocol for (binary) \indexproblem over $\pcommitlength$-bit strings that communicates $O(\ell^2 s \log^2 \fieldsize / \eps^2)$ bits and succeeds with probability $1 - \frac1 e = \frac12 + \Omega(1)$.
\end{lemma}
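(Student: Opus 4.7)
The plan is to combine a hybrid argument over the bits of the commitment with a reduction from binary \indexproblem that uses public shared randomness. Write $h = \ell \log \fieldsize$ for the total bit-length of the correction string $c = (\bm{\fe1}_i \oplus \pcommitstring_{ik} : i \in [\ell])$, and introduce hybrid distributions $H_0, \ldots, H_h$ over triples $(A(\pcommitstring), c, k)$, where in $H_t$ the first $t$ bits of $c$ (under some canonical ordering) are fresh uniform bits and the remaining $h - t$ bits equal the corresponding bits of $\bm{\fe1} \oplus \pcommitstring_{\cdot, k}$. The hypothesized $B$ separates $H_0$ (the fixed commitment) from $H_h$ (the random commitment) by $\eps$, so the triangle inequality furnishes an adjacent pair $H_{t^*}, H_{t^*+1}$ with acceptance gap at least $\eps/h$, differing only at a single bit position $(i^*, j^*)$ within $c$ that is fixed by $(A, B, \bm{\fe1})$ and therefore computable by the reduction.

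Next, I would reduce binary \indexproblem on length-$\pcommitlength$ strings to predicting this sensitive bit. Alice and Bob share public randomness encoding a uniformly random matrix $\pcommitstring^0 \in \field^{\ell \times \pcommitlength}$. Given $X \in \bitset^\pcommitlength$, Alice builds $\pcommitstring$ by copying $\pcommitstring^0$ and, for every column $k \in [\pcommitlength]$, overwriting the $j^*$-th bit of $\pcommitstring_{i^*, k}$ with $X_k$; she transmits $A(\pcommitstring)$ in $s$ bits. Given $j$, Bob sets $k = j$ and assembles the view that $B$ would see in hybrid $H_{t^*}$: uniform random bits for positions $1, \ldots, t^*$ of $c$, a trial value $\pi$ for position $t^*+1$, and, for each remaining $(i', j') \neq (i^*, j^*)$, the bit $(\bm{\fe1}_{i'})_{j'} \oplus (\pcommitstring_{i', j})_{j'}$. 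The crux is that Bob can compute every one of these ``real'' bits from $\pcommitstring^0$ alone: rows $i' \neq i^*$ of $\pcommitstring$ equal those of $\pcommitstring^0$, whereas for $i' = i^*$ with $j' \neq j^*$ the relevant bit of $\pcommitstring_{i^*, j}$ was not overwritten and coincides with the corresponding bit of $\pcommitstring^0_{i^*, j}$. Thus Bob can simulate every input to $B$ except the single bit at position $(i^*, j^*)$, whose ``real'' value is exactly $(\bm{\fe1}_{i^*})_{j^*} \oplus X_j$.

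A standard Yao-type distinguisher-to-predictor argument now converts the $\eps/h$ gap into a predictor for $X_j$ with advantage $\Omega(\eps/h)$: Bob runs $B$ twice, once with $\pi = (\bm{\fe1}_{i^*})_{j^*}$ and once with $\pi = 1 \oplus (\bm{\fe1}_{i^*})_{j^*}$, corresponding to the guesses $X_j = 0$ and $X_j = 1$, and outputs whichever value makes $B$ more likely to accept, breaking ties uniformly. To boost to constant success, Alice and Bob perform $r = \Theta(h^2/\eps^2)$ independent repetitions with fresh public shared randomness; Alice transmits $A(\pcommitstring^{(1)}), \ldots, A(\pcommitstring^{(r)})$ for a total of $O(\ell^2 s \log^2 \fieldsize/\eps^2)$ bits, and Bob majority-votes over the $r$ predictors, which are conditionally independent given $X$, so a Chernoff bound drives the error probability below $1/e$. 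The main obstacle is the simulation step: a naive reduction would require Bob to know the entire $j$-th column of $\pcommitstring$ in order to construct the non-trivial bits of $c$, yet Alice cannot transmit this column because she does not know $j$; the role of the shared randomness is precisely to confine Alice's edits to the single ``row-bit slice'' $\{(i^*, k, j^*) : k \in [\pcommitlength]\}$, leaving every other coordinate of $\pcommitstring$ available to Bob via $\pcommitstring^0$.
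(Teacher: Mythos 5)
Your hybrid over the $h = \ell\log\fieldsize$ bits of the correction tuple is essentially the decomposition the paper uses once it restricts to $\field_2$ (the paper averages the advantage over prefixes of the target tuple and either finds the gap concentrated in the final coordinate or locates a coordinate where a geometric decay fails), and the claimed bound $O(\ell^2 s \log^2\fieldsize/\eps^2)$ comes out the same. However, the reduction step has a genuine gap. Majority voting over $r$ trials drives the error below $1/e$ only if the per-trial predictor has advantage $\Omega(\eps/h)$ \emph{conditioned on the instance} $(X,j)$, not merely on average, and your reduction does not guarantee this. Because Alice \emph{overwrites} the $(i^*,j^*)$-bit slice of $\pcommitstring$ with $X$ and Bob sets $k = j$, conditioning on $(X,j)$ fixes both $k$ and the bit $\pcommitstring_{i^*,k,j^*} = X_j$; Bob's per-trial view is therefore $H_{t^*}$ (or its single-bit flip) \emph{conditioned on these two events}, and the hybrid gap need not survive that conditioning. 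Since Alice's message $A(\pcommitstring)$ may leak information about $\pcommitstring_{i^*,k,j^*}$ to $B$, one can arrange for the gap at $(i^*,j^*)$ to be positive on average but zero or negative conditioned on $X_j = 1$; majority voting then errs systematically on half the instances, overall success collapses to roughly $1/2$, and no contradiction with the \indexproblem lower bound follows. The quantity you actually control, $\E_{X,j}[\delta(X,j)] = \Omega(\eps/h)$, gives only a single-trial advantage that does not amplify.

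The paper's fix is a per-trial re-randomization that your reduction is missing: each repetition samples a fresh matrix $\pcommitstring'$ and a fresh permutation $\sigma$, Alice \emph{XORs} the permuted index string $\sigma(x)$ onto a single row of $\pcommitstring'$, and Bob uses $k = \sigma^{-1}(j)$ together with a correction computed from $\pcommitstring'$. This makes $(\pcommitstring,k)$ exactly uniform per trial \emph{conditionally on} $(x,j)$, while Bob's correction deterministically encodes a commitment to the target tuple or to its single-bit flip according to $x_j$. The per-trial acceptance probability then equals an unconditional quantity for every instance, so the Chernoff--Hoeffding bound applies with the same parameters across all $(x,j)$. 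To repair your argument you need both ingredients: (i) replace the overwrite by an XOR against the shared $\pcommitstring^0$, and (ii) permute the columns with a fresh public $\sigma$ each trial, taking $k = \sigma^{-1}(j)$.
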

\begin{proof}
    Define, for ease of notation, $\pcommitstring^{(k)} \coloneqq (\pcommitstring_{ik} : i \in [\ell])$ (i.e., the $k^\text{th}$ column of $\pcommitstring$) and
    \begin{equation*}
        a_{\bm{\tau}} \coloneqq \P\left[B\left(A(\pcommitstring), \bm{\tau} \oplus \pcommitstring^{(k)}, k\right) \text{ accepts}\right] = \E\left[B\left(A(\pcommitstring), \bm{\tau} \oplus \pcommitstring^{(k)}, k\right)\right],
    \end{equation*}
    where we interpret Bob's output as $1$ (respectively\ $0$) when he accepts (respectively\ rejects). Define, also, $\eps_{\bm{\tau}} \coloneqq a_{\bm{\fe1}} - a_{\bm{\tau}}$.
    
    We first argue that, without loss of generality, we can assume $\field = \field_2 = \bitset$. Note that, with $\fieldsize = \abs{\field}$,\footnote{This assumes the acceptance probability of a commitment to $\bm{\fe1}$ is larger than that of a random commitment, which is without loss of generality (otherwise Bob can simply flip his output bit).}
    \begin{equation*}
        \eps = a_{\bm{\fe1}} - \frac1{\fieldsize^\ell} \sum_{\bm{\tau} \in \field^\ell} a_{\bm{\tau}} = \frac1{\fieldsize^\ell} \sum_{\bm{\tau} \in \field^\ell} \eps_{\bm{\tau}}.
    \end{equation*}
    Taking $\ell' \coloneqq \lfloor \ell \log \fieldsize \rfloor$ and $S \subseteq \field^\ell$ as the set of size $2^{\ell'}$ containing $\bm{\fe1}$ and the tuples $\bm{\tau}$ with the largest $\eps_{\bm{\tau}}$, and viewing $\bitset^{\ell'} \subseteq \field^\ell$ via a bijection between $\bitset^{\ell'}$ and $S$, we have 
    \begin{equation*}
        \eps' \coloneqq \frac1{2^{\ell'}} \sum_{\bm{\tau} \in \bitset^{\ell'}} \eps_{\bm{\tau}} \geq \frac{\eps}{3},
    \end{equation*}
    owing to $\abs{S} \geq \fieldsize^\ell/2$ and $\eps_{\bm{\tau}} \geq \eps_{\bm{\tau'}}$ when $\bm{\tau} \in S \setminus \set{\bm{\fe1}}$ and $\bm{\tau'} \in \field^\ell \setminus S$. Therefore, assuming $\field = \field_2$ incurs at most a constant factor in $\eps$ and a $\log \fieldsize$ factor in $\ell$; we shall use $\eps$ and $\ell$ (rather than $\eps'$ and $\ell'$) hereafter for simplicity of notation.
    
    Finally, define, for each $0 \leq i < \ell$,
    \begin{equation*}
        \eps_i \coloneqq \frac1{2^{\ell - i}} \sum_{\substack{\bm{\tau} \in \bitset^{\ell} \\ \forall i' \leq i, ~ \bm{\tau}_{i'} = \bm{\fe1}_{i'}}} \eps_{\bm{\tau}}.
    \end{equation*}
    We divide the analysis into two cases: suppose, first, that $\eps_i \geq \eps_{i - 1} \cdot \left(1 - \frac1{2\ell}\right)$ for all $i \in [\ell - 1]$. Then, by Bernoulli's inequality ($t \leq -1$ implies $(1 + t)^{\ell} \geq 1 + t\ell$), we have
    \begin{equation*}
        \eps_{\ell - 1} = \frac12 \left(a_{\bm{\fe1}} - a_{\bm{\fe1}^{\oplus \ell}}\right) \geq \left(1 - \frac1{2\ell}\right)^{\ell} \cdot \eps \geq \frac{\eps}{2},
    \end{equation*}
    where $\bm{\fe1}^{\oplus i} = \left(\bm{\fe1}_1, \ldots, \bm{\fe1}_{i - 1}, 1 - \bm{\fe1}_i, \bm{\fe1}_{i + 1}, \ldots, \bm{\fe1}_\ell\right)$. Consider the following one-way protocol (with shared randomness) for an \indexproblem instance $(x, j) \in \bitset^\pcommitlength \times [\pcommitlength]$: Alice and Bob jointly sample $2/\eps^2$ independent matrices $\pcommitstring' \sim \bitset^{\ell \times \pcommitlength}$ and permutations $\sigma \sim S_\pcommitlength$; Alice sets $\pcommitstring_i = \pcommitstring'_i \oplus \chf[i = \ell] \cdot \sigma(x)$ (where $\sigma(x)_k \coloneqq x_{\sigma(k)}$), simulates $A(\pcommitstring)$ and sends the resulting messages in a $2s/\eps^2$-bit string to Bob.
    
    With knowledge of $j$, Bob finishes the simulations $B(A(\pcommitstring), \bm{\fe3}, k)$, using coordinate $k = \sigma^{-1}(j)$ and correction $\bm{\fe3} = \bm{\fe1} \oplus \pcommitstring'^{(k)}$; he computes their empirical mean $\mu$, outputs $\bm{\fe1}_\ell$ if $\mu \geq a_{\bm{\fe1}} + \eps/2$, and outputs $1 - \bm{\fe1}_\ell$ otherwise.
    
    Correctness follows from the observation that, if $x_j = \sigma(x)_k = \bm{\fe1}_\ell$, then $\bm{\fe3} = \bm{\fe1} \oplus \pcommitstring^{(k)}$, so $\E[\mu] = a_{\bm{\fe1}}$; since the $(\pcommitstring, k)$ pairs are uniform and independent,
    \begin{equation*}
        \P\left[\mu \leq a_{\bm{\fe1}} - \frac{\eps}{2}\right] \leq \frac1 e
    \end{equation*}
    by the Chernoff-Hoeffding bound (\cref{lem:additive-chernoff}, with $2/\eps^2$ samples and $\delta = \eps/2$). Likewise, when $x_j = 1$ we have $\bm{\fe1} = \bm{\fe1}^{\oplus \ell} \oplus \pcommitstring^{(k)}$; then $\E[\mu] = a_{\bm{\fe1}^{\oplus \ell}} \leq a_{\bm{\fe1}} - \eps$ and an application of the Chernoff-Hoeffding bound (with the same parameters) yields the same guarantee.
    
    We now consider the second case: suppose $\eps_i < \eps_{i - 1} \cdot \left(1 - \frac1{2\ell}\right)$ for some $i \in [\ell - 1]$; we take, without loss of generality, the minimal such $i$. Then
    \begin{align*}
        \frac1{2^{\ell - i}} \sum_{\substack{\bm{\tau} \in \bitset^{\ell} \\ \forall i' \leq i, ~ \bm{\tau}_{i'} = \bm{\fe1}_{i'}}} \eps_{\bm{\tau}^{\oplus i}} &= \frac1{2^{\ell - i}} \sum_{\substack{\bm{\tau} \in \bitset^{\ell} \\ \forall i' < i, ~ \bm{\tau}_{i'} = \bm{\fe1}_{i'}\\\bm{\tau}_i = 1 - \bm{\fe1}_i}} \eps_{\bm{\tau}} \\
        &= 2 \eps_{i - 1} - \eps_i\\
        &> \eps_{i - 1} \left(1 + \frac1{2\ell}\right),
    \end{align*}
    and thus
    \begin{align*}
        \frac1{2^{\ell - i}} \left(\sum_{\substack{\bm{\tau} \in \bitset^{\ell} \\ \forall i' \leq i, ~ \bm{\tau}_{i'} = \bm{\fe1}_{i'}}} \left(\eps_{\bm{\tau}^{\oplus i}} - \eps_{\bm{\tau}}\right)\right) &= \frac1{2^{\ell - i}}\left(\sum_{\substack{\bm{\tau} \in \bitset^{\ell} \\ \forall i' \leq i, ~ \bm{\tau}_{i'} = \bm{\fe1}_{i'}}} \left(a_{\bm{\tau}^{\oplus i}} - a_{\bm{\tau}}\right)\right)\\
        &> \frac{\eps_{i - 1}}{\ell}\\
        &\geq \frac{\eps}{2 \ell}.
    \end{align*}
    
    We will use a similar strategy to the previous case, although the expression we must estimate involves many more terms (indeed, $2^{\ell - i + 1}$ of them). Consider the following one-way protocol for an \indexproblem instance $(x, j) \in \bitset^\pcommitlength \times [\pcommitlength]$: Alice and Bob jointly sample $64 \ell^2/\eps^2$ independent matrices $\pcommitstring' \sim \bitset^{\ell \times \pcommitlength}$ and permutations $\sigma \sim S_\pcommitlength$; Alice sets $\pcommitstring_{i'} = \pcommitstring'_{i'} \oplus \chf[i' = i] \cdot \sigma(x)$, computes and sends all messages $A(\pcommitstring)$ in a $64 \ell^2s/\eps^2$-bit string to Bob.\footnote{Note that the only difference in Alice's strategy, as compared to the previous case, is the row where she inserts $\sigma(x)$ and the number of simulations of $A$.} (Recall that assuming $\field = \bitset$ incurs constant and logarithmic factors in $\eps$ and $\ell$, respectively, so that Alice's message is $O(\ell^2 s \log^2 \fieldsize / \eps^2)$ bits long.)
    
    For each $A(\pcommitstring)$ sent by Alice, Bob simulates $B\left(A(\pcommitstring), \bm{\tau} \oplus \pcommitstring'^{(k)}, k\right)$ with $k = \sigma^{-1}(j)$ \emph{for all} $\bm{\tau}$ satisfying  $\bm{\tau}_{i'} = \bm{\fe1}_{i'}$ when $i' \leq i$. He computes the empirical mean $\mu$ of
    \begin{equation*}
        \frac1{2^{\ell - i}}\left(\sum_{\substack{\bm{\tau} \in \bitset^{\ell} \\ \forall i' \leq i, ~ \bm{\tau}_{i'} = \bm{\fe1}_{i'}}} \left(B\left(A(\pcommitstring), \bm{\tau}^{\oplus i} \oplus \pcommitstring'^{(k)}, k\right) - B\left(A(\pcommitstring), \bm{\tau} \oplus \pcommitstring'^{(k)}, k\right)\right)\right),
    \end{equation*}
    outputs $0$ if the result is non-negative, and outputs $1$ otherwise.
    
    To prove correctness, first note that
    \begin{equation*}
        \bm{\tau} \oplus \pcommitstring'^{(k)} = \left\{\begin{array}{ll}\bm{\tau} \oplus \pcommitstring^{(k)}, & \text{ when } x_j = 0\\\bm{\tau}^{\oplus i} \oplus \pcommitstring^{(k)}, & \text{ when } x_j = 1,\end{array}\right.
    \end{equation*}
    so that, when $x_j = 0$,
    \begin{equation*}
        \E[\mu] = \frac1{2^{\ell - i}}\left(\sum_{\substack{\bm{\tau} \in \bitset^{\ell} \\ \forall i' \leq i, ~ \bm{\tau}_{i'} = \bm{\fe1}_{i'}}} \left(a_{\bm{\tau}^{\oplus i}} - a_{\bm{\tau}}\right)\right) > \frac{\eps}{2\ell},
    \end{equation*}
    and when $x_j = 1$ we have $\E[\mu] < -\eps/2\ell$ (since the order of each pair of terms in the sum is flipped).
    
    We conclude with an application of Hoeffding's inequality (\cref{lem:hoeffding}, with $a = -1$, $b = 1$, $\delta = 1/2$ and $64 \ell^2/\eps^2$ samples): in the $x_j = 0$ case,
    \begin{equation*}
        \P\left[\mu \leq \frac{\eps}{4\ell}\right] \leq \frac1 e;
    \end{equation*}
    and, likewise, in the $x_j = 1$ case we have $\P\left[\mu \geq -\frac{\eps}{4\ell}\right] \leq \frac1 e$.
\end{proof}

\subsection{A verifier-to-prover temporal commitment}
\label{sec:vp-commitment}

The goal of this section is to construct the second main component towards our streaming zero-knowledge protocols. While it is not formally a commitment protocol (as per \cref{def:commitment}), it is useful to conceptualise it as $V$ committing to its internal randomness \emph{before} the input is streamed (hence \emph{temporal}). 

Roughly speaking, we would like to ensure that a malicious verifier cannot choose the point $\fingerprinttuple$ at which it (allegedly) computes its fingerprint \emph{after it sees the input $(x, \evaluationpoint)$}, as that would allow it to learn more than $f^x(\evaluationpoint)$. (For example, in the \indexproblem case it could claim that $\fingerprinttuple = j + 1$ and learn $\hat x(j+1) = x_{j+1}$.) We will prove, in 3 steps, a lemma formalising the intuition that a space-$s$ algorithm cannot remember the positions of significantly more than $s$ elements, which will later enable the construction of a simulator. As in the case of algebraic commitments, we will in fact prove a stronger statement: that this holds not only in the case of streaming algorithms, but in the stronger model of one-way communication protocols.

We first define two variants of \searchindex in the one-way communication complexity model, which we call \reconstruct and \pair (see \cref{def:reconstruct,def:pair}). In \reconstruct, Bob's task is to output the symbols at \emph{every} coordinate of the input $\vcommitstring$ (rather than receiving a single coordinate $j$ and outputting only $\vcommitstring_j$, as in \indexproblem); in other words, Bob should reconstruct the input as best he can. In \pair, as in \searchindex, Bob's task is again to output the symbol at a single coordinate; but rather than receiving the index as part of the input, Bob is free to choose a coordinate-symbol pair $(i, \fe1)$ and succeeds if $\fe1 = \vcommitstring_i$. (Note that in both \reconstruct and \pair, Bob does not receive any additional input besides Alice's message.)

Our first two steps are as follows. We first study \reconstruct and show, in \cref{lem:reconstruct-bound}, that if Alice's message has $s$ bits, Bob cannot reconstruct significantly more than $s$ coordinates of the input. Then, in \cref{lem:pair-bound}, we show how this bound for \reconstruct implies a related bound for \pair; more precisely, we prove that there exists a size-$s$ set $C$ of coordinates such that the probability Bob outputs a correct coordinate-symbol pair $(i, \vcommitstring_i)$ where $i \notin C$ is arbitrarily small.

While \cref{lem:pair-bound} immediately implies an analogous statement for streaming algorithms, it is not yet enough for our purposes. The reason is that our verifier will read additional information, i.e., a fixed -- but unknown -- \pep instance $(x, \evaluationpoint)$ between reading a \pair input and writing its output. While it is intuitively clear that this should not help the verifier in any way (as the \pep and \pair instances are uncorrelated), we still require a slight extension of \cref{lem:pair-bound}.

To this end we define, for each fixed string $x \in \alphabet^n$, a variant of \pair that we call $\pair(x)$ (\cref{def:pair-x}). The only difference between this one-way communication problem and \pair is that Bob receives the string $x$ in addition to Alice's message $a$. In \cref{thm:correct-set}, we show that the existence of a set capturing most of the correct outputs of \pair implies such a set $C$ also exists for $\pair(x)$; crucially, $C$ is determined by $a$ and \emph{does not depend on $x$}. This last result then immediately implies an analogous one for streaming algorithms.

Let us begin with the definitions:

\begin{definition}
    \label{def:reconstruct}
    \reconstruct is the following one-way communication problem: Alice receives a string $\vcommitstring \sim \alphabet^\vcommitlength$ and sends Bob an $s$-bit message $a$; after receiving $a$, Bob outputs a string $b \in \alphabet^\vcommitlength$. The \emph{score} of an execution is the number of matching coordinates between $\vcommitstring$ and $b$, i.e., $\abs{\set{i \in [\vcommitlength] : b_i = \vcommitstring_i}}$.
\end{definition}

\begin{definition}
    \label{def:pair}
    Let \pair denote the following one-way communication problem: Alice receives a string $\vcommitstring \sim \alphabet^\vcommitlength$ and sends Bob an $s$-bit message $a$; after receiving $a$, Bob outputs a pair $(\fe1, i) \in \alphabet \times [\vcommitlength]$. The execution succeeds if $\fe1 = \vcommitstring_i$.
\end{definition}

Note that both are definitionally average-case problems, as $\vcommitstring$ is sampled uniformly. We now proceed to the first step towards the goal of this section: a proof that, in our parameter settings of interest for $\abs{\alphabet}$ and $s$ (as functions of $\vcommitlength$), the expected score of any protocol for \reconstruct is tightly constrained by the message length $s$.

\begin{lemma}
    \label{lem:reconstruct-bound}
    Any one-way protocol for \reconstruct with alphabet size $\abs{\alphabet} = O(\vcommitlength / \log\log \vcommitlength)$, $\abs{\alphabet} \geq 32\vcommitlength / \log\log \vcommitlength$ and message length $s$, where $\log \vcommitlength \leq s = \polylog(\vcommitlength)$, achieves an expected score of at most $s + o(s)$.
\end{lemma}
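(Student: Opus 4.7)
The plan is to bound the expected score via a per-coordinate information-theoretic argument. Let $a = A(\vcommitstring) \in \bitset^s$ denote Alice's message and let $\hat{\vcommitstring}_i(a) \coloneqq \arg\max_{\symbol1 \in \alphabet} \P[\vcommitstring_i = \symbol1 \mid a]$ denote Bob's optimal guess for the $i^{\text{th}}$ coordinate; writing $q_i \coloneqq \P[\hat{\vcommitstring}_i(a) = \vcommitstring_i]$, the expected score of any Bob strategy is at most $\sum_i q_i$. I would apply Fano's inequality with estimator $\hat{\vcommitstring}_i(a)$ to obtain $H(\vcommitstring_i \mid a) \leq H(1 - q_i) + (1 - q_i)\log(\abs{\alphabet} - 1)$, which (since $\vcommitstring_i$ is uniform on $\alphabet$ and thus has entropy $\log\abs{\alphabet}$) rearranges to $I(\vcommitstring_i; a) \geq \phi(q_i)$, where $\phi(q) \coloneqq \log\abs{\alphabet} - H(q) - (1-q)\log(\abs{\alphabet}-1)$.

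Two observations about $\phi$ are key: it is convex in $q$ and vanishes at $q = 1/\abs{\alphabet}$ (indeed, $\phi(q) = \mathrm{KL}(\pi_q ~||~ \mu)$, where $\pi_q$ places mass $q$ on a single element and $\mu$ is the uniform distribution on $\alphabet$). Independence of the $\vcommitstring_i$'s under the uniform prior on $\alphabet^\vcommitlength$, together with subadditivity of conditional entropy $H(\vcommitstring \mid a) \leq \sum_i H(\vcommitstring_i \mid a)$, gives $\sum_i I(\vcommitstring_i; a) \leq I(\vcommitstring; a) \leq H(a) \leq s$, and hence $\sum_i \phi(q_i) \leq s$. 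Jensen's inequality applied to the convex $\phi$ then yields $\phi(\bar{q}) \leq s/\vcommitlength$ for $\bar{q} \coloneqq \vcommitlength^{-1} \sum_i q_i$, so bounding the expected score $\vcommitlength \bar{q}$ reduces to inverting $\phi$ at the value $s/\vcommitlength$.

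For this inversion I would use the identity $\phi(q) = q\log(q\abs{\alphabet}) + (1-q)\log\!\left((1-q)\abs{\alphabet}/(\abs{\alphabet}-1)\right)$ together with the elementary bound $(1-q)\log(1-q) \geq -q/\ln 2 = -q \log e$, which yields $\phi(q) \geq q \log(q\abs{\alphabet}/e)$. Setting $T \coloneqq \vcommitlength \bar{q}$, the constraint $\phi(\bar{q}) \leq s/\vcommitlength$ becomes $T \log(T\abs{\alphabet}/(e\vcommitlength)) \leq s$; substituting $\abs{\alphabet} = \Theta(\vcommitlength/\log\log\vcommitlength)$, the regime $T\abs{\alphabet}/\vcommitlength = O(1)$ already gives $T = O(\log\log\vcommitlength) = o(s)$, while in the complementary regime the substitution $y = T\abs{\alphabet}/(e\vcommitlength)$ yields $y\log y = O(s/\log\log\vcommitlength)$, so $y = O(s/\log^2\!\log\vcommitlength)$ and thus $T = O(s/\log\log\vcommitlength) = o(s)$ since $s \geq \log\vcommitlength$. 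This is stronger than the claimed $s + o(s)$. The main obstacle lies precisely in this final analytic step: a naive Pinsker-style quadratic bound $\phi(q) \geq 2\ln 2 \cdot (q - 1/\abs{\alphabet})^2$ combined with Cauchy--Schwarz only yields the far-too-large bound $O(\sqrt{\vcommitlength s})$, so one must exploit the superlinear $q\log(q\abs{\alphabet})$ growth of $\phi$ in the regime $q\abs{\alphabet} \gg 1$ rather than the quadratic behaviour at the minimum.
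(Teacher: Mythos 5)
Your proof is correct and takes a genuinely different route from the paper's. The paper argues combinatorially: after invoking minimax to make Alice and Bob deterministic, it replaces each partition class $P_a$ by a Hamming ball of the same volume centred on Bob's reconstruction $b$, and then uses approximations of $q$-ary entropy and ball volumes to show that a ball of radius $\delta = 1 - \eps$ with $\eps = o(s/\vcommitlength)$ has its mass concentrated near the boundary; the classes of small mass are handled by the crude bound of expected score $\leq \vcommitlength$ at total mass $\leq s/\vcommitlength$, which is exactly the source of the additive $s$ in the lemma statement. Your argument is instead entirely information-theoretic and avoids the partition/ball machinery altogether: Fano's inequality on each coordinate gives $I(\vcommitstring_i; a) \geq \phi(q_i)$ with $\phi(q) = \mathrm{KL}(\pi_q \| \mu)$ convex and vanishing at $q = 1/\abs{\alphabet}$; independence of the coordinates and subadditivity of conditional entropy charge $\sum_i \phi(q_i)$ against $I(\vcommitstring; a) \leq H(a) \leq s$; Jensen localises the constraint at the mean $\bar q$; and an analytic inversion using $\phi(q) \geq q\log(q\abs{\alphabet}/e)$ together with $\abs{\alphabet} = \Theta(\vcommitlength/\log\log\vcommitlength)$ and $\log\vcommitlength \leq s \leq \polylog(\vcommitlength)$ yields $T = O(s/\log\log\vcommitlength)$. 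This is both shorter and strictly stronger: $o(s)$ rather than $s + o(s)$, the additive $s$ in the paper being an artefact of the loose small-class estimate. You also correctly identify why the Pinsker-style reasoning underlying \cref{prop:index-hardness} fails here, giving only $O(\sqrt{\vcommitlength s})$: that argument sees only the quadratic behaviour of KL near its zero, whereas the correct bound must exploit the superlinear $q\log(q\abs{\alphabet})$ growth of $\phi$ away from $1/\abs{\alphabet}$, which is precisely what Fano's inequality delivers.
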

\begin{proof}
    By the minimax theorem, we may assume Alice's and Bob's strategies are both deterministic, so that there exists a set of messages $A \subseteq \bitset^s$ that partitions the set $\alphabet^\vcommitlength$ of input strings by $\set{P_a : a \in A}$, where Bob outputs $b = b(a) \in \alphabet^\vcommitlength$ whenever $\vcommitstring \in P_a$.
    
    Observe that Bob's optimal strategy is to set $b_i$ as the most frequent symbol at the $i^\text{th}$  coordinate among the strings of $P_a$; we can thus index the partition by $b \in B \coloneqq \set{b(a) : a \in A}$, setting $P_b = P_{b(a)} = P_a$. (Note that while $\set{P_b : b \in B}$ may be a smaller partition than $\set{P_a : a \in A}$, the expected scores of the protocols induced by both partitions are the same.)
    
    Define the random variable $M_b \coloneqq \set{i \in [\vcommitlength] : \vcommitstring_i = b_i}$. For simplicity of notation, denote also $\alphabetsize \coloneqq \abs{\alphabet}$. Note that the expected score of this one-way protocol is
    \begin{align*}
        \E_{\vcommitstring \sim \alphabet^\vcommitlength}\left[\sum_{b \in B} \chf[\vcommitstring \in P_b] \cdot \abs{M_b}\right] &= \sum_{b \in B} \P[\vcommitstring \in P_b] \cdot \E_{\vcommitstring \sim P_b}\left[\abs{M_b}\right]\\
        &= \sum_{\substack{b \in B\\\abs{P_b} \geq \frac{s}{\vcommitlength} \cdot \frac{\alphabetsize^\vcommitlength}{2^s}}} \P[\vcommitstring \in P_b] \cdot \E_{\vcommitstring \sim P_b}\left[\abs{M_b}\right] + \sum_{\substack{b \in B\\\abs{P_b} < \frac{s}{\vcommitlength} \cdot \frac{\alphabetsize^\vcommitlength}{2^s}}} \P[\vcommitstring \in P_b] \cdot \E_{\vcommitstring \sim P_b}\left[\abs{M_b}\right].
    \end{align*}
    We bound the first term by the largest expectation, and the second by observing that the union of sets $P_b$ with $\abs{P_b} \leq \frac{s}{\vcommitlength} \cdot \frac{\alphabetsize^\vcommitlength}{2^s}$ contain at most an $s/\vcommitlength$ fraction of all length-$\vcommitlength$ strings:
    \begin{align*}
        \E_{\vcommitstring \sim \alphabet^\vcommitlength}\left[\sum_{b \in B} \chf[\vcommitstring \in P_b] \cdot \abs{M_b}\right] &\leq \max_{\substack{b \in B\\\abs{P_b} \geq \frac{s \alphabetsize^\vcommitlength}{\vcommitlength \cdot 2^s}}} \E_{\vcommitstring \sim P_b}\left[\abs{M_b}\right] + \sum_{\substack{b \in B\\\abs{P_b} < \frac{s \alphabetsize^\vcommitlength}{\vcommitlength \cdot 2^s}}} \P[\vcommitstring \in P_b] \cdot \vcommitlength\\
        &\leq \max_{\substack{b \in B\\\abs{P_b} \geq \frac{s \alphabetsize^\vcommitlength}{\vcommitlength \cdot 2^s}}} \E_{\vcommitstring \sim P_b}\left[\abs{M_b}\right] + s.
    \end{align*}

    Let $\delta \in (0,1)$ be such that the volume of Hamming balls of radius $\delta$ is $\mathcal{V} \coloneqq \frac{s \alphabetsize^\vcommitlength}{\vcommitlength \cdot 2^s} \leq \frac{s \alphabetsize^\vcommitlength}{\vcommitlength^2}$. (Recall that $s \geq \log \vcommitlength$.) For any $b \in B$, the set $P_b$ that maximises
    \begin{equation*}
        E_{\vcommitstring \sim P_b}\left[\abs{M_b}\right] = \abs{P_b}^{-1} \sum_{\vcommitstring \in P_b} \abs{\set{i \in [\vcommitlength] : \vcommitstring_i = b_i}}
    \end{equation*}
    is $P_b = \ball(b, \delta')$, the ball centered at $b$ (whose radius $\delta'$ is determined by the equality $\abs{\ball(b, \delta')} = \abs{P_b}$). Since $\abs{P_b} \geq \mathcal{V}$ implies $\delta' \geq \delta$, we have
    \begin{equation*}
        \frac1{\abs{P_b}} \cdot \sum_{\vcommitstring \in \ball(b, \delta')} \abs{\set{i \in [\vcommitlength] : \vcommitstring_i = b_i}} \leq \frac1{\mathcal{V}} \cdot \sum_{\vcommitstring \in \ball(b, \delta)} \abs{\set{i \in [\vcommitlength] : \vcommitstring_i = b_i}},
    \end{equation*}
    so it suffices to bound the right-hand side. (The inequality follows from the observation that the left-hand side is a weighted average between the right-hand side and the expectation over $\vcommitstring \sim \ball(b, \delta') \setminus \ball(b, \delta)$, which is smaller.)
    
    Define $\eps \coloneqq 1 - \delta$. We aim to upper bound $E_{\vcommitstring \sim P_b}\left[\abs{M_b}\right]$, and set as an intermediate goal to prove upper and lower bounds for $\eps$. To this end, we will use the following standard approximations (see, e.g., \cite{GRS12}) for $H = H_2$ when $\sigma$ (or $1 - \sigma$) is small:
    \begin{equation}
        \label{eq:entropy-approximation}
        H(\sigma) = H(1 - \sigma) \in \left[\sigma \log \frac1{\sigma}, \sigma \left(\log \frac1{\sigma} + \frac{2}{\ln 2}\right)\right]
    \end{equation}

    We begin with the lower bound on $\eps$, which uses the lower bound of \cref{eq:entropy-approximation}  and follows by showing that the volume of a ball with radius $1 - \frac{\log \alphabetsize}{\vcommitlength \log\log \alphabetsize}$ is larger than $\mathcal{V}$; then $\delta < 1 - \frac{\log \alphabetsize}{\vcommitlength \log\log \alphabetsize}$, or, equivalently, $\eps = 1 - \delta > \frac{\log \alphabetsize}{\vcommitlength \log\log \alphabetsize}$.
    
    We have
    \begin{flalign*}
        && H_\alphabetsize&\left(1 - \frac{\log \alphabetsize}{\vcommitlength \log\log \alphabetsize}\right) &&\\
        &&&= \frac{\left(1 -\frac{\log \alphabetsize}{\vcommitlength \log\log \alphabetsize}\right)\log(\alphabetsize - 1) + H\left(1 - \frac{\log \alphabetsize}{\vcommitlength \log\log \alphabetsize}\right)}{\log \alphabetsize} && (\text{by \cref{eq:q-ary-entropy}})\\
        &&&= \frac{\left(1 - \frac{\log \alphabetsize}{\vcommitlength \log\log \alphabetsize}\right)\log(\alphabetsize - 1) + H\left(\frac{\log \alphabetsize}{\vcommitlength \log\log \alphabetsize}\right)}{\log \alphabetsize} && (\text{by \cref{eq:binary-entropy}})\\
        &&&\geq \frac{\left(1 - \frac{\log \alphabetsize}{\vcommitlength \log\log \alphabetsize}\right) \left(\log \alphabetsize + \log\left(1 - \frac1{\alphabetsize}\right)\right)}{\log \alphabetsize} + \frac{\log\left(\frac{\vcommitlength \log\log \alphabetsize}{\log \alphabetsize}\right)}{\vcommitlength \log\log \alphabetsize} && (\text{by \cref{eq:entropy-approximation}})\\
        &&&= 1 + \left(\frac1{\log \alphabetsize} - \frac1{\vcommitlength \log\log \alphabetsize}\right)\log\left(1 - \frac1{\alphabetsize}\right) + \frac{\log \frac{\vcommitlength}{\alphabetsize} + \log\log\log \alphabetsize}{\vcommitlength \log\log \alphabetsize} - \frac1{\vcommitlength} && \\
        &&&\geq 1 - \frac1{\alphabetsize \ln 2}\left(1 + \frac1\alphabetsize\right)\left(\frac1{\log \alphabetsize} - \frac1{\vcommitlength \log\log \alphabetsize}\right) + \frac{\log \frac{\vcommitlength}{\alphabetsize} + \log\log\log \alphabetsize}{\vcommitlength \log\log \alphabetsize} - \frac1{\vcommitlength} && (\text{by \cref{eq:log-approximation}})\\
        &&&\geq 1 - \frac1{\alphabetsize \ln 2}\left(1 + \frac1\alphabetsize\right)\left(\frac1{\log \alphabetsize} - \frac1{\vcommitlength \log\log \alphabetsize}\right) - \frac1{\vcommitlength} && \\
        &&&\geq 1 - \frac{3}{2\vcommitlength}, &&
    \end{flalign*}
    where the second-to-last inequality uses $\vcommitlength \geq \alphabetsize$;
    and the last inequality uses $\alphabetsize = \Theta\left(\frac{\vcommitlength}{\log\log \vcommitlength}\right)$ to bound the first negative term to order $\Theta\left(\frac{\log\log \vcommitlength}{\vcommitlength \log \vcommitlength}\right)$, so the $1/\vcommitlength$ term dominates. Therefore,
    \begin{equation*}
        \alphabetsize^{H_\alphabetsize\left(1 - \frac{\log \alphabetsize}{\vcommitlength \log\log \alphabetsize}\right) \vcommitlength} \geq \alphabetsize^\vcommitlength/\alphabetsize^{3/2},
    \end{equation*}
    and thus, by \cref{eq:hamming-volume-approximation}, the volume of a ball (centered at any point $b$) of radius $1 - \frac{\log \alphabetsize}{\vcommitlength \log\log \alphabetsize} = 1 - \frac{\polylog(\vcommitlength)}{\vcommitlength}$ satisfies 
    \begin{align*}
        \abs{\ball\left(b, 1 - \frac{\log \alphabetsize}{\vcommitlength \log\log \alphabetsize}\right)} &\geq \frac{\alphabetsize^{H_\alphabetsize\left(1 - \frac{\log \alphabetsize}{\vcommitlength \log\log \alphabetsize}\right) \vcommitlength}}{\sqrt{\log \vcommitlength}}\\
        &\geq \frac{\alphabetsize^\vcommitlength}{2^{\frac{3}{2}\log \alphabetsize + \frac1 2 \log\log \vcommitlength}}\\
        &\geq \frac{\alphabetsize^\vcommitlength}{2^{\frac{7}{4}\log \alphabetsize}}.
    \end{align*}
    Then
    \begin{align*}
        \abs{\ball(b, \delta)} = \mathcal{V} &= \frac{s \alphabetsize^\vcommitlength}{\vcommitlength \cdot 2^s}\\
        &\leq \frac{\alphabetsize^\vcommitlength \polylog(\vcommitlength)}{\vcommitlength^2}\\
        &\leq \frac{\alphabetsize^\vcommitlength}{2^{\frac{15}{8}\log \vcommitlength}}\\
        &\leq \abs{\ball\left(b, 1 - \frac{\log \alphabetsize}{\vcommitlength \log\log \alphabetsize}\right)},
    \end{align*}
    and we conclude that $\eps = 1 - \delta > \frac{\log \alphabetsize}{\vcommitlength \log\log \alphabetsize}$.

    We now proceed to the upper bound on $\eps$, which will use the upper bound of \cref{eq:entropy-approximation}. Since  $\alphabetsize^{H_\alphabetsize(\delta) \vcommitlength} \geq \mathcal{V} = \frac{s \alphabetsize^\vcommitlength}{\vcommitlength \cdot 2^s}$ (\cref{eq:hamming-volume-approximation}), taking the logarithm of both sides and using \cref{eq:q-ary-entropy} yields
    \begin{equation}
        \label{eq:entropy-volume-inequality}
        \frac{(1 - \eps)\log(\alphabetsize - 1) + H(1 - \eps)}{\log \alphabetsize} = H_\alphabetsize(1 - \eps) = H_\alphabetsize(\delta) \geq 1 - \frac{s + \log \frac{\vcommitlength}{s}}{\vcommitlength \log \alphabetsize}.
    \end{equation}
    
    Note that the right-hand side is $1 - o(1)$ because $s = o(\vcommitlength)$; then, $\delta$ is within $o(1)$ distance of the maximiser $1 - 1/\alphabetsize = 1 - o(1)$ of $H_\alphabetsize$, so that $\delta = 1 - o(1)$ and $\eps = o(1)$.
    
    This allows us to bound $H(\eps) = H(1 - \eps)$ from above via \cref{eq:entropy-approximation}, which, combined with \cref{eq:entropy-volume-inequality} (multiplied by $\log \alphabetsize$), implies
    \begin{equation*}
        (1 - \eps) \log(\alphabetsize - 1) + \eps \log \frac1{\eps} + \frac{2 \eps}{\ln 2} \geq \log \alphabetsize - \frac{s + \log \vcommitlength - \log s}{\vcommitlength}.
    \end{equation*}
    Rearranging yields
    \begin{equation*}
        \eps \left(\log \eps + \log \alphabetsize + \log \left(1 - \frac1{\alphabetsize}\right) - \frac{2}{\ln 2}\right) \leq  \frac{s + \log \vcommitlength - \log s}{\vcommitlength} + \log \left(1 - \frac1{\alphabetsize}\right).
    \end{equation*}

    The bounds $-\log(1-1/\alphabetsize) = O(1/\alphabetsize) = O(\log\log \vcommitlength / \vcommitlength)$ (\cref{eq:log-approximation}) and $s \geq \log \vcommitlength$ show that the right-hand side is $O(s/\vcommitlength)$; and \cref{eq:log-approximation} along with $\log \alphabetsize = \log \vcommitlength - \log\log\log \vcommitlength + \Theta(1) = \big(1 - o(1)\big)\log \vcommitlength$ implies the left-hand side is $\Omega\big(\eps (\log \eps + \log \vcommitlength)\big)$. Therefore, the inequality above simplifies to
    \begin{equation*}
        \eps (\log \eps + \log \vcommitlength) = O\left(\frac{s}{\vcommitlength}\right).
    \end{equation*}

    Now, if we had $\eps = \Omega(s/\vcommitlength)$, then
    \begin{align*}
        \eps (\log \eps + \log \vcommitlength) &= \eps \big(\log s - \log \vcommitlength + \log \vcommitlength + \Omega(1)\big)\\
        &= \Omega(\eps \log s) = \omega(s/\vcommitlength),
    \end{align*}
    a contradiction. We thus conclude that $\eps = o(s/\vcommitlength)$ (and, in particular, that $\eps$ is both lower and upper bounded by $\polylog(\vcommitlength)/\vcommitlength$).

    Returning to the goal of bounding the expected score, we now show that most of the volume of a Hamming ball of radius $\delta$ is close to its boundary. More precisely, consider the volume $\mathcal{V}'$ of a ball of radius $\delta' = 1 - 2\eps$. As $\eps = \vcommitlength^{-1} \polylog(\vcommitlength)$, \cref{eq:hamming-volume-approximation} applies, giving $\mathcal{V}' \leq \alphabetsize^{H_\alphabetsize(1 - 2\eps)}$ and
    \begin{equation*}
        \mathcal{V} = \Omega\left(\frac{\alphabetsize^{H_\alphabetsize(1 - \eps)}}{\sqrt{\eps \vcommitlength}}\right) = \Omega\left(\frac{\alphabetsize^{H_\alphabetsize(1 - \eps)}}{\sqrt{s}}\right).
    \end{equation*}
    so that
    \begin{equation*}
        \frac{\mathcal{V}'}{\mathcal{V}} = O\left(\sqrt{s} \cdot \alphabetsize^{-\left(H_\alphabetsize(1 - \eps) - H_\alphabetsize(1 - 2\eps)\right)\vcommitlength}\right).
    \end{equation*}
    
    We can bound the coefficient in the exponent as follows:
    \begin{flalign*}
        && H_\alphabetsize(1 - \eps) - H_\alphabetsize(1 - 2\eps) &= \frac{\eps \log (\alphabetsize - 1) + H(\eps) - H(2\eps)}{\log \alphabetsize} &&\\
        &&&\geq \frac{\eps}{\log \alphabetsize} \left(\log (\alphabetsize - 1) + \log \frac1{\eps} - 2 \log \frac1{2\eps} - \frac{4}{\ln 2}\right) && (\text{by \cref{eq:entropy-approximation}})\\
        &&&= \frac{\eps}{\log \alphabetsize} \left(\log (\eps \alphabetsize) + \log \left(1 - \frac1{\alphabetsize}\right) + 2 - \frac{4}{\ln 2}\right) &&\\
        &&&\geq \frac{\eps \log\log \alphabetsize}{\log \alphabetsize},  &&
    \end{flalign*}
    where the last inequality follows from $\eps \alphabetsize > \frac{\alphabetsize \log \alphabetsize}{\vcommitlength \log\log \alphabetsize} = \Theta\left(\frac{\log \alphabetsize}{\log^2 \log \alphabetsize}\right)$ when the constant in $\Theta(\cdot)$ is large enough ($\alphabetsize \geq 32 \vcommitlength / \log\log \vcommitlength$ suffices, as $\log(1 - 1/\alphabetsize) + 2 - 4/\ln 2 > -5$).
    Therefore,
    \begin{align*}
        \sqrt{s} \cdot \alphabetsize^{-\left(H_\alphabetsize(1 - \eps) - H_\alphabetsize(1 - 2\eps)\right)\vcommitlength} &\leq \sqrt{s} \cdot \alphabetsize^{- \frac{\eps \vcommitlength \log\log \alphabetsize}{\log \alphabetsize}}\\
        &= \sqrt{s} \cdot 2^{- \eps \vcommitlength \log\log \alphabetsize}\\
        &< \sqrt{s} \cdot 2^{- \log \alphabetsize} \\
        &= \frac{\sqrt{s}}{\alphabetsize}\\
        &= \Theta\left(\frac{\sqrt{s} \log\log \vcommitlength}{\vcommitlength}\right)\\
        &= o(s/\vcommitlength),
    \end{align*}
    where the last line is due to $\sqrt{s} \geq \sqrt{\log \vcommitlength} = \omega(\log\log \vcommitlength)$ and the strict inequality to $\eps > \frac{\log \alphabetsize}{\vcommitlength \log\log \alphabetsize}$. Therefore, $\mathcal{V}'/\mathcal{V} = o(s/\vcommitlength)$, showing that the volume of a ball of radius $1 - \eps$ is indeed concentrated in points of distance at least $1 - 2\eps$.
    
    Finally, we conclude that
    \begin{align*}
        \E_{\vcommitstring \sim \alphabet^\vcommitlength}\left[\sum_{b \in B} \chf[\vcommitstring \in P_b] \cdot \abs{M_b}\right] &\leq \max_{\substack{b \in B\\\abs{P_b} \geq \frac{s \alphabetsize^\vcommitlength}{\vcommitlength \cdot 2^s}}} \E_{\vcommitstring \sim P_b}\left[\abs{M_b}\right] + s\\
        &\leq s + \frac1{\mathcal{V}} \cdot \sum_{\vcommitstring \in \ball(b, \delta)} \abs{\set{i \in [\vcommitlength] : \vcommitstring_i = b_i}}\\
        &\leq s +  \frac{\mathcal{V}'}{\mathcal{V}} \cdot \vcommitlength + \left(1 - \frac{\mathcal{V}'}{\mathcal{V}} \right) \cdot 2\eps \vcommitlength\\
        &= s + o(s),
    \end{align*}
    as desired.
\end{proof}

At this stage, we have an upper bound on the expected score of any one-way communication protocol for \reconstruct. The next step is to show that it implies a similar bound for the communication problem \pair; indeed, it seems intuitively clear that \reconstruct is no harder than \pair, as it allows Bob to output an independent guess for each coordinate. We formalise this intuition in the following lemma.

\begin{lemma}
    \label{lem:pair-bound}
    Any one-way protocol for \pair with alphabet size $\frac{32 \vcommitlength}{\log\log \vcommitlength} 
    \leq \abs{\alphabet} = O\left(\frac{\vcommitlength}{\log\log \vcommitlength}\right)$ and message length $s$, where $\log \vcommitlength \leq s = \polylog(\vcommitlength)$, satisfies the following: there exists an event $E$ (depending only on $\vcommitstring$) with $\P[E] = 1 - o(1)$ and a set $C$ of size $s$ (depending only on Alice's message) such that
    \begin{equation*}
        \P\big[\text{Bob outputs } (\vcommitstring_i,i) \text{ with } i \notin C \big| E \big] = o(1).
    \end{equation*}
\end{lemma}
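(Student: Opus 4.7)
My plan is to reduce \pair to \reconstruct and leverage \cref{lem:reconstruct-bound}, supplemented by a Fano-style pointwise estimate that controls the $(s+1)$-th largest value of the posterior maxima.

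First, I would construct, from any \pair protocol $(A, B)$ with $s$-bit messages, a \reconstruct protocol $(A, B')$ by having Bob output the coordinatewise MAP estimator $b_i(a) := \arg\max_\fe1 \P[\vcommitstring_i = \fe1 \mid A(\vcommitstring) = a]$. Writing $p_a(i) := \max_\fe1 \P[\vcommitstring_i = \fe1 \mid a]$ and $P_a := \sum_i p_a(i)$, \cref{lem:reconstruct-bound} gives $\E_a P_a \leq s(1 + \eta)$ for some $\eta = \eta(\vcommitlength) = o(1)$. Then I would define $C_a$ to be the set of $s$ coordinates with largest $p_a(i)$ (ties broken arbitrarily), so that $\abs{C_a} = s$ and $C_a$ depends only on Alice's message. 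Letting $w_a(i,\fe1)$ be the probability Bob outputs the pair $(\fe1, i)$ given $a$ and $q_a(i) := \sum_\fe1 w_a(i, \fe1)$, for any $i \notin C_a$ one has $p_a(i) \leq p_a(s+1)$, hence
\begin{equation*}
    \P[\text{Bob outputs } (\vcommitstring_i, i) \text{ with } i \notin C_a \mid a] \leq \sum_{i \notin C_a} q_a(i)\, p_a(i) \leq p_a(s+1).
\end{equation*}

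The key step is to show $p_{A(\vcommitstring)}(s+1) = o(1)$ with probability $1 - o(1)$ over $\vcommitstring$. The simple estimate $p_a(s+1) \leq P_a/(s+1)$ together with \cref{lem:reconstruct-bound} only yields $O(1)$ on average, so I would sharpen it via entropy. Since $A(\vcommitstring) \in \bitset^s$ gives $I(\vcommitstring ; A(\vcommitstring)) \leq s$, subadditivity of entropy yields $\E_a \sum_i \bigl(\log\abs{\alphabet} - H(\vcommitstring_i \mid a)\bigr) \leq s$. Fano's inequality applied pointwise gives $\log\abs{\alphabet} - H(\vcommitstring_i \mid a) \geq p_a(i)\log(\abs{\alphabet} - 1) - h(p_a(i))$, so that for any $\tau \geq 2/\abs{\alphabet}$ and each $i$ with $p_a(i) \geq \tau$ the deficit is at least $\Omega(\tau \log \abs{\alphabet})$. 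Consequently, $\E_a\, \abs{\{i : p_a(i) \geq \tau\}} = O\bigl(s/(\tau \log\abs{\alphabet})\bigr)$.

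Choosing $\tau = k/\log\abs{\alphabet}$ with $k = \omega(1)$ and $k = o(\log\abs{\alphabet})$ (for instance, $k = \log\log \vcommitlength$), the expectation above becomes $O(s/k) = o(s)$. Markov's inequality then gives $\P\bigl[\abs{\{i : p_{A(\vcommitstring)}(i) \geq \tau\}} > s\bigr] = O(1/k) = o(1)$, which is exactly the statement that $p_{A(\vcommitstring)}(s+1) < \tau$ with probability $1 - o(1)$. I would take $E$ to be this event (note it depends on $\vcommitstring$ only through $A(\vcommitstring)$, as required). Conditioned on $E$, the bound from the first paragraph becomes $p_{A(\vcommitstring)}(s+1) < \tau = k/\log\abs{\alphabet} = o(1)$, yielding the required estimate after dividing by $\P[E] = 1 - o(1)$.

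The main obstacle I anticipate is precisely the step that neither \cref{lem:reconstruct-bound} alone nor the entropy inequality $I \leq s$ alone is strong enough: the former controls the \emph{sum} of $p_a(i)$'s but not individual values, while the latter is too loose without Fano. Carefully combining both — in particular, verifying that the Fano lower bound on the deficit kicks in at the threshold $\tau = k/\log\abs{\alphabet}$ (requiring $\tau \log\abs{\alphabet} \geq 2$, which is satisfied) and that the resulting Markov bound can be applied with event $E$ depending only on $\vcommitstring$ — is the delicate part. The parameter regime $\abs{\alphabet} = \Theta(\vcommitlength/\log\log\vcommitlength)$ and $s = \polylog(\vcommitlength)$ of the lemma is exactly what permits the choice $k = \log\log\vcommitlength$ to simultaneously satisfy $k = \omega(1)$ and $k = o(\log\abs{\alphabet})$.
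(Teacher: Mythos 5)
Your argument is correct, and although you introduce it as a reduction to \reconstruct via \cref{lem:reconstruct-bound}, you in fact bypass both that lemma (and the Hamming-ball volume estimates its proof rests on) and \cref{clm:probability-ip}. The paper's proof feeds the \reconstruct expected-score bound into \cref{clm:probability-ip}, choosing $C_a$ so that the weighted sum $\sum_{i \notin C_a} \P[b_2 = i]\, r_i$ is small, where $r_i$ is Bob's conditional hit probability at coordinate $i$. You instead take $C_a$ to be the $s$ coordinates maximising the posterior mode $p_a(i) = \max_{\fe1}\P[\vcommitstring_i = \fe1 \mid a]$, bound the failure probability outside $C_a$ by $p_a(s+1)$ (the $(s{+}1)$-th largest mode) using only that Bob's coordinate choice is a probability distribution, and then control $p_a(s+1)$ directly from $I(\vcommitstring ; A) \leq s$ together with the max-entropy/Fano estimate $\log\abs{\alphabet} - H(\vcommitstring_i \mid a) \geq p_a(i)\log(\abs{\alphabet}-1) - h(p_a(i))$. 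The gain over the naive $p_a(s+1) \leq P_a/(s+1)$ estimate is exactly the factor $\log\abs{\alphabet}$: a coordinate with mode $\tau$ spends $\Omega(\tau\log\abs{\alphabet})$ bits of mutual information, not merely $\tau$ units of score, so the count of $\tau$-heavy coordinates is $O(s/k)$ rather than $O(s/\tau)$. Your route is shorter, more self-contained, and has looser hypotheses — it needs only $\log\abs{\alphabet} = \omega(1)$, not the specific $\abs{\alphabet} = \Theta(\vcommitlength/\log\log\vcommitlength)$ and $s \geq \log\vcommitlength$ bounds (which enter the paper's proof via the Hamming-ball volume approximations of \cref{lem:reconstruct-bound}). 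One small slip worth fixing: the Fano deficit is $\Omega(\tau\log\abs{\alphabet})$ once $\tau\log\abs{\alphabet} \geq 2$, i.e.\ $\tau \geq 2/\log\abs{\alphabet}$, not $\tau \geq 2/\abs{\alphabet}$ as you state in passing — near $\tau = 1/\abs{\alphabet}$ the quantity $p\log(\abs{\alphabet}-1) - h(p)$ vanishes. Your actual choice $\tau = k/\log\abs{\alphabet}$ with $k = \omega(1)$ sits on the right side of the corrected threshold, so the argument goes through.
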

\begin{proof}
    We will first show how to construct a protocol for \reconstruct given one for \pair, and then use \cref{lem:reconstruct-bound} to conclude; as in that lemma, we define $\set{P_a}$ as the partition induced by Alice's messages $a = a(\vcommitstring) \in A$ (we can assume Alice to be deterministic, as before, by the minimax theorem; then $a$ is a random variable determined by $\vcommitstring$).
    
    Recall that in a protocol for \pair, Bob's output is a random variable $b(a) \in \alphabet \times [\vcommitlength]$;\footnote{Note that, in contrast with Alice, we cannot assume Bob is deterministic. We wish to bound the number of points in the support of $b$ that aggregate all but a subconstant amount of probability weight in correct solutions to the problem. This is not a function of the \emph{value} of $b$, but of its \emph{distribution}, so the minimax principle does not apply.} our goal is to construct, from this random variable, an entire string $y \in \alphabet^\vcommitlength$ and apply the expected score bound to it. For ease of notation, when the message $a$ is fixed we write $b = (b_1, b_2) = b(a)$; note that $b$ is independent of the conditional distribution $\vcommitstring \sim P_a$ of the input, since upon fixing $a$ it is solely a function of Bob's internal randomness. We will denote its distribution by $\mu = \mu(a)$, and the conditional distribution of $b_2$ when $b_1 = i$ by $\mu_i$.

    The (\pair) protocol's success probability, conditional on receiving $a$, is given by
    \begin{align*}
        \sum_{i = 1}^\vcommitlength \P_{\subalign{\vcommitstring &\sim P_a\\b &\sim \mu}}[b = (\vcommitstring_i, i)] &= \sum_{i = 1}^\vcommitlength \P_{b \sim \mu}[b_2 = i] \cdot \P_{\subalign{\vcommitstring &\sim P_a\\b &\sim \mu}}[b_1 = \vcommitstring_i ~|~ b_2 = i]\\
        &= \sum_{i = 1}^\vcommitlength \P_{b \sim \mu}[b_2 = i] \cdot \P_{\subalign{\vcommitstring &\sim P_a\\b_1 &\sim \mu_i}}[b_1 = \vcommitstring_i].
    \end{align*}

    Define $y = y(a) \in \alphabet^\vcommitlength$ as the string whose $i^\text{th}$ coordinate is the most frequent symbol at the $i^\text{th}$ coordinate in $P_a$ (as before, $y$ is the best attempt at reconstructing the input $\vcommitstring$ given to Alice). Now, consider the \reconstruct protocol that outputs the string whose $i^\text{th}$ coordinate is the random variable $b_1 \sim \mu_i$. Since, for each $i \in [\vcommitlength]$, the symbol $\alpha \in \Gamma$ maximising $\P_{\vcommitstring \sim P_a}[\alpha = \vcommitstring_i]$ is $y_i$, the expected score of the resulting protocol (conditioned on $a$) is
    \begin{align*}
        \sum_{i = 1}^\vcommitlength \P_{\vcommitstring \sim P_a}[b_1 = \vcommitstring_i \mid b_2 = i] &= \sum_{i = 1}^\vcommitlength \P_{\subalign{\vcommitstring &\sim P_a\\b_1 &\sim \mu_i}}[b_1 = \vcommitstring_i]\\
        &=\sum_{i = 1}^\vcommitlength \sum_{\alpha \in \alphabet} \P_{b_1 \sim \mu_i}[b_1 = \alpha] \cdot \P_{\vcommitstring \sim P_a}[\alpha = \vcommitstring_i]\\
        &\leq \sum_{i = 1}^\vcommitlength \P_{\vcommitstring \sim P_a}[y_i = \vcommitstring_i]\\
        &= \E_{\vcommitstring \sim P_a}\left[\abs{M_a}\right],
    \end{align*}
    where, as before, $M_a = \set{i \in [\vcommitlength] : y_i = \vcommitstring_i}$.
    
    Recall that in \cref{lem:reconstruct-bound} we showed that, as long as $\abs{P_a} \geq \frac{s \abs{\alphabet}^\vcommitlength}{\vcommitlength 2^s}$, the above expectation is $o(s)$.
    To conclude, we will use the following claim, whose proof is deferred to \cref{sec:deferred-probability-ip}:
    \begin{claim}
    \label{clm:probability-ip}
        Let $p,q \in [0,1]^\vcommitlength$ be probability vectors and $t \leq \vcommitlength$ be an integer. There exists a set $C \subseteq [\vcommitlength]$ of size $t$ such that $\sum_{i \in [\vcommitlength] \setminus C} p_i q_i \leq 1/t$.
    \end{claim}

    Note that while $r \in [0,1]^\vcommitlength$ defined by $r_i = \P[b_1 = \vcommitstring_i ~|~ b_2 = i]$ is not a probability vector, we may normalise it to obtain one: applying \cref{clm:probability-ip} to $p = \big(\P[b_2 = i] : i \in [\vcommitlength]\big)$, $q = r/\norm{r}_1$ and $t = s$, we obtain a set $C_a \subset [\vcommitlength]$ of size $s$ such that
    \begin{align*}
        \P_{\subalign{\vcommitstring &\sim P_a\\b &\sim \mu(a)}}\big[b = (\vcommitstring_i,i) \text{ with } i \notin C_a] &=  \sum_{i \notin C_a}^\vcommitlength p_i r_i\\
        &= \norm{r}_1 \sum_{i \notin C_a}^\vcommitlength p_i q_i\\
        &\leq \frac{\norm{r}_1}{s}\\
        &= \frac{\sum_{i = 1}^\vcommitlength \P[b_1 = \vcommitstring_i ~|~ b_2 = i]}{s}\\
        &= o(1)
    \end{align*}
    whenever $\abs{P_a} \geq \frac{s \abs{\alphabet}^\vcommitlength}{\vcommitlength 2^s}$. Finally, take $C_a$ as given by the claim. Recall that the sets $P_a$ of size less than $\frac{s \abs{\alphabet}^\vcommitlength}{\vcommitlength 2^s}$ cover at most a $s/\vcommitlength = o(1)$ fraction of length-$\vcommitlength$ strings, so that the probability $\vcommitstring \sim \alphabet^\vcommitlength$ falls into the union of such sets is $o(1)$. In the complement of this event, we have
    \begin{align*}
        \P&\left[b(a) = (\vcommitstring_i,i) \text{ with } i \notin C_a ~\left|~ \abs{P_a} \geq \frac{s \abs{\alphabet}^\vcommitlength}{\vcommitlength 2^s}\right.\right]\\
        &= \frac1{\P_{\vcommitstring \sim \alphabet^\vcommitlength}\left[\abs{P_a} \geq \frac{s \abs{\alphabet}^\vcommitlength}{\vcommitlength 2^s}\right]} \sum_{\substack{a \in A\\\abs{P_a} \geq \frac{s \abs{\alphabet}^\vcommitlength}{\vcommitlength 2^s}}}\P_{\vcommitstring \sim \alphabet^\vcommitlength}[\vcommitstring \in P_a] \cdot \P_{\subalign{\vcommitstring &\sim  P_a\\b &\sim \mu(a)}}[b = (\vcommitstring_i,i) \text{ with } i \notin C_a]\\
        &= \frac1{1 - o(1)} \cdot o(1) = o(1),
    \end{align*}
    which concludes the proof.
\end{proof}

With the second step of our proof finished, we already have a nontrivial result by the known implication from hardness for one-way communication complexity: any streaming algorithm that streams a uniformly random string $\vcommitstring \in \alphabet^\vcommitlength$ \emph{and immediately outputs} a pair $(\alpha, i)$ has a small set $C \subset [\vcommitlength]$ capturing most of the probability that it outputs correctly. However, the verifier in our zero-knowledge streaming protocol will stream an \indexproblem instance between streaming $\vcommitstring$ and outputting a pair. To capture this behaviour, we define a (slight) variant of \pair and prove that the result of \cref{lem:pair-bound} carries over to it.

\begin{definition}
    \label{def:pair-x}
    For each string $x \in \alphabet^n$, let $\pair(x)$ denote the following one-way communication problem: Alice receives a string $\vcommitstring \sim \alphabet^\vcommitlength$ and sends Bob an $s$-bit message $a$; Bob reads $x$ and $a$ and outputs a pair $(\alpha, i) \in \alphabet \times [\vcommitlength]$. The protocol succeeds if $\alpha = \vcommitstring_i$.
\end{definition}

We have now reached the end goal of this section:
\begin{lemma}
    \label{lem:pair-x-bound}
     Fix a (single) one-way communication protocol for $\pair(x)$ for all $x \in \alphabet^n$ with alphabet size $32 \vcommitlength / \log\log \vcommitlength \leq \abs{\alphabet} = O(\vcommitlength / \log\log \vcommitlength)$ and message length $\log \vcommitlength \leq s = \polylog(\vcommitlength)$. Then, for any $x \in \alphabet^n$, there exists an event $E$ (that depends only on $\vcommitstring$) with $\P[E] = 1 - o(1)$ and a set $C$ of size $s$ (that depends only on Alice's message) satisfying
    \begin{equation*}
        \P\big[b(a,x) = (\vcommitstring_i,i) \text{ with } i \notin C_a \big| E \big] = o(1).
    \end{equation*}
\end{lemma}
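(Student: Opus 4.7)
The plan is a direct reduction to \cref{lem:pair-bound}. The key observation is that, since $x$ is a fixed string (not a random input) and Alice never sees $x$, we can hard-code $x$ into Bob's strategy to convert a protocol for $\pair(x)$ into a standard protocol for $\pair$.

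More precisely, fix any $x \in \alphabet^n$. Let $(A, B)$ be the given one-way protocol for $\pair(x)$. We define a one-way protocol $(A, B')$ for $\pair$ as follows: Alice uses the same strategy $A$ (whose message $a = a(\vcommitstring)$ depends only on $\vcommitstring$, not on $x$), and Bob's strategy is $B'(a) := B(a, x)$, where $B'$ simulates $B$ by using its own internal randomness in place of $B$'s. This is a valid one-way protocol for $\pair$ with the same message length $s$ and the same alphabet $\alphabet$; therefore \cref{lem:pair-bound} applies to it.

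Applying \cref{lem:pair-bound} yields an event $E$ depending only on $\vcommitstring$ with $\P[E] = 1 - o(1)$ and a set $C = C_a$ of size $s$ depending only on Alice's message $a$ such that
\begin{equation*}
    \P\big[B'(a) = (\vcommitstring_i, i) \text{ with } i \notin C_a \bigm| E\big] = o(1).
\end{equation*}
Since by construction the random variables $B'(a)$ and $B(a, x)$ are identically distributed for every fixed $a$, the same bound holds with $B'(a)$ replaced by $b(a, x) = B(a, x)$, which is precisely the conclusion of the lemma.

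The one point worth stressing is the scope of the quantifiers: $E$ and $C_a$ may in principle depend on the choice of $x$ (because the reduced Bob $B'$ does), but this is consistent with the statement, which only asserts their existence \emph{for each} $x$ and requires $E$ to be determined by $\vcommitstring$ and $C$ by $a$ once $x$ has been fixed. In fact, inspecting the proof of \cref{lem:pair-bound} shows that $E$ is determined by the partition of $\alphabet^\vcommitlength$ induced by Alice's strategy, and so $E$ does not actually depend on $x$; only $C_a$ does, through the output distribution of $B(\cdot, x)$. Since the argument is a one-line reduction rather than a delicate technical step, there is no genuine obstacle here.
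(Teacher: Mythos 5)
Your reduction from $\pair(x)$ to $\pair$ is clean and natural, and you correctly observe that $E$ depends only on Alice's partition (hence on $\vcommitstring$ alone). But there is a genuine gap, and you have put your finger on it and then talked yourself out of it: the set $C_a$ you obtain by applying \cref{lem:pair-bound} to $(A, B')$ with $B'(\cdot) := B(\cdot, x)$ depends on $x$ through Bob's output distribution, and you argue this is "consistent with the statement." It is not. The parenthetical "(that depends only on Alice's message)" means $C$ must be a \emph{single} function $a \mapsto C_a$, not a family of functions indexed by $x$, and the paper's proof takes explicit care to achieve exactly this: it replaces the $x$-dependent quantity $\P_{\vcommitstring \sim P_a,\, b_1 \sim \mu_i(a,x)}[b_1 = \vcommitstring_i]$ with its maximum over all $x$, defines $q$ as the normalisation of the resulting $x$-free vector $r$, and obtains $C_a$ from \cref{clm:probability-ip} applied to this $x$-free $q$ — yielding one set that simultaneously bounds the escape probability for \emph{every} $x$.

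This independence from $x$ is not cosmetic; it is load-bearing downstream. In \cref{thm:correct-set} the set must be "determined by the snapshot of $\widetilde V$ at the end of the stream $\vcommitstring$" — i.e., computable \emph{before} $x$ is streamed — and the simulator in \cref{thm:pep-zk} implements exactly this via the whitebox oracle $\whiteboxoracle$, which returns $\max_y \P[\cdot]$ and is queried before $x$ appears. A $C$ that the simulator can only identify after learning $x$ is useless to it. Under your $\forall x\, \exists C(a,x)$ reading, the simulator would need $x$ to know which coordinates to track, breaking the argument. So the $\max_x$ step is precisely the "small adaptation" the paper highlights, and your proposal — by hard-coding $x$ into Bob and applying \cref{lem:pair-bound} as a black box — proves only the weaker $\forall x\, \exists C_x$ statement, which does not suffice.
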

\begin{proof}
    We will make a small adaptation in one of the steps of \cref{lem:pair-bound} to show there is a size-$s$ set $C$ \emph{independent of $x$} that captures most of the probability of Bob's correct outputs.

    Following the notation of \cref{lem:pair-bound}, $\set{P_a}$ is the partition induced by Alice's messages and Bob's output is a random variable $b(a(\vcommitstring), x) = b(a,x) \in \alphabet \times [\vcommitlength]$. We also denote the distribution of $b = b(a,x)$ by $\mu(a,x)$ and the conditional distribution of $b_1$ when $b_2 = i$ by $\mu_i(a,x)$.

    For every $x$ and $a$, the protocol's success probability conditioned on $\vcommitstring \in P_a$ is
    \begin{align*}
        \sum_{i = 1}^\vcommitlength \P_{\subalign{\vcommitstring &\sim P_a\\b &\sim \mu(a,x)}}[b = (\vcommitstring_i, i)] &= \sum_{i = 1}^\vcommitlength \P_{b \sim \mu(a,x)}[b_2 = i] \cdot \P_{\subalign{\vcommitstring &\sim P_a\\b &\sim \mu(a,x)}}[b_1 = \vcommitstring_i ~|~ b_2 = i]\\
        &= \sum_{i = 1}^\vcommitlength \P_{b \sim \mu(a,x)}[b_2 = i] \cdot \P_{\subalign{\vcommitstring &\sim P_a\\b_1 &\sim \mu_i(a,x)}}[b_1 = \vcommitstring_i].
    \end{align*}

    With $y = y(a) \in \alphabet^\vcommitlength$ as the string whose $i^\text{th}$ coordinate is the most frequent symbol at the $i^\text{th}$ coordinate in $P_a$, we know that $\P_{\vcommitstring \sim P_a}[\symbol1 = \vcommitstring_i]$ is maximal when $\symbol1 = y_i$. This holds also if $\symbol1$ is a random variable (independent from $\vcommitstring$), so that, in particular, with $r \in [0,1]^\vcommitlength$ defined by
    \begin{equation*}
        r_i \coloneqq \max_{x \in \alphabet^n} \set{\P_{\subalign{\vcommitstring &\sim P_a\\b_1 &\sim \mu_i(a,x)}}[b_1 = \vcommitstring_i]} \leq \P_{\vcommitstring \sim P_a}[y_i = \vcommitstring_i]
    \end{equation*}
    we have $\norm{r}_1 = o(s)$ when $\abs{P_a}$ is sufficiently large. Defining $p \in [0,1]^\vcommitlength$ by $p_i = \P_{b \sim \mu(a,x)}[b_2 = i]$, $p' \in [0,1]^\vcommitlength$ by $q_i = r_i/\norm{r}_1$ and using \cref{clm:probability-ip}, we obtain a set $C_a \subset [\vcommitlength]$ of size $s$ such that \emph{for every $x \in \alphabet^n$},
    \begin{align*}
        \P_{\subalign{\vcommitstring &\sim P_a\\b &\sim \mu(a,x)}}\big[b = (\vcommitstring_i,i) \text{ and } i \notin C_a] &= \sum_{i = 1}^\vcommitlength \P_{b \sim \mu(a,x)}[b_2 = i] \cdot \P_{\subalign{\vcommitstring &\sim P_a\\b_1 &\sim \mu_i(a,x)}}[b_1 = \vcommitstring_i]\\
        &\leq \norm{r}_1 \sum_{i \notin C_a}^\vcommitlength p_i q_i = o(1),
    \end{align*}
    and we conclude with same calculation of \cref{lem:pair-bound}.
\end{proof}

As an immediate corollary (by taking $C$ to be a set of symbol-coordinate pairs, rather than only coordinates; and setting, say, $C = \varnothing$ in the complement of the event $E$), we have:
\begin{theorem}
    \label{thm:correct-set}
    Let $\alphabet$ be an alphabet of size $32 \vcommitlength / \log\log \vcommitlength \leq \abs{\alphabet} = \Theta(\vcommitlength/\log\log \vcommitlength)$ and fix $x \in \alphabet^n$. Let $\widetilde V$ be a streaming space-$s$ algorithm with $\log \vcommitlength \leq s = \polylog(\vcommitlength)$ that streams $\vcommitstring \sim \alphabet^\vcommitlength$ followed by $x$, and outputs a pair $(\alpha, i) \in \alphabet \times [\vcommitlength]$.

   There exists a set $C \subset \alphabet \times [\vcommitlength]$ of size $s$, determined by the snapshot of $\widetilde V$ at the end of the stream $\vcommitstring$, such that
    \begin{equation*}
        \P\left[\widetilde V(\vcommitstring, x) \text{ outputs } (\vcommitstring_i, i) \notin C\right] = o(1).
    \end{equation*}
\end{theorem}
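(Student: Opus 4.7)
The plan is to derive the theorem directly from Lemma~\ref{lem:reconstruct-bound} via the standard reduction from streaming algorithms to one-way communication complexity, following the pattern of Lemma~\ref{lem:pair-x-bound} but with symbol-coordinate pairs replacing coordinates throughout.

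First, I would set up the reduction: any space-$s$ streaming algorithm $\widetilde V$ that reads $\vcommitstring$ followed by $x$ and outputs a pair $b \in \alphabet \times [\vcommitlength]$ induces a one-way protocol in which Alice simulates $\widetilde V$ on $\vcommitstring$, transmits its snapshot $a \in \bitset^s$ to Bob, and Bob resumes the simulation on $x$. Since $b$ and $\vcommitstring$ both depend on $\vcommitstring$ only through $a$, conditioning on $a$ makes them independent, and therefore for any target set $C$,
\begin{equation*}
\P\bigl[\widetilde V \text{ outputs } (\vcommitstring_i, i) \notin C \bigm| a\bigr] = \sum_{(\alpha, i) \notin C} \P[b = (\alpha, i) \mid a, x] \cdot \P[\vcommitstring_i = \alpha \mid a].
\end{equation*}

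Second, I would define $C$ to be the set of $s$ pairs maximising the snapshot-dependent quantity $r^*_{(\alpha, i)} := \P[\vcommitstring_i = \alpha \mid a]$, so that $C$ is a function of the snapshot as required. Letting $T = T(a)$ denote the $s$\textsuperscript{th} largest $r^*$-value, the display above is at most $T \cdot \sum_{(\alpha, i) \notin C} \P[b = (\alpha, i) \mid a, x] \leq T$, so it suffices to show $T(a) = o(1)$ for typical snapshots.

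Third, I would bound $T$ by a two-sided counting argument leveraging Lemma~\ref{lem:reconstruct-bound}. Let $k$ be the number of distinct coordinates spanned by the top-$s$ pairs. On the one hand, each such coordinate $i$ hosts a pair with $r^* \geq T$, so $\max_\alpha r^*_{(\alpha, i)} \geq T$; summing, the expected \reconstruct score is at least $kT$. On the other hand, since $\sum_\alpha r^*_{(\alpha, i)} = 1$ for every $i$, no coordinate can host more than $1/T$ top-$s$ pairs, giving $k \geq sT$. Lemma~\ref{lem:reconstruct-bound} (together with the argument isolating large partition classes, as in the proof of Lemma~\ref{lem:pair-x-bound}) yields $kT \leq o(s)$ whenever $|P_a| \geq \frac{s |\alphabet|^\vcommitlength}{\vcommitlength 2^s}$, an event of probability $1 - o(1)$. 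Substituting, $sT^2 \leq kT \leq o(s)$, hence $T = o(1)$. Setting $C = \varnothing$ when $|P_a|$ is too small absorbs the remaining $o(1)$ probability into the final bound.

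The main obstacle I anticipate is carrying out this counting argument carefully---one must verify that the $\|r^*\|_1 = o(s)$ bound obtained from Lemma~\ref{lem:reconstruct-bound} indeed holds in the conditional sense used here, and handle ties in the definition of the ``top-$s$'' set---but the overall structure is a direct reduction plus the two-sided counting, and no new information-theoretic machinery is required.
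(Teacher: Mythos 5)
Your proposal is correct, and it takes a genuinely different and somewhat more direct route than the paper's. The paper reaches \cref{thm:correct-set} via two intermediate communication problems: \cref{lem:pair-bound} handles $\pair$ by applying the probability inner-product lemma (\cref{clm:probability-ip}) to the vectors $p_i = \P[b_2 = i]$ and $q_i \propto \P[b_1 = \vcommitstring_i \mid b_2 = i]$, and \cref{lem:pair-x-bound} extends to $\pair(x)$ by replacing $q_i$ with its maximum over $x$ so that $C$ becomes independent of $x$. Your proof bypasses both steps: conditioning on the snapshot $a$ makes $b$ independent of $\vcommitstring$, so taking $C$ to be the $s$ pairs maximising the posterior $r^*_{(\alpha,i)} = \P[\vcommitstring_i = \alpha \mid a]$ (a function of $a$ alone) bounds the conditional failure probability by the $s$-th largest posterior $T$; independence from $x$ and from Bob's output distribution is automatic and no max-over-$x$ step is needed. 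Your two-sided count -- $kT \leq \sum_i \max_\alpha r^*_{(\alpha,i)} = o(s)$ from the reconstruct bound when $\abs{P_a}$ is large, while $k \geq sT$ since $\sum_\alpha r^*_{(\alpha,i)} = 1$ bars any coordinate from hosting more than $1/T$ top pairs -- then forces $T = o(1)$. This replaces \cref{clm:probability-ip} with elementary counting and collapses the $\pair$/$\pair(x)$ chain into one step; the only trade-off is that your $C$ ignores Bob's output distribution entirely, so it is a different (and arguably simpler) set than the one the paper exhibits. One notational slip worth fixing: what your closing remark calls ``$\norm{r^*}_1 = o(s)$'' should read $\sum_i \max_\alpha r^*_{(\alpha,i)} = o(s)$, since the actual $1$-norm of the posterior vector over all pairs equals $\vcommitlength$; and the ``argument isolating large partition classes'' lives in the proof of \cref{lem:reconstruct-bound} (reused in \cref{lem:pair-bound}), not \cref{lem:pair-x-bound}. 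Neither affects soundness.
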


The theorem above attains what we set out for in this section: since $\widetilde V$ cannot remember many pairs $(\vcommitstring_i, i)$, we may prepend to any protocol a step where $P$ sends $\vcommitstring$ to the verifier. Then, whenever $\widetilde V$ sends an allegedly random $\symbol1 \in \alphabet$ to the prover, we ask that it \emph{also} send the coordinate $i$ such that $\symbol1 = \vcommitstring_i$ as evidence that $\symbol1$ was indeed sampled in the past, i.e., before it finished streaming $\vcommitstring$. In other words, this step provides a \emph{temporal commitment} by means of which $\widetilde V$ can show that its internal randomness is uncorrelated with the input. 

\section{A zero-knowledge SIP for polynomial evaluation}
\label{sec:pep}

Our goal in this section will be to combine the components constructed in \cref{sec:pv-algebraic-commitment,sec:vp-commitment} -- \emph{algebraic} and \emph{temporal} commitment protocols -- into a zero-knowledge protocol for \pep. It is useful to keep in mind that \pep is a generalisation of \indexproblem, and thus a protocol for the former yields one for the latter; in other words, for concreteness one may replace \pep by \indexproblem throughout this section. A formal definition of \pep follows.

\begin{definition}
    \label{def:pep}
    Let $\fe1 \in \field$ and $f = \set{f^x : x \in \alphabet^n}$ be a mapping such that $f^x: \field^\dimension \to \field$ is a degree-$\degree$ polynomial. $\pep(f, \fe1)$ is the language $\set{(x, \evaluationpoint) \in \alphabet^n \times \field^\dimension : f^x(\evaluationpoint) = \fe1}$.
\end{definition}
We remark that the parameters of the problem generally increase as a function of $n$; in particular, the field size is always assumed to satisfy $\fieldsize = \abs{\field} = \omega(1)$.

\subsection{The protocol}

For any mapping $f$ and field element $\fe1$, \cref{prot:zk-pep} lays out $\textsf{zk-pep}(f, \fe1)$, our zero-knowledge SIP for $\pep(f, \fe1)$. \cref{thm:pep-correctness,thm:pep-zk} prove, respectively, the correctness (i.e., completeness and soundness) and the zero-knowledge properties of \textsf{zk-pep}.

The protocol uses commitment (sub)protocols to allow each party to only reveal key information after the other party gives evidence that it is being honest; this is achieved by interspersing the commit-decommit steps of one party with those of the other. More precisely, in the setup (\cref{step:pep-temporal-commit}) the verifier performs its (temporal) commitment; after the input is streamed (\cref{step:pep-input}), the prover makes its (algebraic) commitment in \cref{step:pep-algebraic-commit}. Then follow decommitments in the same order: verifier and prover decommit at \cref{step:pep-temporal-decommit,step:pep-algebraic-decommit}, respectively.

For ease of notation, we use $\field^\times$ to denote $\field \setminus \{0\}$, the multiplicative group of the field $\field$. Recall, moreover, that for a matrix $\pcommitstring$, we use $\hat \pcommitstring(\fingerprinttuple, \bm{\theta}) \in \field$ to denote an evaluation of the low-degree extension of the string $\bm{\theta} \cdot \pcommitstring$ over $\field$ (see \cref{sec:preliminaries}), and that $\bm{\chi}(\rfe1)$ denotes a vector of Lagrange polynomials (see \cref{sec:lde-pep}); in the following protocol, the vector contains all but the first point of the interpolating set $\{0\} \cup [\degree\dimension]$ for a univariate degree-$\degree\dimension$ polynomial over $\field$, i.e., $\bm{\chi}(\rfe1) = \big(\chi_i(\rfe1) : i \in [\degree\dimension] \big) \in \field^{\degree\dimension}$.

\iflipics
\begin{figure}
\else
\begin{protocol}[label={prot:zk-pep}]{$\textsf{zk-pep}(f, \fe1)$}
\fi
	\textbf{Input:} Explicit access to $\field$, element $\fe1 \in \field$, degree $\degree$, dimension $\dimension$ and a mapping $x \mapsto f^x$; streaming access to $x\in\alphabet^n$ followed by $\evaluationpoint \in \field^\dimension$.\\
	
	\textbf{Parameters:}
	\begin{itemize}[label={}, nosep, itemindent=-1.5em]
	    \item Field size $\fieldsize = \abs{\field}$ satisfying $\degree \dimension = o(\fieldsize)$;
	    \item Commitment lengths $\vcommitlength = \fieldsize^\dimension (\log \dimension + \log\log \fieldsize)/32$ and $\pcommitlength = \dimension(\degree\dimension\fieldsize)^3$;
	\end{itemize}
    \iflipics\else
	\tcbline
    \vspace*{-.5\baselineskip}
    \fi
    
    \begin{steps}[wide]
        \setcounter{stepsi}{-1}
        \item\label{step:pep-temporal-commit} Temporal commitment
	    \begin{enumerate}[label={}, leftmargin=3em]
	        \item[$\bm P$:] Send a string $\vcommitstring \sim \big(\field^\dimension\big)^\vcommitlength$.
	        \item[$\bm V$:] Sample $\fingerprinttuple \sim \field^\dimension$ and stream $\vcommitstring$. For each $i$, check if $z_i = \fingerprinttuple$ and store $\ell \coloneqq i$ if so.
	        
	        Reject if $\fingerprinttuple \neq \vcommitstring_i$ for all $i \in [\vcommitlength]$.
	    \end{enumerate}
	    \iflipics\else%
        \vspace*{-\baselineskip}
        \tcbline
        \vspace*{-.5\baselineskip}
        \fi

	    \item\label{step:pep-input} Input streaming
	    \begin{enumerate}[label={}, leftmargin=3em]
	        \item[$\bm V$:] Stream $x$ and compute $f^x(\fingerprinttuple) \in \field$. Store $\evaluationpoint \in \field^\dimension$.
	        
	        If $\fingerprinttuple = \evaluationpoint$, check that $f^x(\fingerprinttuple) = \fe1$, accepting if so and rejecting otherwise.
        \end{enumerate}
        \iflipics\else%
        \vspace*{-\baselineskip}
        \tcbline
        \vspace*{-.5\baselineskip}
        \fi

	    \item\label{step:pep-algebraic-commit} Algebraic commitment
	    \begin{enumerate}[label={}, leftmargin=3em]
	        \item[$\bm V$:] Sample $\rfe1 \sim \field^\times \setminus [\degree\dimension]$ and send the line $\line: \field \to \field^\dimension$ with $\line(0) = \evaluationpoint$ and $\line(\rfe1) = \fingerprinttuple$.
	        
	        \item[$\bm P$:] Send an algebraic commitment $(\pcommitstring, \bm{\fe3}, k)$ to $f^x_{|\line}$, i.e., $(\pcommitstring, k) \sim \field^{\degree\dimension \times \pcommitlength} \times [\pcommitlength]$ and $\bm{\fe3} \in \field^{\degree\dimension}$ with $\bm{\fe3}_i = f^x_{|\line}(i) - \pcommitstring_{ik}$ for all $i \in [\degree\dimension]$.
	        
	       \item[$\bm V$:] Sample $\Fingerprinttuple \sim \field^\dimension$ and, while streaming $\pcommitstring$, compute $\hat \pcommitstring\big(\Fingerprinttuple, \bm{\chi}(\rfe1)\big)$.
	       
	       Compute the correction $\fe3 = \bm{\chi}(\rfe1) \cdot \bm{\fe3}$ and save (the identification of) $k \in \field^\dimension$.
	       \end{enumerate}
	    \iflipics\else%
        \vspace*{-\baselineskip}   
        \tcbline
        \vspace*{-.5\baselineskip}
        \fi
        
	    \item\label{step:pep-temporal-decommit} Temporal decommitment
	    \begin{enumerate}[label={}, leftmargin=3em]
	        \item[$\bm V$:] Send $\fingerprinttuple$ and $\ell$.
            \item[$\bm P$:] Check that $\vcommitstring_\ell = \fingerprinttuple \in \line$ and $\rfe1 \coloneqq \line^{-1}(\fingerprinttuple) \notin \{0\} \cup [\degree\dimension]$, aborting otherwise.
        \end{enumerate}
        \iflipics\else%
        \vspace*{-\baselineskip}
        \tcbline
        \vspace*{-.5\baselineskip} 
        \fi

        \item\label{step:pep-algebraic-decommit} Algebraic decommitment
        \begin{enumerate}[label={}, leftmargin=3em]
	        \item[$\bm V$:] Run $\textsf{decommit}\big(f^x(\fingerprinttuple) - \chi_0(\rfe1) \fe1, \bm{\chi}(\rfe1) \cdot \pcommitstring, k\big)$, with correction $\fe3$ and fingerprint $\hat \pcommitstring(\Fingerprinttuple, \bm{\chi}(\rfe1))$. Accept if \textsf{decommit} accepts and reject otherwise.
        \end{enumerate}
    \end{steps}
\iflipics
\caption{Protocol $\textsf{zk-pep}(f, \fe1)$}
\label{prot:zk-pep}
\end{figure}
\else
\end{protocol}
\fi

\subsection{Analysis of the protocol}

We now show that \textsf{zk-pep} is a valid (i.e., complete and sound) streaming interactive proof, as well as compute its space and communication complexities.

\begin{theorem}
    \label{thm:pep-correctness}
    Let $f$ be such that an evaluation of the $\field_\fieldsize$-polynomial $f^x$ can be computed by streaming $x$ in $O(\dimension \log \fieldsize)$ space. Then, for any $\fe1 \in \field_\fieldsize$, \cref{prot:zk-pep} is an SIP for $\pep(f, \fe1)$ with $s = O(\dimension \log \fieldsize)$ space complexity. Its communication complexity is $O(\fieldsize^\dimension \dimension \log^2 \fieldsize)$ in the setup and $O(\degree^4 \dimension^5 \fieldsize^3 \log \fieldsize)$ in the interactive phase.
\end{theorem}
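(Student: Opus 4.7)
The plan is to prove completeness, soundness, and the two complexity bounds in sequence.

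\textbf{Completeness.} I would first bound the probability that the setup fails, which is the sole source of error in an honest execution. Since $\vcommitstring \sim (\field^\dimension)^\vcommitlength$ is uniform, $\P[\fingerprinttuple \notin \vcommitstring] = (1-\fieldsize^{-\dimension})^\vcommitlength \leq e^{-(\log\dimension + \log\log\fieldsize)/32} = o(1)$. Conditioned on the complement, I would trace through the protocol: the Step~1 edge case $\fingerprinttuple = \evaluationpoint$ passes by the hypothesis $f^x(\evaluationpoint) = \fe1$; Step~3 passes by construction; and in Step~4 the Lagrange identity $\sum_{i=1}^{\degree\dimension} \chi_i(\rfe1) f^x_{|\line}(i) = f^x_{|\line}(\rfe1) - \chi_0(\rfe1) f^x_{|\line}(0) = f^x(\fingerprinttuple) - \chi_0(\rfe1)\fe1$ shows that the verifier's target matches the value to which the commitment binds, so the inner \textsf{decommit} accepts by completeness of \cref{thm:pv-algebraic-commitment}.

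\textbf{Soundness.} Assuming $f^x(\evaluationpoint) \neq \fe1$, I would split into cases. If $\fingerprinttuple = \evaluationpoint$, Step~1 rejects directly. Otherwise the prover's commitment $(\pcommitstring, \bm{\fe3}, k)$---which is sent before $\rfe1$ is sampled---implicitly defines a degree-$\degree\dimension$ polynomial $h$ via $h(0) \coloneqq \fe1$ and $h(i) \coloneqq \pcommitstring_{ik} + \bm{\fe3}_i$; since $h(0) \neq f^x_{|\line}(0)$, $h \neq f^x_{|\line}$. Schwartz-Zippel (\cref{lem:sz}) applied to $h - f^x_{|\line}$ then yields $\P_{\rfe1}[h(\rfe1) = f^x_{|\line}(\rfe1)] \leq \degree\dimension/(\fieldsize - \degree\dimension - 1) = o(1)$. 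Outside this event, the verifier's target differs from the value to which the commitment binds, so the inner \textsf{decommit} rejects with probability $1 - o(1)$ by \cref{thm:pv-algebraic-commitment}; a union bound yields overall soundness error $o(1)$.

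\textbf{Complexity bounds.} I would then perform a direct accounting. For space, $V$ stores $\fingerprinttuple, \Fingerprinttuple, \evaluationpoint \in \field^\dimension$, indices $\ell, k$ of bit length $O(\dimension \log \fieldsize)$, and a constant number of field elements---all within $O(\dimension \log\fieldsize)$ bits. The streaming computations of $f^x(\fingerprinttuple)$, $\hat\pcommitstring(\Fingerprinttuple, \bm{\chi}(\rfe1))$, and $\fe3$ each fit in this budget, with Lagrange coefficients computed on the fly; the inner \textsf{decommit} contributes the same order by \cref{thm:pv-algebraic-commitment}. For communication, the setup sends $\vcommitlength \cdot \dimension \log\fieldsize = O(\fieldsize^\dimension \dimension \log^2 \fieldsize)$ bits (using $\log\dimension + \log\log\fieldsize = O(\log\fieldsize)$ in the standard parameter regime), while the interactive phase is dominated by $\pcommitstring \in \field^{\degree\dimension \times \pcommitlength}$ for $O(\degree\dimension \cdot \pcommitlength \log\fieldsize) = O(\degree^4\dimension^5\fieldsize^3 \log\fieldsize)$ bits.

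The hard part is disentangling the binding bound of \cref{thm:pv-algebraic-commitment} (which is for a fixed coefficient vector $\bm{\chi}(\rfe1)$) from the Schwartz-Zippel bound (which averages over $\rfe1$). These combine cleanly only because the prover must commit \emph{before} $\rfe1$ is sampled, so the implicit polynomial $h$ is fixed independently of $\rfe1$; binding can then be applied pointwise for each $\rfe1$, and a final union bound over the two events yields the claim.
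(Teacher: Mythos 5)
Your proof proposal follows the same route as the paper's: establish completeness via the Lagrange identity relating the decommitted linear combination to $f^x(\fingerprinttuple) - \chi_0(\rfe1)\fe1$, prove soundness by combining a Schwartz–Zippel argument over $\rfe1$ (for a target/commitment mismatch) with the binding of the algebraic commitment (which is itself Schwartz–Zippel over the decommitment line), and then account for the costs. Your packaging of soundness---introducing the implicit polynomial $h$, applying \cref{lem:sz} to $h - f^x_{|\line}$, and then invoking the binding property of \cref{thm:pv-algebraic-commitment} as a black box---is a modest restructuring of the paper's three-case analysis on the decommitment polynomial $g$, but the two decompositions are logically equivalent: your Schwartz–Zippel step over $\rfe1$ is the paper's cases 1 and 3, and the black-box binding appeal is the paper's case 2.

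One factual error is worth flagging, since you repeat it and lean on it in the ``hard part'' paragraph. You write that the commitment $(\pcommitstring, \bm{\fe3}, k)$ is ``sent before $\rfe1$ is sampled.'' This is not the order of events in \cref{prot:zk-pep}: the verifier samples $\rfe1$ at the start of Step 2 and uses it to form the line $\line$, which it then sends; only after receiving $\line$ does the prover commit. The correct statement is that the prover commits \emph{without knowledge of} $\rfe1$---it sees $\line$ but not $\fingerprinttuple$ or $\rfe1$. This distinction matters: since $\line$ is a function of $\rfe1$ and the prover's commitment is a function of $\line$, the commitment is not unconditionally independent of $\rfe1$, and the Schwartz–Zippel bound requires a small argument (accounting over $(\fingerprinttuple, \rfe1)$ pairs grouped by the line they induce) rather than a direct application. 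Your conclusion and the stated bound $\degree\dimension/(\fieldsize - \degree\dimension - 1)$ are correct, and the paper glosses over this subtlety in similar language, but the reason you give (``sent before $\rfe1$ is sampled'') is simply false of the protocol and would mislead a reader into thinking no such argument is needed.
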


\begin{proof}
    We will prove completeness then soundness, and compute the complexities last.
    
    \paragraph*{Completeness\iflipics\else.\fi} The verifier only aborts in \cref{step:pep-temporal-commit} (the setup) if $\fingerprinttuple$ is not among the $\vcommitlength > \fieldsize^\dimension \log\log \fieldsize$ random tuples sent by the prover, an event with probability $(1 - 1/\fieldsize^\dimension)^\vcommitlength \leq e^{-\vcommitlength/\fieldsize^\dimension} = o(1)$. Otherwise, since the prover behaves honestly, in \cref{step:pep-algebraic-commit} (the algebraic commitment) we have
    \begin{equation*}
        \pcommitstring_{ik} = f^x_{| \line}(i) - \bm{\fe3}_i
    \end{equation*}
    for all $i \in [\degree \dimension]$.
    
    Let $w = \bm{\chi}(\rfe1) \cdot \pcommitstring = \sum_{i = 1}^{\degree\dimension} \chi_i(\rfe1) \pcommitstring_i \in \field^\pcommitlength$ and $\hat w: \field^\dimension \to \field$ be its $\dimension$-variate LDE. Recall that, in $\textsf{decommit}\big(f^x(\fingerprinttuple) - \chi_0(\rfe1) \fe1, w, k\big)$ (\cref{prot:decommit}), with correction $\fe3$ and fingerprint $\hat \pcommitstring(\Fingerprinttuple, \bm{\chi}(\rfe1))$, the verifier sends a line $\line': \field \to \field^\dimension$ with $\line'(0) = k$, $\line'(\rfe2) = \Fingerprinttuple$, receives $\hat w_{|\line'}$ and makes two checks: that $\hat w_{|\line'}(\rfe2)$ matches the fingerprint and that $\hat w(0) + \fe3 = f^x(\fingerprinttuple) - \chi_0(\rfe1) \fe1$. Since
    \begin{equation*}
        \hat w_{|\line'}(\rfe2) = \hat w(\Fingerprinttuple) = \sum_{i = 1}^{\degree\dimension} \chi_i(\rfe1) \hat \pcommitstring_i(\Fingerprinttuple) = \hat \pcommitstring\big(\Fingerprinttuple,  \bm{\chi}(\rfe1)\big)
    \end{equation*}
    and
    \begin{align*}
        \hat w(0) + \fe3 = w_k + \fe3 &= \sum_{i = 1}^{\degree \dimension} \chi_i(\rfe1) \big(\pcommitstring_{ik} + \bm{\fe3}_i\big)\\
        &= \sum_{i = 1}^{\degree \dimension} \chi_i(\rfe1) f^x_{|\line}(i)\\
        &= f^x(\fingerprinttuple) - \chi_0(\rfe1) f^x(\evaluationpoint)\\
        &= f^x(\fingerprinttuple) - \chi_0(\rfe1) \fe1,
    \end{align*}
    the verifier accepts when $P$ is honest except with probability $o(1)$.

    \paragraph*{Soundness\iflipics\else.\fi} First, note that if $\fingerprinttuple \notin \set{\vcommitstring_i : i \in [\vcommitlength]}$, the verifier rejects already in \cref{step:pep-temporal-commit}. We can thus assume the tuple $\fingerprinttuple$ equals some coordinate in $\vcommitstring$, and, since the string and tuple are independent random variables, the distribution of $\fingerprinttuple$ is still uniform conditioned on this event. (We may also assume that $\fingerprinttuple \neq \evaluationpoint$, since otherwise $V$ also rejects regardless of the prover's behaviour.)

    The only other point where $V$ may reject is \cref{step:pep-algebraic-decommit} (the algebraic decommitment). Once again, recall that $V$ sends the prover a line $\line'$ with $\line'(0) = k$, $\line'(\rfe2) = \Fingerprinttuple$ where $\rfe2 \sim \field$ and $P$ replies with a degree-$\degree\dimension$ polynomial $g: \field \to \field$ that is allegedly $\hat w_{|\line'}$. The verifier then checks that $g(\rfe2) = \hat \pcommitstring\big(\Fingerprinttuple, \bm{\chi}(\rfe1)\big) = \hat w_{|\line'}(\rfe2)$ and $g(0) + \fe3 = f^x(\fingerprinttuple) - \chi_0(\rfe1) \fe1$, rejecting if either equality fails to hold.
    
    We now analyse three cases: first, suppose that $g = \hat w_{|\line'}$. Then the first check passes but
   \begin{align*}
        g(0) + \fe3 &= w_k + \fe3\\
        &= f^x(\fingerprinttuple) - \chi_0(\rfe1) f^x(\evaluationpoint)\\
        &\neq f^x(\fingerprinttuple) - \chi_0(\rfe1) \fe1,
   \end{align*}
   so the verifier rejects (with probability 1).

    Suppose, now, that $g(0) \neq \hat w_{|\line'}(0)$. Then \cref{lem:sz} (Schwartz-Zippel) implies $g(\rfe2) \neq \hat w_{|\line'}(\rfe2)$, so the verifier rejects, except with probability $\degree\dimension/\fieldsize = o(1)$.
    
    Finally, suppose that $g \neq \hat w_{|\line'}$ but $g(0) = \hat w(0) = \sum_{i = 1}^{\degree\dimension}\chi_i(\rfe1) \pcommitstring_{ik}$. Then either the first check fails, i.e., $g(\rfe2) \neq \hat \pcommitstring\big(\Fingerprinttuple, \bm{\chi}(\rfe1)\big)$, and $V$ rejects; or $g(\rfe2) = \hat \pcommitstring\big(\Fingerprinttuple, \bm{\chi}(\rfe1)\big)$, and the second check passes if
    \begin{align*}
        g(0) + \fe3 &= \sum_{i = 1}^{\degree\dimension}\chi_i(\rfe1)\big( \pcommitstring_{ik} + \bm{\fe3}_i\big)
    \end{align*}
    is equal to
    \begin{equation*}
        f^x(\fingerprinttuple) - \chi_0(\rfe1) \fe1 = \chi_0(\rfe1) \big(f^x(\evaluationpoint) - \fe1\big) + \sum_{i = 1}^{\degree\dimension}\chi_i(\rfe1) f^x_{|\line}(i).
    \end{equation*}
    Rearranging, the second check corresponds to the following equation:
    \begin{equation*}
        \chi_0(\rfe1) \big(f^x(\evaluationpoint) - \fe1\big) + \sum_{i = 1}^{\degree\dimension} \chi_i(\rfe1)\left(f^x_{|\line}(i) - \bm{\fe3}_i - \pcommitstring_{ik}\right) = 0.
    \end{equation*}
    
    Now, consider the left-hand side of the equation as a polynomial in $\rfe1$: plugging in $0$ for the variable $\rfe1$ evaluates to $f^x(\evaluationpoint) - \fe1 \neq 0$, so that it is a nonzero polynomial; and, crucially, $\rfe1$ was sampled uniformly (from $\field^\times \setminus [\degree\dimension]$) and independently of the communication (in particular, of $\pcommitstring$ and $\bm{\fe3}$) by $V$. By \cref{lem:sz} lemma once again, the equation is satisfied with probability at most $\degree\dimension/(\fieldsize - \degree\dimension - 1) = o(1)$ and soundness follows.

    \paragraph*{Communication complexity\iflipics\else.\fi} Most of the communication occurs in \cref{step:pep-temporal-commit,step:pep-algebraic-commit} (the commitments), which communicate
    \begin{align*}
        O\big(\fieldsize^\dimension (\log \dimension + \log\log \fieldsize) \dimension \log \fieldsize\big) &= O\left(\fieldsize^{\dimension} \dimension \log^2 \fieldsize\right) \qquad \text{and}\\
        O\left(\pcommitlength \degree\dimension \log \fieldsize\right) &= O\left(\degree^4\dimension^5\fieldsize^3 \log \fieldsize\right)
    \end{align*}
    bits, respectively. (The communication in other steps is significantly smaller: \cref{step:pep-input} has none, while \cref{step:pep-temporal-decommit,step:pep-algebraic-decommit} communicate $\dimension \log \fieldsize + \log \vcommitlength = O(\dimension \log \fieldsize)$ and $O(\degree\dimension \log \fieldsize)$ bits, respectively.)

    \paragraph*{Space complexity\iflipics\else.\fi} Apart from a constant number of elements of $\field$ (requiring $O(\log \fieldsize)$ bits), the verifier stores $\ell \in [\vcommitlength]$, $k \in [\pcommitlength]$ and $\fingerprinttuple, \Fingerprinttuple \in \field^\dimension$. Since $\vcommitlength \geq \pcommitlength$, the space complexity is dominated by $\ell$ and $\fingerprinttuple, \Fingerprinttuple$. Since storing $\ell$ requires $\log \vcommitlength = O(\dimension \log \fieldsize)$ bits (as does computing $f^x(\fingerprinttuple)$) while $\fingerprinttuple$ and $\Fingerprinttuple$ require $\dimension \log \fieldsize$ bits each, the space complexity follows.
\end{proof}

\subsection{Zero-knowledge}

Having shown that \textsf{zk-pep} is a valid streaming interactive proof, we now show it is also zero-knowledge.

\begin{theorem}
    \label{thm:pep-zk}
    \cref{prot:zk-pep} is zero-knowledge, secure against distinguishers with space $\degree\dimension^2 \polylog(\fieldsize)$. The simulator runs in $O\big((\degree + \dimension \log \fieldsize) \dimension \log \fieldsize\big) = O\big(\degree \dimension^2 \log^2 \fieldsize\big)$ space.
\end{theorem}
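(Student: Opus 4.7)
The plan is to build a streaming simulator $S$ whose output is indistinguishable from the view of any space-$s$ malicious verifier $\widetilde V$. During the setup, $S$ faithfully streams $\vcommitstring \sim (\field^\dimension)^\vcommitlength$ to $\widetilde V$, exactly as the honest prover does; immediately after, $S$ uses white-box access (\cref{def:whitebox-max}) to $\widetilde V$'s snapshot to extract the size-$s$ set $C \subset \field^\dimension \times [\vcommitlength]$ guaranteed by \cref{thm:correct-set}, which captures -- except on an event of probability $o(1)$ over later coins -- every pair $(\fingerprinttuple,\ell)$ that $\widetilde V$ could subsequently produce as a valid temporal decommitment. While $\widetilde V$ streams the input, $S$ reads $x$ in parallel and incrementally computes $\hat x(\fingerprinttuple)$ for each $\fingerprinttuple$ appearing in the first coordinate of $C$. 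Once $\widetilde V$ has sent the line $\line$ in \cref{step:pep-algebraic-commit}, $S$ picks any degree-$\degree\dimension$ univariate polynomial $g \colon \field \to \field$ satisfying $g(0) = \fe1$ and $g(\line^{-1}(\fingerprinttuple)) = \hat x(\fingerprinttuple)$ for every $\fingerprinttuple \in \line$ with $(\fingerprinttuple,\ell) \in C$, and then runs \textsf{algebraic-commit} with the tuple $\bigl(g(i) : i \in [\degree\dimension]\bigr)$ substituted for the true $\bigl(f^x_{|\line}(i) : i \in [\degree\dimension]\bigr)$. In \cref{step:pep-algebraic-decommit}, if $\widetilde V$'s revealed pair lies in $C$, then by construction $\sum_i \chi_i(\rfe1) g(i) = g(\rfe1) = \hat x(\fingerprinttuple)$ coincides with the identity that holds in a real execution, so $S$ proceeds with \textsf{decommit} honestly; otherwise $S$ aborts precisely as the honest prover does.

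\textbf{Indistinguishability.} Conditional on the good event of \cref{thm:correct-set}, the only coordinate of the view in which the simulated and real transcripts differ is the correction $\bm{\fe3}$, since the final decommitment message $\hat w_{|\line'}$ depends only on $\pcommitstring$, $\rfe1$ and the verifier's line $\line'$, none of which involve $\bm{\fe3}$. Moreover, the acceptance check in \cref{step:pep-algebraic-decommit} passes identically in both worlds by the construction of $g$. Consequently, any space-$s'$ streaming distinguisher $D$ between real and simulated views with bias $\eps$ yields (by a straightforward deterministic wrapper that re-samples the remaining random choices) a space-$s'$ distinguisher, with the same bias $\eps$, between an algebraic commitment to $\bigl(f^x_{|\line}(i)\bigr)_i$ and one to $\bigl(g(i)\bigr)_i$ -- two fixed tuples in $\field^{\degree\dimension}$.

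\textbf{Reduction to \indexproblem (the main obstacle).} Bounding this remaining bias is the crux of the proof. A hybrid through a uniformly random tuple $\bm{\tau} \sim \field^{\degree\dimension}$ reduces distinguishing two fixed tuples to distinguishing a commitment to a single fixed tuple from a commitment to a uniformly random one, which is exactly the hypothesis of \cref{lem:string-to-bit-index}. That lemma then converts such an $\eps$-bias distinguisher into a one-way protocol for \emph{binary} \indexproblem over $\pcommitlength$-bit strings communicating $O(\degree^2\dimension^2 \cdot s' \cdot \log^2\fieldsize/\eps^2)$ bits with constant advantage. Applying \cref{prop:index-hardness} forces this message length to be $\Omega(\pcommitlength) = \Omega(\degree^3\dimension^4\fieldsize^3)$, so $\eps = o(1)$ whenever $s' = \degree\dimension^2\polylog(\fieldsize)$. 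Routing the reduction through the \emph{binary} alphabet via \cref{lem:string-to-bit-index} is essential: a direct reduction over the exponentially large alphabet $\field^{\degree\dimension}$ would demand superpolynomial $\pcommitlength$ to yield a meaningful bound, defeating the setup's near-linear communication.

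\textbf{Simulator space.} Storing $C$ uses $|C| \cdot (\dimension\log\fieldsize + \log\vcommitlength) = O(\dimension^2\log^2\fieldsize)$ bits, the associated evaluations $\{\hat x(\fingerprinttuple)\}_{\fingerprinttuple \in C}$ add $O(\dimension\log^2\fieldsize)$ bits, the polynomial $g$ uses $O(\degree\dimension\log\fieldsize)$ bits, and running $\widetilde V$ as a subroutine contributes another $O(\dimension\log\fieldsize)$ bits. Summing yields the claimed $O\bigl((\degree + \dimension\log\fieldsize)\dimension\log\fieldsize\bigr)$ bound.
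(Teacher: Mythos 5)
Your simulator construction and space accounting match the paper's (with one small slip: the paper samples $g$ \emph{uniformly at random} subject to the constraints from $C$, not arbitrarily; this randomness is used downstream so that a random correction $\bm{\eta}$ on the \indexproblem side maps to a random constrained $g$). The serious problem is in the indistinguishability argument.

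You claim that because the real and simulated views differ only in $\bm{\fe3}$, a ``straightforward deterministic wrapper that re-samples the remaining random choices'' converts a view-distinguisher $D$ into a commitment-distinguisher. This step fails, and it is precisely the technical crux of the proof. The final decommitment message $\hat w_{|\line'}$, where $w = \bm{\chi}(\rfe1)\cdot\pcommitstring$, is a \emph{function of $\pcommitstring$} that the wrapper must reproduce to feed $D$ a complete view. But the linear combination $\sum_i \chi_i(\rfe1)\pcommitstring_i \in \field^\pcommitlength$ depends on $\rfe1 = \line^{-1}(\fingerprinttuple)$, which is only revealed in the temporal decommitment — \emph{after} $\pcommitstring$ has already been streamed. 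A space-bounded streaming algorithm cannot recall $\pcommitstring$ once it has passed, and it cannot know the right linear coefficients $\bm{\chi}(\rfe1)$ in advance; hence it cannot compute $\hat w_{|\line'}$. Splitting into Alice (who sees $\pcommitstring$) and Bob (who sees $\rfe1$ and $\line'$) in the one-way reduction makes the same problem visible: Bob must produce evaluations of $\hat w$ without holding $\pcommitstring$.

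The paper circumvents this by making Alice \emph{guess} $\rfe1' \sim \field^\times \setminus [\degree\dimension]$ in advance and then choosing the last row $\pcommitstring_{\degree\dimension}$ so that $\sum_i \chi_i(\rfe1')\pcommitstring_i = t$ for a shared random string $t$. Then, conditioned on the ($1/\fieldsize$-probability) event $\rfe1 = \rfe1'$, Bob knows the needed linear combination (it equals $t$) and can compute the final round; in the complementary event the protocol outputs a coin flip. This costs a $\Theta(\fieldsize)$ factor in the bias of the one-way protocol, which must then be absorbed by the parameter choice $\pcommitlength = \dimension(\degree\dimension\fieldsize)^3$ — a factor you also omit from the message-length accounting when invoking \cref{lem:string-to-bit-index} and \cref{prop:index-hardness}. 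There is also a conditioning step you elide: before running the reduction, the paper fixes a transcript prefix (in particular $\vcommitstring$, $\line$, and the verifier's coins up to that point) that simultaneously retains $\eps/2$ distinguishing advantage and has $o(1)$ simulation-failure probability; without this, $\ell = |\line \cap C|$ and the constrained coordinates are not fixed and Alice's insertion of $w$ is ill-defined.

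So the overall architecture is right, but the reduction is not a one-line wrapper: making the decommitment round simulable by the one-way protocol's receiver, without access to $\pcommitstring$, is a genuine obstacle and requires the shared-randomness/constraint trick plus a $1/\fieldsize$ conditioning loss.
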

\begin{proof}
    Recall that an SIP with a space-$s$ verifier is zero-knowledge against $\degree\dimension^2 \polylog(\fieldsize)$-space distinguishers if there exists a streaming simulator $S$ that satisfies the following. For any space-$s$ (honest or malicious) verifier $\widetilde V$ and input $(x, \evaluationpoint)$ where $f^x(\evaluationpoint) = \fe1$, given whitebox access to $\widetilde V$ the simulator $S$ produces a view that is indistinguishable to a $\degree\dimension^2 \polylog(\fieldsize)$-space (streaming) algorithm from the view generated by an interaction of $\widetilde V$ with the honest prover. Note that $\widetilde V$ can be simulated in space $O(s)$, so the space complexity of the statement suffices to simulate the verifier of \cref{prot:zk-pep} since  $s = O(\dimension \log \fieldsize)$.

    The simulator interprets its read-only random bit string as $(\vcommitstring, \pcommitstring)$ with $\vcommitstring \sim \field^\vcommitlength$ and $\pcommitstring \sim \field^{\degree \dimension \times \pcommitlength}$ (so that $\vcommitlength \dimension \log \fieldsize + \pcommitlength \degree \dimension \log \fieldsize \leq \fieldsize^{\dimension + 8}$ bits suffice and an algorithm with $(\dimension + 8) \log \fieldsize$ space can address into this string). This pair will be used to simulate prover messages, whereas the simulation of $\widetilde V$ will use a source of randomness that cannot be reread (but has unbounded length). In the description that follows, as well as the more succinct one in \cref{alg:pep-simulator}, recall that $\widetilde V$ is assumed to only output a decision at the end of the protocol (so that, if it decides to reject in the middle, it continues the protocol with dummy messages); and likewise if $S$ (or $P$) aborts.

    In the setup, \cref{step:pep-simulator-temporal-commit} (the temporal commitment), $S$ simulates $\widetilde V(\vcommitstring)$. Then, using the snapshot of the verifier's memory and its whitebox access to $\widetilde V$, the simulator finds the set $C$ of $s$ elements of $\field^\dimension$ that $\widetilde V$ may successfully decommit to with the largest probabilities. More precisely, $S$ calls the whitebox oracle $\whiteboxoracle$ (see \cref{def:whitebox-max}) on the algorithm that corresponds to the verifier immediately before streaming $x$, with initial memory state equal to the current snapshot $\snapshot \in \bitset^s$, and whose output is a pair $(\fingerprinttuple, \ell)$ at \cref{step:simulator-pep-temporal-decommit} (ignoring $\line$, the intermediate output at \cref{step:simulator-pep-algebraic-commit}).
    
    $S$ initialises a(n empty) sorted list of message-probability pairs in $\field^\dimension \times [\vcommitlength] \times [0,1]$, and, for all $\ell \in [\vcommitlength]$, uses its oracle access to both $\vcommitstring$ and $\whiteboxoracle$ to find $\mu_\ell \coloneqq \whiteboxoracle\big(\snapshot,(\vcommitstring_\ell, \ell)\big)$. If the size of the list is smaller than $s$, or $\mu_\ell$ is larger than the smallest probability in it, $S$ adds $(z_\ell, \ell, \mu_\ell)$ to it (and removes the tuple with the smallest $\mu_{\ell'}$ if the size of the resulting would have exceeded $s$). 
    
    This yields the set $C \subset \field^\dimension \times [\vcommitlength]$ with the $s$ most likely correct decommitments of $\widetilde V$. Since the string $\vcommitstring$ is over the alphabet $\field^\dimension$, whose size satisfies
    \begin{equation*}
        \frac{\vcommitlength}{\log\log \vcommitlength} = \frac{\fieldsize^\dimension (\log \dimension + \log\log \fieldsize)}{32 \log \big(\dimension \log \fieldsize + \log(\log \dimension + \log\log \fieldsize) - 5\big)} \leq \frac{\fieldsize^\dimension}{32},
    \end{equation*}
    $\fieldsize^\dimension = \Theta(\vcommitlength / \log\log \vcommitlength)$ as well as $s \geq \log \pcommitlength = \Theta(\log \fieldsize)$ and $s = \polylog(\pcommitlength)$, \cref{thm:correct-set} applies for this parameter setting. This ensures that, except with probability $o(1)$, the verifier $\widetilde V$ will output either $(z_\ell, \ell) \in C$ or an incorrect $(\fingerprinttuple, \ell)$ with $\vcommitstring_\ell \neq \fingerprinttuple$ in its decommitment at \cref{step:simulator-pep-temporal-decommit}. 
    
    Then $S$ proceeds to \cref{step:pep-simulator-input}, where it simulates $\widetilde V(x)$ and, with $F \coloneqq \set{\vcommitstring_i : (\vcommitstring_i, i) \in C}$, computes $f^x(\fingerprinttuple)$ \emph{for every} $\fingerprinttuple \in F$.
    At the start of \cref{step:simulator-pep-algebraic-commit} (the algebraic commitment), $\widetilde V$ sends a line $\line$. The simulator inspects the intersection of $L$ (viewed as a set) with the set of fingerprints $F$ and computes a random degree-$\degree\dimension$ polynomial $g$ subject to the constraints $g(\fe2) = f^x_{|\line}(\fe2) = f^x\big(\line(\fe2)\big)$ for all $\fe2 \in \line^{-1}(F)$.\footnote{Knowledge of $f^x(\fingerprinttuple)$ for all $\fingerprinttuple \in F$ enables the simulator to sample from this distribution: $F$ fixes $\abs{\line \cap F}$ evaluations, and the simulator sets the $\degree\dimension - \abs{\line \cap F}$ remaining ones uniformly.} Note that the description of $g$ is comprised of $O(\degree\dimension)$ field elements.
    
    $S$ samples $k \sim [\pcommitlength]$ then simulates $\widetilde V$ streaming $\pcommitstring$ followed by $\bm{\fe3}_i = g(i) - \pcommitstring_{ik}$ for all $i \in [\degree \dimension]$ and $k$; note that $S$ is able to compute all $\bm{\fe3}_i$ from the description of $g$ combined with its oracle access to $\pcommitstring$.

    There is no prover-to-verifier communication in \cref{step:simulator-pep-temporal-decommit} (the temporal decommitment), so $S$ simulates $\widetilde V$ until the verifier sends a tuple $\fingerprinttuple \in \field^\dimension$ and an index $\ell \in [\vcommitlength]$. The simulator then checks that $\vcommitstring_\ell = \fingerprinttuple \in \line$ and $\rfe1 \coloneqq \line^{-1}(\fingerprinttuple) \in \field^\times \setminus [\degree\dimension]$; if not, then $S$ aborts (as $P$ would).
    
    Finally, in \cref{step:pep-simulator-algebraic-decommit} (the algebraic decommitment), $S$ simulates $\widetilde V$ until it sends a line $\line': \field \to \field^\dimension$. The only remaining part of the verifier's view left to generate are the evaluations of  of the polynomial $\sum_{i \in [\degree\dimension]} \chi_i(\rfe1) \hat \pcommitstring_i \circ \line'$ for all points in $[\degree\dimension]$. These are computed by $S$ in a streaming fashion using its oracle access to $\pcommitstring$.

    The space complexity of $S$ is dominated by the description of the polynomial $g$, which requires $O(\degree\dimension \log \fieldsize)$ bits, and by the set $C$ of $s = O(\dimension \log \fieldsize)$ elements of $\field^\dimension \times [\vcommitlength]$. Since each element requires $\dimension \log \fieldsize + \log \vcommitlength = O(\dimension \log \fieldsize)$ bits, the total space complexity is
    \begin{equation*}
        O(\degree\dimension \log \fieldsize + s\dimension \log \fieldsize) = O\left((\degree + \dimension \log \fieldsize) \dimension \log \fieldsize\right) = O(\degree \dimension^2 \log^2 \fieldsize),
    \end{equation*}
    as claimed. (Apart from $C$, the simulator stores $f^x(\fingerprinttuple) \in \field$ for every $\fingerprinttuple \in C$, which requires $s \log \fieldsize$ bits; and the lines $\line$, $\line'$ as well as $k$, which require $O(\log \fieldsize)$ bits each.) 
    
    \iflipics
    \begin{figure}
    \else
    \begin{algorithm}[label={alg:pep-simulator}]{Simulator for \cref{prot:zk-pep}}
    \fi
	    \textbf{Input:} Whitebox access to $\widetilde V$; oracle access to a length-$\fieldsize^{\dimension + 8}$ random bit string interpreted as $(\vcommitstring, \pcommitstring) \in \big(\field^\dimension\big)^\vcommitlength \times \field^{\degree\dimension \times \pcommitlength}$; streaming access to $(x, \evaluationpoint) \in \alphabet^n \times \field^\dimension$.\\
	    
	    \textbf{Output:} View $\big(\vcommitstring, x, \evaluationpoint, \pcommitstring, \bm{\fe3}, k, (h(i) : i \in [\degree\dimension])\big)$ with $k \in [\pcommitlength]$, $\bm{\fe3} \in \field^{\degree\dimension}$ and $h: \field \to \field$.

        \iflipics\else%
        \tcbline
        \vspace*{-.5\baselineskip}
        \fi%
        \begin{steps}[wide]
            \setcounter{stepsi}{-1}
	        \item\label{step:pep-simulator-temporal-commit} Temporal commitment
	        \begin{enumerate}[label={}, leftmargin=3em]
	            \item[$\bm S$:] Send $\vcommitstring$.
	            
	            \item[$\bm{\widetilde V}$:] Simulate until the end of this step and let $\snapshot \in \bitset^s$ be the resulting snapshot of $\widetilde V$. Use the whitebox oracle $\whiteboxoracle$ to determine the set $C \subset \set{(\vcommitstring_i, i) : i \in [\vcommitlength]}$ of size $s$ with the largest $\whiteboxoracle(\snapshot, (\vcommitstring_i, i))$. 
	        \end{enumerate}
            \iflipics\else%
        	\vspace*{-\baselineskip}   
            \tcbline
            \vspace*{-.5\baselineskip}
            \fi  
            
	        \item \label{step:pep-simulator-input} Input streaming
	        \begin{enumerate}[label={}, leftmargin=3em]
	            \item[$\bm{\widetilde V}$:] Stream $x$, computing and storing $f^x(\fingerprinttuple)$ for every $\fingerprinttuple \in \set{\vcommitstring_i : (\vcommitstring_i, i) \in C}$ while simulating the verifier.
	            
	            \item[$\bm S$:] Store $\evaluationpoint$.
            \end{enumerate}
            \iflipics\else%
        	\vspace*{-\baselineskip}   
            \tcbline
            \vspace*{-.5\baselineskip} 
            \fi
            
	        \item \label{step:simulator-pep-algebraic-commit} Algebraic commitment

	        \begin{enumerate}[label={}, leftmargin=3em]
	            \item[$\bm{\widetilde V}$:] Simulate until $\widetilde V$ sends a line $\line$, aborting if $\line(0) \neq \evaluationpoint$.
	            
	            \item[$\bm S$:] Sample a random polynomial $g: \field \to \field$ of degree at most $\degree\dimension$ subject to $g(0) = \fe1$ and $g(\fe2) = f^x\big(\line(\fe2)\big)$ for all $\fe2$ such that $(i, \line(\fe2)) \in C$ for some $i \in [\vcommitlength]$.
	            
	            Send $\pcommitstring$ followed by $\bm{\fe3} = \big(g(i) - \pcommitstring_{ik} : i \in [\degree\dimension]\big)$ and $k \sim [\pcommitlength]$.
	            
	            \item[$\bm{\widetilde V}$:] Simulate until the end of the step.
	           \end{enumerate}
            \iflipics\else%
        	\vspace*{-\baselineskip}   
            \tcbline
            \vspace*{-.5\baselineskip}
            \fi 
            
	        \item \label{step:simulator-pep-temporal-decommit} Temporal decommitment
	        \begin{enumerate}[label={}, leftmargin=3em]
	            \item[$\bm{\widetilde V}$:] Simulate until $\widetilde V$ sends $\fingerprinttuple \in \field^\dimension$ and $\ell \in [\vcommitlength]$.
	            
	            \item[$\bm S$:] Check that $\vcommitstring_\ell = \fingerprinttuple \in \line$ and $\rfe1 \in \field^\times \setminus [\degree\dimension]$, aborting if either check fails or $(\fingerprinttuple, \ell) \notin C$.
            \end{enumerate}
            \iflipics\else%
        	\vspace*{-\baselineskip}   
            \tcbline
            \vspace*{-.5\baselineskip} 
            \fi
            
            \item \label{step:pep-simulator-algebraic-decommit} Algebraic decommitment
            \begin{enumerate}[label={}, leftmargin=3em]
	            \item[$\bm{\widetilde V}$:] Simulate until $\widetilde V$ sends a line $\line': \field \to \field^\dimension$, aborting if $\line'(0) \neq k$.
	            
	            \item[$\bm S$:] Set $\rfe1 \coloneqq \line^{-1}(\fingerprinttuple)$ and send $\left(\sum_{i = 1}^{\degree\dimension} \chi_i(\rfe1) \cdot \hat \pcommitstring_i \circ \line'(j) : j \in [\degree\dimension]\right)$.
	        \end{enumerate}
	    \end{steps}
    \iflipics
    \caption{Simulator for \cref{prot:zk-pep}}
    \label{alg:pep-simulator}
    \end{figure}
    \else
    \end{algorithm}
    \fi

    Now, all that remains is to prove indistinguishability by space-$s'$ streaming algorithms between the output of $S$ and a real transcript, for some $s'$ comparable to the space complexities of the verifier and simulator. The following claim proves this with $s' = \degree\dimension^2 \polylog(\fieldsize)$ (which is larger than both).

    \begin{claim}
        \label{clm:indistinguishability}
        Fix $\fe1 \in \field$ and $f$ as in the definition of \pep, an input $(x, \evaluationpoint) \in \field^n \times \field^\dimension$, a bit string $r$ of arbitrary length and a $O(\dimension \log \fieldsize)$-space verifier algorithm $\widetilde V$. Let $D$ be a streaming algorithm with space $\degree\dimension^2 \polylog(\fieldsize)$ such that
        \begin{equation*}
            \P\left[D\left(\view_{P,\widetilde V}(x, r)\right) \text{ accepts}\right] - \P\left[D\left(S\left(\widetilde V, x, r\right)\right) \text{ accepts}\right] = \eps,
        \end{equation*}
        with $\view_{P,\widetilde V}(x, r)$ a view of \cref{prot:zk-pep} and $S\left(\widetilde V, x, r\right)$ output by \cref{alg:pep-simulator}. Then $\eps = o(1)$.
    \end{claim}
    Assume, towards contradiction, that there exist $\fe1$, $f$, an input $(x, \evaluationpoint) \in \field^n \times \field^\dimension$, a streaming verifier $\widetilde V$ with $O(\dimension \log \fieldsize)$ space and a (streaming) distinguisher $D$ with $\degree\dimension^2 \polylog(\fieldsize)$ space such that $D$ distinguishes real transcripts of $\textsf{zk-pep}(f, \fe1)$ from simulations with bias $\eps = \Omega(1)$ when the input is $(x, \evaluationpoint)$.
    
    Recall that we assume that $\widetilde V$ rejects only after receiving all messages from $P$; therefore, the algebraic commitment $(\pcommitstring, \bm{\fe3}, k)$ is always present in both real and simulated views. Our goal is to show $D$ implies a one-way protocol for \indexproblem over the binary alphabet with a small message and a large bias, using \cref{lem:string-to-bit-index}. We do so by constructing, from $D$, a one-way communication protocol that distinguishes algebraic commitments to a fixed message $\bm{\fe1} \in \field^\ell$ from algebraic commitments to a random $\bm{\fe1}' \in \field^\ell$, where $\ell \leq \degree \dimension$.
    
    As both the real and simulated transcripts are identically distributed up to (and including) the verifier's message in \cref{step:simulator-pep-algebraic-commit}, the expected distinguishing advantage and probability of a simulation failure (i.e., of an abortion in \cref{step:simulator-pep-temporal-decommit} due to $(\fingerprinttuple, \ell) \notin C$) are $\eps$ and $o(1)$, respectively (over $\vcommitstring$ and the bits of the verifier randomness $r$ used until then). Therefore, there exists a fixed prefix of the transcript that retains distinguishing advantage $\eps/2$ and whose probability of a simulation failure is $o(1)$; indeed, at least an $\eps/2$ fraction of prefixes retains advantage $\eps/2$ and at most an $o(1)$ fraction yields simulation failures with $\Omega(1)$ probability, so an $\eps/2 - o(1)$ fraction of prefixes work. We thus assume, in the one-way protocol we define next, not only $x$ and $\evaluationpoint$ to be fixed, but also the line $\line$ and $\vcommitstring$ -- and, consequently, the set $C \subset \set{(\vcommitstring_i, i) : i \in [\vcommitlength]}$ (as well as the corresponding $f^x(\vcommitstring_i)$) that captures most of the weight of correct tuples $\widetilde V$ may decommit to, as given by \cref{thm:correct-set}.

    Viewing $\line$ as the set of pairs $\set{(\line(\rfe2), \rfe2) : \rfe2 \in \field}$, define $\ell \coloneqq \degree\dimension - \abs{\line \cap C}$ and assume,\footnote{Note that when $\abs{\line \cap C} \geq \degree\dimension$ the simulator knows the entirety of $f^x_{|\line}$, in which case the distinguishing bias is $0$. Nonzero bias thus implies $\degree\dimension > \abs{\line \cap C}$.} without loss of generality, that $\line \cap C = [\degree\dimension] \setminus [\ell]$.
    Consider the following one-way communication protocol with shared randomness (for strings $w$ of length $\pcommitlength$) that distinguishes a commitment $(w, k, \bm{\eta})$ to $\big(f^x(i) : i \in [\ell]\big)$ from a commitment to a random message: Alice uses $S$ to simulate an interaction between $P$ and $\widetilde V$ with input $(x, \evaluationpoint)$ and verifier randomness $r$, executing $D$ on the (partial and fixed) transcript thus obtained, until $\widetilde V$ sends a line $\line: \field \to \field^\dimension$ in \cref{step:simulator-pep-algebraic-commit}.
    
    Alice samples $\rfe1' \sim \field^\times \setminus [\degree\dimension]$ and continues the simulation of $D$ by feeding it $\pcommitstring \in \field^{\degree\dimension \times \pcommitlength}$ defined as follows: $\pcommitstring_i \coloneqq w_i$ for $i \in [\ell]$, $\pcommitstring_i \sim \field^\pcommitlength$ for $\ell < i < \degree\dimension$ and 
    \begin{equation*}
        \pcommitstring_{\degree\dimension} \coloneqq \chi_{\degree\dimension}(\rfe1')^{-1} \cdot \left(t - \sum_{i = 1}^{\degree\dimension - 1} \chi_i(\rfe1') \pcommitstring_i\right),
    \end{equation*}
    where $\pcommitstring_{\ell + 1}, \ldots, \pcommitstring_{\degree\dimension - 1}$ and $t$ are random strings (in $\field^\pcommitlength$) shared with Bob. Note that $\rfe1' \notin \{0\} \cup [\degree\dimension]$ implies $\chi_{\degree\dimension}(\rfe1') \neq 0$, so that $\pcommitstring_{\degree\dimension}$ is well-defined. (See \cref{fig:index-reduction} for a diagram of the reduction.)
    
    After simulating $\widetilde V$, $D$ and $S$ in \cref{step:simulator-pep-algebraic-commit} with $\pcommitstring$, she sends Bob all three snapshots as well as $\line$ and $\rfe1'$ in a $\degree\dimension^2 \polylog(\fieldsize)$-bit message.\footnote{We assume Bob receives the tuple $\bm{\eta}$ and reads $C$ along with the corresponding evaluations from the simulator's snapshot; alternatively, Alice could send this information in a message that is asymptotically no larger.} (The space complexities of $\widetilde V$ and $S$ are both dominated by the distinguisher's.)
    
    \begin{figure}
	\centering
	\includegraphics[width=0.9\textwidth]{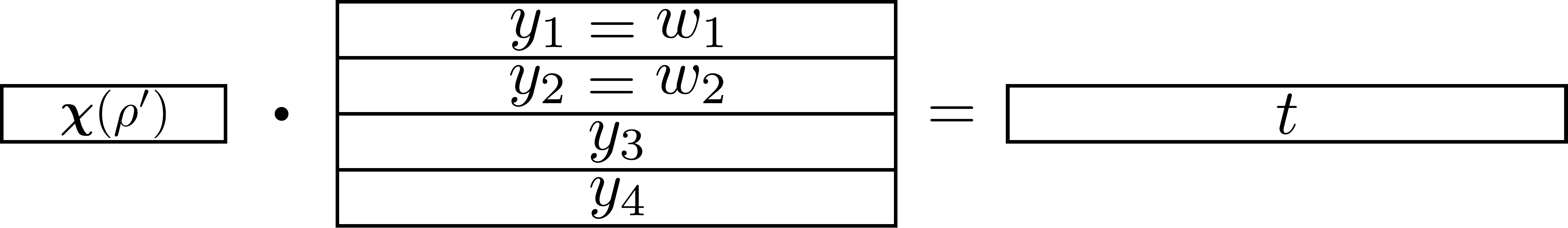}
	\caption{Reduction from \indexproblem to distinguishability of views when $\ell = 3$ and $\degree\dimension = 4$. The instance $w$ is inserted into the first $2$ rows of $\pcommitstring$, while $\pcommitstring_3$ is filled in with joint randomness and $\pcommitstring_4$ is the solution of the linear system shown in the diagram.}
    \label{fig:index-reduction}
    \end{figure}

    Bob, in turn, finishes the simulation of \cref{step:simulator-pep-algebraic-commit} with his (random) index $k \in [\pcommitlength]$ and the correction tuple $\bm{\fe3}$ defined as follows:\footnote{Recall that all $\pcommitstring_i$ for all $\ell < i < \degree\dimension$ are contained in Alice and Bob's shared randomness.}
    \begin{equation*}
        \bm{\fe3}_i = \left\{\begin{array}{ll}\bm{\eta}_i, & \text{if } i \leq \ell \\ \hat x(i) - \pcommitstring_{ik},\quad & \text{if } \ell < i < \degree\dimension\end{array}\right.
    \end{equation*}
    and
    \begin{equation*}
        \bm{\fe3}_{\degree\dimension} \coloneqq \chi_{\degree\dimension}(\rfe1')^{-1} \left(f^x_{|\line}(\rfe1') - \chi_0(\rfe1') \fe1 - t_k - \sum_{i = 1}^{\degree\dimension - 1} \chi_i(\rfe1') \bm{\fe3}_i\right).
    \end{equation*}
    Bob proceeds to simulate \cref{step:simulator-pep-temporal-decommit,step:pep-simulator-algebraic-decommit}, using $S$ to generate the remainder of the view. Note that in the former step $(\fingerprinttuple, \ell) \notin C$ is the only case in which $S$ aborts when $P$ would not, which identifies a simulated transcript with certainty; but this is a small-probability event. When $S$ fails (i.e., $(\fingerprinttuple, \ell) \notin C$) or the field element $\rfe1 = \line^{-1}(\fingerprinttuple)$ is not equal to $\rfe1'$, Bob halts the simulations and accepts or rejects uniformly; otherwise, he finishes the transcript by sending the low-degree polynomial that comprises the last round. This is possible because, while Bob does \emph{not} know all $\hat \pcommitstring_i$, he does know the required linear combination:
    \begin{align*}
        \sum_{i = 1}^{\degree\dimension} \chi_i(\rfe1) \cdot \pcommitstring_i &= \sum_{i = 1}^{\degree\dimension - 1} \chi_i(\rfe1) \cdot \pcommitstring_i + \chi_{\degree\dimension}(\rfe1) \cdot \chi_{\degree\dimension}(\rfe1)^{-1} \left(t - \sum_{i = 1}^{\degree\dimension - 1} \chi_i(\rfe1) \pcommitstring_i\right)\\
        &= t,
    \end{align*}
    and since $t$ is a (random) string known to both Alice and Bob, in particular he can compute $\hat t_{\line'}$ for any line $\line': \field^\dimension \to \field$.
    
    Finally, Bob inspects the output of $D$ and chooses his output accordingly, accepting if and only if $D$ accepts. Note that this one-way protocol succeeds
    \begin{itemize}[noitemsep]
        \item with probability $1/2$ (and thus bias $0$) either when $S$ fails or when $S$ succeeds and $\rfe1' \neq \rfe1$;
        \item with bias $\eps/2$ when $S$ succeeds and $\rfe1' = \rfe1$.
    \end{itemize}
    
    The latter follows from the fact that, if $S$ succeeds and $\rfe1' = \rfe1$, it produces a full transcript where $\bm{\fe3}$ is a correction for the (unique) degree-$\degree\dimension$ polynomial $g$ such that $g(0) = \fe1$, $g(i) = \bm{\eta}_i + \pcommitstring_{ik}$ for $i \in [\ell]$ and $g(i) = f^x(i)$ for $i \in [\degree\dimension] \setminus [\ell] = \line \cap C$.\footnote{When $\line \cap C \neq [\degree\dimension] \setminus [\ell]$, the set still fixes $\abs{\line \cap C}$ values of $g$ and leaves $\degree\dimension - \abs{\line \cap C}$ to be chosen randomly.} Therefore, if $\bm{\eta} = \big(f^x(i) - \pcommitstring_{ik} : i \in [\ell]\big)$, then $\bm{\fe3}$ is a correction to $f^x$; while if $\bm{\eta}$ is random, then $\bm{\fe3}$ is a random degree-$\degree\dimension$ polynomial that matches $f^x$ in ($0$ and) $\line \cap C$. Since $D$ distinguishes between the two cases with bias $\eps/2$, then so does the one-way protocol. Therefore,
    \begin{align*}
        \P_{\substack{w \sim \field^{\ell \times \pcommitlength}\\k \sim [\pcommitlength]}}&\left[B\left(A(w), \big(f^x(i) - w_{ik} : i \in [\ell]\big), k\right)\text{ accepts}\right]\\
        &- \P_{\substack{w \sim \field^{\ell \times \pcommitlength}\\k \sim [\pcommitlength]\\\bm{\eta} \sim \field^\ell}}\left[B\big(A(w), \bm{\eta}, k\big)\text{ accepts}\right]\\
        &= o(1) \cdot \left(\frac12 - \frac12\right) + \big(1 - o(1)\big) \cdot \left(1 - \frac1 \fieldsize\right) \cdot \left(\frac12 - \frac12\right) + \big(1 - o(1)\big) \cdot \frac1 \fieldsize \cdot \frac{\eps}{2}\\
        &\geq \frac{\eps}{3\fieldsize}~.
    \end{align*}
    
    Finally, applying \cref{lem:string-to-bit-index}, we conclude that there exists a one-way binary \indexproblem protocol for strings of length $\pcommitlength = \dimension(\degree \dimension \fieldsize)^3$ with messages of length $\frac{\degree\dimension^2 \fieldsize^2 \ell^2 \log^2 \fieldsize}{\eps^2} \polylog(\fieldsize) \leq \degree^3\dimension^4\fieldsize^{2.01}$ and constant bias, a contradiction with $\sqrt{\frac{\degree^3 \dimension^4 \fieldsize^{2.01}}{\pcommitlength}} = o(1)$.
\end{proof}

\begin{remark}
    \label{rem:indistinguishability-pep}
    Inspecting the proof of \cref{clm:indistinguishability}, we see that increasing the prover's commitment length allows us to achieve significantly stronger indistinguishability: with $\pcommitlength = \log^{\omega(1)} n$, we have $\sqrt{s' \fieldsize^2 \ell^2/\pcommitlength} = o(1)$ for any $s' = \polylog(n)$. This setting of $\pcommitlength$ increases only the communication complexity of the interactive phase (\cref{step:pep-algebraic-commit,step:pep-temporal-decommit,step:pep-algebraic-decommit}) -- which can still be bounded by $n^{o(1)}$ -- and makes the protocol secure against $\polylog(n)$-space distinguishers.
\end{remark}

\subsection{Applications: \indexproblem, \pointquery, \rangecount and \selection}
\label{sec:pep-applications}

From the general $\textsf{zk-pep}(f,\fe1)$ protocol, we immediately obtain a zero-knowledge streaming interactive proof for the $\decisionindex(\fe1)$ problem (\cref{def:decision-index}) as a corollary:
\begin{corollary}
    \label{cor:index}
    Fix $\delta \in (0,1]$. For any $\fe1 \in \field_\fieldsize$ where $\fieldsize = \Theta\left(\log^{1 + \frac{2}{\delta}} n\right)$, $\decisionindex(\fe1)$ admits a \zksip{} with space complexity $O(\log n)$ and communication complexities $O(n^{1 + \delta})$ and $\polylog(n)$ in the setup and interactive stages, respectively. The protocol is secure against $\tilde O\left(\log^{2 + \frac{2}{\delta}} n\right)$-space distinguishers.
\end{corollary}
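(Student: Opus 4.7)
The plan is to derive \cref{cor:index} as a direct instantiation of the zero-knowledge polynomial evaluation protocol $\textsf{zk-pep}$ from \cref{prot:zk-pep}, applied with $f^x = \hat x$, the $\dimension$-variate individual-degree-$\degree$ low-degree extension of $x \in \field^n$. Because $\hat x(j) = x_j$ under the canonical embedding $[n] \hookrightarrow \field^\dimension$, the problem $\pep(\hat{\cdot}, \fe1)$ coincides with $\decisionindex(\fe1)$ over the alphabet $\field_\fieldsize$; and the fingerprint $\hat x(\fingerprinttuple)$ is computable in $O(\dimension \log \fieldsize)$ space via online Lagrange interpolation (see \cref{sec:lde-pep}), so $\hat{\cdot}$ satisfies the streaming hypothesis of \cref{thm:pep-correctness}.

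The bulk of the work is then a parameter choice that specialises \cref{thm:pep-correctness,thm:pep-zk} to the claimed bounds. With $\fieldsize = \Theta(\log^{1+2/\delta} n)$ as prescribed, I will set $\dimension \approx \delta \log n / (2 \log\log n) = \Theta(\log n / \log\log n)$ and $\degree = \Theta(\log^{2/\delta} n)$. These choices simultaneously guarantee $(\degree+1)^\dimension \geq n$ (so that $\hat x$ encodes all $n$ entries of $x$), $\degree \dimension = o(\fieldsize)$ (so the $O(\degree\dimension/\fieldsize)$ Schwartz-Zippel soundness term in \cref{thm:pep-correctness} stays $o(1)$), and $\dimension \log \fieldsize = O(\log n)$. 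Substituting into \cref{thm:pep-correctness} then yields space $O(\log n)$, setup communication $\fieldsize^\dimension \dimension \log^2 \fieldsize = n^{1 + \delta/2} \polylog(n) \subseteq O(n^{1+\delta})$, and interactive communication $O(\degree^4 \dimension^5 \fieldsize^3 \log \fieldsize) = \polylog(n)$. \cref{thm:pep-zk} in turn produces zero-knowledge against distinguishers with space $\degree \dimension^2 \polylog(\fieldsize) = \tilde O(\log^{2+2/\delta} n)$, exactly matching the claim.

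The main obstacle is balancing the parameters carefully: $\degree$ must be as large as $\Theta(\log^{2/\delta} n)$ to make the distinguisher-security bound $\degree \dimension^2 \polylog(\fieldsize)$ reach $\tilde \Theta(\log^{2+2/\delta} n)$, yet small enough that $\degree\dimension = o(\fieldsize)$ still holds for soundness; and $\dimension$ must simultaneously satisfy $(\degree + 1)^\dimension \geq n$ and $\fieldsize^\dimension \leq n^{1+\delta}$. The exponent $1 + 2/\delta$ in the statement's choice of $\fieldsize$ is tuned precisely to make a valid triple $(\dimension, \degree, \fieldsize)$ exist; checking this compatibility, together with the substitutions above, is the only real work in the proof.
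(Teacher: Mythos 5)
Your proposal matches the paper's proof essentially verbatim: both instantiate \textsf{zk-pep} with $f^x = \hat x$ and the same parameter choices $\degree = \log^{2/\delta} n$, $\dimension = \delta\log n/(2\log\log n)$, then substitute into \cref{thm:pep-correctness,thm:pep-zk}. The arithmetic you carry out (space $O(\dimension\log\fieldsize)$, setup $\fieldsize^\dimension\dimension\log^2\fieldsize = n^{1+\delta/2}\polylog n$, interactive $\polylog n$, distinguisher space $\degree\dimension^2\polylog\fieldsize$) is exactly what the paper does.
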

\begin{proof}
    Set $\degree = \log^{\frac{2}{\delta}} n$ and $\dimension = \delta \log n / 2\log\log n$, so that $\degree^\dimension = n$ and $\degree\dimension/\fieldsize = o(1)$. Note, moreover, that $\decisionindex(\fe1)$ is the polynomial evaluation problem where $f_x = \hat x$, the low-degree extension of $x$ (which can be computed in $O(\dimension \log \fieldsize)$ space) and $\evaluationpoint$ is the identification of a coordinate $j \in [n]$. Thus, applying \cref{prot:zk-pep} to the mapping $x \mapsto \hat x$ with the aforementioned parameters, we obtain a protocol with verifier space complexity
    \begin{align*}
        O(\dimension \log \fieldsize) &= O\left(\frac{\log n}{\log \log n} \cdot \log \log n\right)\\
        &= O(\log n)
    \end{align*}
    and communication complexities
    \begin{align*}
        O(\fieldsize^\dimension \dimension \log^2 \fieldsize) &= \left(\log^{1 + \frac{2}{\delta}} n\right)^{\frac{\delta \log n}{2\log\log n}} \polylog(n)\\
        &= n^{1 + \frac{\delta}{2}} \polylog(n)\\
        &= O(n^{1 + \delta})
    \end{align*}
    in the setup and $O(\degree^4 \dimension^5 \fieldsize^3 \log \fieldsize) = \polylog(n)$ in the interactive stage; moreover, it is secure against distinguishers with $\degree\dimension^2 \polylog(\fieldsize) = \tilde O\left(\log^{2 + \frac{2}{\delta}} n\right)$ space.
\end{proof}

We now select a few applications of the \textsf{zk-pep} protocol to solve other streaming problems; the remainder of this section follows reductions to \pepprotocol due to \cite{CCMTV19}.

In the \pointquery problem, the input is a stream of updates $(u, i) \in \Z \times [\ell]$ to an $\ell$-dimensional vector $y$ initialised to zero, followed by an index $j$, and the task is to output $y_j$. A formal definition follows.\footnote{We remark that \pointquery is formally a promise problem: the condition that coordinatewise sums are bounded by $M$ is assumed to hold for no-instances of the language too. However, a polynomial bound is often trivially true (as in the applications that follow).}

\begin{definition}
    \label{def:point-query}
  Let $\ell, M \in \N$ and $t \in [-M, M] \cap \Z$. The language $\pointquery(t)$ is defined as
  \begin{equation*}
      \set{
          \big(u_1, k_1, \ldots, u_n, k_n, j\big)  : \begin{array}{c}
                \forall i, u_i \in [-M, M] \cap \Z \text{ and } k_i, j \in [\ell],\\
                \forall k, \abs{\sum_{i \in [n],  k_i = k} u_i} \leq M \text{ and }\\
                \sum_{i \in [n], k_i = j} u_i = t
          \end{array}
      }.
  \end{equation*}
\end{definition}

\begin{corollary}
    \label{cor:point-query}
    Fix $\delta \in (0,1]$. Let $\ell, M \in \N$ with $\ell \in [n]$, $M = \poly(n)$ and $t \in [-M, M] \cap \Z$. There exists a \zksip{} for $\pointquery(t)$ with space complexity $O(\log^2 n)$ and communication complexities $O(n^{1 + \delta})$ and $\polylog(n)$ in the setup and interactive stages, respectively. 
\end{corollary}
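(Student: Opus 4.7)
The plan is to cast $\pointquery(t)$ as an instance of $\pep(f, t)$ via the mapping $f^{(\vec u, \vec k)} \coloneqq \hat y$, the low-degree extension of the frequency vector $y \in \field^\ell$ defined by $y_k = \sum_{i:\, k_i = k} u_i$. The evaluation point $\evaluationpoint \in \field^\dimension$ will be the canonical image of the queried index $j \in [\ell]$, so that $\hat y(\evaluationpoint) = y_j$, and $\pep(f, t)$ coincides with $\pointquery(t)$. I will then invoke \cref{thm:pep-correctness,thm:pep-zk} with parameters tuned for the promise $|y_k| \leq M = \poly(n)$.

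The key technical point, required by the hypothesis of \cref{thm:pep-correctness}, is that $f^{(\vec u, \vec k)}(\fingerprinttuple)$ admits an $O(\dimension \log \fieldsize)$-space streaming evaluator. By linearity of the LDE,
\begin{equation*}
    \hat y(\fingerprinttuple) = \sum_{k \in [\ell]} y_k \chi_k(\fingerprinttuple) = \sum_{i \in [n]} u_i \cdot \chi_{k_i}(\fingerprinttuple),
\end{equation*}
so the verifier maintains $\fingerprinttuple$ and a single running sum, adding $u_i \cdot \chi_{k_i}(\fingerprinttuple)$ upon reading each update $(u_i, k_i)$. Each Lagrange evaluation $\chi_{k_i}(\fingerprinttuple) = \prod_{j \in [\dimension]} \prod_{a \in [\degree+1] \setminus \{(k_i)_j\}} (\fingerprinttuple_j - a)/((k_i)_j - a)$ is computable with $O(\dimension \log \fieldsize)$ space. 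Provided $\fieldsize > 2M + 1$, the field arithmetic faithfully represents the integer-valued $y_k \in [-M, M]$ guaranteed by the $\pointquery$ promise (partial sums may wrap around but the final value, reduced modulo $\fieldsize$, unambiguously identifies $y_j$).

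I will pick parameters so as to simultaneously satisfy $\fieldsize > 2M + 1$, $\fieldsize > \degree$, $\degree^\dimension \geq \ell$, and $\fieldsize^\dimension = O(n^{1+\delta}/\polylog(n))$: take $\dimension$ to be a sufficiently large constant depending on $\delta$, $\fieldsize = \Theta(M \cdot \polylog(n)) = \poly(n)$, and $\degree = \lceil \ell^{1/\dimension} \rceil < \fieldsize$. Plugging into \cref{thm:pep-correctness,thm:pep-zk} yields a \zksip{} for $\pointquery(t)$ with verifier space $O(\dimension \log \fieldsize) = O(\log n) \subseteq O(\log^2 n)$, setup communication $O(\fieldsize^\dimension \dimension \log^2 \fieldsize) = O(n^{1 + \delta})$, and interactive communication $\polylog(n)$, along with the zero-knowledge guarantee.

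The main obstacle is this parameter trade-off: requiring $\fieldsize > 2M$ forces $\log \fieldsize = \Omega(\log n)$, which in turn forces $\dimension$ to be constant to keep $\fieldsize^\dimension$ near-linear, and this constrains the choice of $\degree$. Verifying that the four inequalities above are mutually compatible for any fixed $\delta > 0$, and that the induced streaming evaluator for $\hat y(\fingerprinttuple)$ meets the hypothesis of \cref{thm:pep-correctness} (including the check that a single update's contribution can be processed in the allotted space), are the routine but non-trivial verifications to carry out.
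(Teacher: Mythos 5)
Your plan is the natural one but it breaks down at the parameter selection, and the paper takes a different route precisely to avoid this. You require $\fieldsize > 2M+1$ so that field arithmetic faithfully simulates integer arithmetic, and simultaneously $\fieldsize^\dimension = O(n^{1+\delta}/\polylog n)$ so the setup is near-linear. But $M$ is an \emph{arbitrary} polynomial in $n$: taking $M = n^{c}$ with $c > 1 + \delta$ (e.g., $M = n^3$, $\delta \le 1$) forces $\fieldsize = \Omega(n^{c})$, and then even $\dimension = 1$ gives $\fieldsize^\dimension = \Omega(n^{c}) = \omega(n^{1+\delta})$. There is no constant $\dimension \ge 1$ that satisfies both constraints, so the four inequalities you list are \emph{not} mutually compatible for all $M = \poly(n)$. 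You flag this trade-off as the "main obstacle" and the verification as "routine but non-trivial," but it is in fact an outright obstruction, not a verification.

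The paper sidesteps this via the Chinese Remainder Theorem: it reduces to $M = O(\log n)$ by running $O(\log n)$ parallel copies of \textsf{zk-pep} over fields $\field_\fieldsize \supset \field_p$ for distinct primes $p = O(\log n)$, so each instance keeps the same polylogarithmic field size (and hence the same \cref{cor:index}-style parameters $\degree = \log^{2/\delta}n$, $\dimension = \delta\log n/(2\log\log n)$, $\fieldsize = \Theta(\log^{1+2/\delta} n)$), and a coordinate value in $[-M,M]$ is recovered from its residues. The logarithmic number of parallel runs is exactly where the stated $O(\log^2 n)$ space bound (rather than $O(\log n)$) comes from -- a quantity your approach would not reproduce even if it worked, which is another signal that something is off. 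Everything else in your sketch (expressing $\hat y(\fingerprinttuple)$ as $\sum_i u_i \chi_{k_i}(\fingerprinttuple)$ and maintaining it incrementally in $O(\dimension\log\fieldsize)$ space) is correct and matches the paper's streaming evaluator; the gap is entirely in how to cope with polynomially large $M$ while keeping the field polylogarithmic.
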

\begin{proof}
    We first note that, by an application of the Chinese Remainder Theorem (see, e.g., \cite{GR15}), we may assume $M = O(\log n)$ at the cost of a logarithmic blowup to the space complexity: the verifier runs \cref{prot:zk-pep} in parallel with $O(\log n)$ fields $\field_\fieldsize \supset \field_p$ for distinct primes $p = O(\log n)$, so that any integer in $[-M, M]$ can be uniquely represented by logarithmically many field elements.
    
    We set the same parameters as in \cref{cor:index}: degree $\degree = \log^{\frac{2}{\delta}} n$ and $\dimension = \delta \log n / 2\log\log n$, but also ensure $\fieldsize = \Theta\left(\log^{1 + \frac{2}{\delta}} n\right)$ is the power of a prime larger than $2M + 1$ (so that elements of $[-M,M] \cap \Z$ map to distinct field elements).
    
    Viewing integers in $[-M,M]$ as elements of $\field$, we define $y \in \field^\ell$ by
    \begin{equation*}
        y_k \coloneqq \sum_{\substack{i \in [n]\\k_i = k}} u_i,
    \end{equation*}
    and the mapping $x = \big((u_i, k_i) : i \in [n]\big) \mapsto f^x$ by $f^x \coloneqq \hat y$. Note that the verifier can compute
    \begin{equation*}
        \hat y(\fingerprinttuple) = \sum_{k \in [\ell]} \left(\sum_{\substack{i = 1\\k_i = k}}^n u_i\right) \chi_k(\fingerprinttuple)
    \end{equation*}
    by recording the running sum of $u_i \chi_{k_i}(\fingerprinttuple)$, a task for which $O(\dimension \log \fieldsize) = O(\log n)$ space suffices.
    
    Applying \cref{prot:zk-pep} with the mapping and parameters above, we obtain a zero-knowledge SIP with space complexity $O(\log^2 n)$ (due to the aforementioned logarithmic overhead), communication complexity $O(n^{1 + \delta})$ in the setup and $\polylog(n)$ in the interactive stage.
\end{proof}

With the protocol of \cref{cor:point-query}, we obtain a zero-knowledge SIP for the \rangecount problem, where the stream consists of a sequence $x$ of elements in a set $[\ell]$ followed by a subset $R \subseteq [\ell]$, and the task is to return the number of times an element of $R$ appeared in the stream. Formally,
\begin{definition}
  \label{def:range-count}
  Let $\mathcal{R} \subseteq 2^{[\ell]}$. The language $\rangecount(t)$ is defined as
  \begin{equation*}
      \set{(x, R) \in [\ell]^n \times \mathcal{R} : \abs{\set{i \in [n]: x_i \in R}} = t}.
  \end{equation*}
\end{definition}
\begin{corollary}
    \label{cor:range-count}
    Fix $\delta \in (0,1]$. For every $\mathcal{R} \subseteq 2^{[\ell]}$ of size $\poly(n)$, the language $\rangecount(t)$ admits a \zksip{} with space complexity $O(\log^2 n)$ and communication complexities $O(n^{1 + \delta})$ and $\polylog(n)$ in the setup and interactive stages, respectively. 
\end{corollary}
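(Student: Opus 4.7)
The plan is to reduce \rangecount to \pointquery and then invoke \cref{cor:point-query} essentially as a black box. Fix a canonical bijection between $\mathcal{R}$ and $[|\mathcal{R}|]$, which uses $O(\log n)$ bits since $|\mathcal{R}| = \poly(n)$. Define the virtual ``count-per-range'' vector $z \in \Z^{|\mathcal{R}|}$ by $z_R = |\{i \in [n] : x_i \in R\}|$, which is exactly the quantity the protocol must output at the queried $R$. The original stream $(x_1, \ldots, x_n, R)$ can then be reinterpreted as a \pointquery stream on $z$: each $x_i$ triggers the updates $z_R \leftarrow z_R + 1$ for every $R \in \mathcal{R}$ with $x_i \in R$, and the final symbol $R$ becomes the point-query index. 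Under this reinterpretation, a \rangecount instance $(x, R)$ with answer $t$ maps to a \pointquery instance (on $z$) whose answer at coordinate $R$ is $t$.

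The second step is to show that the verifier can simulate this virtual \pointquery stream internally while reading only the \rangecount stream, within the space budget of \cref{cor:point-query}. Choosing $\fieldsize = \Theta(\log^{1+2/\delta} n)$ and $\dimension = \Theta(\log n / \log\log n)$ as in \cref{cor:point-query} so that $\fieldsize^\dimension \geq |\mathcal{R}|$, the fingerprint that the verifier maintains can be rewritten as
\begin{equation*}
    \hat z(\fingerprinttuple) = \sum_{R \in \mathcal{R}} z_R \, \chi_R(\fingerprinttuple) = \sum_{i=1}^n \sum_{\substack{R \in \mathcal{R}\\ x_i \in R}} \chi_R(\fingerprinttuple).
\end{equation*}
Thus, on reading each $x_i$, the verifier iterates over $\mathcal{R}$ (which is explicitly known to both parties), evaluates $\chi_R(\fingerprinttuple)$ for each $R \ni x_i$, and adds these contributions to the running fingerprint. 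Only $\fingerprinttuple$ and the running sum need be retained, for $O(\dimension \log \fieldsize)$ bits of space. The time per stream element blows up by a $|\mathcal{R}| = \poly(n)$ factor, which is immaterial because SIP verifiers are bounded only in space and input access, not in time.

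Once the fingerprint of $z$ is in hand, the remainder of the protocol is identical to the one underlying \cref{cor:point-query}: the verifier stores the encoded index of $R$, engages in $\textsf{zk-pep}(f, t)$ with $f^x = \hat z$, and accepts iff the polynomial-evaluation protocol accepts. Completeness, soundness, and zero-knowledge transfer verbatim from \cref{thm:pep-correctness,thm:pep-zk}, and the $O(\log n)$ Chinese-Remainder overhead already present in \cref{cor:point-query} yields the claimed $O(\log^2 n)$ space bound together with $O(n^{1+\delta})$ setup and $\polylog(n)$ interactive communication. The only real obstacle is bookkeeping: checking that $z$'s coordinatewise update magnitude satisfies $M \leq n = \poly(n)$ and that the indexing of $\mathcal{R}$ is compatible with the point-query framework, so that \cref{cor:point-query} can be applied without modification.
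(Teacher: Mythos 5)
Your proposal matches the paper's proof: both reinterpret the stream as a sequence of updates to the count-per-range vector $z$ with $z_R = \abs{\set{i : x_i \in R}}$, observe that the verifier can simulate the derived \pointquery stream on-the-fly since $\mathcal{R}$ is explicitly known, and invoke \cref{cor:point-query} on the redefined mapping $x \mapsto f^x$, with $M = n$ bounding the coordinate magnitudes. The only small imprecision is the parameter condition: what the LDE of $z$ actually requires is $(\degree+1)^\dimension \geq \abs{\mathcal{R}}$ rather than $\fieldsize^\dimension \geq \abs{\mathcal{R}}$, which is weaker (necessary but not sufficient, since $\fieldsize > \degree$); with $\abs{\mathcal{R}} = \poly(n)$ this is met by scaling $\dimension$ by a constant, as the paper implicitly does.
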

\begin{proof}[Proof sketch]
    We run the protocol for \pointquery (\cref{cor:point-query}) on the stream obtained by concatenating $(R' \in \mathcal{R} : x_i \in R')$ for every $i \in [n]$ (which the verifier can simulate while streaming $x$), followed by $R$ (viewed as an element of $[\abs{\mathcal{R}}]$). More precisely, we redefine the mapping $x \mapsto f^x$ as what would be obtained by processing the derived stream, which avoids the length overhead (to $n \abs{\mathcal{R}} = \poly(n)$, rather than $n$) incurred otherwise.
    
    Since $M = n$ is an upper bound for the number of points in any subset of $[\ell]$, we obtain a protocol with the complexities as claimed.
\end{proof}

We conclude with an application of the \rangecount protocol to solve \selection (and \median in particular). For $x \in [\ell]^n$ and $i \in [\ell]$, we call $\varphi(x)$ the \emph{frequency vector} of $x$, defined as $\varphi_i(x) = \abs{\set{j \in [n] : x_j = i}}$ (see, also, \cref{def:frequency-moment}). A word in the language \selection consists of $x$ along with a \emph{rank} $r \in [n]$ the integer $k \in [\ell]$ with this rank and offsets $\phi \in [n]$, $\phi' \in \set{0} \cup [n]$. (We remark that the additional parameters take into account what the verifier learns in the search version of the SIP: not only the element $k$ with rank $r$, but the values of the cumulative frequencies up to $k - 1$ and up to $k$.)

\begin{definition}
  \label{def:selection}
  For $\ell \in [n]$, the language $\selection$ is defined as
  \begin{equation*}
      \set{(x, k, r, \phi, \phi') \in [\ell]^n \times [\ell] \times [n] \times [n] \times \set{0} \cup [n] : \sum_{i = 1}^{k-1} \varphi_i(x) = r - \phi \text{ and } \sum_{i = 1}^k \varphi_i(x) = r + \phi'}.
  \end{equation*}
\end{definition}
\begin{corollary}
    \label{cor:selection}
    Fix $\delta \in (0,1]$. There exists a \zksip{} for $\selection$ with space complexity $O(\log^2 n)$ and communication complexities $O(n^{1 + \delta})$ and $\polylog(n)$ in the setup and interactive stages, respectively.
\end{corollary}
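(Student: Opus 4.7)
The plan is to reduce \selection to two parallel invocations of the zkSIP for \rangecount provided by \cref{cor:range-count}, reusing a single temporal commitment setup. The starting observation is that membership in \selection is equivalent to the conjunction of two cumulative-frequency statements about $x \in [\ell]^n$:
\begin{equation*}
    \sum_{i=1}^{k-1} \varphi_i(x) = r - \phi \qquad \text{and} \qquad \sum_{i=1}^{k} \varphi_i(x) = r + \phi'.
\end{equation*}
Each is a \rangecount claim over the family of prefix ranges $\mathcal{R} = \set{[1,j] : j \in [\ell]}$, which has size $\ell \leq n = \poly(n)$, so \cref{cor:range-count} applies. Concretely, after streaming $x$ the verifier reads $(k, r, \phi, \phi')$, locally computes the targets $t_1 = r - \phi$, $t_2 = r + \phi'$ and the ranges $R_1 = [1,k-1]$, $R_2 = [1,k]$, and then feeds $(x, R_1)$ with target $t_1$ and $(x, R_2)$ with target $t_2$ to two subprotocols for $\rangecount(t_1)$ and $\rangecount(t_2)$. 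The verifier accepts iff both sub-protocols accept.

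Correctness is immediate from \cref{cor:range-count}: completeness holds since both cumulative equations are true on yes-instances, and soundness follows by a union bound over the two subprotocols, each of which rejects any false claim except with $o(1)$ probability. For the complexity bookkeeping, I use the fact (emphasised in the introduction) that the setup of \cref{prot:zk-pep} is a reusable random string: both subprotocols can share the same step~0 temporal commitment of length $O(n^{1+\delta})$, so the setup communication is unchanged. The interactive communication doubles to $\polylog(n)$ and the verifier's space doubles to $O(\log^2 n)$, both only by a constant factor.

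The main point to verify is zero-knowledge. Given a malicious space-$s$ verifier $\widetilde V$, the simulator first runs the \cref{alg:pep-simulator}-style setup phase to produce $\vcommitstring$ and extract, via its whitebox access to $\widetilde V$, the set $C$ of likely decommittable pairs (as guaranteed by \cref{thm:correct-set}). It then runs two copies of the \rangecount simulator in parallel, interleaving their messages in the order prescribed by the composed protocol and reusing both the shared $\vcommitstring$ and the single extracted set $C$. Indistinguishability follows from a two-step hybrid: let $H_0$ be the real transcript, $H_1$ the hybrid where only the first \rangecount execution is simulated, and $H_2$ the fully simulated transcript. Each consecutive pair is $o(1)$-indistinguishable to $\polylog(n)$-space streaming distinguishers by \cref{cor:range-count}, and the triangle inequality yields an overall bias of $o(1)$.

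The main obstacle I anticipate is precisely this hybrid argument with a \emph{shared} setup: one must check that the reduction to \indexproblem underlying \cref{clm:indistinguishability}-type bounds still goes through when the same $\vcommitstring$ (and hence the same $C$) is used across both subprotocols. This should work because the extraction of $C$ depends only on the snapshot of $\widetilde V$ after step~0, which is present in both hybrids and identically distributed there; the fresh randomness of each algebraic commitment $(\pcommitstring, \bm{\fe3}, k)$ is what the reduction exploits, and these are independent across the two executions. Once this is confirmed, the resulting bias remains $o(1)$ and the corollary follows.
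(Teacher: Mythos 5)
Your proposal takes essentially the same route as the paper: reduce \selection to two invocations of the \rangecount \zksip{}, share the temporal commitment (step~0), stream $x$ once while maintaining two independent fingerprints of $f^x$, and rerun the remaining steps of $\textsf{zk-pep}$ twice. Two small observations are worth recording. First, you use the prefix-range family $\mathcal{R}=\set{[j]: j\in[\ell]}$, which matches $\sum_{i\le k-1}\varphi_i(x)$ and $\sum_{i\le k}\varphi_i(x)$ directly; the paper's sketch instead writes $\mathcal{R}=\set{[n]\setminus[i]}$ and then asserts the hit counts equal $r-\phi$ and $r+\phi'$, which is not literally consistent with \cref{def:selection} (a suffix range counts $n$ minus the prefix sum), so your formulation is cleaner. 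Second, your ``main obstacle'' is the correct thing to worry about: the hybrid argument cannot simply cite \cref{cor:range-count}, because that corollary treats a standalone execution, whereas here the two executions share $\vcommitstring$ (and hence the extracted set $C$). Your justification — that $C$ is a function only of the snapshot after step~0, which is identically distributed across hybrids, and that the reduction of \cref{clm:indistinguishability} inserts the \indexproblem instance into the \emph{algebraic} commitment $(\pcommitstring,\bm{\fe3},k)$, which is sampled freshly and independently per execution and hence can be filled in honestly for the non-target execution — is the right argument and fills in the gap the paper's two-line sketch leaves implicit. One sentence you could add for completeness is that the verifier must also temporally commit to \emph{both} fingerprints $\fingerprinttuple_1,\fingerprinttuple_2$ (storing both indices $\ell_1,\ell_2$ in $\vcommitstring$), and \cref{thm:correct-set} still caps the total number of decommittable pairs by $s$, so the single set $C$ suffices for both simulated subprotocols.
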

\begin{proof}[Proof sketch]
    We execute the protocol for \rangecount twice (by temporally committing and streaming $x$ only once; this can be done by saving two independent fingerprints for $f^x$, and only running \textsf{zk-pep} twice from \cref{step:simulator-pep-algebraic-commit} onwards). The class of ranges is $\mathcal{R} = \set{[n] \setminus [i] : 0 \leq i \leq n}$, of size $O(n)$, and the verifier checks that the number of hits in the ranges $[n] \setminus [k - 1]$ and $[n] \setminus [k]$ are $r - \phi$ and $r + \phi'$, respectively.
\end{proof}

\section{A zero-knowledge sumcheck SIP}
\label{sec:sumcheck}

In the previous section we showed how \cref{prot:pep}, the polynomial evaluation protocol of \cite{CTY11}, can be made zero-knowledge with the careful addition of algebraic and temporal commitment protocols. Although \pep is a foundational problem for streaming algorithms -- generalising \indexproblem, for example -- it is not immediately clear whether the same techniques enable us to construct a zero-knowledge version of the second widely used tool in SIPs: the \emph{sumcheck} protocol. In this section, we prove that they do: \cref{prot:zk-sumcheck} leys out \textsf{zk-sumcheck}, a \zksip{} for the \sumcheck problem (\cref{def:sumcheck}) with the same components, namely, the algebraic and temporal commitments that enabled \textsf{zk-pep}.

Sumcheck protocols are extremely useful building blocks for the construction of interactive proofs; indeed, some of the most celebrated results of the last two decades rely on them, most notably the GKR \cite{GKR08} and subsequent delegation-of-computation protocols (e.g., \cite{RVW13, RRR19, RR20}). Roughly speaking, they allow a verifier to check that the sum, over a subcube, of the evaluations of a polynomial yields a prescribed field element; they save \emph{exponentially} in the communication (and time) complexity as compared to sending the entire description of the polynomial. In particular, they enable the (exact) computation of frequency moments of a stream via an interactive protocol in sublinear space \cite{CCMT14}, which is impossible without interaction \cite{AMS99}.

More precisely, let $f: \field^\dimension \to \field$ be a polynomial of (individual) degree $\degree$ and $\cubeside \subset \field$ be an evaluation domain. One obvious way to check that $\sum_{\evaluationpoint \in H^\dimension} f(\evaluationpoint)$ is equal to some $\fe1 \in \field$ is via the description of $f$ (say, as a list of sufficiently many evaluations), from which the sum can be computed directly. This requires not only the entire description of $f$, which has size $(\degree + 1)^\dimension$; but also entails evaluating $f$ over $\abs{\cubeside}^\dimension$ many points, implying an even larger runtime.

The standard sumcheck protocol (\cref{prot:sumcheck}) enables a verifier $V$ to offload this costly computation to a powerful prover $P$ and check the claim by communicating $O(\degree\dimension)$ field elements in $O(\abs{\cubeside} \dimension \degree)$ time steps, with \emph{a single random evaluation of $f$}.\footnote{\cref{prot:sumcheck} is laid out in a somewhat non-standard (but equivalent) form, with checks deferred to the end, that more closely resembles the streaming version we construct.}

\iflipics
\begin{figure}
\else
\begin{protocol}[label={prot:sumcheck}]{$\textsf{sumcheck}(f, \fe1)$}
\fi
 	\textbf{Input:} Explicit access to $\field = \field_\fieldsize$, evaluation domain $H \subset \field$, degree $\degree$, dimension $\dimension$ and $\fe1 \in \field$ as well as $f(\fingerprinttuple)$ with $\fingerprinttuple \sim \field^\dimension$, where $f: \field^\dimension \to \field$ is a degree-$\degree$ polynomial.
 	
    \iflipics\else
    \tcbline
    \fi

 	Repeat, from $i = 1$ to $\dimension$:
 	\begin{enumerate}[label={}]
 	    \item[$\bm P$:] Send the polynomial $f_i(T) = \sum_{\fe2_{i+1}, \ldots, \fe2_\dimension \in \cubeside} f(\fingerprinttuple_1, \ldots, \fingerprinttuple_{i-1}, T, \fe2_{i+1}, \ldots, \fe2_\dimension)$.

 	    \item[$\bm V$:] Send $\fingerprinttuple_i$.
 	\end{enumerate}

 	$\bm V$: Check that $\sum_{\fe2_1 \in H} f_1(\fe2_1) = \fe1$, $f(\fingerprinttuple) = f_\dimension(\fingerprinttuple_\dimension)$ and the intermediate polyomials satisfy $\sum_{\fe2_i \in H} f_i(\fe2_i) = f_{i-1}(\fingerprinttuple_{i-1})$ for all $2 \leq i < \dimension$, accepting if so and rejecting otherwise.
\iflipics
\caption{Protocol $\textsf{sumcheck}(f, \fe1)$}
\label{prot:sumcheck}
\end{figure}
\else
\end{protocol}
\fi

It is well known that the protocol above (always) accepts if $\sum_{\evaluationpoint \in \cubeside^\dimension} f(\evaluationpoint) = \fe1$, and rejects with probability at least $1 - \degree \dimension / \fieldsize$ otherwise (see, e.g., \cite{AB09}). As sums of polynomials can be performed in a streaming fashion, the verifier only needs $O(\dimension \log \fieldsize)$ bits of space.

\subsection{The protocol}

We now show that the techniques of \cref{sec:commitment-protocols} enable us to construct a streaming zero-knowledge variant of $\textsf{sumcheck}(f, \fe1)$, which solves the problem defined next.

\begin{definition}
    \label{def:sumcheck}
    Let $\fe1 \in \field$, $\cubeside \subseteq \field$ and $f = \set{f^x : x \in \alphabet^n}$ be a mapping such that $f^x: \field^\dimension \to \field$ is a degree-$\degree$ polynomial. $\sumcheck(f, \fe1)$ is the language $\set{x \in \alphabet^n : \sum_{\evaluationpoint \in \cubeside^\dimension}f^x(\evaluationpoint) = \fe1}$.
\end{definition}

The techniques need to be adapted, however, with one key distinction between \textsf{zk-sumcheck} and \textsf{zk-pep}: the prover now must make many (algebraic) commitments, each of which is used in a pair of decommitments; moreover, the commitments cannot be sent in parallel anymore, owing to dependencies between messages in contiguous rounds. Intuitively, neither of these should pose too great a challenge: computing fingerprints of a set of messages whose commitment is sent sequentially should be no easier than when they are sent in parallel (indeed, for one-way communication protocols they are exactly equivalent); and if one algebraic decommitment does not leak a significant amount of information, two should not do so either.

The protocol follows. We note that (differently from \cref{sec:pep}) $\bm{\chi}(\rfe1)$ denotes the vector of Lagrange polynomials over $\field$ for degree-$\degree$ univariate polynomials with interpolating set $[\degree + 1]$, i.e., $\bm{\chi}(\rfe1) = \big(\chi_i(\rfe1) : i \in [\degree + 1]) \in \field^{\degree + 1}$.

\iflipics
\begin{figure}
\else
\begin{protocol}[label={prot:zk-sumcheck}]{$\textsf{zk-sumcheck}(f, \fe1)$}
\fi
 	\textbf{Input:} Explicit access to $\field$, element $\fe1 \in \field$, degree $\degree$, dimension $\dimension$, evaluation domain $H \subset \field$ and mapping $x \mapsto f^x$; streaming access to $x\in\alphabet^n$.\\
	
	\textbf{Parameters:}
	\begin{itemize}[label={}, nosep, itemindent=-1.5em]
	    \item Field size $\fieldsize = \abs{\field}$ satisfying $\degree \dimension = o(\fieldsize)$;
	    \item Commitment lengths $\vcommitlength = \fieldsize^\dimension (\log \dimension + \log\log \fieldsize)/96$ and $\pcommitlength = \fieldsize^{\log\log \fieldsize}$.
	\end{itemize}
    \tcbline
    \vspace*{-.5\baselineskip} 
    
    \begin{steps}[wide]
        \setcounter{stepsi}{-1}
        \item\label{step:sumcheck-temporal-commit} Temporal commitment
	    \begin{enumerate}[label={}, leftmargin=3em]
	        \item[$\bm P$:] Send a string $\vcommitstring \sim \big((\field \setminus [\degree + 1])^\dimension\big)^\vcommitlength$.
	        \item[$\bm V$:] Sample $\fingerprinttuple \sim \left(\field \setminus [\degree+1]\right)^\dimension$ and stream $\vcommitstring$. Check if $z_i = \fingerprinttuple$ for each $i$, storing $\ell \coloneqq i$ if so.
	        
	        Reject if $\fingerprinttuple \neq \vcommitstring_i$ for all $i \in [\vcommitlength]$.
	    \end{enumerate}
        \iflipics\else%
	    \vspace*{-\baselineskip}
	    \tcbline
        \vspace*{-.5\baselineskip} 
        \fi

	    \item\label{step:sumcheck-input} Input streaming
	    \begin{enumerate}[label={}, leftmargin=3em]
	        \item[$\bm V$:] Stream $x$ and compute $f^x(\fingerprinttuple) \in \field$.
        \end{enumerate}
        \iflipics\else%
	    \vspace*{-\baselineskip}
	    \tcbline
        \vspace*{-.5\baselineskip} 
        \fi

	    \item\label{step:sumcheck-algebraic-commit} Algebraic commitments
        \begin{enumerate}[label={}, leftmargin=3em]
            \item[$\bm P$:] Compute $f_1(T) = \sum_{\fe2_2, \ldots, \fe2_{\dimension} \in \cubeside} f^x(T, \fe2_2 \ldots, \fe2_{\dimension})$ and sample $k \sim [\pcommitlength]$.
            \item[$\bm V$:] Sample $\Fingerprinttuple^{(1)}, \ldots, \Fingerprinttuple^{(\dimension + 1)} \sim \field^\dimension$. Compute $\bm{\chi}(\fingerprinttuple_1) \ldots, \bm{\chi}(\fingerprinttuple_\dimension)$ and the linear coefficients $\bm{\theta}$ such that $\sum_{\fe2 \in \cubeside} g(\fe2) = \sum_i \bm{\theta}_i g(i)$ when $g$ is a degree-$\degree$ univariate polynomial.
        \end{enumerate}
        
        \begin{enumerate}[label={}, leftmargin=1.3em]
            \item Repeat, from $i = 1$ to $\dimension$:
         	\item
         	\begin{enumerate}[label={}, leftmargin=3em]
         	    \item[$\bm P$:] Send $\pcommitstring^{(i)} \sim \field^{(\degree + 1) \times \pcommitlength}$ and $\bm{\fe3}^{(i)} = \big(f_i(j) - \pcommitstring^{(i)}_{jk} : j \in [\degree + 1]\big)$.
         	    \item[$\bm V$:] Compute the fingerprints $\hat \pcommitstring^{(i)}\left(\Fingerprinttuple^{(i)}, \bm{\chi}(\fingerprinttuple_i)\right)$ and $\hat \pcommitstring^{(i)}\left(\Fingerprinttuple^{(i+1)},  \bm{\theta}\right)$, as well as the dot products $\bm{\chi}(\fingerprinttuple_i) \cdot \bm{\fe3}^{(i)}$ and $\bm{\theta} \cdot \bm{\fe3}^{(i)}$.
         	   
         	   Send $\fingerprinttuple_i$.
         	    \item[$\bm P$:] If $i < m$, compute $f_{i+1}(T) = \sum_{\fe2_{i+2}, \ldots, \fe2_{\dimension} \in \cubeside} f^x(\fingerprinttuple_1, \ldots, \fingerprinttuple_i, T, \fe2_{i+2}, \ldots, \fe2_{\dimension})$.
         	\end{enumerate}
         	\item $\bm P$: Send $k$.
        \end{enumerate}
        \iflipics\else%
	    \vspace*{-\baselineskip}
	    \tcbline
        \vspace*{-.5\baselineskip} 
        \fi
        
	    \item\label{step:sumcheck-temporal-decommit} Temporal decommitment
	    \begin{enumerate}[label={}, leftmargin=3em]
	        \item[$\bm V$:] Send $\ell$.
            \item[$\bm P$:] Check that $\vcommitstring_\ell = \fingerprinttuple \in \big(\field \setminus [\degree+1]\big)^\dimension$, aborting otherwise.
        \end{enumerate}
        \iflipics\else%
	    \vspace*{-\baselineskip}
	    \tcbline
        \vspace*{-.5\baselineskip} 
        \fi

        \item\label{step:sumcheck-algebraic-decommit} Algebraic decommitments
        \begin{enumerate}[label={}, leftmargin=3em]
            \item[$\bm V$:] For all $1 < i \leq \dimension$, run
	        \begin{equation*}
	            \textsf{decommit}\left(0, \bm{\theta} \cdot \pcommitstring^{(i)} - \bm{\chi}(\fingerprinttuple_{i-1}) \cdot \pcommitstring^{(i-1)}, k\right), \text{ with}
	       \end{equation*}
	       fingerprint $\hat \pcommitstring^{(i)}\left(\Fingerprinttuple^{(i)}, \bm{\theta}\right) - \hat \pcommitstring^{(i-1)}\left(\Fingerprinttuple^{(i)}, \bm{\chi}(\fingerprinttuple_{i-1})\right)$ and correction $\bm{\theta} \cdot \bm{\fe3}^{(i)} - \bm{\chi}(\fingerprinttuple_{i-1}) \cdot \bm{\fe3}^{(i-1)}$.
	        
	        Run $\textsf{decommit}\left(\fe1, \bm{\theta} \cdot \pcommitstring^{(1)}, k\right)$ with fingerprint $\hat \pcommitstring^{(1)}\left(\Fingerprinttuple^{(1)},  \bm{\theta}\right)$ and correction $\bm{\theta} \cdot \bm{\fe3}^{(1)}$.
	        
	        Run $\textsf{decommit}\left(f^x(\fingerprinttuple), \bm{\chi}(\fingerprinttuple_\dimension) \cdot \pcommitstring^{(\dimension)}, k\right)$ with fingerprint $\hat \pcommitstring^{(\dimension)}\left(\Fingerprinttuple^{(\dimension + 1)}, \bm{\chi}(\fingerprinttuple_\dimension)\right)$ and correction $\bm{\chi}(\fingerprinttuple_\dimension) \cdot \bm{\fe3}^{(\dimension)}$.
	        
	        Accept if all decommitments accept, and reject otherwise.
	    \end{enumerate}
    \end{steps}
\iflipics
\caption{Protocol $\textsf{zk-sumcheck}(f, \fe1)$}
\label{prot:zk-sumcheck}
\end{figure}
\else
\end{protocol}
\fi

\subsection{Analysis of the protocol}

We now show that \textsf{zk-sumcheck} is a valid (i.e., complete and sound) streaming interactive proof, and compute its space and communication complexities.
 
 \begin{theorem}
    \label{thm:sumcheck-correctness}
    Let $f$ be such that an evaluation of the $\field_\fieldsize$-polynomial $f^x$ can be computed by streaming $x$ in $O(\dimension^2 \log \fieldsize)$ space. For any $\fe1 \in \field_\fieldsize$, \cref{prot:zk-sumcheck} is an SIP for $\sumcheck(f, \fe1)$ with space complexity $s = O(\dimension^2 \log \fieldsize)$, communication complexity $O(\fieldsize^\dimension \dimension \log^2 \fieldsize)$ in the setup and $O(\fieldsize^{\log\log \fieldsize} \degree\dimension \log \fieldsize) = \fieldsize^{\log\log \fieldsize} \poly(\fieldsize)$ in the interactive phase.
\end{theorem}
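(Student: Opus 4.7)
The plan is to follow the template of Theorem \ref{thm:pep-correctness}, splitting the argument into three parts: completeness, soundness, and resource accounting.

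For completeness, note first that the temporal commitment (Step~\ref{step:sumcheck-temporal-commit}) succeeds (i.e., $\fingerprinttuple$ appears in $\vcommitstring$) except with probability $\bigl(1 - (\fieldsize - \degree - 1)^{-\dimension}\bigr)^\vcommitlength = o(1)$ by the choice of $\vcommitlength$. Conditioning on this event and assuming $x \in \sumcheck(f,\fe1)$, when $P$ is honest the polynomials $f_i$ sent across Step~\ref{step:sumcheck-algebraic-commit} satisfy the standard sumcheck relations
\begin{align*}
\bm{\theta}\cdot f_1 &= \textstyle\sum_{\fe2\in H} f_1(\fe2) = \fe1,\\
\bm{\theta}\cdot f_i &= f_{i-1}(\fingerprinttuple_{i-1}) = \bm{\chi}(\fingerprinttuple_{i-1})\cdot f_{i-1} \quad (1<i\leq\dimension),\\
\bm{\chi}(\fingerprinttuple_\dimension)\cdot f_\dimension &= f_\dimension(\fingerprinttuple_\dimension) = f^x(\fingerprinttuple).
\end{align*}
Each algebraic commitment $(\pcommitstring^{(i)},\bm{\fe3}^{(i)},k)$ to $f_i$ is then correctly decommitted via \cref{prot:decommit} (the linearity of LDEs and the correctness of \cref{thm:pv-algebraic-commitment} apply term by term to each of the decommitments of Step~\ref{step:sumcheck-algebraic-decommit}), so $V$ accepts with probability $1 - o(1)$.

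For soundness, I would argue in two layers. First, when $\sum_{\evaluationpoint \in \cubeside^\dimension} f^x(\evaluationpoint) \neq \fe1$, a standard round-by-round sumcheck analysis (see, e.g., \cite{AB09}) shows that for uniformly random $\fingerprinttuple_1,\ldots,\fingerprinttuple_\dimension \sim (\field\setminus[\degree+1])^\dimension$, no collection of degree-$\degree$ univariate polynomials $f_1,\ldots,f_\dimension$ can simultaneously satisfy all $\dimension+1$ relations above while also being consistent with the (honestly computed) evaluation $f^x(\fingerprinttuple)$, except with probability $\degree\dimension/(\fieldsize-\degree-1)=o(1)$ by repeated application of Schwartz-Zippel (\cref{lem:sz}). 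Second, I must argue that committing to $f_i$ and then decommitting to inconsistent linear combinations is detected: this reduces to the binding analysis of \cref{thm:pv-algebraic-commitment}, where the fingerprint check in \cref{prot:decommit} uses the freshly sampled, unrevealed $\Fingerprinttuple^{(i)} \sim \field^\dimension$ (independent from all past communication), so a cheating prover is caught with probability $1 - \degree\dimension/\fieldsize = 1 - o(1)$ per decommitment, and a union bound over the $\dimension+1$ decommitments maintains a sub-constant failure probability since $\degree\dimension^2=o(\fieldsize)$. Combining the two layers yields soundness error $o(1)$.

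The complexities follow by direct accounting. The setup communicates $\vcommitlength$ strings of length $\dimension\log\fieldsize$ for a total of $O(\fieldsize^\dimension \dimension \log^2 \fieldsize)$ bits. The interactive phase transmits $\dimension$ algebraic commitment matrices of size $(\degree+1)\times\pcommitlength$ and $\dimension+1$ decommitment lines/polynomials of $O(\degree\dimension)$ field elements each, giving $O(\pcommitlength\, \degree\dimension\log\fieldsize) = \fieldsize^{\log\log\fieldsize}\poly(\fieldsize)$ bits. The verifier stores a constant number of field elements per round plus the $\dimension+1$ fingerprint points $\Fingerprinttuple^{(i)}\in\field^\dimension$, the index $\ell\in[\vcommitlength]$, the tuple $\fingerprinttuple\in\field^\dimension$, the index $k\in[\pcommitlength]$, and a running evaluation of $f^x(\fingerprinttuple)$ (by hypothesis in $O(\dimension^2\log\fieldsize)$ space), for a total of $O(\dimension^2\log\fieldsize)$ bits.

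The main obstacle I anticipate is soundness: although the underlying sumcheck idea is classical, the fact that this variant defers all consistency checks to the end and routes them through the algebraic commitment scheme means one has to argue carefully that a malicious prover who sees $\fingerprinttuple_1,\ldots,\fingerprinttuple_{i-1}$ before committing to $f_i$ cannot exploit the dependence to pass the linear-combination check enforced by \textsf{decommit}. The key observation resolving this is that $\Fingerprinttuple^{(i)}$ is sampled independently of all prover messages and used only in the fingerprint equation, so the binding guarantee of \cref{thm:pv-algebraic-commitment} applies verbatim to each linear combination (here the linear coefficients $\bm{\chi}(\fingerprinttuple_{i-1})$ and $\bm{\theta}$ are not uniformly random, but the freshness of $\Fingerprinttuple^{(i)}$ suffices, as in the proof of \cref{thm:pep-correctness}).
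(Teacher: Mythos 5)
Your overall structure (completeness, soundness, resource accounting) and the key observations mirror the paper's proof closely; the completeness calculation and the complexity accounting are correct, and your soundness "two-layer" decomposition is equivalent to the paper's three-case split (the paper's first two cases together are exactly your Layer~1: honest decommitment reduces to standard sumcheck soundness; its Case~3 is your Layer~2).

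However, there is a genuine bug in your Layer~2: the union bound over the $\dimension+1$ decommitments is both unnecessary and quantitatively wrong. You conclude a failure probability of $(\dimension+1)\cdot\degree\dimension/\fieldsize$ and then require $\degree\dimension^2 = o(\fieldsize)$, but the protocol's stated parameter assumption is only $\degree\dimension = o(\fieldsize)$, and in the applications this gap is real: in \cref{cor:frequency-moment} one has $\degree\dimension = \Theta(\fieldsize/\log\log n)$, so $\degree\dimension^2/\fieldsize = \Theta(\log n/(\log\log n)^2) = \omega(1)$, and your bound collapses to the trivial one. The correct argument avoids any union bound. The verifier accepts only if \emph{every} decommitment accepts, so it suffices to catch a single dishonest decommitment. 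Concretely, let $E_i'$ be the event that decommitment $i$ is the first one in which the prover sends $g \neq \hat w_{|\line}$; these events are disjoint and are determined by randomness fixed before the fingerprint test $g(\rfe2^{(i)}) \stackrel{?}{=} \hat w(\Fingerprinttuple^{(i)})$, so conditioned on each $E_i'$ the Schwartz--Zippel failure probability is at most $\degree\dimension/\fieldsize$. Therefore
\begin{equation*}
    \P\bigl[V \text{ accepts} \wedge \text{some decommitment is dishonest}\bigr] = \sum_i \P\bigl[V\text{ accepts} \mid E_i'\bigr]\P[E_i'] \leq \frac{\degree\dimension}{\fieldsize}\sum_i\P[E_i'] \leq \frac{\degree\dimension}{\fieldsize},
\end{equation*}
with no factor of $\dimension+1$. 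Adding the Layer~1 bound of $\degree\dimension/(\fieldsize-\degree-1)$ for the all-honest-decommitment case gives soundness error $o(1)$ under the paper's actual hypothesis $\degree\dimension = o(\fieldsize)$.

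One minor point: you write $\fingerprinttuple_1,\ldots,\fingerprinttuple_\dimension \sim (\field\setminus[\degree+1])^\dimension$, but each $\fingerprinttuple_i$ is a single field element; the tuple $\fingerprinttuple$ is what lives in $(\field\setminus[\degree+1])^\dimension$. This does not affect the argument.
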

\begin{proof}
    As in \cref{thm:pep-correctness}, we first show completeness and soundness, then compute the complexities.
    
    \paragraph*{Completeness\iflipics\else.\fi} Recall that $\textsf{decommit}(\fe2, w, k)$ with correction $\fe3$ accepts if (the fingerprint matches the LDE of $w$ and) $\fe3 + w_k = \fe2$. Therefore, when $P$ and $V$ are both honest, the first $\dimension - 1$ decommitments of \cref{step:sumcheck-algebraic-decommit} accept, since
    
    \begin{align*}
        \bm{\theta} \cdot \bm{\fe3}^{(i)} - \bm{\chi}(\fingerprinttuple_{i-1}) \cdot \bm{\fe3}^{(i-1)} &+ \left(\bm{\theta} \cdot \pcommitstring^{(i)} - \bm{\chi}(\fingerprinttuple_{i-1}) \cdot \pcommitstring^{(i-1)}\right)_k\\
        &= \sum_{j = 1}^{\degree + 1} \left(\bm{\theta}_j \big(\bm{\fe3}^{(i)}_j + \pcommitstring^{(i)}_{jk}\big) - \chi_j(\fingerprinttuple_{i-1}) \big(\bm{\fe3}^{(i-1)}_j + \pcommitstring^{(i-1)}_{jk}\big)\right)\\
        &= \sum_{j = 1}^{\degree + 1} \bm{\theta}_j f_i(j) - \sum_{j = 1}^{\degree + 1} \chi_j(\fingerprinttuple_{i-1})
         f_{i-1}(j)\\
        &= \left(\sum_{\fe2 \in \cubeside} f_i(\fe2)\right) - f_{i-1}(\fingerprinttuple_{i-1})\\
        &= 0.
    \end{align*}
    
    Likewise, the last two decommitments accept because
    \begin{align*}
        \bm{\theta} \cdot \bm{\fe3}^{(1)} + \left(\bm{\theta} \cdot \pcommitstring^{(1)}\right)_k &= \sum_{j = 1}^{\degree + 1} \bm{\theta}_j (\bm{\fe3}^{(1)}_j + \pcommitstring^{(1)}_{jk})\\
        &= \sum_{j = 1}^{\degree + 1} \bm{\theta}_j f_1(j)\\
        &= \sum_{\fe2 \in \cubeside} f_1(\fe2)\\
        &= \sum_{\evaluationpoint \in \cubeside^\dimension} f(\evaluationpoint)\\
        &= \fe1
    \end{align*}
    and
    \begin{align*}
        \bm{\chi}(\fingerprinttuple_\dimension) \cdot \bm{\fe3}^{(\dimension)} + \left(\bm{\chi}(\fingerprinttuple_\dimension) \cdot \pcommitstring^{(\dimension)}\right)_k &= \sum_{j = 1}^{\degree + 1} \chi_j(\fingerprinttuple_\dimension) (\bm{\fe3}^{(\dimension)}_j + \pcommitstring^{(\dimension)}_{jk})\\
        &= \sum_{j = 1}^{\degree + 1} \chi_j(\fingerprinttuple_\dimension) f_\dimension(j)\\
        &= f_\dimension(\fingerprinttuple_\dimension)\\
        &= f^x(\fingerprinttuple),
    \end{align*}
    respectively. The verifier thus accepts unless $\fingerprinttuple \neq \set{\vcommitstring_i : i \in [\vcommitlength]}$ in \cref{step:sumcheck-temporal-commit}, an event with probability
    \begin{equation*}
        \left(1 - \frac1{\fieldsize - \degree - 1}\right)^\vcommitlength \leq e^{-\vcommitlength / (\fieldsize - \degree - 1)^\dimension} \leq e^{-\vcommitlength/\fieldsize^\dimension} = o(1).
    \end{equation*}
    
    \paragraph*{Soundness\iflipics\else.\fi} We divide the behaviour of a malicious prover into three cases. The first (and simplest) is when $\widetilde P$ commits to $f_i$ for all $i$ and decommits with polynomials whose evaluations at $0$ yield the same values as the honest prover (i.e., in $\textsf{decommit}(\fe2, w, k)$ with $\fe3$ as the correction, $\widetilde P$ replies with a polynomial $g$ such that $g(0) = w_k + \fe3$). Then, since $\sum_{\evaluationpoint \in \cubeside^\dimension} f(\evaluationpoint) \neq \fe1$, the verifier rejects in $\textsf{decommit}\left(\fe1, \bm{\theta} \cdot \pcommitstring^{(1)}, k\right)$ with probability 1.
    
    The second case is when $\widetilde P$ commits to a sequence of polynomials $g_1, \ldots, g_\dimension$ such that $g_i \neq f_i$ for some $i$, and decommits honestly. Then $V$ accepts if and only if the set $\set{g_i}$ leads the verifier in the standard sumcheck protocol to accept; by the soundness of that protocol, $V$ accepts with probability at most $\degree\dimension/(\fieldsize - \degree - 1) = o(1)$.
    
    The only remaining case is when $\widetilde P$ commits to a sequence of polynomials $\set{g_i}$ (which may or may not coincide with $\set{f_i}$) and, in at least one decommitment with respect to a string $w$ where $\widetilde P$ receives the line $\line$, the prover replies with a degree-$\degree\dimension$ polynomial $g$ such that $g(0) \neq w_k = \hat w_{|\line}(0)$. Then, since $V$ has a fingerprint $\hat w(\Fingerprinttuple)$ with $\Fingerprinttuple \sim \field^\dimension$ and a field element $\rfe2 \sim \field$ such that $\line(\rfe2) = \Fingerprinttuple$, we have $g(\rfe2) \neq \hat w(\Fingerprinttuple) =  \hat w_{|\line}(\rfe2)$ with probability $\degree\dimension/\fieldsize = o(1)$ by \cref{lem:sz} (Schwartz-Zippel), and soundness follows.
   
   \paragraph*{Space and communication complexities\iflipics\else.\fi} The communication of the setup (\cref{step:sumcheck-temporal-commit}, the temporal commitment) is $\fieldsize^\dimension (\log \dimension + \log\log \fieldsize) \dimension \log \fieldsize = O(\fieldsize^\dimension \dimension \log^2 \fieldsize)$ bits. The communication of the interactive phase (\cref{step:sumcheck-algebraic-commit,step:sumcheck-temporal-decommit,step:sumcheck-algebraic-decommit}) is dominated by the $\dimension$ algebraic commitments to elements of $\field^{\degree + 1}$ with length $\pcommitlength = \fieldsize^{\log\log \fieldsize}$ each, for a total of $O(\fieldsize^{\log\log \fieldsize} \degree \dimension \log \fieldsize) \leq \fieldsize^{\log\log \fieldsize + 2}$ bits.
   
   The verifier's space complexity is dominated by computing $f^x(\fingerprinttuple)$ and storing $O(\dimension)$ elements of $\field^\dimension$ (i.e., $\fingerprinttuple$ and $\Fingerprinttuple^{(i)}$ for $i \in [\dimension + 1]$), so that it is bounded by $O(\dimension^2 \log \fieldsize)$.
\end{proof}

\subsection{Zero-knowledge}

Having shown that \textsf{zk-sumcheck} is a valid streaming interactive proof, we now show it is also zero-knowledge.

 \begin{theorem}
    \label{thm:sumcheck-zk}
    \cref{prot:zk-sumcheck} is zero-knowledge against $\poly(\fieldsize)$-space streaming distinguishers. The simulator has space complexity $\poly(\fieldsize)$.
\end{theorem}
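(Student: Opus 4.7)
The plan is to mirror the approach of \cref{thm:pep-zk}: construct a streaming simulator $S$ that uses white-box access to $\widetilde V$ and a random string to generate an indistinguishable view, then establish indistinguishability through a hybrid argument reducing each step to the hardness of \indexproblem via \cref{lem:string-to-bit-index}. The simulator interprets its random string as $(\vcommitstring, \pcommitstring^{(1)}, \ldots, \pcommitstring^{(\dimension)})$, sends $\vcommitstring$ in \cref{step:sumcheck-temporal-commit}, then uses the white-box oracle together with \cref{thm:correct-set} to extract the set $C \subseteq (\field \setminus [\degree+1])^\dimension$ of size $s = \polylog(\fieldsize)$ containing the tuples $\widetilde V$ is most likely to temporally decommit to. During the input-streaming step, $S$ computes and stores $f^x(\fingerprinttuple)$ for every $\fingerprinttuple \in C$.

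The core of the simulator is the multi-round algebraic commitment: at each round $i$, $S$ samples a random degree-$\degree$ univariate polynomial $g_i$ subject to the consistency constraints that the honest $f_i$ must satisfy -- namely $\sum_{\fe2 \in \cubeside} g_1(\fe2) = \fe1$, $\sum_{\fe2 \in \cubeside} g_i(\fe2) = g_{i-1}(\fingerprinttuple_{i-1})$ for $1 < i \leq \dimension$, and (for $i = \dimension$) $g_\dimension(y) = f^x(\fingerprinttuple_1, \ldots, \fingerprinttuple_{\dimension-1}, y)$ at every $y$ with $(\fingerprinttuple_1, \ldots, \fingerprinttuple_{\dimension-1}, y) \in C$. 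Each $g_i$ is committed via a random $\pcommitstring^{(i)}$ and correction $\bm{\fe3}^{(i)}_j = g_i(j) - \pcommitstring^{(i)}_{jk}$, and $S$ answers each of the $\dimension + 1$ \textsf{decommit} calls in \cref{step:sumcheck-algebraic-decommit} by computing the required line evaluations of the LDEs of $\bm{\theta} \cdot \pcommitstring^{(i)} - \bm{\chi}(\fingerprinttuple_{i-1}) \cdot \pcommitstring^{(i-1)}$ and their boundary variants, using oracle access to $\pcommitstring$. Receiver-side consistency then holds by the same telescoping calculation as in the completeness analysis of \cref{thm:sumcheck-correctness} with $g_i$ replacing $f_i$, and $S$ aborts at the temporal decommitment whenever $(\fingerprinttuple, \ell) \notin C$, which by \cref{thm:correct-set} happens with probability $o(1)$.

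Indistinguishability will be proved via a hybrid argument. Define $H_0, H_1, \ldots, H_\dimension$ where $H_0$ is the real transcript and in $H_j$ the first $j$ polynomials $f_1, \ldots, f_j$ have been replaced by the simulator's random $g_1, \ldots, g_j$. A space-$\poly(\fieldsize)$ streaming distinguisher with bias $\eps = \Omega(1)$ between $H_0$ and $H_\dimension$ must distinguish some adjacent pair $H_{j-1}, H_j$ with bias $\Omega(\eps/\dimension)$. The ensuing reduction embeds an algebraic-commitment-distinguishing instance into $\pcommitstring^{(j)}$ while simulating rounds before $j$ honestly (using real $f_i$) and rounds after $j$ in the simulator's style; invoking \cref{lem:string-to-bit-index} yields a one-way binary \indexproblem protocol over strings of length $\pcommitlength = \fieldsize^{\log\log \fieldsize}$ with $\poly(\fieldsize)$-bit messages and constant bias, contradicting \cref{prop:index-hardness}.

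The main obstacle lies in the single-hybrid reduction, because each $\pcommitstring^{(j)}$ participates in two distinct decommitments -- the consistency checks on either side of round $j$ -- each with a different coefficient vector. The \indexproblem embedding of \cref{clm:indistinguishability}, which targeted a single linear combination, must thus be extended to satisfy two linear constraints simultaneously. This is achievable by exploiting the linear independence of the coefficient functionals $\bm{\theta}$ (cube summation) and $\bm{\chi}(\fingerprinttuple_j)$ (Lagrange evaluation at $\fingerprinttuple_j \notin [\degree+1]$) over $\field^{\degree+1}$, allowing two rows of $\pcommitstring^{(j)}$ to be reserved for a $2 \times 2$ linear system that encodes both decommitment targets, while the remaining rows absorb the hidden \indexproblem instance. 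A secondary complication is that the pointwise constraints on $g_\dimension$ coming from $C$ must fit within $\degree + 1$ degrees of freedom; this requires either parameter settings with $\degree \gtrsim s$ or an argument that the number of tuples in $C$ sharing a common $(\dimension-1)$-prefix with $\widetilde V$'s chosen $(\fingerprinttuple_1, \ldots, \fingerprinttuple_{\dimension-1})$ is small with high probability, which can be absorbed into the $o(1)$ error budget.
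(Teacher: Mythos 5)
The paper does \emph{not} prove \cref{thm:sumcheck-zk} via a round-by-round hybrid argument, and your proposed hybrid is where the proof breaks down.

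Your hybrid $H_j$ replaces $f_1,\ldots,f_j$ with the simulator's $g_1,\ldots,g_j$ while keeping $f_{j+1},\ldots,f_\dimension$ as the true partial sums. But the sumcheck chain is rigidly linked: the decommitment between rounds $j$ and $j+1$ certifies that $\sum_{\fe2 \in H} f_{j+1}(\fe2) - g_j(\fingerprinttuple_j) = 0$. In truth $\sum_{\fe2 \in H} f_{j+1}(\fe2) = f_j(\fingerprinttuple_j)$, and $g_j(\fingerprinttuple_j) \neq f_j(\fingerprinttuple_j)$ in general (the simulator only constrains $g$ at points of $C$ and the $\fingerprinttuple_i$ are chosen adaptively, so $\fingerprinttuple_j$ need not lie in any constrained position). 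Consequently the consistency check in $H_j$ fails with probability $1 - o(1)$, and $H_j$ is trivially distinguishable from both $H_0$ and $H_\dimension$ by a distinguisher that simply reads the decommitted value. There is no way to re-anchor: any additive shift of $f_{j+1},\ldots,f_\dimension$ that repairs the $j$-th link telescopes forward and breaks the final check $g_\dimension(\fingerprinttuple_\dimension) = f^x(\fingerprinttuple)$. In short, the coupling constraints make the partial-sum sequence a single indivisible object, so the hybrid intermediates are not valid distributions to compare.

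The paper's actual proof performs a \emph{single} reduction. It fixes $\vcommitstring$, $C$, and the relevant verifier randomness, then observes that the map sending a constrained multivariate $g$ (satisfying the $s$ fingerprint constraints from $C$ plus the subcube-sum constraint) to the whole sequence of partial sums $(g_1,\ldots,g_\dimension)$ is $\field$-linear of some dimension $\ell \le (\degree+1)\dimension$. An entire \indexproblem instance $w \in \field^{\ell \times \pcommitlength}$ is then injected simultaneously across all $\dimension$ commitment matrices via linear coefficients $\bm{\xi}(\fingerprinttuple')$, using $\dimension+1$ reserved rows (one per matrix plus one extra) and shared random strings $t^{(1)},\ldots,t^{(\dimension+1)}$ so that Bob knows every decommitted linear combination without knowing $w$. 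The two decommitment directions per round are absorbed as linear constraints on those reserved rows, which is the idea you gestured at with the linear independence of $\bm{\theta}$ and $\bm{\chi}(\fingerprinttuple_j)$, but the paper solves the full $(\dimension+1)$-row linear system in one shot rather than a $2 \times 2$ system per round.

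Two smaller inaccuracies: (i) your per-round sampling (each $g_i$ uniform subject to the chain constraint and, for $i=\dimension$, the $C$ constraints) does not match the paper's simulator, which samples the $g_i$ as partial sums of a \emph{single} random multivariate $g$ constrained on $C$; the $C$ constraints affect the distribution of every $g_i$, not just $g_\dimension$, and the paper relies on \cite{BCFGRS17} for efficient sequential conditional sampling. (ii) Your ``secondary complication'' about fitting the $C$ constraints into $\degree+1$ degrees of freedom is not an issue in the paper's framework: the constraints live on the $\dimension$-variate polynomial $g$ of individual degree $\degree$, which has $(\degree+1)^\dimension$ degrees of freedom, comfortably larger than $s+1$.
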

\begin{proof}
    We shall prove indistinguishability as we have done earlier: with the simulator $S$ shown in \cref{alg:sumcheck-simulator}, we assume towards contradiction that there exists $\fe1 \in \field$, an input $x \in \field^n$, internal randomness $r$, a space-$O(\dimension^2 \log \fieldsize)$ verifier $\widetilde V$ and a $\poly(\fieldsize)$-space distinguisher $D$ that accepts $\view_{P, \widetilde V}(x, r)$ with probability $\eps = \Omega(1)$ above that with which $D$ accepts $S\big(\widetilde V, x, r\big)$. Then, via \cref{lem:string-to-bit-index}, we construct a one-way protocol for \indexproblem with impossibly large success probability.
    
    The space complexity of $S$ is dominated by its storing of $O(\dimension^2 \log \fieldsize)  = \poly(\fieldsize)$ elements of $\field^\dimension \times [\vcommitlength]$ and by the computation of the partial sums $(g_i : i \in [\dimension])$. Note that the naive strategy of sampling $g$ and computing the corresponding partial sums requires $\Omega(\degree ^\dimension)$ space; however, \cite{BCFGRS17} constructs an algorithm that can sample from the same distribution in $\poly(\fieldsize)$ time, and thus space.\footnote{More precisely, the algorithm of \cite{BCFGRS17} allows us to sample from the distributions $g_i(\fe2)$ for any $\fe2$ and $i$ under the uniform distribution of $g$ satisfying a set of constraints. To sample $(g_1, \ldots, g_\dimension)$, we begin with the set of constraints induced by $C$ and, after sampling $g_i(j)$, include the corresponding constraint before the next sample.} Note, moreover, that the alphabet over which $\vcommitstring$ is taken has size
    
    \begin{align*}
        (\fieldsize - \degree - 1)^\dimension &= \fieldsize^\dimension \left(1 - \frac{\degree + 1}{\fieldsize}\right)^\dimension\\
        &\geq \fieldsize^\dimension \left(1 - \frac1{\dimension}\right)^\dimension\\
        &\geq \frac{\fieldsize^\dimension}{3}\\
        &\geq \frac{32 \vcommitlength}{\log\log \vcommitlength},
    \end{align*}
    
    so that \cref{thm:correct-set} applies. (The conditions $(\fieldsize - \degree - 1)^\dimension = \Theta\left(\frac{\vcommitlength}{\log\log\vcommitlength}\right)$ and $\log \fieldsize \leq s = \polylog(\fieldsize)$ are also clearly satisfied.)
    
    \iflipics
    \begin{figure}
    \else
    \begin{algorithm}[label={alg:sumcheck-simulator}]{Simulator for \cref{prot:zk-sumcheck}}
    \fi
	    \textbf{Input:} Whitebox access to $\widetilde V$; oracle access to random bit string of length $\fieldsize^{\dimension + \log\log \fieldsize} \poly(\fieldsize)$ interpreted as the concatenation of $\vcommitstring \in (\field^\dimension)^\vcommitlength$ and $\pcommitstring^{(i)} \in \field^{(\degree + 1) \times \pcommitlength}$ for all $i \in [\dimension]$.\\
	    
	    \textbf{Output:} View $\left(\vcommitstring, x, \big(\pcommitstring^{(i)}, \bm{\fe3}^{(i)} : i \in [\dimension]\big), k, \big(h_i : i \in [\dimension + 1]\big)\right)$ with $\vcommitstring \in \big((\field \setminus [\degree + 1])^\degree\big)^\vcommitlength$, $\pcommitstring^{(i)} \in \field^{(\degree + 1) \times \pcommitlength}$, $\bm{\fe3}^{(i)} \in \field^{\degree + 1}$, $k \in [\pcommitlength]$ and $h_i: \field \to \field$ of degree $\degree\dimension$. 
	    
        \iflipics\else%
    	\tcbline
        \vspace*{-.5\baselineskip} 
        \fi
        
        \begin{steps}[wide]
            \setcounter{stepsi}{-1}
	        \item\label{step:sumcheck-simulator-temporal-commit} Temporal commitment
	        \begin{enumerate}[label={}, leftmargin=3em]
	            \item[$\bm S$:] Send $\vcommitstring \in \big((\field \setminus [\degree + 1])^\dimension\big)^\vcommitlength$.
	            
	            \item[$\bm{\widetilde V}$:] Simulate until the end of this step and let $\snapshot \in \bitset^s$ be the resulting snapshot of $\widetilde V$. Use the whitebox oracle $\whiteboxoracle$ to determine the set $C \subset \set{(\vcommitstring_i, i) : i \in [\vcommitlength]}$ of size $s$ with the largest $\whiteboxoracle(\snapshot, (\vcommitstring_i, i))$.
	        \end{enumerate}
            \iflipics\else%
        	\vspace*{-\baselineskip}
        	\tcbline
        	\vspace*{-.5\baselineskip}  
            \fi
        	
	        \item \label{step:sumcheck-simulator-input} Input streaming
	        \begin{enumerate}[label={}, leftmargin=3em]
	            \item[$\bm{\widetilde V}$:] Stream $x$, simulating the verifier while computing and storing $f^x(\vcommitstring_i)$ for all $(\vcommitstring_i, i) \in C$.
            \end{enumerate}
            \iflipics\else%
        	\vspace*{-\baselineskip}
        	\tcbline
        	\vspace*{-.5\baselineskip}  
            \fi
        	
	        \item \label{step:sumcheck-simulator-algebraic-commit} Algebraic commitments

	        \begin{enumerate}[label={}, leftmargin=3em]	        
	            \item[$\bm S$:] Take $g_1: \field \to \field$ of degree (at most) $\degree$ under the distribution determined by sampling $g: \field^\dimension \to \field$ subject to the constraints $\sum_{\evaluationpoint \in \cubeside^\dimension} g(\evaluationpoint) = \fe1$ and $g(\vcommitstring_i) = f^x(\vcommitstring_i)$ for all $(\vcommitstring_i, i) \in C$, then outputting $g_1(T) = \sum_{\fe2_2, \ldots, \fe2_{\dimension} \in \cubeside} g(T, \fe2_2 \ldots, \fe2_{\dimension})$.
	            
	            Sample $k \sim [\pcommitlength]$.
	            
	            \item[$\bm{\widetilde V}$:] Simulate until the end of the step.
	        \end{enumerate}

            \begin{enumerate}[label={}, leftmargin=1.3em]
                \item Repeat, from $i = 1$ to $\dimension$:
             	\item \begin{enumerate}[label={}, leftmargin=3em]
             	    \item[$\bm S$:] Send $\pcommitstring^{(i)}$ and $\bm{\fe3}^{(i)}= \left(g_i(j) - \pcommitstring^{(i)}_{jk} : j \in [\degree + 1]\right)$.
             	    \item[$\bm{\widetilde V}$:] Simulate until $\fingerprinttuple_i$ is sent (or until the end of the step when $i = \dimension$).
             	    \item[$\bm S$:] If $i < m$, sample $g_{i+1}$ under the distribution given by taking $g$ randomly and outputting $g_{i+1}(T) = \sum_{\fe2_{i+2}, \ldots, \fe2_{\dimension} \in \cubeside} g(\fingerprinttuple_1, \ldots, \fingerprinttuple_i, T, \fe2_{i+2}, \ldots, \fe2_{\dimension})$.
             	\end{enumerate}
             	
             	\item $\bm S$: Compute $\bm{\theta}$ such that $\sum_{\fe2 \in \cubeside} h(\fe2) = \sum_i \bm{\theta}_i h(i)$ when $h$ is a degree-$\degree$ univariate polynomial, and send $k$.
         	\end{enumerate}
         	
            \iflipics\else%
         	\vspace*{-\baselineskip} 
        	\tcbline	
            \vspace*{-.5\baselineskip} 
            \fi
            
	        \item \label{step:sumcheck-simulator-temporal-decommit} Temporal decommitment
	        \begin{enumerate}[label={}, leftmargin=3em]
	            \item[$\bm{\widetilde V}$:] Simulate until $\widetilde V$ sends $\ell \in [\vcommitlength]$.
	            
	            \item[$\bm S$:] Abort if $\vcommitstring_\ell \neq \fingerprinttuple$, $\fingerprinttuple \notin \big(\field \setminus [\degree + 1]\big)^\dimension$ or $(\fingerprinttuple, \ell) \notin C$.
            \end{enumerate}
            \iflipics\else%
        	\vspace*{-\baselineskip}
        	\tcbline
            \vspace*{-.5\baselineskip} 
            \fi
            
            \item \label{step:sumcheck-simulator-algebraic-decommit} Algebraic decommitments
            \begin{enumerate}[label={}, leftmargin=1.3em]
            
            \item For all $1 < i \leq \dimension$,
            \item \begin{enumerate}[label={}, leftmargin=3em]
	            \item[$\bm{\widetilde V}$]: Simulate until $\widetilde V$ sends a line $\line_i: \field \to \field^\dimension$.
	            
	            \item[$\bm S$:] Abort if $\line_i(0) \neq k$, and otherwise send
	            \begin{equation*}
	                \left(\left(\bm{\theta} \cdot \hat \pcommitstring^{(i)} - \bm{\chi}(\fingerprinttuple_i) \cdot \hat \pcommitstring^{(i-1)} \right) \circ \line_i(j) :j \in [\degree\dimension + 1]\right).
	            \end{equation*}
	        \end{enumerate}
	        \item $\bm{\widetilde V}$: Simulate until $\widetilde V$ sends a line $\line_1: \field \to \field^\dimension$.
	        
	        \item $\bm S$:  Abort if $\line_1(0) \neq k$, and otherwise send
	            \begin{equation*}
	                \left(\left(\bm{\theta} \cdot \hat \pcommitstring^{(1)}\right) \circ \line_1(j) :j \in [\degree\dimension + 1]\right).
	            \end{equation*}
	        
	        \item $\bm{\widetilde V}$: Simulate until $\widetilde V$ sends a line $\line_{\dimension + 1}: \field \to \field^\dimension$.
	        
	        \item $\bm S$: Abort if $\line_{\dimension + 1}(0) \neq k$, and otherwise send
	            \begin{equation*}
	                \left(\left(\bm{\chi}(\fingerprinttuple_\dimension) \cdot \hat \pcommitstring^{(\dimension)}\right) \circ \line_{\dimension + 1}(j) :j \in [\degree\dimension + 1]\right).
	            \end{equation*}
	       \end{enumerate}
	   \end{steps}
    \iflipics
    \caption{Simulator for \cref{prot:zk-sumcheck}}
    \label{alg:sumcheck-simulator}
    \end{figure}
    \else
    \end{algorithm}
    \fi
    
    We fix a string $\vcommitstring$ (and thus the set $C$ of the verifier's likely decommitments) along with bits of the verifier's random string $r$ that ensure distinguishing bias at least $\eps/2$ and $o(1)$ probability of simulation failure (recall that failure corresponds to the event $(\fingerprinttuple, \ell) \notin C$).
    Consider the following (linear) mapping between $\field$-vector spaces: from polynomials $g: \field^\dimension \to \field$ of degree at most $\degree$ that satisfy the $s + 1$ linear constraints of the fingerprints and subcube sum (i.e., $g(\fingerprinttuple) = \vcommitstring_i$ for all $(\vcommitstring_i, i) \in C$ and $\sum_{\evaluationpoint \in \cubeside^\dimension} g(\evaluationpoint) = \fe1$) to the sequence of univariate (partial sum) polynomials $\sum_{\fe2_{i + 1}, \ldots, \fe2_\dimension \in \cubeside} g(\fingerprinttuple_1, \ldots, \fingerprinttuple_{i - 1}, T, \fe2_{i + 1}, \ldots, \fe2_\dimension)$ for all $i \in [\dimension]$ and evaluation points $\fingerprinttuple$ in $C$.
    
    Let $\ell \leq (\degree + 1) \dimension$ be the dimension of the image of this mapping, and let $\bm{\xi} = \bm{\xi}(\fingerprinttuple) \in \field^{(\degree + 1)\dimension \times \ell}$ be the linear coefficients that map vectors in $\field^\ell$ to partial sums (given by $\degree + 1$ evaluations) with respect to $\fingerprinttuple$. We now proceed to Alice's strategy, who receives $w \in \field^{\ell \times \pcommitlength}$ as input and uses a random $\pcommitstring'^{(i)}$ shared with Bob for each commitment string $\pcommitstring^{(i)}$. She will also use $t^{(i)}$ for each pair $\pcommitstring^{(i-1)}, \pcommitstring^{(i)}$; additionally, $t^{(1)}$ and $t^{(\dimension + 1)}$ will be used for $\pcommitstring^{(1)}$ and $\pcommitstring^{(\dimension)}$, respectively. The $t^{(i)}$ will ensure Bob knows the linear combination of every algebraic decommitment.
    
    More precisely, Alice runs $S$ (with the fixed string $\vcommitstring$ and partially fixed $r$) until the end of \cref{step:sumcheck-simulator-temporal-commit}, determines the set $C$, samples $\fingerprinttuple' \sim F = \set{\vcommitstring_i : (\vcommitstring_i, i) \in C}$ and sets $\bm{\xi} = \bm{\xi}(\fingerprinttuple')$. For every $(i, j) \in [\dimension - 1] \times [\degree]$ and $(i, j) \in \set{\dimension} \times [\degree - 1]$, she sets $\pcommitstring^{(i)}_j = \pcommitstring'^{(i)}_j + \left(\bm{\xi} \cdot w\right)_{(i-1)\degree + j}$. She also sets the remaining rows (i.e., $\pcommitstring^{(i)}_{\degree + 1}$ for all $i$ as well as $\pcommitstring^{(\dimension)}_\degree$) to satisfy
    
    \begin{align*}
        \bm{\theta} \cdot \pcommitstring^{(1)} &= t^{(1)},\\
        \bm{\theta} \cdot \pcommitstring^{(i)} - \bm{\chi}(\fingerprinttuple'_{i-1}) \cdot \pcommitstring^{(i-1)} &= t^{(i)} \quad\text{ for } 1 < i \leq \dimension \text { and}\\
        \bm{\chi}(\fingerprinttuple'_\dimension) \cdot \pcommitstring^{(\dimension)} &= t^{(\dimension+1)}.
    \end{align*}
    
    Note that these are $\dimension + 1$ linear constraints on $\dimension + 1$ row vectors of dimension $\pcommitlength$, and since $\bm{\theta}_{\degree + 1}$ and $\chi_\degree(\fingerprinttuple'_\dimension)$ are nonzero, there is at least one solution.\footnote{The condition $\chi_\degree(\fingerprinttuple'_\dimension) \neq 0$ follows from choosing $\fingerprinttuple_\dimension' \notin [\degree + 1]$, and we assume the last entry of $\bm{\theta}$ is nonzero without loss of generality. Note that if $\bm{\theta}$ is the zero vector the problem trivialises: in this case the verifier does not need assistance from a prover (or even to stream $x$), accepting if and only if $\fe1 = 0$.} (If some constraint is not independent from the others, Alice replaces it with a ``canonical'' constraint to ensure a unique solution, e.g., setting the linear coefficients for $\pcommitstring^{(i)}_{\degree + 1}$ with the smallest bit representation that makes the constraint independent.) She then simulates \cref{step:sumcheck-simulator-input} and the part of \cref{step:sumcheck-simulator-algebraic-commit} until $\widetilde V$ (and $D$) finish streaming the $\pcommitstring^{(i)}$, sending the resulting snapshots of $S$, $\widetilde V$ and $D$ to Bob along with $\fingerprinttuple'$ in a $\poly(\fieldsize)$-bit message.
    
    Bob reads his input $(\bm{\eta}, k)$ and sets the correction tuples $\bm{\fe3}^{(i)} \in \field^{\degree + 1}$ so as to satisfy constraints with the same linear coefficients as $\pcommitstring^{(i)}$: he sets $\bm{\fe3}^{(i)}_j  = (\bm{\xi} \cdot \bm{\eta})_{(i - 1) \degree + j} - \pcommitstring'^{(i)}_{jk}$ for $(i, j) \in [\dimension - 1] \times [\degree]$ and $(i, j) \in \set{\dimension} \times [\degree - 1]$; then sets the coordinates $i = \degree + 1$ and $j \in [\dimension]$ as well as $(i, j) = (\degree, \dimension)$ to satisfy
    
    \begin{align*}
        \bm{\theta} \cdot \bm{\fe3}^{(1)} &= \fe1 - t_k^{(1)},\\
        \bm{\theta} \cdot \bm{\fe3}^{(i)} - \bm{\chi}(\fingerprinttuple_{i-1}') \cdot \bm{\fe3}^{(i-1)} &= -t_k^{(i)} \quad\text{ for } 1 < i \leq \dimension \text { and}\\
        \bm{\chi}(\fingerprinttuple_\dimension') \cdot \bm{\fe3}^{(\dimension)} &= f^x(\fingerprinttuple') - t_k^{(\dimension+1)}.
    \end{align*}
    
    Bob then finishes the simulation of \cref{step:sumcheck-simulator-algebraic-commit} with the coordinate $k \in [\pcommitlength]$.
    
    In \cref{step:sumcheck-temporal-decommit}, if $\fingerprinttuple' \neq \fingerprinttuple$ or the simulation fails (i.e., $\vcommitstring_\ell = \fingerprinttuple$ but $(\fingerprinttuple, \ell) \notin C$), Bob accepts or rejects uniformly at random. Otherwise, he simulates \cref{step:sumcheck-simulator-algebraic-decommit} until the protocol terminates (which his access to the shared random strings $t^{(i)}$ enables him to).
    At the end of the simulation, Bob accepts if and only if $D$ accepts.
    
    Note that, when $\bm{\eta} = \bm{\tau} - (w_{ik} : i \in [\ell])$ for a vector $\bm{\tau}$ that maps to the polynomials $(g_i : i \in [\dimension])$ via $\bm{\xi} = \bm{\xi}(\fingerprinttuple)$, then $\bm{\fe3}^{(i)}_j$ satisfies
    \begin{align*}
        \bm{\fe3}^{(i)}_j &= (\bm{\xi} \cdot \bm{\eta})_{(i - 1) \degree + j} -  \pcommitstring'^{(i)}_j\\
        &= g_i(j) - (\bm{\xi} \cdot w)_{(i - 1) \degree + j, k} - \pcommitstring'^{(i)}_{jk}\\
        &= g_i(j) - \pcommitstring^{(i)}_{jk}
    \end{align*}
    for all $i,j$ in $[\dimension - 1] \times [\degree]$ and $\set{\dimension} \times [\degree - 1]$ (equivalently, for all $i,j$ such that $\pcommitstring^{(i)}_j$ includes a linear combination of the rows of $w$). Then the linear constraints satisfied by the other $\dimension + 1$ pairs ensures the equality extends to all $(i, j)$: for $i \in [\dimension], j = \degree + 1$ and $(i, j) = (\dimension, \degree)$, we have
    
    \begin{align*}
        \bm{\theta} \cdot \bm{\fe3}^{(1)} &= \fe1 - t_k^{(1)}\\
        &= \sum_{j = 1}^{\degree + 1} \bm{\theta}_j g(j) - t_k^{(1)}\\
        &= \sum_{j = 1}^{\degree + 1} \bm{\theta}_j \left(g(j) - \pcommitstring_{jk}^{(1)}\right),\\
        \bm{\chi}(\fingerprinttuple_\dimension) \cdot \bm{\fe3}^{(\dimension)} &= f^x(\fingerprinttuple) - t_k^{(\dimension+1)}\\
        &= g_\dimension(\fingerprinttuple_\dimension) - t_k^{(\dimension+1)}\\
        &= \sum_{j = 1}^{\degree + 1} \chi_j(\fingerprinttuple_\dimension) \left(g_\dimension(j) - \pcommitstring^{(\dimension)}_{jk}\right),
    \end{align*}
    and, for $1 < i \leq \dimension$,
    \begin{align*}
        \bm{\theta} \cdot \bm{\fe3}^{(i)} - \bm{\chi}(\fingerprinttuple_{i-1}) \cdot \bm{\fe3}^{(i-1)} &= -t_k^{(i)}\\
        &= \sum_{j = 1}^{\degree + 1} \left(\chi_j(\fingerprinttuple_{i-1}) \pcommitstring^{(i-1)}_{jk} - \bm{\theta}_j \pcommitstring^{(i)}_{jk}\right)\\
         &= \sum_{j = 1}^{\degree + 1} \bm{\theta}_j \left(g_i(j)- \pcommitstring^{(i)}_{jk}\right) + \sum_{j = 1}^{\degree + 1} \chi_j(\fingerprinttuple_{i-1}) \left(g_{i-1}(j)- \pcommitstring^{(i-1)}_{jk}\right).
    \end{align*}
    
     That is, since the $\bm{\fe3}^{(i)}$ satisfy the same linear constraints as the vectors $\big(g_i(j) - \pcommitstring^{(i)}_{jk} : j \in [\degree + 1]\big)$, it follows that they are equal. Therefore the resulting view is distributed \emph{exactly} as $\view_{P, \widetilde V}(x, r)$ when $\bm{\tau}$ maps to the partial sums of $f^x$ (and thus $\bm{\xi}(\fingerprinttuple) \cdot \tau$ maps to the partial sums with respect to $\fingerprinttuple$); and if $\bm{\eta} \sim \field^{\ell}$, it is distributed as $S(\widetilde V, x, r)$ (unless the simulation fails or $\fingerprinttuple \neq \fingerprinttuple'$).

    This one-way protocol achieves bias $0$ when the simulation fails (an $o(1)$-probability event) or the verifier's temporal decommitment $\fingerprinttuple$ is in $C$ (i.e., the simulation succeeds) but $\fingerprinttuple  \neq \fingerprinttuple'$, an event with conditional probability $1 -\frac1{\abs{C}} = 1 - \frac1{s}$. Otherwise, it achieves a bias of $\eps/2$. We thus have
    \begin{align*}
        \P_{\substack{w \sim \field^{\ell \times \pcommitlength}\\k \sim [\pcommitlength]}}&\left[B\left(A(w), \big(f^x(i) - w_{ik} : i \in [\ell]\big), k\right)\text{ accepts}\right]\\
        &- \P_{\substack{w \sim \field^{\ell \times \pcommitlength}\\k \sim [\pcommitlength]\\\bm{\eta} \sim \field^\ell}}\left[B\left(A(w), \bm{\eta}, k\right)\text{ accepts}\right]\\
        &= o(1) \cdot 0 + \big(1 - o(1)\big) \cdot \left(1 - \frac1 s\right) \cdot 0 + \big(1 - o(1)\big) \cdot \frac1 s \cdot \frac{\eps}{2}\\
        &\geq \frac{\eps}{3s}.
    \end{align*}

    Applying \cref{lem:string-to-bit-index} yields a one-way binary \indexproblem protocol for strings of length $\pcommitlength = \fieldsize^{\log\log \fieldsize} $ with messages of length $\frac{s^2 \ell^2 \log^2 \fieldsize}{\eps^2} \poly(\fieldsize) = \poly(\fieldsize)$ and constant bias. But this contradicts \cref{prop:index-hardness}'s upper bound of $O\left(\sqrt{\poly(\fieldsize)/\pcommitlength}\right) = o(1)$.
\end{proof}

\subsection{Applications: \frequencymoment and \innerproductproblem}
\label{sec:sumcheck-applications}

We now proceed to applications of \textsf{zk-sumcheck}. The first is a \zksip{} that (exactly) computes frequency moments of order $k > 1$ (commonly denoted $F_k$) for a stream over an alphabet of size $\ell$, a problem known to require $\Omega(\ell)$ space without a prover \cite{AMS99}.

\begin{definition}
    \label{def:frequency-moment}
    Fix $k \in \N$. For every $\ell \in [n]$ and $t \in [n^k]$, the language $\frequencymoment_k(t)$ is $\set{x \in [\ell]^n : \sum_{i \in [\ell]} \varphi_i(x)^k = t}$, where $\varphi_i(x) \coloneqq \abs{\set{j \in [n] : x_j = i}}$.
\end{definition}

\begin{corollary}
    \label{cor:frequency-moment}
    Fix $1 < k \in \N$ and $\delta \in (0, 1]$. For every $\ell \in [n]$ and $t \in [n^k]$, there exists a zero-knowledge SIP for $\frequencymoment_k(t)$ with space complexity $O(\log^2 n / \log\log n)$. The communication complexity is $O(n^{1 + \delta})$ in the setup and $n^{o(1)}$ in the interactive phase, and the protocol is secure against $\polylog(n)$-space distinguishers.
\end{corollary}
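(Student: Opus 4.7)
The plan is to reduce $\frequencymoment_k(t)$ to a single instance of $\sumcheck(f, t)$ and invoke \cref{thm:sumcheck-correctness,thm:sumcheck-zk}. The natural choice is to encode $F_k$ as a sum of powers of the low-degree extension of the frequency vector: set $\varphi = \varphi(x) \in \N^\ell$ and, viewing $\varphi \in \field^\ell$ via a canonical injection, let $\hat\varphi \colon \field^\dimension \to \field$ be its $\dimension$-variate LDE of individual degree $\degree$ where $(\degree + 1)^\dimension \geq \ell$. Define the family $x \mapsto f^x$ by $f^x = \hat\varphi^k$, a polynomial of individual degree $k\degree$ in $\dimension$ variables. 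Taking $\cubeside = \{0\} \cup [\degree] \subset \field$, we have
\begin{equation*}
    \sum_{\bm{\alpha} \in \cubeside^\dimension} f^x(\bm{\alpha}) = \sum_{\bm{\alpha} \in \cubeside^\dimension} \hat\varphi(\bm{\alpha})^k = \sum_{i \in [\ell]} \varphi_i(x)^k = F_k(x),
\end{equation*}
so $x \in \frequencymoment_k(t)$ if and only if $x \in \sumcheck(f, t)$.

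The key observation enabling small verifier space is that a single evaluation $f^x(\fingerprinttuple) = \hat\varphi(\fingerprinttuple)^k$ can be computed by streaming $x$ in $O(\dimension \log \fieldsize)$ space: the verifier stores $\fingerprinttuple \in \field^\dimension$ and a running sum initialised to zero; upon reading $x_j = i$, it adds $\chi_i(\fingerprinttuple)$ to the sum (which is computable online from $\fingerprinttuple$ in $O(\dimension \log \fieldsize)$ space via standard Lagrange interpolation), so after streaming $x$ the running sum equals $\hat\varphi(\fingerprinttuple)$ and a single field exponentiation yields $f^x(\fingerprinttuple)$. Choosing $\dimension = \Theta(\log n / \log\log n)$, $\degree = \Theta(\log n)$ so that $(\degree + 1)^\dimension \geq n \geq \ell$, and $\fieldsize = \Theta(\log^{1 + 2/\delta} n)$ (ensuring $k\degree\dimension = o(\fieldsize)$ since $k$ is constant), the sumcheck complexities from \cref{thm:sumcheck-correctness} become: verifier space $O(\dimension^2 \log \fieldsize) = O(\log^2 n / \log\log n)$; setup communication $O(\fieldsize^\dimension \dimension \log^2 \fieldsize) = O(n^{1 + \delta})$; and interactive communication $\fieldsize^{\log\log \fieldsize + O(1)} = n^{o(1)}$. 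Security against $\poly(\fieldsize) = \polylog(n)$-space distinguishers follows from \cref{thm:sumcheck-zk}.

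The main subtlety is that the identity $\sum_{\bm{\alpha} \in \cubeside^\dimension} f^x(\bm{\alpha}) = F_k(x)$ must hold over $\field$ rather than over $\Z$, which requires $\fieldsize$ to exceed $n^k$ -- far larger than our choice of $\fieldsize = \polylog(n)$. I would handle this exactly as in \cref{cor:point-query}: apply the Chinese Remainder Theorem by running the protocol in parallel over $O(\log n)$ distinct primes of size $\Theta(\log n)$ each, so that the tuple of residues uniquely identifies any integer in $[n^k]$ (here $k$ is a constant, so the number of fields is $O(\log n)$ and the overhead is polylogarithmic). The verifier accepts iff every parallel execution accepts; completeness, soundness, and zero-knowledge compose directly since the streams and the simulator strategy are independent per prime. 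This CRT step multiplies the verifier space by an $O(\log n)$ factor, but remains within the claimed $O(\log^2 n / \log\log n)$ budget by absorbing the factor into the choice of $\dimension$ and $\fieldsize$, and leaves the communication complexities unchanged up to a $\polylog(n)$ factor, well within $O(n^{1+\delta})$ and $n^{o(1)}$.
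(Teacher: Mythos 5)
Your core reduction is correct and matches the paper's in spirit: encode $F_k(x)$ as a subcube sum of a low-degree polynomial built from the LDE of the frequency vector, evaluate it at a single random point while streaming $x$, and then invoke \textsf{zk-sumcheck}. Your formulation $f^x = \hat\varphi^k$ ($\dimension$-variate, degree $k\degree$) is actually cleaner than the paper's $(\dimension-1)$-variate choice $f^x(\bm{\fe1}) = \sum_{i} \hat\varphi(i,\bm{\fe1})^k$: the verifier maintains a \emph{single} running sum for $\hat\varphi(\fingerprinttuple)$ rather than $\degree/k+1$ of them, so the input-processing space bound $O(\dimension\log\fieldsize)$ comes out smaller. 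You are also right to flag that $\sum_{\bm\alpha} f^x(\bm\alpha) = F_k(x)$ must hold over $\Z$ and not merely over a $\polylog(n)$-size field, a point the paper's proof passes over silently.

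That said, there are two concrete gaps. First, the choice $\degree = \Theta(\log n)$ does not work: to cover $[n]$ you need $(\degree+1)^\dimension \geq n$, hence $\dimension \geq \log n / \log\log n$, but then the setup communication is $\fieldsize^\dimension \dimension \log^2 \fieldsize = \Omega\left(\left(\log^{1+2/\delta}n\right)^{\log n/\log\log n}\right) = \Omega\left(n^{1+2/\delta}\right)$, which overshoots $O(n^{1+\delta})$ for every $\delta \leq 1$. The two requirements on $\dimension$ are incompatible at that degree. The paper instead sets $\degree = \Theta(\log^{2/\delta} n)$ so that $\dimension = \frac{\delta\log n}{2\log\log n}$ simultaneously gives $(\degree+1)^\dimension \geq n$ and $\fieldsize^\dimension = O(n^{1+\delta/2})$. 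Second, your CRT fix does not fit the claimed space budget. To recover an integer in $[n^k]$ from residues modulo primes of size $\Theta(\log^{1+2/\delta} n)$ you need at least $\Theta(\log n/\log\log n)$ parallel executions, and each one costs the sumcheck verifier $\Theta(\dimension^2\log\fieldsize) = \Theta(\log^2 n/\log\log n)$ space (storing the $\Theta(\dimension)$ evaluation points $\Fingerprinttuple^{(i)} \in \field^\dimension$), so the naive total is $\Theta(\log^3 n/\log^2\log n)$, not $O(\log^2 n/\log\log n)$. The assertion that this factor can be ``absorbed into the choice of $\dimension$ and $\fieldsize$'' is not substantiated: shrinking $\dimension$ forces $\degree$ (and hence $\fieldsize$) to grow superpolylogarithmically, which in turn breaks the interactive-phase communication bound $\fieldsize^{\log\log\fieldsize}\poly(\fieldsize) = n^{o(1)}$. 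Reconciling the CRT overhead with the stated space bound needs a genuine argument (e.g., sharing evaluation points across residue fields), not a parameter tweak.
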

\begin{proof}
    We set parameters analogously to \cref{cor:index}, but take into account the factor-$k$ blowup in the degree of $f^x$: set degree $\degree = k \log^{\frac{2}{\delta}} n = O\left(\log^{\frac{2}{\delta}} n\right)$, dimension $\dimension = \frac{\delta \log n}{2\log\log n}$, and take a field $\field$ of size $\abs{\field} = \fieldsize = \Theta\left(\log^{1 + \frac{2}{\delta}} n\right)$. The mapping $x \mapsto f^x$ is defined as follows: viewing $[\ell] \hookrightarrow [\degree + 1]^\dimension \hookrightarrow \field^\dimension$ and defining the frequency vector $\varphi = \varphi(x) \coloneqq \big(\varphi_i(x) : i \in [\ell]\big)$, set $f^x(\bm{\fe1}) \coloneqq \sum_{i \in [\degree + 1]} \hat \varphi(i, \bm{\fe1})^k$ for $\bm{\fe1} \in \field^{\dimension - 1}$, where $\hat \varphi$ is the degree-$d/k$ extension of $\hat \varphi$. Note that $f^x$ is a $(\dimension - 1)$-variate degree-$\degree$ polynomial.
    
    Using $O(\degree\dimension \log \fieldsize) = O(\dimension^2 \log \fieldsize)$ bits of space (recall that $k$ is constant), the verifier can compute all the low-degree extensions $\hat \varphi(i, \fingerprinttuple) \in \field$ (by adding $\chi_{x_j}(i, \fingerprinttuple)$ to each running sum upon reading $x_j$); then, after the stream, $V$ raises each LDE to the $k^\text{th}$ power and adds the results to obtain $f^x(\fingerprinttuple)$.
    
    Applying \cref{prot:zk-sumcheck}, the verifier checks whether
    \begin{equation*}
        \sum_{\bm{\fe1} \in [\degree + 1]^{\dimension - 1}} f^x(\bm{\fe1}) = \sum_{\evaluationpoint \in [\degree + 1]^\dimension} \hat \varphi (\evaluationpoint)^k = \sum_{i \in [\ell]} \varphi_i^k
    \end{equation*}
    is equal to $t$. The space complexity is $O(\dimension^2 \log \fieldsize) = O(\log^2 n / \log\log n)$; the communication complexity of the setup step is of order
    \begin{equation*}
        \fieldsize^\dimension \dimension \log^2 \fieldsize = n^{1 + \frac{\delta}{2}}\polylog(n) = O\left(n^{1 + \delta}\right),
    \end{equation*}
    and $\fieldsize^{\log\log \fieldsize} \poly(\fieldsize) = n^{o(1)}$ in the interactive phase. Lastly, the protocol is secure against distinguishers with space $\poly(\fieldsize) = \polylog(n)$.
\end{proof}

Our second and last last application is a small modification of the $F_2$ protocol that allows us to compute inner products.
\begin{definition}
    \label{def:inner-product}
    For every $\ell \in [n]$, $t \in [n^2 \ell]$ and field $\field$, the language $\innerproductproblem(t)$ is defined as $\set{(x,y) \in \field^n \times \field^n : \varphi(x) \cdot \varphi(y) = \sum_{i \in [\ell]} \varphi_i(x) \varphi_i(y) = t}$.
\end{definition}

\begin{corollary}
    \label{cor:inner-product}
    For every $\delta \in (0, 1]$, $\ell \in [n]$, $t \in [n^2 \ell]$ and field $\field_\fieldsize$ with $\fieldsize = \Theta\left(\log^{1 + \frac{2}{\delta}} n\right)$, there exists a \zksip{} for $\innerproductproblem(t)$ with space complexity $O(\log^2 n / \log\log n)$ and communication complexities $O(n^{1 + \delta})$ and $n^{o(1)}$ in the setup and communication phases, respectively.
\end{corollary}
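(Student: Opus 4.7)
The plan is to mirror the proof of \cref{cor:frequency-moment} (for $k=2$), with the $k$\textsuperscript{th} power replaced by a product of two low-degree extensions. Specifically, fix the parameters $\degree = 2\log^{2/\delta}n$, $\dimension = \delta \log n / 2 \log \log n$ and $\abs{\field} = \fieldsize = \Theta(\log^{1 + 2/\delta} n)$. View $[\ell] \hookrightarrow [\degree/2 + 1]^\dimension \hookrightarrow \field^\dimension$, and, for an input $(x, y) \in \field^n \times \field^n$, let $\hat \varphi(x)$ and $\hat \varphi(y)$ denote the $\dimension$-variate degree-$\degree/2$ extensions of their respective frequency vectors. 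Define the mapping
\begin{equation*}
    f^{(x,y)}(\bm{\fe1}) \coloneqq \sum_{i \in [\degree + 1]} \hat \varphi(x)(i, \bm{\fe1}) \cdot \hat \varphi(y)(i, \bm{\fe1}),
\end{equation*}
for $\bm{\fe1} \in \field^{\dimension - 1}$; note that $f^{(x,y)}$ is a $(\dimension - 1)$-variate polynomial of individual degree $\degree$.

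By a direct calculation (using that both extensions vanish on $[\degree + 1]^\dimension \setminus [\ell]$), we have
\begin{equation*}
    \sum_{\bm{\fe1} \in [\degree + 1]^{\dimension - 1}} f^{(x,y)}(\bm{\fe1}) = \sum_{\evaluationpoint \in [\degree + 1]^\dimension} \hat \varphi(x)(\evaluationpoint) \cdot \hat \varphi(y)(\evaluationpoint) = \sum_{i \in [\ell]} \varphi_i(x) \varphi_i(y) = \varphi(x) \cdot \varphi(y),
\end{equation*}
so that $(x,y) \in \innerproductproblem(t)$ if and only if $f^{(x,y)}$ sums to $t$ over the subcube $[\degree + 1]^{\dimension - 1}$. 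We then apply $\textsf{zk-sumcheck}(f, t)$ to this mapping; completeness, soundness and zero-knowledge follow immediately from \cref{thm:sumcheck-correctness,thm:sumcheck-zk}.

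The remaining point is that the verifier must evaluate $f^{(x,y)}(\fingerprinttuple)$ in low space while streaming $x$ followed by $y$. This is accomplished by maintaining, during the pass over $x$, the running sums $\hat \varphi(x)(i, \fingerprinttuple) \in \field$ for all $i \in [\degree + 1]$ (updating each by adding $\chi_{x_j}(i, \fingerprinttuple)$ upon reading $x_j$); the analogous $\hat \varphi(y)(i, \fingerprinttuple)$ are computed during the stream of $y$. Once both passes are complete, $V$ multiplies the pairs and sums them, obtaining $f^{(x,y)}(\fingerprinttuple)$. This uses $O(\degree \dimension \log \fieldsize)$ space, which is subsumed by the sumcheck verifier's overall $O(\dimension^2 \log \fieldsize) = O(\log^2 n / \log\log n)$ budget. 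The communication complexities $O(n^{1 + \delta})$ in the setup and $\fieldsize^{\log\log \fieldsize} \poly(\fieldsize) = n^{o(1)}$ in the interactive phase then follow from \cref{thm:sumcheck-correctness} exactly as in \cref{cor:frequency-moment}.

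The main (very mild) subtlety is merely bookkeeping: the two input streams $x$ and $y$ must be handled sequentially, and the verifier must allocate separate LDE registers for each. Since $\hat \varphi(x)$ and $\hat \varphi(y)$ are evaluated at the \emph{same} random point $\fingerprinttuple$, the space blowup is only a constant factor, and both the temporal commitment from the setup and the simulator of \cref{alg:sumcheck-simulator} go through without modification (the simulator samples the appropriate partial sums of a random polynomial of degree $\degree$ agreeing with $f^{(x,y)}$ on the set $C$ identified by the temporal commitment, which is how zero-knowledge is preserved).
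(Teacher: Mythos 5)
Your proof is correct and takes essentially the same route as the paper: the same mapping $f^{x,y}(\bm{\fe1}) = \sum_i \widehat{\varphi(x)}(i,\bm{\fe1})\,\widehat{\varphi(y)}(i,\bm{\fe1})$, the same parameter settings (those of \cref{cor:frequency-moment} with $k=2$), the same low-space evaluation strategy of storing $\widehat{\varphi(x)}(i,\fingerprinttuple)$ and $\widehat{\varphi(y)}(i,\fingerprinttuple)$ for $i \in [\degree+1]$, and a direct invocation of $\textsf{zk-sumcheck}$. Your added remarks about sequential stream handling, degree bookkeeping and the simulator carrying over are correct amplifications of what the paper leaves implicit; the one thing you omit is the paper's concluding observation that the alternative reduction of inner product to $F_2$ via polarisation would leak the intermediate second moments and hence fail zero-knowledge, but since you take the direct route this does not affect correctness.
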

\begin{proof}
    We use the same parameter settings as \cref{cor:frequency-moment} and define
    \begin{equation*}
        f^{x,y}(\bm{\fe1}) = \sum_{i \in [\degree + 1]} \widehat{\varphi(x)}(i, \bm{\fe1}) \widehat{\varphi(y)}(i, \bm{\fe1}),
    \end{equation*}
    a polynomial of degree $2\degree = 2\log^{\frac{2}{\delta}} n$ whose evaluation the verifier computes by saving $\widehat{\varphi(x)}(i,\fingerprinttuple)$ and $\widehat{\varphi(y)}(i,\fingerprinttuple)$ for $i \in [\degree + 1]$. \cref{prot:sumcheck} enables the verifier to check that $\sum_{i \in [\ell]} \varphi_i(x) \varphi_i(y)$ equals $t$, as desired, with complexities of the same order as in \cref{cor:frequency-moment}.
\end{proof}

We remark that while one might reduce inner product to $F_2$, by taking the difference between the second moment of $\varphi(x) + \varphi(y)$ and the second moments of $\varphi(x)$ and $\varphi(y)$, the resulting protocol leaks these values, and is therefore not zero-knowledge.

\ifblind
\else
\section*{Acknowledgements}

We thank Aditya Prakash for the proof of \cref{clm:probability-ip}, as well as Justin Thaler and Nick Spooner for fruitful discussions and careful reading of an earlier version of this manuscript. 
\fi

\iflipics
\bibliographystyle{plainurl}
\else
\bibliographystyle{alpha}
\fi
\bibliography{references}

\appendix

\section{Deferred proofs}
\label{sec:deferred-proofs}

\subsection{Proof of \texorpdfstring{\cref{prop:index-hardness}}{Theorem \ref{prop:index-hardness}}}
\label{sec:deferred-index-hardness}
\begin{proposition}[\cref{prop:index-hardness}, restated]
    Any one-way communication protocol for \searchindex with input $(x, j) \sim \alphabet^\pcommitlength \times [\pcommitlength]$ that sends an $s$-bit message succeeds with probability at most $\frac1{\abs{\alphabet}}+O\left(\sqrt{s/\pcommitlength}\right)$.
\end{proposition}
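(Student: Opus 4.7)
The plan is to extend the known binary-alphabet argument of \cite{RY20} via a direct information-theoretic analysis that generalises essentially verbatim once one accounts for the fact that $x_j$ is now uniform on $\alphabet$ rather than on $\bitset$. Throughout, I take $(x,j) \sim \alphabet^\pcommitlength \times [\pcommitlength]$ with $x$ and $j$ independent, and write $a = A(x)$; by the minimax principle, both parties may be assumed deterministic, so that Bob's optimal strategy on receipt of $(a,j)$ is to output $\arg\max_{\alpha \in \alphabet} \Pr[x_j = \alpha \mid a]$, achieving success probability $\E_{a,j}\bigl[\max_\alpha P_{x_j \mid a}(\alpha)\bigr]$.

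First I would observe the pointwise inequality $\max_\alpha \mu(\alpha) \leq 1/\abs{\alphabet} + \norm{\mu - U_\alphabet}_{\mathrm{TV}}$, valid for any distribution $\mu$ on $\alphabet$ with $U_\alphabet$ uniform, which is immediate from the identity $\norm{\mu - U_\alphabet}_{\mathrm{TV}} = \sum_{\alpha : \mu(\alpha) > 1/\abs{\alphabet}}(\mu(\alpha) - 1/\abs{\alphabet})$. Applying it with $\mu = P_{x_j \mid a}$ reduces the task to bounding the average variation distance to uniform. I would then invoke Pinsker's inequality (\cref{eq:pinsker}) followed by Jensen's inequality on the concave square root; after averaging over $a$, the expected KL term collapses to the mutual information $I(x_j : a)$, because the marginal $P_{x_j}$ equals $U_\alphabet$ (see \cref{eq:mutual-information}), yielding
\[
    \E_{a}\norm{P_{x_j \mid a} - U_\alphabet}_{\mathrm{TV}} \leq \sqrt{\frac{I(x_j : a)}{2\ln 2}}.
\]

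It then remains to control $\E_{j \sim [\pcommitlength]}\, I(x_j : a)$. By the chain rule (\cref{eq:chain-rule-entropy}), $I(x : a) = \sum_{i=1}^\pcommitlength I(x_i : a \mid x_{<i})$; combining the independence of the coordinates of $x$ (\cref{eq:entropy-independence}) with the monotonicity of conditional entropy (\cref{eq:entropy-conditioning}) gives $I(x_i : a) \leq I(x_i : a \mid x_{<i})$. Summing and using $I(x : a) \leq H(a) \leq s$ (\cref{eq:entropy-bounds}) yields $\E_j\, I(x_j : a) \leq s / \pcommitlength$, and a final Jensen step over $j$ combines the two displayed bounds into the announced $1/\abs{\alphabet} + O(\sqrt{s/\pcommitlength})$.

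No step presents a serious obstacle, since each ingredient is classical; the only subtlety worth highlighting is the chain-rule manoeuvre, where the independence of the $x_i$ is essential for converting the conditional $I(x_i : a \mid x_{<i})$ produced by the chain rule into the unconditional $I(x_i : a)$ required to average over $j$.
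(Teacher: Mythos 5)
Your proof is correct and follows essentially the same route as the paper's: minimax reduces to deterministic strategies with a Bayes-optimal Bob, the max-posterior $\max_\alpha P_{x_j\mid a}(\alpha)$ is controlled by a distance from $P_{x_j\mid a}$ to uniform, Pinsker's inequality and Jensen's inequality convert this to $\sqrt{\E_j I(x_j:a)}$, and the chain rule together with the independence of the coordinates of $x$ gives $\sum_i I(x_i:a)\le I(x:a)\le H(a)\le s$. The only cosmetic difference is that you bound the excess over $1/\abs{\alphabet}$ by the total variation distance, whereas the paper bounds it by the Euclidean $2$-norm before invoking Pinsker; both choices yield the same $O\big(\sqrt{s/\pcommitlength}\big)$ term.
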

\begin{proof}
    Define, for ease of notation, $\alphabetsize = \abs{\alphabet}$. We follow the strategy used in \cite{RY20} for the binary case. First, note that by the minimax theorem we may assume Alice's and Bob's strategies are deterministic; i.e., that Alice sends $A(x) \in \bitset^s$ and Bob outputs $B(A(x),j) \in \alphabet$ for some functions $A$ and $B$.

    Let $\lambda$ be the distribution of Alice's message $A = A(x)$ induced by the (uniform) distribution of $x$, partitioning $\alphabet^\pcommitlength$ into $\set{P_a}$ where $P_a = A^{-1}(a) = \set{x \in \alphabet^\pcommitlength : A(x) = a}$. Note that the distribution of $x$ conditioned on $A = a$ is uniform over $P_a$, and that $\P_{A \sim \lambda}[A = a] = \abs{P_a}/\alphabetsize^\pcommitlength$. Then,
    \begin{align}
    \label{ineq:index}
        \P_{\subalign{x &\sim \alphabet^\pcommitlength\\j &\sim [\pcommitlength]}}[\text{Bob outputs } x_j] &= \sum_{a \in \bitset^s}\P_{x \sim \alphabet^\pcommitlength}[A(x) = a] \cdot \P_{\subalign{x &\sim \alphabet^\pcommitlength\\j &\sim [\pcommitlength]}}\big[b(a,j) = x_j ~\big|~ A(x) = a\big] \nonumber\\
        &= \sum_{a \in \bitset^s}\P_{A \sim \lambda}[A = a] \cdot \P_{\subalign{x &\sim P_a\\j &\sim [\pcommitlength]}}\big[b(a,j) = x_j\big]\nonumber\\
        &= \E_{\subalign{A &\sim \lambda\\j &\sim [\pcommitlength]}}\left[\P_{x \sim P_A}\big[b(A,j) = x_j\big]\right]\nonumber\\
        &\leq \E_{\subalign{A &\sim \lambda\\j &\sim [\pcommitlength]}}\left[\max_{\symbol1 \in \alphabet}\left\{\P_{x \sim P_A}[x_j = \symbol1]\right\}\right],
    \end{align}
    so that we only need to bound the latter expression; note that the inequality shows Bob's optimal strategy is to output the most frequent symbol at the $j^\text{th}$ coordinate in $P_A$.

    Now, define $\mu$ as the uniform distribution over $\alphabet$ and $\mu_{i, a}$ as the distribution of $x_i$ when $x \sim P_a$ (i.e., the distribution of $x_i$ when $x \sim \alphabet^\pcommitlength$ conditioned on $A(x) = a$). Then, by Pinsker's inequality (\cref{eq:pinsker}), for all $a \in \Im A$ and $i \in [\pcommitlength]$ we have
    \begin{equation*}
        \norm{\mu_{i, a} - \mu}^2 \leq \frac{\mathrm{KL}\big(\mu_{i,a} ~||~ \mu\big)}{2 \ln 2}
    \end{equation*}
    (where we use $\norm{\cdot}$ as shorthand for the $2$-norm $\norm{\cdot}_2$). Since the inequality holds for all $a$ and $i$, then it also holds for the convex combination corresponding to taking $A \sim \lambda$ and $j \sim [\pcommitlength]$ independently (i.e., whose coefficients are $\P[A = a, j = i] = \frac{\abs{P_a}}{\alphabetsize^\pcommitlength\pcommitlength}$). Therefore,
    \begin{align*}
        \E_{\subalign{A &\sim \lambda\\j &\sim [\pcommitlength]}}\left[\norm{\mu_{j,A} - \mu}^2\right] &= \frac1{\pcommitlength} \sum_{i = 1}^\pcommitlength \E_{A \sim \lambda}\left[\norm{\mu_{i,A} - \mu}^2\right]\\
        &\leq \frac1{2 \pcommitlength \ln 2} \sum_{i = 1}^\pcommitlength \E_{A \sim \lambda}\left[\mathrm{KL}(\mu_{i,A} ~||~ \mu)\right]\\
        &= \frac1{2 \pcommitlength \ln 2} \sum_{i = 1}^\pcommitlength I(A : x_i),
    \end{align*}
    where the last equality follows by the definition of mutual information (\cref{eq:mutual-information}). By convexity of $z \mapsto z^2$, we have
    \begin{align*}
        \E_{\subalign{A &\sim \lambda\\j &\sim [\pcommitlength]}}\big[\norm{\mu_{j,A} - \mu}\big]^2 &\leq \E_{\subalign{A &\sim \lambda\\j &\sim [\pcommitlength]}}\left[\norm{\mu_{j,A} - \mu}^2\right]\\
        &\leq \frac1{2 \pcommitlength \ln 2} \sum_{i = 1}^\pcommitlength I(A : x_i).
    \end{align*}

    Recall that $\mu_{i,a}(\symbol1) = \P_{x \sim P_a}[x_i = \symbol1]$. Comparing this value with the average mass $1/\alphabetsize$, we have
    \begin{align*}
        \E_{\subalign{A &\sim \lambda\\j &\sim [\pcommitlength]}}\left[\max_{\symbol1 \in \alphabet}\left\{\P_{x \sim P_A}[x_j = \symbol1]\right\}\right] - \frac1{\alphabetsize} &=  \E_{\subalign{A &\sim \lambda\\j &\sim [\pcommitlength]}}\left[\max_{\symbol1 \in \alphabet}\left\{\mu_{j,A}(\alpha) - \frac1{\alphabetsize}\right\}\right] \\
        &\leq  \E_{\subalign{A &\sim \lambda\\j &\sim [\pcommitlength]}}\left[\max_{\symbol1 \in \alphabet}\left\{\abs{\mu_{j,A}(\alpha) - \frac1{\alphabetsize}}\right\}\right] \\
        &\leq \E_{\subalign{A &\sim \lambda\\j &\sim [\pcommitlength]}}\big[\norm{\mu_{j,A} - \mu}\big]\\
        &\leq \sqrt{\frac{\sum_{i = 1}^\pcommitlength I(A : x_i)}{2 \pcommitlength \ln 2}},
    \end{align*}
    so that using \cref{ineq:index} and rearranging,
    \begin{equation*}
        \P_{\substack{x \sim \alphabet^\pcommitlength\\j \sim [\pcommitlength]}}[\text{Bob outputs } x_j] \leq \frac1{\alphabetsize} + \sqrt{\frac{\sum_{i = 1}^\pcommitlength I(A : x_i)}{2\pcommitlength \ln 2}}.
    \end{equation*}

    The theorem thus reduces to showing $\sum_{i = 1}^\pcommitlength I(A : x_i) \leq s$. By standard information-theoretic equivalences and inequalities,
    \begin{flalign*}
        &&\sum_{i = 1}^\pcommitlength I(A : x_i) &= \sum_{i = 1}^\pcommitlength \big(H(x_i) - H(x_i|A)\big) & (\text{by \cref{eq:mutual-information}})\\
        &&&= H(x) - \sum_{i = 1}^\pcommitlength H(x_j|A) & (\text{by \cref{eq:entropy-independence}})\\
        &&&\leq H(x) - \sum_{i = 1}^n H(x_i|x_1,\ldots,x_{i-1}, A) & (\text{by \cref{eq:entropy-conditioning}})\\
        &&&= H(x) - H(x|A) & (\text{by \cref{eq:chain-rule-entropy}})\\
        &&&= I(A:x) \leq H(A) & (\text{by \cref{eq:mutual-information}})\\
        &&&\leq  s & (\text{by \cref{eq:entropy-bounds}})
    \end{flalign*}
    and the result follows.
\end{proof}

\subsection{Proof of \texorpdfstring{\cref{thm:pv-algebraic-commitment}}{Theorem \ref{thm:pv-algebraic-commitment}}}
\label{sec:deferred-pv-algebraic-commit}

\begin{theorem}[\cref{thm:pv-algebraic-commitment}, restated]
    \cref{prot:algebraic-commit} (\textsf{algebraic-commit}) and \cref{prot:decommit} (\textsf{decommit}) form a streaming commitment protocol with space complexity $s = O\big((\ell + \dimension) \log \fieldsize\big)$ if $\pcommitlength = \fieldsize^{3\ell}$ and $\degree \dimension = \polylog(\fieldsize)$. The scheme is secure against $\poly(s)$-space adversaries and communicates $O(\ell \fieldsize^{3 \ell} \log \fieldsize)$ bits.
    
    Furthermore, if each linear coefficient can be computed in $O(\dimension \log \fieldsize)$ space, then $s = O(\dimension \log \fieldsize)$.
\end{theorem}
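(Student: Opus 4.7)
The proof will follow the structure of \cref{thm:pv-commitment} almost verbatim, with the key observation being that low-degree extensions commute with linear combinations: for any $\pcommitstring \in \field^{\ell \times \pcommitlength}$ and coefficients $\evaluationpoint \in \field^\ell$, the LDE of $\evaluationpoint \cdot \pcommitstring$ evaluated at $\fingerprinttuple$ equals $\sum_i \evaluationpoint_i \hat \pcommitstring_i(\fingerprinttuple) = \hat \pcommitstring(\fingerprinttuple, \evaluationpoint)$. Thus an honest verifier can compute a fingerprint of $\evaluationpoint \cdot \pcommitstring$ incrementally as $\pcommitstring$ is streamed column by column, storing only $\fingerprinttuple$ and a single running field element. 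Communication follows from $P$ sending $\ell$ strings of length $\pcommitlength = \fieldsize^{3\ell}$ plus $O(\ell)$ field elements, giving $O(\ell \fieldsize^{3\ell} \log \fieldsize)$ bits.

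For binding, I would note that the fingerprint $\hat \pcommitstring(\fingerprinttuple,\evaluationpoint)$ and correction $\fe3 = \evaluationpoint \cdot \bm{\fe3}$ stored by $V$ after \textsf{algebraic-commit} are exactly what would be obtained had $P$ sent the single string $w = \evaluationpoint \cdot \pcommitstring \in \field^\pcommitlength$ together with the correction $\fe3$ via the basic \textsf{commit} protocol. The \textsf{decommit} phase is then a direct invocation of \cref{prot:decommit} against the virtual string $w$, so completeness and soundness follow from the binding analysis of \cref{thm:pv-commitment} applied to $w$: completeness is immediate, and soundness reduces to Schwartz-Zippel on $\hat w_{|\line}$, yielding a rejection probability of $1 - \degree\dimension/\fieldsize = 1 - o(1)$ against any cheating prover who claims a value other than $(\evaluationpoint \cdot \pcommitstring)_k = \sum_i \evaluationpoint_i (\bm{\fe1}_i - \bm{\fe3}_i) = \bm{\fe1} \cdot \evaluationpoint - \fe3$.

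For hiding, I will argue by contradiction in the same manner as \cref{thm:pv-commitment}: a space-$\poly(s)$ distinguisher $D$ separating $\textsf{algebraic-commit}(\bm{\fe1})$ from $\textsf{algebraic-commit}(\bm{\fe1}')$ with constant bias $\eps$ yields a streaming algorithm for \searchindex with large advantage. The crucial difference is that the ambient alphabet is now $\field^\ell$: given an \indexproblem instance $(\pcommitstring, k) \in (\field^\ell)^\pcommitlength \times [\pcommitlength]$, one samples $\bm{\fe3} \sim \field^\ell$, simulates $D$ on $(\pcommitstring, \bm{\fe3}, k)$, and outputs $\bm{\fe1} - \bm{\fe3}$ if $D$ accepts and $\bm{\fe1}' - \bm{\fe3}$ otherwise. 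An identical calculation shows the resulting success probability exceeds $\abs{\field^\ell}^{-1} + \Omega(\abs{\field^\ell}^{-1}) = \fieldsize^{-\ell}(1 + \Omega(1))$. \cref{prop:index-hardness} over the alphabet $\field^\ell$ caps this by $\fieldsize^{-\ell} + O(\sqrt{\poly(s)/\pcommitlength})$, and the choice $\pcommitlength = \fieldsize^{3\ell}$ gives the required contradiction since $1/\fieldsize^\ell = \omega(\sqrt{\poly(s)/\pcommitlength})$ when $s = \polylog(\fieldsize)$.

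The main obstacle I anticipate is bookkeeping the verifier's space: naively, computing $\hat \pcommitstring(\fingerprinttuple,\evaluationpoint)$ requires maintaining each $\evaluationpoint_i$ throughout the stream, contributing $\ell \log \fieldsize$ bits, hence the generic bound $s = O((\ell + \dimension) \log \fieldsize)$. The ``furthermore'' clause observes that when each $\evaluationpoint_i$ can be regenerated on demand in $O(\dimension \log \fieldsize)$ space (as is the case for Lagrange coefficients $\chi_i(\rfe1)$ with $\rfe1$ stored), one never needs to keep more than $O(1)$ coefficients in memory at a time, collapsing the $\ell$ term. This observation will be essential for the \pep and \sumcheck applications, where $\ell = \degree\dimension$ is too large to store but the coefficients are evaluations of Lagrange polynomials computable on the fly.
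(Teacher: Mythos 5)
Your proposal is correct and follows essentially the same route as the paper: reduce binding to the basic single-string case via the virtual string $w = \evaluationpoint \cdot \pcommitstring$ (exactly the paper's calculation, just packaged slightly differently), and reduce hiding to \searchindex over the large alphabet $\field^\ell$ by sampling a random correction and feeding the reassembled commitment to the distinguisher, matching the paper word-for-word modulo the Alice/Bob split of the simulation. One small imprecision: in the hiding calculation you invoke ``when $s = \polylog(\fieldsize)$,'' but the theorem only guarantees $s = O\big((\ell+\dimension)\log\fieldsize\big) = \poly(\ell,\log\fieldsize)$ when $\ell$ is large; the contradiction still goes through since $\poly(s) = \poly(\ell,\log\fieldsize) = o(\fieldsize^\ell)$, which is the condition the paper actually uses.
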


\begin{proof}
    We follow the same steps of \cref{thm:pv-commitment}, beginning with the binding property: using $\pcommitstring^{(i)}$ to denote the $i^\text{th}$ column of $\pcommitstring$, when $P$ is honest, i.e., sends the correction tuple $\bm{\fe3} = \bm{\fe1} - \pcommitstring^{(k)}$ in the \textsf{commit} stage and the polynomial $\hat z_{|\line}$ where $z = \bm{\fe2} \cdot \pcommitstring$ in the \textsf{decommit} stage, then $V$ accepts as $\hat z_{|\line}(\rfe1) = \hat z(\fingerprinttuple) = \hat \pcommitstring(\fingerprinttuple, \bm{\fe2})$ and $\hat z_{|\line}(0) + \fe3 = z_k + \bm{\fe2} \cdot \bm{\fe3} = \bm{\fe1} \cdot \bm{\fe2}$. (Recall that the line $\line$ is such that $\line(0) = k$ and $\line(\rfe1) = \fingerprinttuple$.)
    
    Now, suppose $P$ replies with a polynomial $g$ such that $g(0) \neq \sum_{i \in [\ell]} \bm{\fe2}_i
    \pcommitstring_{ik} = z_k = \hat z_{|\line}(0)$; then the Schwartz-Zippel lemma implies $g(\rfe1) \neq \hat z_{|\line}(\rfe1)$ except with probability $\degree \dimension / \fieldsize = o(1)$, in which case $V$ rejects. As the verifier only needs to store the evaluation point $\fingerprinttuple \in \field^\dimension$, the coordinate $k \in [\pcommitlength]$ and a constant number of additional field elements, its space complexity is $O(\dimension \log \fieldsize)$ as long as each $\bm{\fe2}_i$ can be computed in this space (e.g., when $\bm{\fe2}_i = \bm{\fe2}_i(\rfe1)$ is the evaluation of an $\dimension$-variate polynomial over $\field)$; if $\bm{\fe2}$ must be stored in its entirety, the complexity becomes $O\big((\ell + \dimension) \log \fieldsize\big)$.

    To show the hiding property, assume towards contradiction that there exists a streaming algorithm $D$ with space $\poly(s) = \poly(\ell, \log \fieldsize)$ that distinguishes commitments between some $\bm{\fe1} \in \field^\ell$ and $\bm{\fe1}' \in \field \setminus \set{\bm{\fe1}}$ with constant bias: that is,
    \begin{align*}
        \P_{\subalign{\pcommitstring &\sim \field^{\ell \times \pcommitlength}\\k &\sim [\pcommitlength]}}\left[D(\pcommitstring, \bm{\fe1} - \pcommitstring^{(k)}, k) \text{ accepts}\right] -  \P_{\subalign{\pcommitstring &\sim \field^\pcommitlength\\k &\sim [\pcommitlength]}}\left[D(\pcommitstring, \bm{\fe1}' - \pcommitstring^{(k)}, k) \text{ accepts}\right] \geq \eps
    \end{align*}
    for some $\eps = \Omega(1)$. Now consider the following one-way communication protocol for \searchindex over the alphabet $\field^\ell$ with input $(x,j) \in (\field^\ell)^\pcommitlength \times [\pcommitlength]$: Alice, viewing $x$ as an element of $\field^{\ell \times \pcommitlength}$, simulates $D$ on the stream $(x,\bm{\fe3})$, where $\bm{\fe3} \sim \field^\ell$, and sends the $\polylog(\pcommitlength)$-bit snapshot of $D$ to Bob, who finishes the simulation with $j$; if $D$ accepts output $\bm{\fe1} - \bm{\fe3}$, and otherwise output $\bm{\fe1}' - \bm{\fe3}$. Note that Bob outputs correctly exactly when $\bm{\fe3} = \bm{\fe1} - \pcommitstring^{(k)}$ and $D$ accepts, or $\bm{\fe3} = \bm{\fe1}' - \pcommitstring^{(k)}$ and $D$ rejects. We will now show that the protocol solves \searchindex with a bias that is too large, contradicting \cref{prop:index-hardness}.

    \begin{align*}
        \P_{\subalign{x &\sim (\field^\ell)^\pcommitlength\\j &\sim [\pcommitlength]}}&[\text{Bob outputs } x_j]\\
        &= \frac1{\fieldsize^\ell} \cdot \P_{\subalign{x &\sim (\field^\ell)^\pcommitlength\\j &\sim [\pcommitlength]}}\left[D(x,\bm{\fe1} - x_j, j) \text{ accepts}\right] + \frac1{\fieldsize^\ell} \cdot \P_{\subalign{x &\sim (\field^\ell)^\pcommitlength\\j &\sim [\pcommitlength]}}\left[D(x,\bm{\fe1}' - x_j, j) \text{ rejects}\right]\\
        &= \frac1{\fieldsize^\ell} \left(1 + \P_{\subalign{x &\sim (\field^\ell)^\pcommitlength\\j &\sim [\pcommitlength]}}\left[D(x,\bm{\fe1} - x_j, j) \text{ accepts}\right] - \P_{\subalign{x &\sim (\field^\ell)^\pcommitlength\\j &\sim [\pcommitlength]}}\left[D(x,\bm{\fe1}' - x_j, j) \text{ accepts}\right] \right)\\
        &\geq \frac{1 + \eps}{\fieldsize^\ell}\\
        &= \frac1{\fieldsize^\ell} + \Omega\left(\frac1{\fieldsize^\ell}\right).
    \end{align*}

    Since $\fieldsize^{-\ell} = \Omega\left(\sqrt{\fieldsize^\ell/\pcommitlength}\right)= \omega\left(\sqrt{\poly(s)/\pcommitlength}\right)$, owing to $s = \poly(\ell, \log \fieldsize)$, the result follows. The communication complexity of the protocols is dominated by the prover sending $\ell \pcommitlength$ field elements, for a total of $O(\ell \fieldsize^{3 \ell} \log \fieldsize)$ bits.
\end{proof}

\subsection{Proof of \texorpdfstring{\cref{clm:probability-ip}}{Claim \ref{clm:probability-ip}}}
\label{sec:deferred-probability-ip}

\begin{claim}[\cref{clm:probability-ip}, restated]
    Let $p, q \in [0,1]^\vcommitlength$ be probability vectors and $t \in [\vcommitlength]$ a positive integer. There exists a set $C \subseteq [\vcommitlength]$ of size $t$ such that $\sum_{i \in [\vcommitlength] \setminus C} p_i q_i \leq 1/t$.
\end{claim}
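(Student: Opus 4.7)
The plan is to choose $C$ to be the $t$ indices of largest $p_i$-weight (breaking ties arbitrarily), and then exploit the fact that $q$ is itself a probability vector to control the tail sum. The key observation is that one does not need to balance the two vectors simultaneously: ranking only by $p$ already caps the leftover $p_i$'s, and the $q_i$'s contribute at most $1$ in total.

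Concretely, I would first let $p^\star = \min_{j \in C} p_j$, and note that since $|C| = t$ and $p$ sums to at most $1$, we have $t \cdot p^\star \leq \sum_{j \in C} p_j \leq 1$, hence $p^\star \leq 1/t$. Since every $i \notin C$ satisfies $p_i \leq p^\star$ by the choice of $C$, we get $p_i \leq 1/t$ for all $i \notin C$. Plugging this into the target sum gives
\begin{equation*}
    \sum_{i \in [\vcommitlength] \setminus C} p_i q_i \;\leq\; \frac{1}{t} \sum_{i \in [\vcommitlength] \setminus C} q_i \;\leq\; \frac{1}{t},
\end{equation*}
where the last inequality uses that $q$ is a probability vector (so $\sum_i q_i \leq 1$).

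There is essentially no obstacle here; the only ``trick'' is recognising that a one-sided ranking suffices. One might initially be tempted to sort by the products $p_i q_i$ (which gives a bound of the form $(\vcommitlength - t) \cdot (\sum_i p_i q_i)/t$, too weak in general), but sorting by $p_i$ alone and then invoking $\sum q_i \leq 1$ immediately yields the claimed $1/t$ bound. The argument is symmetric in $p$ and $q$, so one could equally well rank by $q_i$.
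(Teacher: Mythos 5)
Your proof is correct, and it takes a genuinely simpler route than the paper's. The paper sorts $p$ in descending order and reduces the claim to the inequality $\sum_{i} i\, p_i q_i \leq 1$, which it proves by a compactness-and-extremal-point argument on the simplex (locating the maximiser of $f(p,q)=\sum_i i\, p_i q_i$, showing its $p$-support must be a singleton, hence the maximum is $1$). From that inequality it deduces the existence of a good initial segment $C = [i-1]$ for some $i \in [t]$. You instead pick $C$ to be the top-$t$ indices of $p$ directly and observe that, once those are removed, every remaining $p_i$ is at most $p^\star \leq 1/t$; then $\sum q_i \leq 1$ finishes it. Your observation that sorting gives $p_k \leq 1/k$ also yields the paper's intermediate inequality for free (since $\sum_i i p_i q_i \leq \sum_i i \cdot \tfrac{1}{i} q_i = 1$), so your argument strictly subsumes the paper's optimisation step without needing it. A minor difference is that the paper's set $C = [i-1]$ may have size strictly less than $t$ (which is harmless, since enlarging $C$ only shrinks the tail sum), whereas your construction gives $|C|=t$ exactly, matching the statement as written. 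Both proofs are valid; yours is shorter and more elementary.
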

\begin{proof}
    We reduce the claim to proving an upper bound on a certain optimisation problem. Namely, let $\Delta = \set{x \in [0,1]^\vcommitlength : \sum_i x_i = 1}$ and $\Delta' = \Delta \cap \set{x \in [0,1]^\vcommitlength : x_1 \geq \cdots \geq x_\vcommitlength}$ be the $\vcommitlength$-dimensional simplex and the simplex with ordered coordinates, respectively. Define the function $f: \Delta' \times \Delta \to \R_+$ by $f(p,q) = \sum_{i = 1}^\vcommitlength i p_i q_i$.
    
    Under the assumption that $f(p,q) \leq 1$ for all $p \in \Delta'$ and $q \in \Delta$, we conclude as follows: since $p_1 \geq p_2 \geq \cdots \geq p_\vcommitlength$ without loss of generality (permuting the vectors to satisfy the condition does not affect the truth of the claim), for any $t \in [\vcommitlength]$
    \begin{equation*}
        1 \geq f(p,q) = \sum_{i = 1}^\vcommitlength \left(\sum_{j = i}^{\vcommitlength} p_j q_j \right) \geq \sum_{i = 1}^t \left( \sum_{j = i}^\vcommitlength p_j q_j \right)
    \end{equation*}
    implies the existence of $i \in [t]$ such that $\sum_{j = i}^\vcommitlength p_j q_j \leq 1/t$. Taking $C = [i - 1]$ completes the proof.

    We now proceed to show $f(p,q) \leq 1$. Since $f$ is continuous with compact domain, there exists a pair $(p^*, q^*)$ that maximises $f$. Let $\ell \in [\vcommitlength]$ be the largest nonzero coordinate of $p^*$. Then $q_i^* > 0$ for all $i \leq \ell$, as otherwise moving the mass $p_i^*$ onto $p_1^*$ would contradict maximality;  and $q_i^* = 0$ for all $i > \ell$, or moving $q_i^*$ onto (say) $q_1^*$ likewise leads to a contradiction.

    Now, suppose (towards contradiction) $\ell > 1$, take $1 < j \leq \ell$ and consider the pair $(p^*, q')$ with $q_1' = 0$, $q_i' = q_1^* + q_i^*$ and $q_j' = q_j^*$ otherwise. Then $f(p^*, q') \leq f(p^*, q^*)$ implies
    \begin{equation*}
      i p_i^* (q_1^* + q_i^*) \leq p_1^* q_1^* + i p_i^* q_i^*,
    \end{equation*}
    and thus $i p_i^* \leq p_1^*$ (since $q_1^* \neq 0$). But then
    \begin{equation*}
        f(p^*, q^*) = \sum_{i = 1}^\ell i p_i^* q_i^* \leq p_1^* \sum_{i = 1}^\ell q_i^* = p_1^* < 1,
    \end{equation*}
    a contradiction, as the delta distributions at $1$ achieve value $1$.
    
    We thus conclude that $\ell = 1$, so the maximisers $p^*, q^*$ are the delta distributions at $1$ and $f(p,q) \leq  f(p^*,q^*) = 1$, as desired.
\end{proof}

\end{document}